\numberwithin{equation}{section}
\newtheorem{theorem}{Theorem}[section]
\newtheorem{lemma}[theorem]{Lemma}
\newtheorem{proposition}[theorem]{Proposition}
\newtheorem{corollary}[theorem]{Corollary}
\newtheorem{definition}[theorem]{Definition}
\newtheorem{remark}[theorem]{Remark}
\newcommand{\bea}{\begin{eqnarray}}
\newcommand{\eea}{\end{eqnarray}}
\newcommand{\lab}{\lababel}
\newcommand{\nn}{\nonumber}
\DeclareFontFamily{U}{mathx}{\hyphenchar\font45}
\DeclareFontShape{U}{mathx}{m}{n}{
      <5> <6> <7> <8> <9> <10>
      <10.95> <12> <14.4> <17.28> <20.74> <24.88>
      mathx10
      }{}
\DeclareSymbolFont{mathx}{U}{mathx}{m}{n}
\DeclareMathAccent{\widecheck}{0}{mathx}{"71}
\def\beaa{\begin{eqnarray*}}
\def\eeaa{\end{eqnarray*}}
\def\ba{\begin{array}}
\def\ea{\end{array}}
\def\be#1{\begin{equation} \label{#1}}
\def \eeq{\end{equation}}
\def\bsplit{\begin{split}}
\def\pa{\partial}
\def\dual{{\,^*}}
\def\div{\mathrm{div}\,}
\def\curl{\mathrm{curl}\,}
\def\hot{\widehat{\otimes}}
\def\lab{\label}
\def\f12{\frac 1 2}
\def\a{{\alpha}}
\def\b{{\beta}}
\def\be{{\beta}}
\def\ga{\gamma}
\def\Ga{\Gamma}
\def\de{\delta}
\def\eps{\varepsilon}
\def\la{\lambda}
\def\si{\sigma}
\def\Si{\Sigma}
\def\om{\omega}
\def\Th{\Theta}
\def\vphi{\varphi}
\def\varo{{\varrho}}
\def\th{\theta}
\def\vth{{\vartheta}}
\def\ze{\zeta}
\def\nab{\nabla}
\def\Up{\Upsilon}
\def\B{{\bf B}}
\def\C{{\bf C}}
\def\D{{\bf D}}
\def\E{{\bf E}}
\def\F{{\bf F}}
\def\H{{\bf H}}
\def\J{{\bf J}}
\def\R{{\bf R}}
\def\P{{\bf P}}
\def\N{{\bf N}}
\def\W{{\bf W}}
\def\g{{\bf g}}
\def\u{{\bf u}}
\def\w{{\bf w}}
\def\k{\Th}
\def\CC{{\mathcal C}}
\def\MM{{\mathcal M}}
\def\NN{{\mathcal N}}
\def\LL{{\mathcal L}}
\def\VV{{\mathcal V}}
\def\RR{{\mathcal R}}
\def\RR{\mathcal{R}}
\def\Lie{\LL}
\def\f12{{\frac 1 2}}
\def\Bb{\mathcal{B}}
\def\Bbd{\dual\mathcal{B}}
\def\Kk{\mathcal{K}}
\def\Kkd{\, ^{*}\mathcal{K}}
\def\Cb{\underline{C}}
\def\ub{\underline{u}}
\def\fb{\protect\underline{f}}
\def\hbub{{\underline{H}_{\underline{u}}}}
\def\trch{{\mathrm{tr}}\, \chi}
\def\chih{{\widehat \chi}}
\def\chib{{\underline \chi}}
\def\chibh{{\underline{\chih}}}
\def\etab{{\underline \eta}}
\def\omb{{\underline{\om}}}
\def\bb{{\underline{\b}}}
\def\aa{\protect\underline{\a}}
\def\xib{{\underline \xi}}
\def\trchb{{\tr \,\chib}}
\def\tr{{\mathrm{tr}}\, }
\def\atrch{\, ^{(a)}\trch}
\def\atrchb{\, ^{(a)}\trchb}
\def\rhod{\,\dual\hspace{-2pt}\rho}
\def\sk{\mathfrak{s}}
\def\mm{\mathfrak{m}}
\def\trt{\trh \Th}
\def\Kc{\widecheck{K}}
\def\ah{\hat{a}}
\def\trth{\slashed{\mathrm{tr}}\, \th}
\def\thh{\widehat{\th}\hspace{0.05em}}
\def\thc{\widecheck{\trth}}
\def\trs{\slashed{\mathrm{tr}}}
\def\ss{\mathfrak{s}}
\def\Ricc{\mathbf{Ric}}
\def\kh{\widehat{\k}}
\def\Riem{\mathrm{Riem}}
\def\Ric{\mathrm{Ric}}
\def\0{_0}
\def\1{}
\def\vol{\mathrm{vol}}
\def\ao{\widecheck{a}}
\def\B{\mathscr{B}}
\def\Bd{\dual \B}
 \def\Rh{R\mkern-11mu /\,}
 \def\Rhh{{\widehat{\Rh}}}
 \def\nabh{\nab\mkern-12mu /\,}
\def\laph{\slashed{\Delta}}
\def\divh{\slashed{\mathrm{div}}\hspace{0.1em}}
\def\curlh{\slashed{\mathrm{curl}}\hspace{0.1em}}
\def\trh{\slashed{\mathrm{tr}}\hspace{0.1em}}
\def\K{\mathscr{K}}
\def\Kd{\, ^{*  \hspace{-0.3em}}\K}
\def\trch{\trh\chi}
\def\trchb{\trh\chib}
\def\const{\mathrm{const}}
\def\vol{\mathrm{vol}}
\def\Rb{R}
\def\d{\slashed{\mathcal{D}}}
\def\Kc{\widecheck{K}}
\def\ah{\hat{a}}
\def\ss{\mathfrak{s}}
\def\Ricc{\mathbf{Ric}}
\def\Riem{R}
\def\Ric{\mathrm{Ric}}
 \def\Rh{R\mkern-11mu /\,}
 \def\Rhh{{\widehat{\Rh}}}
 \def\nabh{\nab\mkern-12mu /\,}
 \def\trR{\trh\Rh}
\def\Rb{R}
\def\vthA{\vth^{\hspace{-0.1em} A}}
\def\vthB{\vth^{\hspace{-0.1em} B}}
\def\n{^{(n)}}
\def\nn{^{(n+1)}}
\def\nnn{^{(n+2)}}
\def\0{^{(0)}}
\def\gs{\, ^{\mathbb{S}^2}\hspace{-0.3em}\ga}
\def\Ls{\slashed{\Lie}}
\def\ss{\mathfrak{s}}
\def\dgz{{\d}^{(0)}}
\def\dga{{\d}^{\ga}}
\def\lapz{\laph^{\hspace{-0.1em}(0)}}
\def\nabz{\nabh^{(0)\hspace{-0.1em}}}
\def\trz{\trs^{(0)}  }
\def\i{^{(\infty)}}
\def\pp{p}
\def\Bsigma{\,^{(\Si)} \Bb}
 \def\Bdsigma{\,^{(\Si)} \Bbd}
 \def\gz{\ga^{(0)}}
 \def\H{\mathfrak{h}}
\def\cc{\mathbf{c}}
\def\muc{\widecheck{\mu}}
\def\v{\mathbf{v}}
\begin{document}


\title{Solving the constraint equation for general free data}
\author{Xuantao Chen and Sergiu Klainerman}


\date{}
\maketitle
\abstract{ We revisit the problem of solving the   Einstein constraint equations in vacuum by a new method which allows us to prescribe four scalar quantities, representing the full dynamical degrees of freedom of the constraint system. We show that once appropriate gauge conditions
have been chosen and four scalars freely specified   (modulo $\ell\le 1$  modes), we can rewrite the constraint equations as a well-posed system of coupled transport and elliptic equations on $2$-spheres,  which we solve by an iteration procedure.   Our method provides a large class of exterior solutions of the constraint equations that can be matched to given interior solutions,   
  according to the existing gluing techniques.   As such, it can be applied to provide a large class of initial  Cauchy data sets evolving to black holes, generalizing the well-known result of the formation of trapped surfaces due to  Li and Yu \cite{LiYu}. 
Though in our Main Theorem \ref{thm:main-precise}, we only specify conditions consistent with $g-g_{Schw}=O(r^{-1-\de})$, $k=O(r^{-2-\de})$, the method is flexible enough to be applied in many other situations. It can, in particular,  be easily adapted to construct arbitrarily fast decaying data. 
 We expect, moreover, that our method can also be applied to construct data with slower decay,  such as used by Shen in \cite{Shen}. 
  In fact, an important motivation for developing our method is to show that the result of \cite{Shen} is sharp, i.e., construct small, smooth initial data sets which violate  Shen's decay conditions,  and for which the stability of the Minkowski space result is wrong. }
  
\parindent = 0 pt
\parskip = 12 pt

\tableofcontents

\section{Introduction}
\subsection{The Einstein constraint equation}
Despite the fundamental role of the (local) well-posedness result \cite{ChoquetBruhat1952}, \cite{ChoquetBruhatGeroch1969} for the Einstein vacuum equation, it remains a challenge to construct  the \textit{full set} of  initial conditions\footnote{Here   $g$ denotes the Riemannian metric on the initial hypersurface $\Si$, with  scalar curvature  $R_g$, and $k$ corresponds to the second fundamental form of $\Si$, as embedded   in the spacetime. }  $(\Si, g, k)$,  verifying the 
constraint equations
\begin{equation}\lab{ece}
    \begin{split}
        \div k-\nab\, \tr k&=0,\\
        R_g+(\tr k)^2-|k|^2&=0.
    \end{split}
\end{equation}
We start by recalling below  some of the main methods  to construct solutions  to \eqref{ece}.


\subsubsection{Solving \eqref{ece} as  an underdetermined  $3D$ elliptic system}

 Given the Riemannian character of the metric  $g$,  it is tempting to   interpret \eqref{ece}  as an underdetermined   $3D$ elliptic system.   The best known method fitting this description, which we briefly review below, is the \textit{conformal method} of Lichnerowicz \cite{Lic44}, Choquet-Bruhat--York \cite{ChoquetYork1979}, Isenberg \cite{Isenberg95}, Maxwell \cite{Max09,Max21}. The idea is that we specify a given choice of a Riemannian metric $g_0$ on the $3$-manifold $\Si$, and a transverse-traceless (TT) symmetric $2$-tensor $\si_0$, i.e., $(\mathrm{div}_{g_0} \si_0)_i=0$, $\mathrm{tr}_{g_0} \si_0=0$. We then seek the solution to the constraint equation of the form
\beaa
g=\phi^4 g_0,\quad k^{ij}=\phi^{-10} (\si_0^{ij}+L[W]^{ij})+\frac 13\phi^{-4}g_0^{ij}H,
\eeaa
where $W$ is a vector field, $L[W]_{ij}:=\, ^{(g_0)}\nab_i W_j+\, ^{(g_0)}\nab_j W_i- \frac 23  (g_0)_{ij} \mathrm{div}_{g_0} W$, and $H$ is a scalar field. The constraint equation then becomes
\begin{equation*}
    \mathrm{div}_{g_0}(L[W])_i=\frac 23 \phi^6 (d H)_i,
\end{equation*}
\begin{equation*}
    \Delta \phi=\frac 18 R_{g_0}\phi-\frac 18\big |\si_0+L[W] \big|^2 \phi^{-7}+\frac 1{12}H^2\phi^5,
\end{equation*}
 which  is a determined  $3D$ elliptic system, and can thus be solved by standard elliptic methods.
By construction, the scalar field $H$ represents the mean curvature $\mathrm{tr}_{g}\, k$  of  $\Si$. In the case  when $\Si$ is a closed (compact without boundary) manifold, taking $H=\mathrm{const}$ and $\si_0=0$ reduces the equation to  a scalar equations  which  can be solved by standard elliptic methods.  
The method  also extends to   the asymptotically flat case, see for example  \cite{Bar86}, which uses the fact that  $\Delta_{g}$ is an isomorphism between the spaces of fields decaying like $r^{-\de}$ and $r^{-2-\de}$ ($0<\de<1$).  The conformal method  also allows one to construct faster decaying   initial data, as  considered  in  the proof of the nonlinear stability of Minkowski space in \cite{CK}. 
  In their recent   work \cite{FangSzeftelTouati2024a,FangSzeftelTouati2024b},  Fang--Szeftel--Touati have extended the method to     construct even more general initial data.   Their result treats arbitrary fast  decay and, as such, provides in particular nontrivial examples for the initial data sets in  \cite{KN03b}.

   Another well-known method, known under the name of \textit{gluing method}, initiated  in the works \cite{Corvino2000},  \cite{CorvinoSchoen}, \cite{ ChruscielDelay2003}, constructs nontrivial  initial data  which are  precisely  Kerr 
   outside a compact region.\footnote{ Note   that the  existence of such solutions  is forbidden for 
   purely elliptic  systems  which  have unique continuation properties.   }
A key observation in that regard, which dates back to Moncrief \cite{Moncrief1975}, is that the linearized constraint equations around the trivial Minkowskian   data set   is uniquely determined  by   a  10-dimensional cokernel space.  The gluing method  resolves the  obstruction  by connecting  this freedom  to the  10-charge family\footnote{These are  the parameters  $m,{\bf a}$, the  linear momentum and center of mass.}  associated  to  Kerr solution, thus matching  data given on a compact set to  a specified Kerr solution.
The gluing method has been used to prove the formation of trapped surfaces from Cauchy initial data \cite{LiYu}. Another important extension of the gluing method, due to  Corlotto--Schoen \cite{CarlottoSchoen2016},  constructs localized-in-angle initial data. We also refer to the further developments in \cite{ChruscielPollack2008}, \cite{ChruscielCorvinoIsenberg2011}, \cite{Cortier2013}, \cite{BieriChrusciel2017}, \cite{AretakisRodnianskiCzimek2023CMP, AretakisRodnianskiCzimek2023AHP, AretakisRodnianskiCzimekDuke}, \cite{MaoTao2022}, \cite{Hintz2024}. The gluing method was further extended in the  work of  Czimek--Rodnianski \cite{CzimekRodnianski2022}   which derived   more flexible  matching solutions.  More precisely, they show  that matching can be done provided  that  a  specific  condition, related  to the  positive mass theorem, is verified.\footnote{The condition can be written as $|\triangle \E|> C|\triangle \P|$ for some (potentially large) $C>0$, where $\triangle \E$ and $\triangle \P$ are respectively the differences of the energy and linear momentum between the two spheres considered for gluing.} A different, more direct approach,   to the obstruction  free gluing   results of  \cite{CzimekRodnianski2022}  was developed  by Mao--Oh--Tao, see  \cite{MaoOhTao2023} and further developed in \cite{IsettMaoOhTao2025}.
 The result in \cite{MaoOhTao2023} have been recently  used in the construction of Cauchy data that evolves into multiple trapped surfaces \cite{ShenWan2025}, \cite{GiorgiShenWanBosonStars}.
 
 In this paper, we  revisit  the  problem  by introducing a new method
  which allows us to   prescribe  four scalar quantities,   representing  the full  dynamical degrees of freedom of the  constraint system. We show  that  once appropriate  gauge conditions
has been  chosen and four    scalars   freely specified   (modulo $\ell\le 1$  modes),  we can   rewrite    the constraint equations   as a  well-posed system of  coupled transport   and   elliptic equations on $2$-spheres, which we solve by an iteration procedure, similar in spirit to the one used in the construction of GCM spheres and hypersurfaces  in \cite{KS-GCM1}, \cite{KS-GCM2}, \cite{ShenGCMKerr}.  In particular, our results  provide   a large  family of 
 exterior solutions of the constraint equations  which can be matched to given interior solutions    according to the existing gluing techniques.\footnote{The main  result,  stated first  in Theorem \ref{Thm:maithm-rough}, see also    Theorem  \ref{thm:main-precise},      constructs solutions  with  prescribed    four  scalars and    specified    asymptotic behavior at  space-like  infinity.  The method can however be  also be applied   in reverse, by  integrating towards  space-like infinity,  from  prescribed data  in a compact region of  $\Si$. }

\subsubsection{The Horizontal Constraint System}

 Though these various versions of the gluing  method  have  provided a great number of  interesting solutions to the constraint equations, they   typically produce solutions  which are exactly Kerr  outside  a compact set.
   The stability results in general relativity study much more general perturbations, and it is thus    an important  to construct initial data with   a lot more  flexibility. Ideally, one would like to have a method  which takes into account the  full  degrees of freedom in \eqref{ece}.

   The goal of this paper is to propose such a method and use it to describe initial  data sets with  more flexible properties. We divide 
    the degrees of freedom  of  \eqref{ece} into \textit{gauge} and  \textit{free}  scalars  and show that for a given choice of the  former,
     we have the freedom to   fully prescribe, up to $\ell\leq 1$ modes,    the remaining four   defining scalars. The constraint equations can then be solved  as  a system of transport and $2D$ elliptic equations, which we call the Horizontal Constraint System (HCS),   similar to the  way one constructs   solutions to the  characteristic
     initial value problem \cite{Chr1}. 
      In particular, this produces a fully  general set of exterior solutions which  can be  matched   to prescribed data on a compact set.

{\bf Connections with the free data.}      An  initial data set          $(\Si, g, k)$,    with $\Si$ a    $3$-manifold and $g, k$ symmetric 2-tensors, is formally specified by      $12$   functions. The constraint equations \eqref{ece} impose $4$ conditions, leaving formally $8$ degrees of freedom. 
      Three of these   are to be accounted   by   the  coordinate  covariance of \eqref{ece} on $\Si$. In our work, we  fix 
      a  \textit{radial} function $r$  whose level surfaces  are  $2$-dimensional spheres.   The other  two coordinates $\vth^1, \vth^2$ 
       can be chosen in a canonical way by   transporting  them from a given sphere $S_0$, where $r=r_0$,  along the integral curves   normal to       the  $r$-foliation.  Beside these    three coordinate  conditions,   one can identify a fourth which   corresponds to the embedding of $\Si$  into the  induced  Einstein vacuum spacetime spacetime.

   The   remaining four   degrees of freedom
    represent   the true dynamical degrees of freedom. We identify them here  
   in terms of      four scalars   obtained from the  Ricci and       curvature coefficients  associated to the  $r$-foliation. 
    Remarkably,  they  happen to provide  the only obstructions to    showing that  the structure  equations
     induced by the constraints,   expressed as  a  system of  transport equations in the direction normal to the foliation,
      is  well-posed.  Thus, once prescribed, modulo $\ell\leq 1$ modes,   one can  derive a  unique solution to \eqref{ece}.
      
   \begin{remark} 
   \lab{remark:comparison}
    It helps to  compare  this     with the characteristic initial data, that is data prescribed on   two  transversal  null    hypersurfaces $C$ and $\Cb$.  In that case,   the free data   is   simply  given by  the  shear tensors on each  hypersurface. 
  The  characteristic constraint equations    have  a simple  reductive structure that allows one  to solve various quantities one-by-one,  avoiding  loss of derivatives; see Chapter 2 of \cite{Chr1} for details.    In contrast, the Cauchy  constraint equations are more heavily coupled and  yet, once  the   defining scalars\footnote{We use the term \textit{defining scalars} to represent the union of gauge and free scalars.} are identified,  we  can recover   a similar reductive structure.
   \end{remark}

   \subsection{Main ideas and first statement of the main theorem}
 
  Given a sphere foliation on $\Si$  with outward  unit normal $N$ and  compatible\footnote{ i.e.  with 
   $\{e_a\}$ tangent to level surfaces  of $r$.}  orthonormal  frame  $\{N, e_a\}_{a=1,2}$, we define  the quantities
\beaa
\th_{ab}:=g(\nab_a N,e_b), \quad \Th_{ab}:=k(e_a,e_b), \quad \Pi:=k(N,N),\quad \Xi_a:= k(N,e_a),\quad Y_a:= \Riem(N,e_b,e_b,e_a),
\eeaa
where $\Riem$ denotes the $3$-dim Riemann curvature tensor. We also define the lapse function $\ah:=(N(r))^{-1}$, and denote the Gauss curvature of the $r$-spheres by $K$. 
   
{\bf Loss of derivatives.}
 In Section \ref{subsect:HCS}, we give the version of constraint equations decomposed with respect to the triad $\{N,e_a\}_{a=1,2}$, called the \textit{Horizontal Constraint System} (HCS).
In Section \ref{section:lossof deriv}, see also Section \ref{subsect:null-frame-formalism} in the spacetime language, we find the following six scalars that appear to be responsible for a loss of derivatives in HCS:
\bea\lab{eq:6-scalars-intro}
\mu:=-\laph(\log\ah)+K-\frac 14 (\trth)^2,\quad \nu:=\divh\Xi, \quad \Pi,\quad \curlh\Xi,\quad \divh Y,\quad \curlh Y.
\eea
Here $\laph$, $\divh$, and $\curlh$ are horizontal Laplacian, divergence, and curl operators defined in Section \ref{subsect:notations}.

{\bf Gauge scalars.} 
Among the scalars in \eqref{eq:6-scalars-intro}, the one that determines  a sphere foliation on $\Si$ is the scalar $\mu$. This has been referred  in \cite{CK} as the mass aspect function  and  used there  to determine the sphere foliation on the last slice $\Si_{t^*}$.
We can prescribe $\mu$ to address the coordinate freedom regarding $r$.
Yet,   even when the coordinates on $\Si$ are fixed,   we can have different initial data sets that  evolve to the same Einstein-vacuum  spacetime (see details in Section  \ref{subsubsection:free-data-on-Sigma}). In our work, we  resolve this ambiguity  by prescribing freely  the scalar field $\nu$. 

\begin{remark}
The traditional  way to  deal with this spacetime ambiguity  is to impose the maximal foliation condition $\mathrm{tr}_g \hspace{0.1em} k=0$, a condition  which 
is more aligned with the 3D elliptic character of the constraints  and is independent of  the choice of  a  foliation on $\Si$. 
 In contrast, our condition on $\nu$    works in tandem  with the    one on $\mu$. 
 Indeed, as stated in Proposition \ref{prop:existence-Si'}, given an initial data set, one can always, at least locally, construct another spacelike hypersurface, embedded in the same vacuum spacetime and with a specific sphere foliation, such that $\mu_{\ell\geq 1}=\nu=0$.\footnote{In fact, we need to impose additional $\ell=0$ conditions to determine a unique gauge; see Section \ref{subsubsect:ell-0-constraints}. For simplicity, we proceed with the vague assertion that $\mu_{\ell\geq 1}=\nu=0$ determines the gauge.}

\end{remark}

{\bf Free scalars.}
Given a gauge choice, specified by  the gauge scalars $(\mu, \nu)$, we show that the remaining degrees of freedom correspond precisely to  the remaining four scalars in \eqref{eq:6-scalars-intro}. Our main result is as  follows.
\begin{theorem}[Main Theorem, rough version]
\lab{Thm:maithm-rough}
Prescribe four scalars in a given exterior region in $\mathbb{R}^3$, denoted $(\Bb, \Bbd, \Kk, \Kkd)$, supported on spherical modes $\ell\geq 2$ (see  Section \ref{sect:spherical-harmonic-Hodge-operators} for the precise definition), and satisfying certain decaying conditions (to be later specified)  as $r\to \infty$.
Then, provided certain $\ell\leq 1$ conditions at spatial infinity,   corresponding   to a specification of the ADM charges (see Definition \ref{def:AF-data}), there exists a solution to the  constraint equation \eqref{ece} such that $\mu_{\ell\geq 1}=\nu=0$, and
\beaa
\left(\divh Y-\Bb\right)_{\ell\geq 2}=0, \quad \left(\curlh Y-\Bbd\right)_{\ell\geq 2}=0,\quad  \left(\laph(\ah\Pi)-\Kk\right)_{\ell\geq 2}=0, \quad \left(r^{-4}\pa_r(r^4\curlh \Xi)-\Kkd\right)_{\ell\geq 2}=0.
\eeaa
\end{theorem}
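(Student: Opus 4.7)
The plan is to rewrite the vacuum constraints \eqref{ece} as the Horizontal Constraint System (HCS) of Section \ref{subsect:HCS}, and then close it using the observation that precisely the six scalars in \eqref{eq:6-scalars-intro} are responsible for potential loss of derivatives. Once two of them are removed by gauge ($\mu_{\ell\geq 1}=0$, $\nu=0$) and the other four are replaced by the freely prescribed $(\Bb,\Bbd,\Kk,\Kkd)$, the HCS should acquire a reductive, characteristic-like structure as anticipated in Remark \ref{remark:comparison}. The solution will then be produced by a sphere-by-sphere iteration in the spirit of the GCM constructions of \cite{KS-GCM1,KS-GCM2,ShenGCMKerr}, with boundary data at $r=\infty$ supplied by the prescribed ADM charges sitting in the $\ell\leq 1$ modes.

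\textbf{Setup and decoupling.} First I would fix the radial function $r$ and propagate angular coordinates $\vth^A$ from a reference sphere $S_0$ along integral curves of $N$. On each $r$-sphere, horizontal $1$-forms and symmetric traceless $2$-tensors decompose via Hodge theory into scalar potentials, so that the pure sphere equations reduce to Laplacians or Hodge systems whose obstructions live in $\ell\leq 1$. Concretely, $\laph(\ah\Pi)=\Kk$ recovers $\Pi$ modulo $\ell\leq 1$; the Hodge system $\divh Y=\Bb$, $\curlh Y=\Bbd$ recovers $Y$ modulo $\ell=1$; the gauge condition $\mu_{\ell\geq 1}=0$ together with $\mu=-\laph\log\ah+K-\frac{1}{4}(\trth)^2$ determines the $\ell\geq 2$ part of the lapse; and $\nu=\divh\Xi=0$ eliminates the divergence-free part of $\Xi$. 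The curl part of $\Xi$ is obtained by integrating the radial transport equation $r^{-4}\pa_r(r^4\curlh\Xi)=\Kkd$ inward from $r=\infty$.

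\textbf{Iteration.} Given an $n$-th iterate of $(\ah,\th,\Th,\Pi,\Xi,Y)$, I would build the $(n+1)$-th iterate by inserting the lower-order quantities from the previous step as source terms into the HCS transport equations in the $N$-direction for $\trth,\thh,\Th$, and the remaining components of $\Xi,\Pi,Y$, integrating inward from $r=\infty$, and at each radius inverting the appropriate $2$D elliptic operator on the sphere using the prescribed free scalars. The $\ell\leq 1$ modes, where these angular operators fail to invert, are handled separately by explicit radial ODEs driven by the ADM energy, linear momentum, angular momentum and center of mass, which serve as the boundary data at infinity and which fix both uniqueness and the correct Schwarzschildean leading behavior.

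\textbf{Main obstacle.} The hard part will be to close the iteration in a weighted norm compatible with the prescribed asymptotics $g-g_{Schw}=O(r^{-1-\delta})$, $k=O(r^{-2-\delta})$. One must verify (i) that each transport step preserves the prescribed $r$-decay, (ii) that the sphere-level elliptic inversions do not cost derivatives in the $r$-direction, which is precisely what the identification of the six loss-of-derivative scalars in \eqref{eq:6-scalars-intro} is designed to ensure, and (iii) that the map sending the $n$-th iterate to the $(n+1)$-th iterate contracts in a suitable weighted Banach space, so that the limit is a genuine solution of the full HCS and hence of \eqref{ece}. Tracking how nonlinear couplings transfer errors between the $\ell\leq 1$ modes and the higher-mode free data, and showing that the reductive structure of HCS is stable under perturbation of the Schwarzschild background, is where the bulk of the technical effort will lie.
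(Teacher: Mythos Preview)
Your strategy is essentially the paper's: recast \eqref{ece} as HCS, fix the two gauge scalars, prescribe the four free scalars on $\ell\ge 2$, and run an iteration integrating from $r=\infty$ with the $\ell\le 1$ modes tied to the ADM charges. Two points, however, are understated or misdescribed and would bite when you try to close.

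First, the HCS is \emph{not} fully reductive even after the six scalars are frozen. The linearization around Schwarzschild (Proposition~\ref{prop:linearized-eqns-time-symmetry}) has a genuinely coupled $3\times 3$ main block in $(\thc,\Kc,\ao)$: the transport equations for $\thc,\Kc$ carry $\ao$ on the right, while $\ao$ is determined elliptically from $\Kc,\thc$ via the $\mu$-equation. This coupling is $O(1)$ at the linear level, so if your iteration puts any of these three into the source from the previous step, it will not contract. The paper handles this by solving the coupled nonlocal transport system $L_{main}$ at each iteration step (Proposition~\ref{prop:solve-unknowns-kac-Kc}): one eliminates $\ao$ via $\laph^{-1}$, projects onto spherical modes, and observes that for each $\ell\ge 2$ the resulting $2\times 2$ matrix is nilpotent, hence only accretive after inserting an $r^{\de'}$ weight and passing to a modified inner product (Lemma~\ref{eq:lemma-Lyapunov-matrix}). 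This Lyapunov-type matrix argument is the technical heart of the existence proof and is what replaces the naive ``reductive'' picture.

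Second, $\thh$ and $\widehat\Theta$ are not obtained by transport in $N$; they are recovered on each sphere via the Codazzi-type Hodge system $\d_1\d_2\thh=(\tfrac12\laph\thc,0)-(\Bb,\Bbd)+\ell\!\le\!1$ corrections (and analogously for $\widehat\Theta$), with Corollary~\ref{cor:solvability} supplying the $\ell\le 1$ obstruction constants. Also, $\nu=\divh\Xi=0$ forces $\Xi$ to be divergence-free (you wrote the opposite); its curl potential then comes from integrating $\Kkd$. Once the main block is handled as above, the remaining equations do acquire the triangular structure you anticipate (Remark~\ref{rem:block-triangular}), and the boundedness/contraction estimates close in the weighted norm \eqref{eq:def-norm-Psi-n}.
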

The precise statement is given in Theorem \ref{thm:main-precise}.

\begin{remark}
Though in  Theorem \ref{thm:main-precise}  we give  conditions  consistent\footnote{Such initial  data  sets were considered   in \cite{LR10} and are  more general than those  of  \cite{CK}.}  
 with $g-g_{Schw}=O(r^{-1-\de})$, $k=O(r^{-2-\de})$, the method can be easily adapted  to construct    arbitrarily  fast decaying data 
 used
  in \cite{KN03b}.   We expect that our  method  can also be applied to construct data with slower decay,  such as  used in \cite{Shen}. In that case, however,  one needs to integrate from a compact domain towards infinity rather than from infinity as we do here. In fact, an   important   motivation for    developing  our  method  here is to   show that the result of \cite{Shen} is sharp, i.e.   construct  small, smooth  initial  data  sets   which violate  Shen's   decay conditions,  and for which   stability of the Minkowski space result is  wrong.
   \end{remark}

 The statement of the theorem  implies  the existence of a  rich  family of vacuum exterior data with prescribed mass and angular momentum, as well as the center of mass.
Combining with Lorentz boosts, which generate nonzero linear momentum, we obtain a generalized 10-charge family of exterior solutions to \eqref{ece} compared with the 10-charge Kerr family constructed in \cite{ChruscielDelay2003}. 
As a consequence, we obtain a much larger class of exterior solutions that can be used for the gluing method.  As mentioned  in the abstract,  our result can be applied to provide a large class of initial  Cauchy data sets evolving to black holes, significantly   extending  the well-known result of the formation of trapped surfaces of  Li and Yu \cite{LiYu}.

\subsection{Acknowledgements}
The first author is supported by ERC-2023 AdG 101141855 BlaHSt. The second  author was funded by the NSF grant 1009079.

\section{Set-up and precise statement of the main theorem}

\subsection{Metrics, connections, and curvature tensors}\lab{subsect:notations}
We adopt the following notations:
\begin{itemize}
    \item The spacetime metric, connection, Riemann curvature tensor, Ricci tensor, and scalar curvature are denoted respectively by $\g$, $\D$, $\R$, $\Ricc$, and $\R_{\g}$. The spacetime coordinate indices are denoted by the Greek letters $\a$, $\b$, etc.
    \item The metric, connection, Riemann curvature tensor, Ricci tensor, and scalar curvature on $3$-dim Riemannian manifolds are denoted respectively by $g$, $\nab$, $R$, $\Ric$, and $R_g$. The corresponding divergence, curl, and trace operators are denoted by $\div$, $\curl$, and $\tr$. The spatial coordinate indices are denoted by the Latin letters $i$, $j$, etc.
    \item The connection with respect to the horizontal structure induced by an $r$-foliation\footnote{Or, more generally, orthogonal  to a   given vectorfield $N$. See also  Section \ref{sect:HorizDecompSi}.}    is denoted by $\nabh$. The corresponding divergence, curl, and trace operators are denoted respectively by $\divh$, $\curlh$, and $\trh$. We always take an orthonormal frame $\{e_a\}_{a=1,2}$ adapted to the horizontal structure. The letters $a$, $b$, etc will be used for such frame indices. 
     When the horizontal structure is integrable, we also denote the induced metric by $\gamma$, and the Gauss curvature by $K=K_\ga$, and the coordinate indices by $A$, $B$, etc.
 \end{itemize}
 Throughout the work, we use the Einstein summation convention on repeated indices. 
To avoid confusion regarding sign conventions, we remark that the Riemann curvature tensor $\Riem$ here is defined through the relation
\beaa
\nab_i \nab_j X^k- \nab_j \nab_i X^k=-\Riem_{i j l}^{\quad  k} X^l, 
\eeaa
and one can also lower the upper index to make it a $(0,4)$-tensor. It satisfies
\beaa
g(\nab_X \nab_Y Z, W)=g(\nab_Y \nab_X Z, W)+\Riem(X,Y,W,Z).
\eeaa
Here $\nab_X \nab_Y$ means $X^i Y^j \nab_i \nab_j$. The Ricci curvature and scalar curvature are then defined as
\beaa
\Ric(X,Y):=g^{ij} \Riem_{Xi Yj},\quad R_g=\mathrm{tr}_g\Ric.
\eeaa
The spacetime Riemann curvature tensor, Ricci tensor, and scalar curvature $\R$, $\Ricc$, $\R_{\g}$ are defined similarly.
The second fundamental form $k$ is defined by
\beaa
k_{ij}=g(\nab_{\pa_i} T,\pa_j).
\eeaa

\subsection{Asymptotically flat data}\lab{subsect:AF-data}
\begin{definition}\lab{def:AF-data}
An initial data set $(\Si,g,k)$ is said to be asymptotically flat, if there exists a coordinate system $(x^1,x^2,x^3)$ defined in a neighborhood of infinity, such that as $r:=\sqrt{(x^1)^2+(x^2)^2+(x^3)^2}\to \infty$, it holds that
\beaa
g_{ij}=\de_{ij}+o(1),\quad k_{ij}=o(1).
\eeaa
Given an $r$-foliation $\{S_r\}$ with the outward normal $N_0$ and induced area element $dA$ with respect to the Euclidean metric $(\de_{ij},0)$, the following quantities are  defined, if the limits exist:
\bea\lab{eq:def-ADM-charges}
\bsplit
\E&:=\frac{1}{16\pi}\lim_{r\to\infty} \int_{S_r} \sum_{i,j} \left(\pa_i g_{ij}-\pa_j g_{ii}\right)  N_0^j \, dA,\\
\P_i &:= \frac{1}{8\pi}\lim_{r\to\infty} \int_{S_r}\sum_j \left(k_{ij}-\mathrm{tr}_\de\hspace{0.1em} k\, g_{ij}\right)  N_0^j \, dA,\\
\J_i &:= \frac 1{8\pi} \lim_{r\to\infty}\int_{S_r} \sum_{j,l,m} \in_{ilm} x^l \left(k_{mj} -\mathrm{tr}_\de\hspace{0.1em} k\, g_{mj}\right) N_0^j dA,\\
\C_i &:= \frac{1}{16\pi m}\lim_{r\to\infty}\int_{S_r}
\sum_{j,k}\Big( x^i\big(\partial_k g_{kj}-\partial_j g_{kk}\big)
-\big((g_{ij}-\de_{ij})-\delta_{ij}(g_{kk}-\de_{kk})\big)\Big)\, N_0^j\, dA,
\end{split}
\eea
see, e.g., Section 1.2 of \cite{MaoOhTao2023}.
The quantities $\E$, $\P$, $\J$, $\C$ are called respectively the ADM energy, linear momentum, angular momentum, and center of mass.
\end{definition}

Throughout this work, we consider $\Si:=(r_0,\infty)\times \mathbb{S}^2$, which can be embedded into the Euclidean space $\mathbb{R}^3$ as an exterior region. This endows   $\Si$ with a  natural $r$-function.

\subsubsection{Horizontal  decomposition on $\Si$}
\lab{sect:HorizDecompSi}

 Assume that  $\Si:=(r_0,\infty)\times \mathbb{S}^2$  is equipped with a metric $g$. A specification of a unit vector field $N$ determines a horizontal structure $H=N^\perp$, defined through the metric $g$.
We then take an orthonormal frame $\{e_1,e_2\}$  spanning  $H$ so that the triad $\{N,e_1,e_2\}$ is an orthonormal frame on $\Si$. We  consider mostly the case of integrable horizontal  structures when  $N$ is orthogonal to an $r$-foliation.  See Section 2  in  \cite{GKS} for a detailed discussion of horizontal structures.

 {\bf Ricci coefficients.} The corresponding Ricci (rotation) coefficients on the $3$-Riemannian manifold are denoted as follows:
 \bea\lab{eq:Ricci-N}
 \pp_a:=g(\nab_N N,e_a),\quad  \th_{ab}:=g(\nab_a N, e_b).
 \eea
 The trace and the traceless part of $\th$ are denoted respectively by $\trth$ and $\thh$.

{\bf Curvature components.} The curvature components are denoted as follows:
\bea\lab{eq:def-curvature-components-Si}
\slashed{R}_{ab}:=R(N,e_a,N,e_b),\quad Y_a:=R(N, e_b, e_b, e_a).
\eea
 The trace and the traceless part of $\Rh$ are denoted respectively by $\trh \Rh$ and $\Rhh$. 
 
 Denoting the horizontal volume form by $\in_{ab}$, and define the dual $\dual Y_a:= \, \in_{ab} Y_b$. Then one directly verifies the relation
 \bea\lab{eq:R-Nabc}
 R_{Nabc} =\, \in_{bc}\hspace{-0.2em} \dual Y_a.
 \eea

{\bf Components of $k$.} Given initial data $(g,k)$ and the triad $\{N,e_1,e_2\}$ on $\Si$, we define
\beaa
\Th^{(N)}_{ab}:=k(e_a,e_b),\quad \Xi^{(N)}_a:=k(N,e_a),\quad \Pi^{(N)} :=k(N,N).
\eeaa
They are well-defined scalars or horizontal tensors once $N$ is specified. In what follows, when there is no danger of confusion, we simply denote $\Th= \Th^{(N)}$, $\Xi=\Xi^{(N)}$, $\Pi= \Pi^{(N)}$.
 The trace and the traceless part of $\k$ are denoted respectively by $\trh \k$ and $\kh$. 

{\bf The lapse function.}
The lapse function of a given $r$-foliation is defined to be 
\bea
\ah:=(N r)^{-1}.
\eea
Given the $r$-foliation, one can always write the metric $g$ in the following form
 \bea
      \lab{eq:g-metric(t,th,vphi)}
            g=\ah^2 dr^2+\ga,\quad \ga=\ga_{AB} d\vth^A d\vth^B,
      \eea
      and note that $\ah:=(N r)^{-1}=|\nab r|_g^{-1}$ are independent of the choice of $(\vth^1,\vth^2)$.
      
The Gauss curvature of the $r$-surfaces is denoted by $K=K_\ga$, where $\ga$ denotes the induced metric. We define
\bea\lab{eq:def-mu}
\mu:=-\slashed{\Delta}(\log\ah)+K-\frac 14(\trth)^2,
\eea
We often denote $\Kc:=K-r^{-2}$. We also define the following scalar field
\bea\lab{eq:def-nu}
\nu:=\divh\Xi=\de^{ab} \nabh_a \Xi_b.
\eea

We have the following simple relation regarding the radial acceleration $1$-form $p$, defined by equation \eqref{eq:Ricci-N}.
\begin{lemma}
For a given $r$-foliation, we have 
\bea\lab{eq:P-loga}
\pp=-\nabh(\log\ah).
\eea
\end{lemma}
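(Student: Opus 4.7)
The plan is a direct computation, exploiting only that $\ah=|\nab r|_g^{-1}$ and that $|N|_g=1$. Since $N$ is the outward unit normal to the $r$-foliation, it is proportional to $\nab r$, and the normalization $\ah=(Nr)^{-1}$ gives the identification
\begin{equation*}
N = \ah\,\nab r, \qquad \text{equivalently}\qquad \nab r = \ah^{-1} N.
\end{equation*}
I would start from this formula and differentiate along $N$.

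Using $\nab_N N = N(\ah)\,\nab r + \ah\,\nab_N\nab r$ and substituting $\nab r = \ah^{-1} N$ in the first term yields $N(\ah)\,\nab r = \ah^{-1} N(\ah)\, N$. For the second term, I would invoke the symmetry of the Hessian of a scalar function, $\nab^i\nab_j r = \nab_j\nab^i r$, to rewrite
\begin{equation*}
(\nab_N \nab r)^i = N^j\nab_j \nab^i r = N^j \nab^i \nab_j r = N^j \nab^i(\ah^{-1} N_j),
\end{equation*}
and then expand the right-hand side. The term containing $\nab^i N_j$ pairs with $N^j$ to give $\tfrac12 \nab^i(|N|_g^2)=0$, while the term where the derivative falls on $\ah^{-1}$ produces $-\ah^{-2}\nab^i\ah$ (using $N^jN_j=1$). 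Therefore
\begin{equation*}
\nab_N N = \ah^{-1} N(\ah)\, N \;-\; \ah^{-1}\nab\ah.
\end{equation*}

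To conclude, I would split $\nab\ah = N(\ah)\,N + \nabh\ah$ into its normal and horizontal pieces, cancel the two normal contributions, and read off
\begin{equation*}
\nab_N N = -\ah^{-1}\nabh\ah = -\nabh(\log\ah).
\end{equation*}
Pairing with $e_a$ gives $p_a = g(\nab_N N, e_a) = -\nabh(\log\ah)_a$, which is \eqref{eq:P-loga}. There is no real obstacle here; the only subtle point is remembering to use the symmetry of $\nab^2 r$ (valid because $r$ is a scalar) in order to transfer the $N$-derivative of $\nab r$ to a derivative of $N$, at which stage the unit-norm constraint does the work.
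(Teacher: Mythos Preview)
Your proof is correct and complete. It differs from the paper's approach in that you work coordinate-free, using only the identification $N=\ah\,\nab r$, the symmetry of the Hessian of the scalar $r$, and the unit-norm constraint on $N$. The paper instead fixes adapted coordinates $(r,\vth^1,\vth^2)$ with $g=\ah^2\,dr^2+\ga$, computes the single relevant Christoffel symbol $\Gamma_{rr}^{\;\;A}=-\tfrac12 g^{AB}\partial_{\vth^B}(\ah^2)$, and reads off $p_a=g(\nab_N N,e_a)$ directly. Your route is more intrinsic and transparently shows why the identity holds on any $r$-foliation without appeal to a particular angular chart; the paper's route is a one-line Christoffel calculation once the coordinate form of the metric is in hand, and it makes explicit use of the block structure $g=\ah^2\,dr^2+\ga$ that is already set up in the paper. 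Both are short; yours has the advantage of not needing to remark, as the paper does, that the conclusion is independent of the coordinate choice.
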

\begin{proof}
We write the metric $g$ in the form \eqref{eq:g-metric(t,th,vphi)}, and in addition choose an orthonormal frame $\{e_a\}$ tangent to the $r$-constant spheres. Then since, with respect to the coordinates $r,\vth^1,\vth^2$, $\Ga_{rr}^{\;\; A}=\frac 12 g^{AB}(-\pa_{\vthB} g_{rr})=-\frac 12 g^{AB}\pa_{\vthB} (\ah^2)$,  we have
\begin{equation}\lab{eq:P-same-as-loga}
    \begin{split}
        \pp_a&=g(\nabh_N N,e_a)
        =\ah^{-2}g(\Ga_{rr}^{\;\;\, A} \pa_{\vthA},e_a)=\ah^{-2}(-\frac 12) g^{AB}\pa_{\vthB} (\ah^2) g(\pa_{\vthA}, e_a)\\
        &=-\frac 12\ah^{-2}(e_a)^B \pa_{\vthB} (\ah^2)=-\frac 12\ah^{-2}e_a(\ah^2)=-\nabh_a(\log\ah),
    \end{split}
\end{equation}
as required. Note that the conclusion itself does not depend on the coordinate choice.
\end{proof}

\subsubsection{Hodge operators, Spherical harmonics}\lab{sect:spherical-harmonic-Hodge-operators}
We adopt the following standard notation of horizontal operators for a horizontal $1$-form $\psi$:
\begin{equation*}
    \divh \psi:=\delta^{ab}\, \nabh_a \psi_b,\quad \curlh \psi:=\, \in^{ab} \nabh_a \psi_b,\quad (\nabh\hot \psi)_{ab}:=\nabh_a \psi_b+\nabh_b \psi_a-\delta_{ab}\, \divh\psi.
\end{equation*}
We now recall the Hodge operators defined in \cite{CK} and extended to the non-integrable cases in \cite{GKS}.
\begin{definition}\
  Given a horizontal structure $H$,  we denote by $\ss_0$ the set of scalar fields in the spacetime, by $\ss_1$ the set of $H$-horizontal $1$-forms, and by $\ss_2$ the set of symmetric traceless $H$-horizontal covariant $2$-tensors.
\end{definition}
\begin{definition}\lab{def:Hodge-operators}
We consider the following Hodge operators:
    \begin{itemize}
        \item $\d_1$ takes $\ss_1$ into $\ss_0$:\qquad     $\d_1\xi=(\divh\xi,\curlh\xi),$
        \item $\d_2$ takes $\ss_2$ into $\ss_1$:\qquad $   (\d_2 h)_a=\nabh^b h_{ab},$
        \item $\d_1^*$ takes $\ss_0$ into $\ss_1$:\qquad $    (\d_1^* (f,\dual f))_a=-\nabh_a f+\in_{ab} \nabh_b \dual f,$
        \item $\d_2^*$ takes $\ss_1$ into $\ss_2$:\qquad $  \d_2^* \xi=-\frac 12\nabh\hot\xi$.
    \end{itemize}
    Whenever  we need to be more precise, we will use the notations $\dga_1$, $\dga_2$, $(\dga_1)^*$, $(\dga_2)^*$  to specify the dependence of these operators  on the horizontal  metric $\ga$.
\end{definition}

We focus on the integrable case where $H$ is the tangent bundle of a sphere $(S,\ga)$. The operators $\d_1^*, \d_2^*$ are the formal adjoints $\d_1, \d_2$, i.e., 
\bea\lab{eq:formal-adjoint}
\langle \d_1 \xi, (f,\dual f) \rangle =\langle \xi, \d_1^* (f,\dual f)\rangle,\quad \langle \d_2 h, \xi\rangle =\langle h,\d_2^* \xi\rangle.
\eea
Here $\langle \cdot,\cdot\rangle$ is the inner product of $L^2(S,\ga)$.

We also recall the following identities in \cite{CK}:
\bea
\begin{split}
\lab{eq:dcalident}
\d^*_1 \d_1&=-\laph_1+K,\qquad\,\,\,\, \d_1  \d^*_1=-\laph,\\
\d^*_2  \d_2 &=-\frac 12\laph_2+K,\qquad \d_2 \d^*_2=-\frac 12(\laph_1+K),
\end{split}
\eea
where $K$ denotes the Gauss curvature of the sphere.

{\bf Spherical harmonics.}
We fix a choice of the standard spherical coordinates $(\vth^1,\vth^2)$ on $\mathbb{S}^2$, complemented with $(x^1,x^2)$ near $\vth^1=0,\pi$. This allows us to define the standard spherical harmonics $\{J_{\ell,\mm}\}$, where the integers $\ell$, $\mm$ satisfy $\ell\geq 0$, $-\ell\leq \mm \leq \ell$. 
They form a complete orthonormal basis of the space $L^2(\mathbb{S}^2)$, where $\mathbb{S}^2$ is equiped with the unit round metric 
\beaa
\gs=(d\vth^1)^2+\sin^2 (\vth^1) (d\vth^2)^2. 
\eeaa
We also denote the $r$-weighted round metric
\bea\lab{eq:def-gz}
\gz:=r^2 (\gs)=r^2 \left((d\vth^1)^2+\sin^2 (\vth^1) (d\vth^2)^2\right).
\eea
We denote the following $\ell=1$ basis, which plays a special role as in \cite{KS-GCM1}, \cite{KS-GCM2}:
\beaa
J_0:=J_{1,0}=\sqrt{\frac{3}{4\pi}}\cos\vth^1, \qquad J_+:=J_{1,1}=\sqrt{\frac{3}{4\pi}}\sin\vth^1\cos\vth^2, \qquad J_-:=J_{1,-1}=\sqrt{\frac{3}{4\pi}}\sin\vth^1\sin\vth^2.
\eeaa
For any scalar field $\phi$ on the sphere, one can uniquely decompose 
\bea\lab{eq:decomposition-ell-geq-2}
\phi=\phi_{\ell\leq 1}+\phi_{\ell\geq 2},
\eea
where $\phi_{\ell\leq 1}$ is spanned by $\{1,J_0,J_+,J_-\}$, and is orthogonal to $\phi_{\ell\geq 2}$ with respect to the measure induced by $r^{-2}\gz=\gs$. 
\begin{remark}\lab{rem:omega-i-ell=1}
Note that $J_+$, $J_-$, $J_0$ in fact correspond to the restriction of $x^1$, $x^2$, $x^3$ to the unit sphere modulo a constant factor:
\beaa
J_+=\sqrt{\frac 3{4\pi}} \om_1,\quad J_-=\sqrt{\frac 3{4\pi}}\om_2,\quad J_0=\sqrt{\frac 3{4\pi}} \om_3,\qquad \om_i:= x_i/|x|.
\eeaa
While $\om_i$ are not normalized, to have cleaner constant factors in expressing ADM charges in the $\ell=1$ components (see Appendix \ref{appendix:physical-quantities}), we also introduce the components under $\om_i$
\beaa
\phi_{\ell=1,i}:=\int_{S_r} \phi \, \om_i\, d\vol_{\gs}.
\eeaa 
In contrast, the components $\phi_{\ell,\mm}$ to be introduced in \eqref{eq:def-l-m-modes} are defined with respect to the orthonormal basis $\{J_{\ell,\mm}\}$.
\end{remark}
\begin{lemma}
\lab{Le:Si*-ell=1modes} 
The functions $J_p$ ($p=0,+,-$) verify the following properties on $(S,\gz=r^2(\gs))$:
\begin{equation}
\lab{eq:basicestimatesforJp-onSi_*} 
\bsplit
&\int_{S}J_p = 0,\quad \frac{1}{r^2}\int_{S}J_p J_q = \de_{pq},
\quad \left(r^2 \lapz +2\right)    J_p  = 0, \\
& \quad  (\dgz_2)^* (\dgz_1)^* (J_p,0)= (\dgz_2)^* (\dgz_1)^*(0,J_p)=(0,0).
\end{split}
\end{equation}
\end{lemma}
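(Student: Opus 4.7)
The plan is to prove each of the four assertions in turn, exploiting the fact that $(S,\gz=r^2\gs)$ is conformal to the unit round sphere and that the $J_p$ are the standard first spherical harmonics.

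\textbf{First and second identities (orthogonality/normalization).} Since $\gz=r^2\gs$, the volume element satisfies $d\vol_{\gz}=r^2\, d\vol_{\gs}$. The three functions $J_p$ are orthonormal elements of $L^2(\mathbb{S}^2,\gs)$, orthogonal to the constant $J_{0,0}$. Pulling back through the rescaling gives $\int_S J_p\, d\vol_{\gz}=r^2\int_{\mathbb{S}^2}J_p\, d\vol_{\gs}=0$ and $\int_S J_pJ_q\, d\vol_{\gz}=r^2\delta_{pq}$, which is the second identity after dividing by $r^2$.

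\textbf{Third identity (Laplacian eigenvalue).} Under the conformal rescaling $\gz=r^2\gs$ with $r$ constant on $S$, the Laplace--Beltrami operators on scalars are related by $\lapz=r^{-2}\lap_{\gs}$. The classical eigenvalue relation $\lap_{\gs}J_{\ell,\mm}=-\ell(\ell+1)J_{\ell,\mm}$ applied at $\ell=1$ yields $\lapz J_p=-2r^{-2}J_p$, which is $\left(r^2\lapz+2\right)J_p=0$.

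\textbf{Fourth identity (vanishing of $\d_2^*\d_1^*$ on $\ell=1$).} Here the key observation is the Obata--type Hessian formula for $\ell=1$ harmonics. Since Christoffel symbols are invariant under constant conformal rescalings, the covariant Hessian on $(S,\gz)$ of $J_p$ coincides with that on $(\mathbb{S}^2,\gs)$, which equals $\nabh_a\nabh_b J_p=-J_p(\gs)_{ab}=-r^{-2}J_p\,(\gz)_{ab}$. Now compute:
\begin{itemize}
\item For $(J_p,0)$: by Definition \ref{def:Hodge-operators}, $\d_1^*(J_p,0)=-\nabh J_p$ and
\[
\d_2^*\d_1^*(J_p,0)=\tfrac{1}{2}\nabh\hot\nabh J_p,\qquad (\nabh\hot\nabh J_p)_{ab}=2\nabh_a\nabh_b J_p-\delta_{ab}\lapz J_p.
\]
Substituting the Hessian formula and the eigenvalue $\lapz J_p=-2r^{-2}J_p$, the two terms cancel pointwise, giving $0$.
\item For $(0,J_p)$: $\d_1^*(0,J_p)_a=\in_{ab}\nabh_b J_p$. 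Since $\in_{ab}$ is covariantly constant, the Hessian of this dual gradient reduces to $\in_{bc}\nabh_a\nabh_c J_p$, and using the Hessian formula again gives a multiple of $\in_{ab}+\in_{ba}=0$ in the symmetric part and a multiple of $\in^{ab}\delta_{ab}=0$ in the trace part. Hence $\d_2^*\d_1^*(0,J_p)=0$ as well.
\end{itemize}

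\textbf{Main obstacle.} There is essentially no obstacle beyond bookkeeping; the only substantive input is the Obata-type identity $\nabh_a\nabh_b J_p=-r^{-2}J_p(\gz)_{ab}$, which can either be cited as a classical fact on the round sphere or verified by one direct computation using the explicit expressions for $J_0,J_\pm$ in the chosen coordinates. Everything else is algebraic manipulation of the definitions of $\d_1^*,\d_2^*$ together with the conformal scaling of $\lap$ and $d\vol$.
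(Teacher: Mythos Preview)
Your proof is correct. The first three identities are handled identically to what any proof would do, and your argument for the fourth via the Obata--type Hessian identity $\nabh_a\nabh_b J_p=-r^{-2}J_p\,(\gz)_{ab}$ is clean and valid.

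The paper takes a genuinely different route for the fourth identity. Rather than computing $\d_2^*\d_1^*$ directly, it sets $F:=(\dgz_2)^*(\dgz_1)^* J_p$ and uses the Hodge identity $2\dgz_2(\dgz_2)^*=(\dgz_1)^*\dgz_1-2K_{\gz}$ (from \eqref{eq:dcalident}) to compute
\[
2\,\dgz_1\dgz_2 F=(\lapz)^2 J_p+2r^{-2}\lapz J_p=0,
\]
and then concludes $F=0$ from the injectivity of $\dgz_1\dgz_2$ on $\ss_2$. Your approach is more elementary and self-contained: it avoids invoking the Hodge--operator algebra and the injectivity of $\dgz_1\dgz_2$, and it yields the vanishing \emph{pointwise} rather than after applying an elliptic operator. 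The paper's approach, on the other hand, treats $(J_p,0)$ and $(0,J_p)$ in a single stroke and slots naturally into the Hodge--operator framework already set up for the rest of the analysis.
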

\begin{proof}
This is a special case of Lemma 5.2.8 in \cite{KS-GCM1}.
For the benefit of the reader, we repeat the proof of  the last statement.  Let  $F:=(\dgz_2)^* (\dgz_1)^* J_p$, where by $(\dgz_1)^* J_p$, we mean either $(\dgz_1)^*(J_p,0)$ or $(\dgz_1)^*(0,J_p)$. 
 Using  the identity $2\dgz_2 (\dgz_2)^*= (\dgz_1)^*\dgz_1-2K_{\gz}=(\dgz_1)^*\dgz_1-2r^{-2}$, we deduce\footnote{In fact, $\laph J_p$ should be replaced by either $(\laph J_p, 0)$ or $(0, \laph J_p)$ depending whether we consider $(\dgz_1)^*( J_p ,0)$ or $(\dgz_1)^*(0, J_p)$.} 
 \beaa
  2 \dgz_1\dgz_2 F&=& \dgz_1((\dgz_1)^*\dgz_1-2r^{-2}) (\dgz_1)^*(J_p) \\
  &=&  \dgz_1(\dgz_1)^* \dgz_1 (\dgz_1)^* J_p - 2r^{-2} \dgz_1 (\dgz_1)^*(J_p) \\
  &=& (\lapz)^2 J_p  +2 r^{-2} \lapz J_p=0,
 \eeaa
 as required.
\end{proof}

\subsubsection{Norms}

Note that $\Si$ is foliated by a family of spheres $S_r:=\{r\}\times \mathbb{S}^2$ by definition. We now define the $L^2$, $L^\infty$, and weighted Sobolev spaces over $S_r$.
\begin{definition}\lab{def:L2-Hs-S_r}
For horizontal covariant rank-$k$ tensors $U_{a_1\cdots a_k}$,
we denote by $L^2(S_r)$ the $L^2$ space through the metric $\gz$ defined in \eqref{eq:def-gz}, and by $\H^s(S_r)$ the Sobolev spaces for positive integers $s$, defined through $r\nabz$ where $\nabz$ is the covariant derivative with respect to $\gz$, i.e., through the norm
\beaa
|| U ||_{\H^s(S_r)}:= \sum_{i\leq s} ||(r\nabz)^i U||_{L^2(S_r)}.
\eeaa
The $L^\infty(S_r)$ space is defined through the norm
\beaa
||U||_{L^\infty(S_r)}:={\mathrm{ess} \sup}_{S_r} |\langle U,U\rangle_{\gz}|^\frac 12.
\eeaa
In this work, whenever we write an $\H^s$, $L^2$, or $L^\infty$ space without specification, we refer to the one over $S_r$ defined here.
\end{definition}
\begin{remark}
Given an initial data set $(\Si, g, k)$ and an $r$-foliation, we can also define similar norms with respect to  $\ga$, i.e. the metric induced on the foliation. These can be related  to the norms defined through $\gz$ in Definition \ref{def:L2-Hs-S_r}, see Lemma \ref{lemma:equivalence-norms}. Consequently, in the iteration scheme, we shall always refer to the norms defined in Definition \ref{def:L2-Hs-S_r}.
\end{remark}

\begin{remark}
Since the $\H^s(S_r)$ norms, in view of the area element of $\gz$, provides an additional $r$ factor, throughout the paper, we will frequently write our estimates for a quantity $\psi$ in the form $r^{-1} ||\psi||_{\H^s}$, reflecting the true $L^\infty$ size of $\psi$ in line with the Sobolev inequality on the sphere.
\end{remark}

\begin{remark}\lab{rem:inverse-metric}
Throughout the paper, we often encounter the difference between $\nabh$ and $\nabz$ on various quantities, which yields the Christoffel symbol of $\ga$ with respect to $\gz$ \[\Ga_{ab}^{\;\;\, c}(\ga;\gz)= \frac 12 (\ga^{-1})^{cd} (\nabz_{a} \ga_{bd}+\nabz_b \ga_{cd}-\nabz_d \ga_{ab}),\]
under a choice of the horizontal orthonormal frame of $\gz$, denoted by $\{\displaystyle e_a\0\}_{a=1,2}$. Our assumption always ensures that 
 $\ga$ is close to $\gz$ in terms of the components in $\{\displaystyle e_a\0\}_{a=1,2}$. Therefore, the inverse of $\ga$ with respect to $\gz$ stays bounded, and hence we have
\beaa
\Ga_{ab}^{\;\;\, c}(\ga;\gz)=O(\nabz(\ga-\gz)),
\eeaa
where the size is defined through the components in $\{\displaystyle e_a\0\}_{a=1,2}$.
\end{remark}

{\bf Sobolev norms in the frequency space.} For a scalar field $\phi$, we denote its $(\ell,\mm)$-modes by
\bea\lab{eq:def-l-m-modes}
\phi_{\ell,\mm}:= \langle \phi,J_{\ell,\mm}\rangle_{\gs}=r^{-2} \langle \phi,J_{\ell,\mm}\rangle_{\gz}.
\eea
It is well-known that the Sobolev space $H^s(S_r,\gs)$ can be alternatively characterized by 
\bea\lab{eq:Hs-norm-modes}
||\phi||_{H^s(S_r,\gs)}^2=\sum_{\ell=0}^\infty \sum_{\mm=-\ell}^\ell (1+\ell^2)^s |\phi_{\ell,\mm} |^2.
\eea
By simple rescaling, $||\phi||_{H^s(S_r,\gs)}=r^{-1} ||\phi||_{\H^s(S_r)}$. Therefore, one has
\bea\lab{eq:Hs-norm-modes-new}
||\phi||_{\H^s(S_r)}^2 =r^2 \sum_{\ell=0}^\infty \sum_{\mm=-\ell}^\ell (1+\ell^2)^s |\phi_{\ell,\mm} |^2.
\eea
In particular, if $\phi$ is supported on $\ell\leq 1$, we have 
\bea\lab{eq:Hs-Linfty-ell-leq-1}
||\phi||_{\H^s(S_r)}\approx r \Big(|\phi_{\ell=0}|+\sum_{m=-1}^1 |\phi_{1,\mm}|\Big) \lesssim r ||\phi||_{L^\infty(S_r)}.
\eea

{\bf Integral Minkowski inequality.} We recall the standard integral Minkowski inequality applied to $L^1(I)$ and sequence-$l^2$ spaces, where $I$ is any interval:
\bea\lab{eq:integral-Minkowski-inequality}
\left\| \int_I \, |a_n(r)| \, dr \right \|_{\ell^2_n}  \leq \int_I \, \left\| a_n(r) \right\|_{\ell^2_n} dr.
\eea
 
\subsection{Horizontal Constraint System}\lab{subsect:HCS}
\subsubsection{Unconditional equations}
In what follows, we restrict our attention to the case of $\Si=(r_0,\infty)\times \mathbb{S}^2$, where $N$ is the outward unit normal to the $r$-foliation $\{S_r\}$. 
The horizontal structure $H=N^\perp$ is then automatically integrable.
Recall that $\Rh_{ab}$, defined in \eqref{eq:def-curvature-components-Si}, can be viewed as an horizontal symmetric $2$-tensor, and,  as such, it can be decomposed as
\begin{equation*}
    \Rh_{ab}=\frac 12\trR\, \ga_{ab}+\Rhh_{ab},
\end{equation*}
where $\Rhh$  is traceless.  The scalar field $\trR$ is, by definition, related to the scalar curvature $R_g$ through the following identity
\begin{equation}\lab{eq:relation-Rg-trR-Rabab}
    R_g=\Ric_{aa}+\Ric_{NN}=\Riem_{abab}+2\Riem_{NaNa}=\Riem_{abab}+2\trR.
\end{equation}
Also recall the horizontal $1$-form $Y_a:=\Riem_{Nbba}$.

We have the following equations, which hold regardless of whether $(g,k)$ solves the constraint equations.
\begin{proposition}[Unconditional equations I]\lab{prop:Unconditional-equations-1}
The following equations hold true:
\bea
\lab{eq:unconditional-N-trth}
\nabh_N \trth&=& \divh \pp-|\thh|^2-\frac 12(\trth)^2-|\pp|^2-\trh\Rh,\\
\lab{eq:unconditional-Gauss}
\Riem_{abab} &=& 2K-\frac 12(\trth)^2+|\thh|^2,\\
\lab{eq:unconditional-Codazzi}
\divh \thh&=&\frac 12 \nabh\trth-Y,\\
\lab{eq:unconditional-transport-thh}
\nabh_N \thh&=&\nabh\hot \pp - \trth\, \thh-\pp\hot \pp-\Rhh,\\
\lab{eq:unconditional-Bianchi}
\nabh_N K &=& -\divh Y-\trth\, K+2\pp\cdot Y
-\thh\cdot (\nabh\hot \pp-\pp\hot \pp) +\frac 12\trth\, (\divh \pp-|\pp|^2).
\eea
\end{proposition}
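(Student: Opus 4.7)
All five identities are standard structure equations for an $r$-foliation in a three-dimensional Riemannian manifold and, crucially, do not invoke the constraint equations, which is why they hold unconditionally. The plan is to split them into the Gauss-Codazzi pair and the transport equations along $N$, and to derive the latter by the Ricci commutation identity.

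Equations \eqref{eq:unconditional-Gauss} and \eqref{eq:unconditional-Codazzi} follow from the classical Gauss-Codazzi theory for the embedding $S_r\hookrightarrow \Sigma$. The Weingarten relation $\nab_a N=\th_a^{\;\,b}e_b$ (which uses $g(\nab_a N,N)=0$) together with $\nab_a e_b=\nabh_a e_b+\th_{ab}N$ allows one to decompose the three-dimensional Riemann tensor into its intrinsic (sphere) part and a quadratic $\th\wedge\th$ correction. Contracting the Gauss equation in both pairs of indices, using $R^{S_r}_{abab}=2K$ in two dimensions, and inserting $\th=\tfrac12\trth\,\ga+\thh$ with $(\trth)^2-|\th|^2=\tfrac12(\trth)^2-|\thh|^2$, gives \eqref{eq:unconditional-Gauss}. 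The Codazzi equation $R_{Nabc}=\nabh_b\th_{ac}-\nabh_c\th_{ab}$, contracted with $\delta^{ac}$ and combined with the definition $Y_a=R_{Nbba}$ from \eqref{eq:def-curvature-components-Si} and the same trace-traceless decomposition of $\th$, yields \eqref{eq:unconditional-Codazzi}.

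Equations \eqref{eq:unconditional-N-trth} and \eqref{eq:unconditional-transport-thh} arise from computing $\nabh_N\th_{ab}$ via the Ricci commutation identity applied to $\nab_N\nab_a N$. Writing $\nab_N\nab_a N$ as $\nab_a\nab_N N$ plus a bracket term plus a Riemann curvature contribution, and using $\nab_N N=\pp^c e_c$ together with the Weingarten formula above, the first piece supplies horizontal derivatives of $\pp$ (gathering into $\nabh\hot\pp$ for the symmetric traceless part or $\divh\pp$ for the trace), the bracket produces quadratic $\th\cdot\pp$ and $\pp\otimes\pp$ contributions, and the curvature piece is precisely $\Rh_{ab}$ up to sign. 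Symmetrizing in $a,b$ and splitting into trace and traceless parts then yields \eqref{eq:unconditional-N-trth} and \eqref{eq:unconditional-transport-thh} respectively, with $|\th|^2=\tfrac12(\trth)^2+|\thh|^2$ producing the Raychaudhuri form of \eqref{eq:unconditional-N-trth}.

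Equation \eqref{eq:unconditional-Bianchi} I would obtain by differentiating the Gauss identity \eqref{eq:unconditional-Gauss} along $N$,
$$
2\,\nabh_N K \;=\; \nabh_N R_{abab} \;+\; \trth\,\nabh_N\trth \;-\; 2\,\thh\cdot\nabh_N\thh,
$$
and then substituting \eqref{eq:unconditional-N-trth} and \eqref{eq:unconditional-transport-thh}. The remaining term $\nabh_N R_{abab}$ is converted using the twice-contracted second Bianchi identity, which, once projected onto the $(N,e_a)$-frame, expresses it in terms of $\divh Y$ (via $Y_a=R_{Nbba}$) together with lower-order couplings of $\th$ and $\pp$ with the horizontal curvature components. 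The main obstacle is not conceptual but organizational: because the sign convention for $R$ adopted in Section \ref{subsect:notations} is opposite to the most common one, every instance of Gauss-Codazzi and every application of the Ricci identity must have its signs checked from scratch. For \eqref{eq:unconditional-Bianchi} in particular, producing the precise coefficients of $\thh\cdot(\nabh\hot\pp-\pp\hot\pp)$ and $\tfrac12\trth(\divh\pp-|\pp|^2)$ requires a careful cancellation between the Bianchi-derived terms and the substitutions from \eqref{eq:unconditional-N-trth} and \eqref{eq:unconditional-transport-thh}.
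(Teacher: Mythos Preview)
Your proposal is correct and follows essentially the same route as the paper. The derivations of \eqref{eq:unconditional-Gauss}--\eqref{eq:unconditional-Codazzi} via Gauss--Codazzi and of \eqref{eq:unconditional-N-trth}, \eqref{eq:unconditional-transport-thh} via the curvature commutation relation for $\nab_a\nab_N N$ versus $\nab_N\nab_a N$ are exactly what the paper does (the paper organizes it by computing $\nabh_a\pp_b$ first, then isolating $\nabh_N\th_{ab}$, but this is the same computation). For \eqref{eq:unconditional-Bianchi} the paper likewise differentiates the Gauss relation and substitutes the two transport equations; the only cosmetic difference is that the paper handles $\nabh_N R_{abab}$ by contracting the \emph{full} differential Bianchi identity $\nab_N R_{abcd}+\nab_b R_{Nacd}+\nab_a R_{bNcd}=0$ rather than invoking the twice-contracted version. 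In three dimensions these carry the same information (via $R_{abab}=R_g-2\,\trR$ and $Y_a=-\Ric_{Na}$), and both routes produce the identical intermediate expression $\nabh_N R_{abab}=-\trth\,R_{abab}-2\divh Y+4\pp\cdot Y+\trth\,\trR-2\thh\cdot\Rhh$, after which the $\trR$ and $\Rhh$ terms cancel against those coming from $\trth\,\nabh_N\trth$ and $-2\thh\cdot\nabh_N\thh$ --- precisely the ``careful cancellation'' you flag.
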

\begin{proof}
See Appendix \ref{subsubsect:proof-unconditional-equations-1}.
\end{proof}

{\bf Constraint quantities on $\Si$.}
We define the  momentum and    Hamiltonian     constraint quantities
\bea
\lab{eq:def-CC-Mom}
 \CC_{Mom} (g,k)&:=&\div k-\nab\, \tr k ,\\
 \lab{eq:def-CC-Ham}
\CC_{Ham} (g,k)&:=&R_{g}+(\tr k)^2-|k|^2.
\eea
Expanding $(\CC_{Mom})_N$, $(\slashed{\CC}_{Mom})_a:=(\CC_{Mom})_a$, and $\CC_{Ham}$ under the frame $\{N,e_a\}$, we obtain
\begin{proposition}[Unconditional equations II]
\lab{prop:constraint-equation-in-frame}
The following equations hold:
\bea
\lab{eq:structure-constraint-Phi-R}
\hspace{-3em}\nabh_N \trt&=&\divh \Xi + \trth \Pi -\thh\cdot\kh-\frac 12 \trth\trt -2\pp\cdot \Xi-(\CC_{Mom})_N,\\
\lab{eq:structure-constraint-Phi-a}
\hspace{-3em}\nabh_N \Xi &=& -\divh \kh+\pp\cdot\k-\Pi \pp-\frac 32\trth\, \Xi-\thh\cdot \Xi+\frac 12 \nabh \trt +\nabh\, \Pi+ \slashed{\CC}_{Mom},\\
\lab{eq:structure-constraint-H}
\hspace{-3em}\nabh_N \trth &=& \divh \pp-\frac 12 |\thh|^2 -\frac 34(\trth)^2-|\pp|^2+K +\Pi\, \trt+\frac 14 (\trt)^2-|\Xi|^2-\frac 12 |\kh|^2-\frac 12 \CC_{Ham}. 
\eea
\end{proposition}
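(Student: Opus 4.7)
The plan is a direct frame computation: project the constraint tensors $(\CC_{Mom})_\mu = (\div k)_\mu - \nab_\mu \tr k$ and $\CC_{Ham} = R_g + (\tr k)^2 - |k|^2$ onto the triad $\{N, e_a\}$, convert all ambient covariant derivatives into horizontal ones, and solve for the transport quantities $\nabh_N\trt$, $\nabh_N\Xi$, and $\nabh_N\trth$ respectively. For equation \eqref{eq:structure-constraint-H}, the only additional input is the Gauss-type identity \eqref{eq:unconditional-Gauss} and the unconditional transport equation \eqref{eq:unconditional-N-trth}, used to eliminate $\trh\Rh$ in favor of $\nabh_N\trth$.

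Concretely, I would first collect the algebraic frame decompositions
\[
\tr k = \trt + \Pi, \qquad \Th_{ab} = \kh_{ab} + \tfrac{1}{2}\trt\,\de_{ab}, \qquad |k|^2 = |\kh|^2 + \tfrac{1}{2}(\trt)^2 + 2|\Xi|^2 + \Pi^2,
\]
and the frame identities
\[
\nab_N N = \pp^a e_a, \qquad \nab_a N = \th_a{}^c e_c, \qquad g(\nab_a e_b, N) = -\th_{ab}, \qquad g(\nab_N e_a, N) = -\pp_a,
\]
which are immediate from the definitions \eqref{eq:Ricci-N}. Every occurrence of an ambient covariant derivative acting on a frame vector is then replaced by one of these, producing the Ricci-coefficient corrections; all remaining contributions come from derivatives of the scalar components $\Pi,\trt,\Xi,\kh$ of $k$ and are rearranged into $\nabh_N$ or $\nabh_a$ of those components.

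For \eqref{eq:structure-constraint-Phi-R}, expanding $(\div k)_N = \nab_N k_{NN} + \de^{ab}\nab_{e_a} k_{e_b N}$ gives, after the above substitutions, $N(\Pi) - 2\pp\cdot\Xi$ from the first piece and $\divh \Xi + \trth\,\Pi - \th^{ab}\Th_{ab}$ from the second. The $N(\Pi)$ cancels against the $\nabh_N\Pi$ piece of $N(\tr k)$, the pairing $\th^{ab}\Th_{ab}$ splits cleanly as $\thh\cdot\kh + \tfrac{1}{2}\trth\,\trt$, and solving for $\nabh_N\trt$ yields \eqref{eq:structure-constraint-Phi-R}. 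The analogous tangential computation of $(\slashed{\CC}_{Mom})_a$ produces the same two sources of corrections: the $\pp$-corrections from $\nab_N e_a$ and $\nab_N N$ assemble into $\pp\cdot\k - \Pi\pp$, and the $\th$-corrections into $-\tfrac{3}{2}\trth\,\Xi - \thh\cdot\Xi$, giving \eqref{eq:structure-constraint-Phi-a}.

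For \eqref{eq:structure-constraint-H}, I would start from \eqref{eq:relation-Rg-trR-Rabab}, substitute $\Riem_{abab} = 2K - \tfrac{1}{2}(\trth)^2 + |\thh|^2$ from \eqref{eq:unconditional-Gauss}, and read \eqref{eq:unconditional-N-trth} as the identity $\trh\Rh = -\nabh_N\trth + \divh\pp - |\thh|^2 - \tfrac{1}{2}(\trth)^2 - |\pp|^2$. Combining with $(\tr k)^2 - |k|^2 = 2\Pi\trt + \tfrac{1}{2}(\trt)^2 - |\kh|^2 - 2|\Xi|^2$ and isolating $\nabh_N\trth$ gives \eqref{eq:structure-constraint-H}. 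The only real obstacle is bookkeeping of the Ricci-coefficient corrections and consistent sign conventions in \eqref{eq:Ricci-N}--\eqref{eq:def-curvature-components-Si}; no ingredient beyond Proposition~\ref{prop:Unconditional-equations-1} and frame algebra is required.
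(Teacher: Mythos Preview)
Your proposal is correct and follows essentially the same route as the paper's proof in Appendix~\ref{subsect:proof-constraint-equation-in-frame}: expand $(\CC_{Mom})_N$, $(\slashed\CC_{Mom})_a$, and $\CC_{Ham}$ in the frame using the covariant rules \eqref{eq:covariant-rules}, then for the third equation feed in \eqref{eq:unconditional-N-trth}, \eqref{eq:unconditional-Gauss}, and \eqref{eq:relation-Rg-trR-Rabab} to eliminate $\trh\Rh$ and $R_g$ in favor of $\nabh_N\trth$ and $K$. The bookkeeping you outline (the $\pp$- and $\th$-corrections, the trace split $\th\cdot\Th = \thh\cdot\kh + \tfrac12\trth\,\trt$, and the $(\tr k)^2 - |k|^2$ identity) matches the paper line by line.
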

\begin{proof}
See Appendix \ref{subsect:proof-constraint-equation-in-frame}.
\end{proof}
\begin{definition}\lab{def:HCS}
We call the unconditional equations \eqref{eq:unconditional-N-trth}-\eqref{eq:unconditional-Bianchi}, \eqref{eq:structure-constraint-Phi-R}-\eqref{eq:structure-constraint-H} with $(\CC_{Mom})_N=0$, $\slashed\CC_{Mom}=0$, and $\CC_{Ham}=0$ the \textit{Horizontal Constraint System} (HCS). 
\end{definition}

\subsubsection{Loss of derivatives}
\lab{section:lossof deriv}

At first glance, HCS appears to be ill-posed, i.e., it appears to lose derivatives. 
For example, compared with the Raychaudhuri equation on a null hypersurface (relative to  the geodesic foliation)
\beaa
\nabh_4 \trch=-\frac 12(\trch)^2-|\chih|^2,
\eeaa
the   HCS  equation for $\trth$ (equation \eqref{eq:structure-constraint-H} with $\CC_{Ham}=0$) reads
\bea\lab{eq:N-trth-schematic}
\nabh_N \trth = \divh p+K +\cdots,
\eea
 and the equation for $\trt$ (equation \eqref{eq:structure-constraint-Phi-R} with $(\CC_{Mom})_N=0$) reads
 \bea\lab{eq:N-trk-schematic}
 \nabh_N \trt=\divh \Xi +\cdots,
 \eea
with loss of one derivative for $p$ and $\Xi$.

  Note that there are no $N$-transport equations of $p$. 
A simple way  to avoid the loss of derivative for \eqref{eq:N-trth-schematic} is to prescribe, as a gauge condition,  the scalar field
  \beaa
  \mu=\divh p+K-\frac 14 (\trth)^2.
  \eeaa 
For \eqref{eq:N-trk-schematic}, we consider it together with  equation \eqref{eq:structure-constraint-Phi-a} with $\slashed\CC_{Mom}=0$:
  \beaa
  \nabh_N \Xi = -\divh \kh+p\cdot\k-\Pi p-\frac 32 \trth \Xi-\thh\cdot \Xi+\frac 12 \nabh \trt +\nabh \Pi.
  \eeaa
  There are several terms  on the right that lose derivatives.
     To deal with this, we  first  prescribe the scalar field $\Pi$,  so that the term $\nabh\Pi$ is no longer an issue. The equation then reads
  \bea\lab{eq:N-kn-schematic}
  \nabh_N \Xi = -\divh \kh+\frac 12\nabh \trt +\cdots .
  \eea
Commuting  the equation with $\divh$ and $\curlh$ respectively, we derive
  \beaa
  \nabh_N \divh\Xi &=& -\divh\divh \kh+\frac 12 \laph \trt +\cdots , \\
  \nabh_N \curlh \Xi &=& -\curlh \divh \kh +\cdots . 
  \eeaa
  This motivates us to also interpret $\divh\Xi$, $\curlh\Xi$ as scalars to be prescribed. Indeed, prescribing $\nu=\divh\Xi$ yields  an estimate of $-\divh\divh \kh+\frac 12 \laph \trt$. This also deals with the loss of derivatives in \eqref{eq:N-trk-schematic}, providing an estimate for $\trt$. As a result, one can obtain the estimate of $\divh\divh \kh$. Also, prescribing $\curlh\Xi$
  clearly provides the  control of $\curlh \divh \kh$.
  Since the operator that maps $\kh$ to $\d_1\d_2\kh=(\divh\divh\kh,\curlh\divh\kh)$ is an elliptic Hodge operator with no kernel, we can determine $\kh$.

We also need to estimate the Gauss curvature $K_\ga$; the estimate of $\pp$, which is curl-free in view of \eqref{eq:P-loga}, can then be retrieved from the definition of  $\mu$, using  the Hodge  estimates for $\d_1$.  The transport equation of $K_\ga$, \eqref{eq:unconditional-Bianchi}, again contains a term $\divh Y$ that loses derivatives.  It is hence natural,  in fact necessary, to also prescribe the scalar field $\divh Y$.  
In order to fully determine $Y$, we also prescribe the scalar field $\curlh Y$. Recall that with $\mu$ prescribed,  $\trth$  can be determined  from equation \eqref{eq:N-trth-schematic}. As a consequence,  $\thh$ can also be determined from \eqref{eq:unconditional-Codazzi} using  the  Hodge estimates for $\d_2$.
 
 \begin{remark}
We note that $\Rhh$ is in fact decoupled from the system and can be retrieved from \eqref{eq:unconditional-transport-thh} after all other quantities are determined.
\end{remark}

To summarize, we  were led to  prescribe  the following six scalar fields:
\bea
\lab{eq:6scalars}
\Pi,\quad \mu,\quad \nu,\quad \curlh \Xi,\quad \divh Y,\quad \curlh Y.
\eea
As we have   argued  heuristically above, once these  6 scalars are  prescribed,  there   are no other losses of  derivatives for the HCS.

\subsubsection{Connection with free data}\lab{subsubsection:free-data-on-Sigma}
For a $3$-manifold $\Si$, the initial data $(g,k)\in \Ga(S_+^2 T^*\Si)\times \Ga(S^2 T^*\Si)$ for the Einstein vacuum equations consist of a pair of sections satisfying the Einstein constraint equations \eqref{ece}.
In local coordinates, since both $g$ and $k$ are symmetric, we have $12$ unknowns. The constraint equations \eqref{ece} impose $4$ conditions, leaving formally $8$ degrees of freedom.   
Three of these   are to be accounted 
   by   the  coordinate  covariance of \eqref{ece} on $\Si$, which consist of the following:
   \begin{itemize}
   \item The choice of the sphere foliation, i.e., a specification of a coordinate function $r$ whose level set gives a foliation.\footnote{\lab{ft:r-ambiguity}There is apparently an ambiguity on $r\mapsto F(r)$ with $F$ an increasing function. We will later eliminate this ambiguity in Section \ref{subsubsect:ell-0-constraints}.} We expect to prescribe a scalar field to fix this gauge choice.
   \item The choices of the angular variables $(\vth^1,\vth^2)$. We have chosen to write our metric in the form \eqref{eq:g-metric(t,th,vphi)}. Provided with the boundary condition, i.e., an initial choice of $(\vth^1,\vth^2)$ on a given sphere, this corresponds to the coordinate conditions $N(\vth^1)=N(\vth^2)=0$, where $N$ is the unit normal of the $r$-foliation, in the increasing direction of $r$.
   \end{itemize}
   Therefore, excluding the three coordinate ones, we are left with five degrees of freedom. Among the scalar fields we identified in \eqref{eq:6scalars}, the scalar field $\mu$ plays the role of choosing the $r$-foliation, and the remaining five read $\nu$,
  $\Pi$, $\curlh \Xi$, $\divh Y$, and $\curlh Y$.
   For a given coordinate $(r,\vth^1,\vth^2)$, these five scalars reflect, at least formally, the freedom of the initial data $(g,k)$ on $\Si$ that solves \eqref{ece}. However, not all of them represent the  ``physical" degrees of freedom, as there  is an additional coordinate  choice  to be made  that corresponds to the embedding of $\Si$  into the spacetime.
As we show  below  in Section \ref{section:Spacetime-perspective}, this corresponds to the scalar $\nu=\divh\Xi$.
We therefore  interpret the scalar $\nu$ as a spacetime coordinate choice, and accordingly, call $\nu$ and $\mu$ the \textit{gauge} scalars. Together with the implicit choice $N(\vth^1)=N(\vth^2)=0$,  this exhausts  the   four degrees of freedom
of  solutions to the Einstein-vacuum equations in four spacetime dimensions.

\begin{definition}
Among the six scalars  in \eqref{eq:6scalars},  $\nu=\divh\Xi$ and $\mu$ are  called gauge scalars. 
 The remaining four
   \bea
   \lab{eq:fourscalars}
   \Pi, \quad \curlh \Xi,\quad \divh Y,\quad \curlh Y,
   \eea
 are  called free scalars, indicating that they represent   the true dynamical degrees of freedom of  the Einstein-vacuum equations.
   \end{definition}
   
   While the free scalars describe the dynamical degrees of freedom, as we will see heuristically in \eqref{subsect:ell-0-1-constraints}, the $\ell\leq 1$ parts of the scalars are subject to much more rigid conditions directly related to the ADM charges \eqref{eq:def-ADM-charges} . Therefore, it is in fact the $\ell\geq 2$ part of the free scalars
   \beaa
   \Bsigma:=(\divh Y)_{\ell\geq 2},\quad \Bdsigma:=(\curlh Y)_{\ell\geq 2},\quad \, ^{(\Si)}\Kk:=(\laph(\ah \Pi) )_{\ell\geq 2},\quad \, ^{(\Si)}\Kkd:= (r^{-4}\pa_r(r^4\curlh \Xi) )_{\ell\geq 2},
      \eeaa
   that, as stated in the main theorem (Theorem \ref{thm:main-precise}), are free to prescribe.

\subsection{Spacetime perspective}
\lab{section:Spacetime-perspective}

\subsubsection{The null frame formalism}\lab{subsect:null-frame-formalism}
We now discuss the constraint equations from the spacetime perspective.\footnote{The spacetime perspective helps   to  provide  additional   motivation    for  the   two gauge scalars, but   will not be needed in the rest of the paper.} Indeed, the first and second fundamental forms of  any spacelike hypersurface  in an Einstein-vacuum spacetime  solves the constraint equations \eqref{ece}, and according to \cite{ChoquetBruhat1952}, \cite{ChoquetBruhatGeroch1969}, the converse is also true,  i.e.  regular initial data solving \eqref{ece} is uniquely embedded in its maximal globally hyperbolic development. 

When $\Si$ is an embedded spacelike hypersurface in a spacetime $(\MM,\g)$, one can define the future unit timelike normal vector field $T$ on $\Si$, and the following null pair
\begin{equation}
\lab{eq:e_3e_4-Si}
    e_3:=T-N, \quad e_4:= T+N,\quad \text{on $\Si$}.
\end{equation}
Here $N$,  as before, is  the outward normal vector field to $r$-spheres $S_r$ on $\Si$. 
With such a choice of the null pair, we immediately obtain the following relations of the Ricci coefficients and quantities defined on $\Si$:\footnote{\lab{ft:zeta-Xi}The first two relations are trivial, and the third also follows easily from the calculation 
$    -k(e_a,N)=\g(\D_a N,T)=\g\left(\D_a \left(\frac 12 e_4-\frac 12 e_3\right), \frac 12 e_4+\frac 12 e_3\right)=\frac 14 \g\left(\D_a  e_4, e_3\right)-\frac 14 \g\left(\D_a e_3, e_4\right)=\zeta_a$.
}
\bea\lab{eq:spacetime-chi-chib-zeta}
\chi=\Th+\th,\quad \chib=\Th-\th,\quad \zeta=-\Xi.
\eea
Note that, in contrast to what we discuss below, they do not rely on the extension of the frame beyond $\Si$.

In a spacetime slab containing $\Si$, $S_r$ determines a family of incoming null hypersurfaces, which are the constant leaves of some optical function $\ub$, denoted by $\hbub$. We extend $e_3$ so that it is the null geodesic vector on each $\hbub$.
Regarding the extensions of $e_a$ and $e_4$ beyond $\Si$, we recall the following two choices, both exploited in \cite{KS:Kerr}:
\begin{itemize}
    \item The Principal Geodesic (PG) frame: Each $\hbub$ is foliated by spheres given as the constant leaves of the affine parameter  of  $e_3$, and the horizontal  space   $\{e_a\}_{a=1,2} $  tangent to the corresponding  spheres. This determines a null frame\footnote{In \cite{KS:Kerr}, the corresponding hypersurface $\ub=\const$ are in fact not exactly null, and the definition of the PG structure is more general.} $\{e_3,e_4,e_a\}$ . 
\item The Principal Temporal (PT) frame: We extend $e_4$ by the condition
\begin{equation*}
    \D_{e_3}e_4=0.
\end{equation*}
The null pair  $\{e_3, e_4\}$  determines the horizontal structure spanned by $\{e_a\}_{a=1,2}$, which may, in general,  be non-integrable beyond $\Si$. 
\end{itemize}

In  both cases, the null frame $\{e_3,e_4,e_a\}_{a=1,2}$ is determined in  a  spacetime slab, thereby defining the Ricci coefficients and curvature components:\begin{equation*}
    \chi_{ab}=\g(\D_a e_4,e_b),\quad \chib_{ab}=\g(\D_a e_3,e_b),      \quad  \eta_a=\frac 12 \g(\D_3 e_4,e_a),\quad \etab_a=\frac 12 \g(\D_4 e_3,e_a), \quad \zeta_a=\frac 12 \g(\D_a e_4,e_3), 
\end{equation*}
\begin{equation*}
    \om=\frac 14 \g(\D_4 e_4,e_3),\quad \omb=\frac 14 \g(\D_3 e_3,e_4),\quad  \xi_a=\frac 12 \g(\D_4 e_4,e_a), \quad \xib_a=\frac 12 \g(\D_3 e_3,e_a),
\end{equation*}
\begin{equation*}
    \a_{ab}=\W_{a4b4},\quad \b_a=\frac 12 \W_{a434},\quad \rho=\frac 14 \W_{3434},\quad \dual\rho=\frac 14\dual \W_{3434},\quad \bb_a=\frac 12 \W_{a334},\quad \underline \a_{ab}=\W_{a3b3}.
\end{equation*}
Here $\W$ is the Weyl tensor that can be expressed as
\bea\lab{eq:Weyl-tensor}
\W_{\rho\sigma\mu\de} =  \R_{\rho\sigma\mu\de}
	+	\frac{1}{2} \left( \g_{\rho\mu} \Ricc_{\de\sigma} - \g_{\rho\de} \Ricc_{\mu\sigma} - \g_{\sigma\mu} \Ricc_{\de\rho} + \g_{\sigma\de} \Ricc_{\mu\rho} \right) + \frac{1}{6} \R_g \left( \g_{\rho\mu} \g_{\de\sigma} - \g_{\rho\de} \g_{\mu\sigma} \right).
\eea

\begin{remark}\lab{rem:extrinsic-quantities}
Note that  the  Ricci coefficients $ \omb,\, \xib,\, \eta,\, \om,\, \xi,\, \etab$
are not well-defined\footnote{They cannot be defined  by the choice, $\{e_3,e_4,e_a\}$ on $\Si$, as their definitions contain $e_3$ or $e_4$ derivatives of the frame. }   on $\Si$.
They are however well defined  for   the PG or PT  extension  considered above.  In particular, 
 given that   $e_3$ is geodesic  in both case (in particular  $\omb=0$, $\xib=0$), the choice of the PT frame is equivalent to the condition $\eta=0$. 
\end{remark}

\begin{proposition}\label{prop:relations-om-etab-xi-zeta}
    With the choice of $\{e_3,e_4\}$ given by \eqref{eq:e_3e_4-Si}  on $\Si$, and its extension to the spacetime via the PT condition, we have the following relation on $\Si$:
    \begin{equation}\lab{eq:om-de}
        \om=- \Pi,
    \end{equation}
    \begin{equation}\lab{eq:xi-ep-loga}
    \xi=\Xi+p,
    \end{equation}
    \begin{equation}\lab{eq:etab-ep-loga}
    \etab=\Xi-p.
    \end{equation}
\end{proposition}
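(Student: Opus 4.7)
The strategy is to compute both $\D_{e_4} e_4$ and $\D_{e_4} e_3$ at a point on $\Si$ in two different ways, then extract $\om,\xi,\etab$ by comparing null-frame coefficients. For the first computation, the definitions of $\om,\xi,\etab$ together with the constancy of the null-frame inner products ($\g(e_3,e_4)=-2$, $\g(e_3,e_3)=\g(e_4,e_4)=0$, $\g(e_a,e_b)=\de_{ab}$) force the decompositions
\begin{equation*}
\D_{e_4}e_4 = -2\om\,e_4 + 2\xi^a e_a, \qquad \D_{e_4}e_3 = 2\om\,e_3 + 2\etab^a e_a,
\end{equation*}
obtained by pairing each side against $e_3,e_4,e_a$ in turn.

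For the second computation, the idea is to split $\D_{e_4} = \D_T + \D_N$ using $e_4 = T+N$ on $\Si$, and to convert the transverse $T$-derivative into tangential data on $\Si$. Two extension conditions are available (see Remark \ref{rem:extrinsic-quantities}): the PT condition $\D_{e_3}e_4=0$, applied with $e_3 = T-N$ on $\Si$, gives $\D_T e_4 = \D_N e_4$, and similarly the null-geodesic extension $\D_{e_3}e_3=0$ gives $\D_T e_3 = \D_N e_3$. Therefore, on $\Si$,
\begin{equation*}
\D_{e_4}e_4 = 2\,\D_N(T+N) = 2(\D_N T + \D_N N), \qquad \D_{e_4}e_3 = 2\,\D_N(T-N) = 2(\D_N T - \D_N N).
\end{equation*}

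The two tangential derivatives are now computable directly from data on $\Si$. From $\g(\D_N T,T) = \f12 N\g(T,T) = 0$, $\g(\D_N T,N) = k(N,N) = \Pi$, and $\g(\D_N T,e_a) = k(N,e_a) = \Xi_a$ one reads off $\D_N T = \Pi\,N + \Xi^a e_a$. The Gauss formula $\D_X Y = \nab_X Y + k(X,Y)\,T$ for tangent $X,Y$, combined with \eqref{eq:Ricci-N} and the fact that $\nab_N N \perp N$, gives $\D_N N = p^a e_a + \Pi\,T$. Adding and subtracting yields
\begin{equation*}
\D_N T + \D_N N = \Pi\,e_4 + (\Xi + p)^a e_a, \qquad \D_N T - \D_N N = -\Pi\,e_3 + (\Xi - p)^a e_a,
\end{equation*}
and matching with the first-step decompositions immediately produces $\om = -\Pi$, $\xi = \Xi + p$, and $\etab = \Xi - p$. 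The only subtle point is verifying that the transverse $T$-derivative is well-defined on $\Si$; once one recognises that the PT extension of $e_4$ and the null-geodesic extension of $e_3$ both convert $\D_T$ into $\D_N$ at points of $\Si$, the remainder is a short bookkeeping calculation and I foresee no further obstacle.
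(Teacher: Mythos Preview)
Your proof is correct and uses essentially the same ingredients as the paper's proof: both rely on the PT condition $\D_{e_3}e_4=0$ and the null-geodesic condition $\D_{e_3}e_3=0$ to reduce $e_4$-derivatives to $N$-derivatives, and then invoke the definitions of $k$ and $p$. Your organization is slightly cleaner --- you compute the full vectors $\D_{e_4}e_4$ and $\D_{e_4}e_3$ and match coefficients, whereas the paper computes the individual scalar products $\g(\D_4 e_4,e_3)$, $\g(\D_N T,e_a)$, $\g(\D_N N,e_a)$ separately and then solves the resulting $2\times 2$ linear system for $\xi,\etab$ --- but the two arguments are equivalent.
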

\begin{proof}
See Appendix \ref{subsect:Proof-relations-om-etab-xi-zeta}.
\end{proof}
\begin{proposition}\lab{prop:b-bb-expression}
For $\Si$ embedded in a spacetime $(\MM,\g)$ with a specified $r$-foliation,  the following relations hold true between 
the  intrinsic quantities   defined in Section \ref{sect:HorizDecompSi} and the spacetime quantities  defined above:
\bea
\lab{eq:b+bb}
(\b+\bb)_a &=& 2\left(Y+\Xi \cdot \k-\trt \Xi\right)_a+3\Ricc_{Na},\\
\lab{eq:b-bb}
(\b-\bb)_a &=& -2\left(\nabh \Pi-\nabh_N \Xi-2\th\cdot \Xi+p\cdot  \Th-\Pi p \right)_a+(\slashed{\CC}_{Mom})_a,\\
\lab{eq:Gauss-rho}
\rho &=& -K_\ga -\frac 14 (\trt)^2+\frac 14 (\trth)^2 +\frac 12 |\kh|^2-\frac 12 |\thh|^2 \\
\nonumber & &+\frac 12 \CC_{Ham} -\Big(\Ricc-\frac 12 (\R_{\g}) \g\Big)_{NN}+\frac 12 \Big(\Ricc-\frac 12 (\R_{\g}) \g\Big)_{aa}-\frac 23 \R_{\g},\\
\lab{eq:curl-kn-dual-rho}
\dual\rho&=& -\curlh \Xi- \kh\wedge\thh.
\eea
\end{proposition}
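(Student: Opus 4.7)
The plan is to derive each of the four identities by rewriting the relevant Weyl null-component in the $\{T,N,e_a\}$ frame on $\Si$ via \eqref{eq:e_3e_4-Si}, applying the Weyl--Riemann decomposition \eqref{eq:Weyl-tensor}, and then using the Gauss--Codazzi--Mainardi equations for the embedding $\Si\hookrightarrow\MM$ to reduce the residual spacetime curvature components to $\Si$-data, with $\CC_{Mom}$ and $\CC_{Ham}$ surfacing through Ricci traces.

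The first step is a purely algebraic rewrite. Using $e_3+e_4=2T$, $e_4-e_3=2N$ and the Weyl symmetries one obtains
\[
\b_a+\bb_a=2\W(e_a,T,T,N),\qquad \b_a-\bb_a=2\W(e_a,N,T,N),\qquad \rho=\W(T,N,T,N),
\]
together with a similar horizontal-dual formula for $\dual\rho$. Because $\g(T,e_a)=\g(N,e_a)=\g(T,N)=0$, $\g(T,T)=-1$ and $\g(N,N)=1$, formula \eqref{eq:Weyl-tensor} collapses neatly to $\W_{aTTN}=\R_{aTTN}+\tfrac 12\Ricc_{Na}$, $\W_{aNTN}=\R_{aNTN}+\tfrac 12\Ricc_{Ta}$, and $\W_{TNTN}=\R_{TNTN}+\tfrac 12(\Ricc_{TT}-\Ricc_{NN})-\tfrac 16\R_{\g}$, while no Ricci correction survives in the dual computation.

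The second step reduces each residual spacetime-Riemann component to $\Si$-data. For \eqref{eq:b+bb}, tracing $\Ricc_{Na}=-\R_{TNTa}+\R_{bNba}$ (from $\g^{\al\b}=-\de^{\al T}\de^{\b T}+\de^{\al N}\de^{\b N}+\de^{\al c}\de^{\b c}$) together with the Lorentzian Gauss equation $\Riem_{Nbba}=\R_{Nbba}-\Xi_b\k_{ba}+\Xi_a\k_{bb}$ yield $\R_{aTTN}=\Ricc_{Na}+Y_a+(\Xi\cdot\k-\trt\,\Xi)_a$, and \eqref{eq:b+bb} follows immediately. For \eqref{eq:b-bb}, Codazzi $\R(X,Y,Z,T)=(\nab_X k)(Y,Z)-(\nab_Y k)(X,Z)$ applied to $(e_a,N,N)$ gives $\R_{aNTN}=(\nab_N k)_{aN}-(\nab_a k)_{NN}$; expanding the covariant derivatives in the orthonormal frame via $\nab_aN=\th_{ab}e_b$, $\nab_NN=p_ae_a$ and $\nab_Ne_a=\nabh_Ne_a-p_aN$ produces $\nabh_N\Xi_a-\nabh_a\Pi+2(\th\cdot\Xi)_a-(p\cdot\Th)_a+\Pi p_a$, while the $\slashed{\CC}_{Mom}$ summand is supplied by the further Codazzi trace $\Ricc_{Ta}=(\slashed{\CC}_{Mom})_a$. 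For \eqref{eq:Gauss-rho}, tracing $\Ricc_{NN}=-\R_{TNTN}+\sum_b\R_{bNbN}$ together with Gauss $\sum_b\R_{bNbN}=\trR+\Pi\trt-|\Xi|^2$ gives $\R_{TNTN}=\trR+\Pi\trt-|\Xi|^2-\Ricc_{NN}$; inserting \eqref{eq:relation-Rg-trR-Rabab} and \eqref{eq:unconditional-Gauss} for $\Riem_{abab}$, and then $R_g=\CC_{Ham}-(\mathrm{tr}\,k)^2+|k|^2$ with $\mathrm{tr}\,k=\trt+\Pi$ and $|k|^2=\Pi^2+2|\Xi|^2+|\kh|^2+\tfrac 12(\trt)^2$, reproduces the quadratic part of \eqref{eq:Gauss-rho}; the leftover $\Ricc$ and $\R_{\g}$ contributions reassemble, using $\R_{\g}=-\Ricc_{TT}+\Ricc_{NN}+\Ricc_{aa}$, into the Einstein-tensor combination $-(\Ricc-\tfrac 12\R_{\g}\g)_{NN}+\tfrac 12(\Ricc-\tfrac 12\R_{\g}\g)_{aa}-\tfrac 23\R_{\g}$. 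For \eqref{eq:curl-kn-dual-rho}, the horizontal dual of the Codazzi identity, $\tfrac 12\in^{cd}[(\nab_c k)(N,e_d)-(\nab_N k)(e_c,e_d)]$, directly produces $-\curlh\Xi-\kh\wedge\thh$.

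The main obstacle is the careful sign- and index-bookkeeping: the $\epsilon=-1$ factor in the Lorentzian Gauss equation, the Christoffel adjustments in passing from $\nab$ to $\nabh$ (which generate precisely the $\th$, $p$, $\Pi p$ corrections in \eqref{eq:b-bb}), and the reassembly of the scattered $\Ricc$ and $\R_{\g}$ contributions into the Einstein-tensor form in \eqref{eq:Gauss-rho}. Conceptually the four identities are nothing but the Gauss--Codazzi--Mainardi formulas for the Weyl rather than the Riemann tensor, adapted to the double $T$/$N$ decomposition on $\Si$, with the constraint scalars surfacing through the very traces that mediate between the two.
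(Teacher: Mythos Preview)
Your approach is correct and essentially the same as the paper's: both rewrite the Weyl null components in the $\{T,N,e_a\}$ frame via \eqref{eq:Weyl-tensor} and then reduce the spacetime Riemann components to $\Si$-data via Gauss--Codazzi, with the constraint scalars entering through $\Ricc_{Ta}=(\slashed{\CC}_{Mom})_a$ and $\R_{\g}=-2\Ricc_{TT}+\CC_{Ham}$. The only organizational difference is for \eqref{eq:Gauss-rho}: the paper uses the codimension-two Gauss equation $\R_{abab}=2K_\ga+\tfrac12(\trt)^2-\tfrac12(\trth)^2-|\kh|^2+|\thh|^2$ directly, whereas you go through the codimension-one Gauss equation for $\sum_b\R_{bNbN}$ followed by \eqref{eq:relation-Rg-trR-Rabab} and \eqref{eq:unconditional-Gauss}; both routes are equivalent.

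One small slip: in your $\dual\rho$ sketch, the expression $\tfrac12\in^{cd}[(\nab_c k)(N,e_d)-(\nab_N k)(e_c,e_d)]$ actually equals $-\tfrac12\dual\rho$, not $\dual\rho$ (and correspondingly evaluates to $\tfrac12\curlh\Xi+\tfrac12\kh\wedge\thh$, not $-\curlh\Xi-\kh\wedge\thh$). The fix is to use $\dual\rho=\R_{TN12}$ directly (the Ricci correction indeed vanishes, as you note) and apply Codazzi in the form $\R_{TNab}=\nab_b k_{Na}-\nab_a k_{Nb}$, which is what the paper does.
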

\begin{proof}
See Appendix \ref{subsect:spacetime-Ricci-b-bb}.
\end{proof}

{\bf Loss of derivatives.} With the help of Proposition \ref{prop:b-bb-expression}, the HCS system can be re-expressed in terms of  the spacetime quantities.  These can also be derived  directly from the null structure and Bianchi equations,  recorded in full detail in Appendix \ref{subsect:null-str-Bianchi}. Below, we  only  refer to them schematically.  

\begin{remark}The spacetime version of HCS consists of the following types of equations:
\begin{itemize}
\item
The structure equations  that only involve derivatives tangent to $\Si$, e.g., the Codazzi equation
\begin{equation*}
    \divh \chih=\frac 12\nabh \trch-\zeta\cdot\chih+\frac 12\trch\zeta-\b.
\end{equation*}
\item The transport-type equation in the $N$-direction obtained by combining the  $e_3$, $e_4$   transport type equations from the null structure and Bianchi equations. 
Indeed,   suppose that we have $\nabh_3\psi=\underline F$, $\nabh_4\psi=F$, we  can use   use  the formula $N=\frac 12e_4-\frac 12 e_3$ to get
\begin{equation*}
    \nabh_N \psi=\frac 12 F-\frac 12 \underline F.
\end{equation*}
 Note that not all quantities have both $e_3$ and $e_4$ transport equations; this is true only if $\psi$ belongs to  $\big\{\b, \rho, \rhod, \bb, \chi, \chib, \zeta\big\}$ or a combination of these.
\end{itemize}
\end{remark}
The loss of derivatives manifest in  the following spacetime HCS equations:
\beaa
    \nabh_N \trch&=&\frac 14\trch\trchb
-\frac 14(\trch)^2-\om\trch+\divh\xi-\Big(\rho-\frac 12\chih\cdot\chibh\Big)+\cdots, \\
    \nabh_N \trchb&=& -\frac 14\trch\trchb+\frac 14(\trchb)^2+\om\trchb+\divh\etab+\Big(\rho-\frac 12\chih\cdot\chibh\Big)+\cdots, \\
     \nabh_N \Big( \rho-\frac 12\chih\cdot\chibh\Big)&=& \frac 12 \divh (\b+\bb)+\cdots, \\
         \nabh_N\zeta&=& \nabh\om-\frac 12 \b+\frac 12 \bb+\cdots.
\eeaa
In the  first two equations,  the expressions  $\divh \xi-(\rho-\frac 12\chih\cdot\chibh),\,\,  -\divh \etab+(\rho-\frac 12\chih\cdot\chibh)$ are, in view of the relations \eqref{eq:xi-ep-loga}, \eqref{eq:etab-ep-loga}, \eqref{eq:Gauss-rho}, equivalent, modulo lower order terms,  to the scalars $\mu=-\laph (\log\ah)+K_\ga- \frac 14 (\trth)^2$ and $\divh\Xi$, which were prescribed in Section \ref{section:lossof deriv}.
Similarly, in view of  \eqref{eq:b+bb},  the right-hand side of the third equation, $\divh(\b+\bb)$,  is equivalent   to  $\divh Y$. Moreover, by the same relation, the scalar $\curlh (\b+\bb)$ is equivalent to $\curlh Y$, which is also among the prescribed scalars in the list  \eqref{eq:6scalars}.  By elliptic estimates  $\divh(\b+\bb)$ and $\curlh (\b+\bb)$ determines $\b+\bb$. Finally, to resolve the loss of derivatives in the last equation, we commute with  $\divh$ and $\curlh$ to derive
\beaa
         \nabh_N \divh\zeta&=& \laph\om-\frac 12 \divh(\b-\bb)+\cdots \\
                  \nabh_N\curlh\zeta&=& -\frac 12 \curlh(\b-\bb)+\cdots
\eeaa
Note that $\zeta=-\Xi$ and, under the PT condition, $\om=-\Pi$. Then, using the relation \eqref{eq:b-bb}  and the fact  that  $\divh \Xi$, $\curlh \Xi$ and $\Pi$ are all  prescribed in \eqref{eq:6scalars},  we  deduce that  both  $\divh(\b-\bb)$ and $\curlh(\b-\bb)$ are determined,  hence so is  $\b-\bb$.

\subsubsection{Degrees of freedom revisited}\lab{subsect:DOF-revisited}
Using the spacetime formalism, we revisit  the discussion on   degrees of freedom in Section \ref{subsubsection:free-data-on-Sigma}  and explain the role of the  gauge scalar $ \nu=\divh\Xi$. 

As mentioned already in Section \ref{subsubsection:free-data-on-Sigma},   even when the coordinates on $\Si$ are fixed,   we can have different initial data sets that evolve to the same Einstein-vacuum spacetime. The ambiguity is due to the different ways of embedding $\Si$ into the spacetime or, in other words, the choice of time function $t$ that defines $\Si$. 

To explain the relation between this freedom and the gauge scalars, we consider a sphere $S_0\subset \Si$ that is $\eps$-close to the unit sphere, with $\Si$ embedded in a spacetime  $(\MM,\g)$ and $\eps$-close to the constant time slice in Minkowski. By extending the null frame using the PT condition as explained above, we obtain a null frame in a spacetime neighborhood of $S_0$ in $\MM$. Now we consider another spacelike hypersurface $\Si'$ satisfying $S_0\subset \Si'$. Given  a sphere foliation on $\Si'$, passing through  $S_0$, one can also define the outward unit normal $N'$ on $\Si'$, thereby also defining the corresponding primed horizontal operators $\nabh'$, $\divh'$, $\curlh'$, and quantities $\pp'$, $\th'$, $\Xi'$, $\Pi'$, $\mu'$, $\nu'$ as in Section \ref{sect:HorizDecompSi}. 

\begin{proposition}\lab{prop:existence-Si'}
There exists an embedded spacelike hypersurface $\Si'$ in a neighborhood of $S_0$ in $(\MM,\g)$ and a vectorfield  $N'$ on $\Si'$ with $(N')^\perp \subset T\Si'$  integrable such  that, for the integral sphere $S'$ of $(N')^\perp$ foliated by some function $r'$,\footnote{The last condition will be explained in Section \ref{subsubsect:ell-0-constraints}. From the perspective of the null frame transformation $(f,\fb,\la)$, 
this condition in \eqref{eq:two-conditions-Si'} fixes the $\ell=0$ part of $\la$, a part that is constant on a sphere and reflects isotropic change in the choice of the embedding of $\Si$ into the spacetime.}
\bea\lab{eq:two-conditions-Si'}
\mu_{\ell\geq 1}'=0,\quad \nu'=0,\quad \int_{S'} \Pi'=0. 
\eea
Note that here all quantities with $'$ are well-defined on $\Si'$ as the $r'$-foliation is determined. The $\ell\geq 1$ modes are suitably defined by deforming the background spherical coordinates.
\end{proposition}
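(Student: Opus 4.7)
The strategy is a local implicit-function-theorem construction in a spacetime neighborhood of $S_0$, parametrizing candidate hypersurfaces $\Sigma'$ and their foliations by two scalar functions and solving the three gauge conditions in \eqref{eq:two-conditions-Si'} in the perturbative regime $\varepsilon\ll 1$ around Minkowski. Work in a spacetime neighborhood of $S_0$ in $(\MM,\g)$ equipped with the PT extension of the null frame $\{e_3,e_4,e_a\}$ from Section \ref{subsect:null-frame-formalism}. Any embedded spacelike hypersurface $\Sigma'$ containing $S_0$ can be written locally as the graph of a scalar height function $u$ over $\Sigma$, say $\Sigma'=\{\exp_p(u(p)\,T_p):p\in\Sigma\}$ with $u|_{S_0}=0$; any sphere foliation on $\Sigma'$ is parametrized by a scalar $r'$ normalized so that $r'|_{S_0}=r_0$. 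The outward normal $N'$, horizontal frame, and primed quantities $\ah',\,\th',\,\Xi',\,\Pi'$ are then determined by $(u,r')$ and their derivatives via the PT frame transformation, and hence so are $\mu'$, $\nu'=\divh'\Xi'$, and $\int_{S'}\Pi'$. The system \eqref{eq:two-conditions-Si'} becomes a nonlinear equation $\FF(u,r')=(\mu'_{\ell\geq 1},\,\nu',\,\int_{S'}\Pi')=0$, trivially satisfied in the Minkowski background at $(u,r')=(0,r)$.

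Next, I would compute the linearization of $\FF$ at the Minkowski base point using the spacetime relations in Propositions \ref{prop:relations-om-etab-xi-zeta} and \ref{prop:b-bb-expression}. The principal part is elliptic on each sphere of the foliation: $\delta\nu'$ contains $\laph$ applied schematically to $\delta u$ (since the change of $\zeta$ under a deformation of $\Sigma$ is, at leading order, a tangential derivative of the height function), $\delta\mu'$ contains $\laph$ applied schematically to the lapse perturbation, which is controlled by $\delta r'$, and the averaged condition $\delta\!\int_{S'}\Pi'$ fixes the $\ell=0$ mode of $\delta u$ through $\delta\om'$. Decomposing into spherical harmonics, one verifies that the map $(\delta u,\delta r')\mapsto(\delta\mu'_{\ell\geq 1},\,\delta\nu',\,\delta\!\int\Pi')$ is an isomorphism between suitable weighted Sobolev spaces, with the $\ell=1$ modes handled by the background rigid-motion gauge and the boundary normalization at $S_0$.

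A standard contraction or implicit function theorem argument in weighted Sobolev norms then yields a unique solution $(u,r')$ of size $O(\varepsilon)$, producing the desired $\Sigma'$ together with its foliation by $r'$. The hard part will be the invertibility analysis above, and specifically the need to propagate the sphere-level conditions $\mu'_{\ell\geq 1}=0$ and $\nu'=0$ across the foliation of $\Sigma'$, separating the elliptic part on each sphere from the transport along $N'$. This mirrors the coupled elliptic-transport structure already exploited in the GCM-sphere and GCM-hypersurface constructions of \cite{KS-GCM1}, \cite{KS-GCM2}, \cite{ShenGCMKerr}, and the argument should be modeled on those works.
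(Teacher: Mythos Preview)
The paper does not prove this proposition. Immediately after the statement it says: ``The proposition is purely motivational and plays no role in the proof of the main results; we postpone its proof to a forthcoming work \cite{Gauge-scalars-preparation}.'' There is therefore no paper's proof to compare your proposal against.

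Your sketch is in the right spirit --- a perturbative construction around the Minkowski data, parametrizing $\Sigma'$ by a height function $u$ and the foliation by $r'$, and appealing to the coupled elliptic/transport machinery of the GCM constructions \cite{KS-GCM1,KS-GCM2,ShenGCMKerr} --- and this is presumably what the authors have in mind. That said, your description of the linearization is somewhat imprecise and would need sharpening before it becomes a proof. For instance, in the Minkowski background $k=0$, so at linear order $\delta k_{ij}\approx\nabla_i\nabla_j u$; this gives $\delta\Xi'_a\approx\nabla_a(\partial_r u)$ and hence $\delta\nu'\approx\laph(\partial_r u)$, not $\laph u$ as you suggest. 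The upshot is that the system is genuinely of mixed elliptic-transport type from the start (the radial derivative of $u$ enters the sphere-elliptic equation), not ``elliptic on each sphere'' with transport handled as an afterthought. Correspondingly, your degree-of-freedom count and the claimed isomorphism of the linearized map need to be argued with this coupling in view: one should expect to solve, on each sphere, an elliptic system for the angular part of $(u,r')$ while simultaneously propagating the radial data by transport along $N'$, with initial data fixed at $S_0$ --- much as in \cite{KS-GCM2}. Your final paragraph essentially concedes this; the earlier linearization paragraph should be rewritten to reflect it.
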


The proposition is purely motivational and plays no role in the proof of the main results; we postpone its proof to a forthcoming work \cite{Gauge-scalars-preparation}.

To conclude, from the spacetime perspective, there are in fact four coordinate degrees of freedom, and the gauge scalars $\mu$ and $\nu=\divh\Xi$ account for such coordinate ambiguities for those  corresponding to  $t$ and $r$. The remaining four scalars
\beaa
\Pi,\quad \curlh \Xi,\quad \divh Y,\quad \curlh Y,
\eeaa
i.e., the free scalars, correspond to the true dynamical degrees of freedom.  

As mentioned already in Remark \ref{remark:comparison},  one can  compare the situation described above with  the case of 
 the  null  characteristic data  on  $C\cup \Cb$,   as  analyzed  in  \cite{Chr1}.  In that  case also, to specify the free  data
one needs to rely on a specific gauge  choice,  for example  the corresponding two geodesic foliations on $C, \Cb$. 
 These   can be thought  as playing  a role   similar  to that of  $\mu$ in our case, while the role of $\nu$ is replaced by the simple  requirement that $C$, $\Cb$ are null. The dynamical degrees of freedom for the bifurcate characteristic problem are  then  given by the shear tensors $\chih, \chibh$  of  the null hypersurfaces $C, \Cb$, expressed relative  to the geodesic foliations,  each of which contributes $2$ degrees of freedom.

\subsection{Linearization of HCS near Schwarzschild}
According to Proposition \ref{prop:existence-Si'}, we only impose the $\ell\geq 1$ part of the gauge scalar $\mu$. This leaves the $\ell=0$ part undetermined. The other gauge scalar $\nu$ is also, by definition, without a spherical mean. Therefore, we need to impose two additional $\ell=0$ conditions in Section \ref{subsubsect:ell-0-constraints}. We then give the full system in terms of quantities with their Schwarzschildian values subtracted in Section \ref{subsect:HCS-linearized-form}.

\subsubsection{Additional $\ell=0$ conditions}\lab{subsubsect:ell-0-constraints}
We now impose two additional conditions that eliminate the $\ell=0$ ambiguities.

{\bf The average of $\ao$.}
As remarked in footnote \ref{ft:r-ambiguity}, we need to eliminate the ambiguity of the relabeling of the $r$-spheres. We impose the condition
\bea\lab{eq:spherical-mean-ah}
\overline{\ao}=-\frac 12 \Up^{-1} r \overline{\thc}.
\eea
\begin{remark}
In fact, if $r$ is the area radius, then \eqref{eq:spherical-mean-ah} is approximately verified. Indeed, we have the relation
\beaa
1= \pa_r (\sqrt{r^2})=\frac 12\frac{1}{\sqrt{r^2}}\pa_r (r^2)=\frac{1}{8\pi r}\pa_r (\mathrm{Area}(S_r))= \frac{1}{8\pi r} \int_{S_r} \ah\, \trth.
\eeaa
However, due to the slow decay we consider and the fact that we are constructing from spatial infinity, it is impossible to show the converse. Therefore, we relax the requirement that by simply imposing an approximate condition \eqref{eq:spherical-mean-ah} without claiming $r$ to be the area radius.
\end{remark}

{\bf The average of $\Pi$.}
At a heuristic level, taking the $\ell=0$ part of (the linearization of) the equation \eqref{eq:structure-constraint-Phi-R} of $\mathring\trt$ (with $(\CC_{Mom})_N=0$) gives
\bea\lab{eq:pa-r-trt-linearized-l=0}
\pa_r (\mathring\trt)_{\ell=0} &=& 2r^{-1} (\mathring\Pi)_{\ell=0}  -r^{-1} (\mathring\trt)_{\ell=0} .
\eea
There are no other HCS equations that can be used to determine $(\mathring\trt)_{\ell=0}$ or  $(\mathring\Pi)_{\ell=0}$. Therefore, we impose an additional condition on the spherical mean of $\Pi$:
\bea\lab{eq:condition-spherical-mean-Pi}
\overline{\Pi} =0.
\eea
In our context, see Remark \ref{rem:main-thm}, $\mathring\trt$ decays like $r^{-2-\de}$, hence \eqref{eq:pa-r-trt-linearized-l=0} then implies $(\mathring\trt)_{\ell=0}=0$.

\subsubsection{The HCS in perturbative form}\lab{subsect:HCS-linearized-form}

 It is well-known that the presence of mass, which is positive for nontrivial complete asymptotically flat data in view of \cite{SchoenYau1979}, \cite{SchoenYau1981}, \cite{Witten1981}, causes an $r^{-1}$ tail. Such a slow decaying tail would be disastrous when treated as a perturbation, and, as a consequence, it is necessary to linearize around the Schwarzschild data rather than the Minkowski one, even when the mass $m$ is small.\footnote{Our analysis  does  not   in fact requires that $m$ is small.} Recall that for the standard Schwarzschild data, we have
\beaa
\trth\0=2\Up^\frac 12 r^{-1},\quad \ah\0=\Up^{-\frac 12},\quad N\0=\Up^\frac 12 \pa_r, \quad K\0=r^{-2},
\eeaa
 where $\Up=1-2m/r$, and $\psi\0$ refers to the value of the quantity $\psi$ in Schwarzschild.
We denote 
\bea\lab{eq:checked-Schw}
\thc:= \trth-2\Up^\frac 12 r^{-1},\quad \ao\1=\ah-\Up^{-\frac 12},\quad \Kc:=K-r^{-2},\quad \muc:= \mu-2mr^{-3}.
\eea
\begin{definition}[Schematic notations]\lab{def:schematic-notation}
We use the following notations for the appropriately weighted perturbed quantities 
\beaa
\Ga_0=\{\ao\},\quad \Ga_1=\{\thc,\thh, \pp, r^{-1}\Ga_0, \trt,\kh,\Xi,\Pi \},\quad \Ga_2=\{
Y, \Kc, r^{-1}\Ga_1, \nabh\Ga_1 \},
\eeaa
where $k$ indicates the maximal order of differentiation of the metric.
\end{definition}
\begin{remark}\lab{rem:schematic-notations}
In the context of the proof of the main theorem, quantities in $\Ga_k$ are expected to have the decay rate of $r^{-1-k-\de}$.
\end{remark}
\begin{proposition}\lab{prop:linearized-eqns-time-symmetry}
The HCS system, along with the conditions \eqref{eq:spherical-mean-ah}, \eqref{eq:condition-spherical-mean-Pi}, can be expressed in the following form, using the schematic notation in \eqref{eq:checked-Schw}:
\bea
\lab{eq:R-transport-kac}
\pa_r \thc &=& \Up^{-\frac 12} \muc +\ao\muc -2r^{-1}\thc-2 (1-3mr^{-1}) r^{-2}\ao\1+\Ga_1\cdot \Ga_1-\frac 12 \ah\, \CC_{Ham},\\
\lab{eq:R-transport-Kc}
\pa_r \Kc &=& r^{-1} \muc-\ah \divh Y-3r^{-1}\Kc -2\Up^\frac 12 r^{-3} \ao+\Ga_1\cdot \Ga_2,\\
\lab{eq:R-transport-ao}
\Up^\frac 12 \laph \ao&=& \Kc-\Up^\frac 12 r^{-1} \thc-\muc -\laph(\Ga_0\cdot \Ga_0)+\Ga_1\cdot\Ga_1, \\
\lab{eq:spherical-mean-ao}
\overline{\ao\1} &=& -\frac 12 \Up^{-1} r \overline{\thc},\\
\lab{eq:unconditional-Codazzi-1}
\d_1 \d_2 \thh&=& (\frac 12 \laph\trth,0)-(\divh Y,\curlh Y),\\
\lab{eq:unconditional-d1p-1}
\d_1 \pp &=& (-\Up^\frac 12 \laph \ao+\laph(\Ga_0\cdot \Ga_0), 0),\\
\lab{eq:N-transport-go}
\slashed{\Lie}_{\pa_r} (r^{-2}\ga) &=& 2r^{-2}\ah \thh+\ah \thc (r^{-2}\ga)+2\Up^\frac 12 \ao r^{-1} (r^{-2}\ga),\\
\lab{eq:N-trt-linearized}
\pa_r \trt&=& \ah \divh \Xi + 2 r^{-1} \Pi  - r^{-1} \trt +\Ga_1\cdot \Ga_1 - \ah (\CC_{Mom})_N,\\
\lab{eq:N-div-Xi}
\pa_r \divh\Xi &=& - \divh\divh (\ah\kh)-4 r^{-1} \, \divh\Xi+\frac 12 \ah \laph \trt + \laph (\ah\Pi) +\Ga_1\cdot \Ga_2+\divh (\ah\, \slashed\CC_{Mom}),\\
\lab{eq:N-curl-Xi}
\pa_r \curlh \Xi &=& - \curlh\divh (\ah\kh)-4 r^{-1} \, \curlh\Xi+\Ga_1\cdot \Ga_2+\curlh (\ah\, \slashed\CC_{Mom}),\\
\lab{eq:average-Pi}
\overline{\Pi}&=& 0. 
\eea
Here $\overline{f}$ denotes the spherical mean of a scalar field $f$ with respect to the metric $\ga$. 
\end{proposition}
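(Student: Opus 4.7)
The plan is a direct substitution-and-bookkeeping argument, using $\pa_r = \ah \nabh_N$ on scalars (which follows from $\ah = (Nr)^{-1}$ and the coordinate choice $N(\vth^A) = 0$) together with the Schwarzschild background values $\ah\0 = \Up^{-1/2}$, $\trth\0 = 2\Up^{1/2}/r$, $K\0 = r^{-2}$, $\mu\0 = 2m/r^3$, and $\Xi\0 = \Pi\0 = \trt\0 = \thh\0 = \kh\0 = \pp\0 = 0$. For each equation I start from the corresponding unconditional equation in Proposition \ref{prop:Unconditional-equations-1} or \ref{prop:constraint-equation-in-frame} (or a gauge identity), substitute the decompositions \eqref{eq:checked-Schw}, and verify cancellation against the exact Schwarzschild identities $\pa_r(2\Up^{1/2}/r) = 2m\Up^{-1/2}/r^3 - 2\Up^{1/2}/r^2$ and $\pa_r r^{-2} = -2/r^3$, collecting residuals as schematic $\Ga_k\cdot\Ga_l$ terms.

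For the transport equations \eqref{eq:R-transport-kac} and \eqref{eq:R-transport-Kc}, I start from \eqref{eq:structure-constraint-H} (which is obtained by eliminating $\trh\Rh$ from \eqref{eq:unconditional-N-trth} via \eqref{eq:unconditional-Gauss} and the Hamiltonian constraint) and from \eqref{eq:unconditional-Bianchi} respectively. The key move in both is to invoke the definition \eqref{eq:def-mu} together with $\pp = -\nabh\log \ah$ from \eqref{eq:P-loga} to substitute $\divh \pp = \mu - K + \tfrac 14(\trth)^2$. This collapses $K+\divh \pp$ to $\mu + \tfrac 14 (\trth)^2$ in \eqref{eq:R-transport-kac}, and rewrites $-\trth K + \tfrac 12 \trth \divh \pp$ as $-\tfrac 32 \trth K + \tfrac 12 \trth \mu + \tfrac 18 (\trth)^3$ in \eqref{eq:R-transport-Kc}. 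After multiplying by $\ah$ and expanding the perturbations, the $\ao$-coefficient collapses via $\Up + 2m/r = 1$ to exactly $-2(1-3m/r)/r^2$ in \eqref{eq:R-transport-kac} and $-2\Up^{1/2}/r^3$ in \eqref{eq:R-transport-Kc}. The gauge identity \eqref{eq:R-transport-ao} follows from \eqref{eq:def-mu} by expanding $\log \ah = -\tfrac 12\log\Up + \Up^{1/2}\ao + O(\ao^2)$, whence $\laph\log \ah = \Up^{1/2}\laph \ao + \laph(\Ga_0\cdot\Ga_0)$, and equation \eqref{eq:unconditional-d1p-1} is then immediate using $\curlh \pp = 0$ (since $\pp$ is a gradient).

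The remaining equations are comparatively straightforward: \eqref{eq:unconditional-Codazzi-1} follows by applying $\d_1$ to \eqref{eq:unconditional-Codazzi} and using $\d_1 \nabh f = (\laph f,0)$ for a scalar $f$; \eqref{eq:N-transport-go} follows from $\Lie_N \ga = 2\th$ together with the algebraic identity $\ah\trth - 2/r = \ah\thc + 2\Up^{1/2}\ao/r$; \eqref{eq:N-trt-linearized} is obtained by multiplying \eqref{eq:structure-constraint-Phi-R} by $\ah$, noting that only the coefficients $\ah\0\trth\0 = 2/r$ on $\Pi$ and $\tfrac 12 \ah\0\trth\0 = 1/r$ on $\trt$ survive to linear order (all other products contain at least two factors vanishing in the time-symmetric background); and \eqref{eq:spherical-mean-ao}, \eqref{eq:average-Pi} are exactly the imposed conditions \eqref{eq:spherical-mean-ah}, \eqref{eq:condition-spherical-mean-Pi}.

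The main obstacle lies in equations \eqref{eq:N-div-Xi} and \eqref{eq:N-curl-Xi}. I apply $\divh$ (resp.\ $\curlh$) to \eqref{eq:structure-constraint-Phi-a}, use $\nabh_N (\divh \Xi) = \divh (\nabh_N \Xi) + [\nabh_N, \divh]\Xi - 2\th \cdot \nabh\Xi$ (the last term coming from the identity $\nabh_N \trh T = -2\th \cdot T + \trh \nabh_N T$ for symmetric horizontal $2$-tensors), and multiply by $\ah$. The $-4/r$ coefficient on $\divh\Xi$ (resp.\ $\curlh\Xi$) arises as the combined contribution of the direct term $-\tfrac 32\trth\Xi$ from \eqref{eq:structure-constraint-Phi-a}, the metric-variation term $-2\th\cdot\nabh\Xi$, and the genuine commutator $[\nabh_N,\divh]\Xi$; all three produce $\trth$-proportional contributions that must collapse precisely to $-4/r$ at Schwarzschild leading order. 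To recast $\ah \laph\Pi$ as $\laph(\ah\Pi)$ one uses $\laph(\ah\Pi) - \ah\laph\Pi = 2\nabh\ao\cdot\nabh\Pi + \Pi\laph\ao$, with the analogous identities for $\ah\divh\divh\kh \to \divh\divh(\ah\kh)$ and $\ah\divh\slashed\CC_{Mom} \to \divh(\ah\slashed\CC_{Mom})$; all such reshuffling corrections are products of $\nabh\ao$-type factors with $\nabh\Pi$, $\nabh\kh$, or $\slashed\CC_{Mom}$, hence of schematic type $\Ga_1\cdot\Ga_2$. The hard part is the bookkeeping: verifying that every residual from the commutators $[\nabh_N,\divh]$ and $[\nabh_N,\curlh]$ -- which produce additional $\th$-contractions with $\Xi$ and $Y$-type Riemann contributions -- lands in the same schematic class $\Ga_1 \cdot \Ga_2$, and that no subleading Schwarzschild contribution from $\Up = 1 - 2m/r$ escapes cancellation.
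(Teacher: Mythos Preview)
Your approach is essentially the same as the paper's, and for equations \eqref{eq:R-transport-kac}--\eqref{eq:N-trt-linearized} and \eqref{eq:average-Pi} your sketch matches the paper's derivation in Appendix~\ref{appendix-derivation-linearized-equations} almost verbatim. There is, however, one genuine error in your treatment of \eqref{eq:N-div-Xi}--\eqref{eq:N-curl-Xi}. The identity you invoke, $\nabh_N\trh T = -2\th\cdot T + \trh\nabh_N T$, is \emph{false} for the horizontal covariant derivative: since $\nabh_N$ is metric-compatible on horizontal tensors, $\nabh_N\trh T = \trh\nabh_N T$ with no extra term. (The identity you wrote holds for the Lie derivative, not for $\nabh_N$.) Consequently your decomposition $\nabh_N(\divh\Xi) = \divh(\nabh_N\Xi) + [\nabh_N,\divh]\Xi - 2\th\cdot\nabh\Xi$ double-counts: the $\th$-contraction with $\nabh\Xi$ already sits \emph{inside} the commutator, via the $-\th_{ac}\nabh_c\Xi_b$ term of the commutation formula \eqref{eq:commutation-nab-N-nab-a}. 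The correct accounting for the $-4r^{-1}$ coefficient is $-\tfrac32\trth$ from the explicit term in \eqref{eq:structure-constraint-Phi-a} plus $-\tfrac12\trth$ from the commutator trace, then multiplied by $\ah$.

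The paper also organizes the $\ah$-bookkeeping for \eqref{eq:N-div-Xi}--\eqref{eq:N-curl-Xi} more cleanly than your post-hoc reshuffling: before commuting, it rewrites \eqref{eq:structure-constraint-Phi-a} using $p = -\nabh\log\ah$ to absorb $-\divh\kh + p\cdot\kh = -\ah^{-1}\divh(\ah\kh)$, $\nabh\Pi - p\Pi = \ah^{-1}\nabh(\ah\Pi)$, and similarly for $\slashed\CC_{Mom}$. With this preprocessing, the $-p_a\nabh_N\Xi_b$ term produced by \eqref{eq:commutation-nab-N-nab-a} cancels exactly against the $\nabh_a(\ah^{-1})$ factors, yielding the $\ah^{-1}\nabh_a$-form directly and avoiding the separate verification that your ``reshuffling corrections'' all land in $\Ga_1\cdot\Ga_2$.
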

\begin{proof}
The proof is done by simply subtracting the equations in Propositions \ref{prop:Unconditional-equations-1} and \ref{prop:constraint-equation-in-frame} by the corresponding ones in Schwarzschild. For the equation of $\Xi$, we commute it with $\divh$ and $\curlh$ respectively.
See Appendix \ref{appendix-derivation-linearized-equations} for details.
\end{proof}

\subsubsection{The prescribed conditions for the  defining scalars}\lab{subsubsect:linear-operator-L}
In view of the discussion above, we seek solutions of HCS satisfying
\bea
\lab{eq:prescribed-conditions-munu}
\mu_{\ell\geq 1}=0, \qquad \nu=0,
\eea
and 
\bea\lab{eq:prescribed-conditions}
\bsplit
 (\divh Y)_{\ell\geq 2}=\Bb, &\qquad \qquad \quad (\curlh Y)_{\ell\geq 2}=\Bbd,\\
 (\laph (\ah\Pi))_{\ell\geq 2}=\Kk,& \qquad  r^{-4} \pa_r (r^4 \curlh \Xi)_{\ell\geq 2}=\Kkd.
\end{split}
\eea
We then write
\bea
\lab{eq:BbBbdKdKkd}
\bsplit
\divh Y&=\Bb+\B_{\ell\leq 1},\qquad \qquad \qquad\curlh Y=\Bbd+\Bd_{\ell\leq 1}, \\
\laph(\ah\Pi)&=\Kk+\K_{\ell\leq 1},\qquad r^{-4}\pa_r(r^4 \curlh\Xi)=\Kkd-\Kd_{\ell\leq 1}.
\end{split}
\eea
where $\B_{\ell\leq 1}:=(\divh Y)_{\ell\leq 1}$, $\Bd_{\ell\leq 1}:= (\curlh Y)_{\ell\leq 1}$, $\K_{\ell\leq 1}:=(\laph(\ah\Pi))_{\ell\leq 1}$, $\Kd_{\ell\leq 1}:= -(r^{-4}\pa_r(r^4 \curlh\Xi))_{\ell\leq 1}$.

\subsubsection{Triangular block structure of the perturbative form of HCS}
It order to illustrate the structure of the system, it helps to introduce the following notation:
\bea\lab{eq:def-Psi-1-12-intro}
\bsplit
\Psi_1 &=\thc,\quad \Psi_2=\Kc,\quad \Psi_3=\ao,\quad \Psi_4=\thh,\quad \Psi_5=\pp,\quad \Psi_6=Y, \\
\Psi_7 &=\trt,\quad \Psi_8=\kh,\quad \Psi_9 = \Xi,\quad \Psi_{10}=\Pi,\\  \Psi_{11} &=(\B_{\ell\leq 1},\Bd_{\ell\leq 1}),\quad \Psi_{12}=(\K_{\ell\leq 1},\Kd_{\ell\leq 1}).
\end{split}
\eea
Before  writing the HCS system in terms of these new variables,  we make the following substitutions. Projecting \eqref{eq:unconditional-Codazzi-1} to $\ell\leq 1$, we obtain
\bea\lab{eq:heuristic-substitution-ell=1-B}
\B_{\ell\leq 1} := (\divh Y)_{\ell\leq 1}=\frac 12 (\laph\thc)_{\ell=1}+err,
\eea
where $err$ contains nonlinear error terms.\footnote{Indeed, in view of \eqref{eq:unconditional-Codazzi-1}, the terms are the $\ell\leq 1$ parts of $\d_1\d_2\thh$ and the $\ell=0$ part of $\laph\trth$, which are both zero at the linear level.}
Similarly, projecting \eqref{eq:N-div-Xi} to $\ell\leq 1$, using also the gauge condition $\nu=\divh\Xi=0$, we deduce
\bea\lab{eq:heuristic-substitution-ell=1-K}
\K_{\ell\leq 1}:= (\laph (\ah \Pi))_{\ell\leq 1} =-\frac 12(\laph\trt)_{\ell=1}+err.
\eea
In addition, using the condition $\muc_{\ell\geq 1}=0$, we can also write, according to the definitions \eqref{eq:def-mu} and \eqref{eq:checked-Schw},
\bea\lab{eq:heuristic-substitution-ell=0-mu}
\muc=\muc_{\ell=0}=\Kc_{\ell=0}-\Up^\frac 12 r^{-1} (\thc)_{\ell=0}+err,
\eea
where $err$ is quadratic in $\thc$.

Combining these substitutions with \eqref{eq:BbBbdKdKkd}, we can now write the HCS system as
\bea\lab{eq:HCS-schematic-form}
\, ^{(\ga)}L[\Psi] = \begin{pmatrix}
0 \\
-\Up^{-\frac 12} \Bb \\
(0,0) \\
-(\mathcal{B}, \Bbd) \\
0 \\
(\Bb, \Bbd) \\
0 \\
(\Up^{-\frac 12}\Kk, -\Up^\frac 12 \Kkd) \\
\Kkd \\
(\Up^\frac 12 \Kk,0)
\end{pmatrix}+ err,
\eea
where, for a given perturbed horizontal metric $\tilde\ga$, the linear operator $\, ^{(\tilde\ga)} L$ is defined as
\bea\lab{eq:def-L-operator}
\, ^{(\tilde\ga)} L[\Psi]:=\begin{pmatrix}
  (\pa_r+2r^{-1})\Psi_1+2(1-3mr^{-1}) r^{-2} \Psi_3-\Up^{-\frac 12}(\Psi_2-\Up^\frac 12 r^{-1} \Psi_1)_{\ell=0} \\
  (\pa_r+3r^{-1})\Psi_2+2\Up^\frac 12 r^{-3} \Psi_3 + \frac 12 \Up^{-\frac 12} (\laph\Psi_1)_{\ell=1}-r^{-1} (\Psi_2-\Up^\frac 12 r^{-1} \Psi_1)_{\ell=0}\\ 
  (\Up^\frac 12 \laph \Psi_3, \overline{\Psi_3}) -(\Psi_2-\Up^\frac 12 r^{-1} \Psi_1 -\overline{\displaystyle\Psi_2-\Up^\frac 12 r^{-1} \Psi_1}, -\frac 12 \Up^{-1} r \overline{\Psi_1})\\
  \d_1\d_2 \Psi_4-(\frac 12\laph\Psi_1 , 0) +\Psi_{11}  \vspace{0.5ex}\\ 
 \d_1 \Psi_5 + (\Up^\frac 12 \laph \Psi_3,0) \\
  \d_1 \Psi_6 -\Psi_{11} \\
(\pa_r+ r^{-1}) \Psi_7 - 2r^{-1} \Psi_{10} \\
\d_1\d_2 \Psi_8 -(\frac 12 \laph \Psi_7,0)_{\ell\geq 2}-\Up^{-\frac 12}\Psi_{12} \\
(r^{-4}\pa_r(\curlh\Xi))_{\ell=1} + \mathcal P_2\Psi_{12} \\ 
(\laph \Psi_{10},\overline \Psi_{10} ) +\frac 12 ((\laph \Psi_7)_{\ell=1},0)
\end{pmatrix}.
\eea
Here, all the horizontal operators are defined relative to  $\tilde\ga$, and $\Psi_4$ and $\Psi_8$ are traceless with respect to $\tilde\ga$. The notation $\mathcal P_2$ denotes the projection into the second component, i.e., $\mathcal P_2 \Psi_{12}=\Kd_{\ell\leq 1}$.
\begin{remark}[Block-triangular structure]\lab{rem:block-triangular}
Notice that apart from $\Psi_1$, $\Psi_2$, $\Psi_3$, other quantities do not enter the first three rows in the expression of $\, ^{(\tilde\ga)} L[\Psi]$. 
In other words, denoting $\Psi_{main}=(\Psi_1,\Psi_2,\Psi_3)$, the 
linear operator splits into two parts 
\beaa
\, ^{(\tilde\ga)}L[\Psi] = \begin{pmatrix}
\, ^{(\tilde\ga)}L_{main}[\Psi_{main}] \\
 \, ^{(\tilde\ga)} L_{rem}[\Psi]
 \end{pmatrix}.
\eeaa
Equivalently, if we write $\, ^{(\tilde\ga)}L$ in the matrix form, we have a block-triangular structure with respect to the first $3\times 3$ block. Therefore, we can determine $\Psi_{main}=(\Psi_1,\Psi_2,\Psi_3)$ first, independently of other quantities. Once they are determined, taking into account the fact that $\laph\Psi_{10}$ is part of the input (corresponding to the free scalar $\Kk$), 
the second block itself also has a triangular structure.\footnote{Here we mainly refer to the $\ell\geq 2$ parts. The structure of the $\ell=1$ parts of the system is different, as will be discussed in Section \ref{subsect:ell-0-1-constraints} just below.}
\end{remark}
The metric $\ga$ in \eqref{eq:HCS-schematic-form} satisfies \eqref{eq:N-transport-go}, i.e., $\ga$ is in turn determined by $\Psi$. Therefore, we construct the solution through an iteration argument in Section \ref{sect:proof}. In the iteration scheme, the system is solved as if $\ga$ is fixed at each step. 
The block triangular structure pointed out in Remark \ref{rem:block-triangular} allows us, in solving the linear system at each step, to invert the main part $\, ^{(\tilde\ga)}L_{main}[\Psi_{main}]$ first, as we will carry out in Section \ref{subsect:solve-main-part}.

\begin{definition}[Linearized system around Schwarzschild]
We call the system $\, ^{(\gz)}L[\mathring\Psi]=0$ the $\ga\0$-linearized system.
\end{definition}
\begin{remark}\lab{remark:gauge-scalar-not-in-linearized-operator}
Note that in the definition of the $\, ^{(\tilde\ga)}L$ operator, we have already taken the gauge conditions $\mu_{\ell\geq 1}=0$, $\nu=0$ into account, and hence the corresponding terms are not included in the expression of $\, ^{(\tilde\ga)}L[\Psi]$.
\end{remark}

\subsection{The $\ell=1$ constraints}\lab{subsect:ell-0-1-constraints}
In this section, we perform the analysis of $\ell= 1$ modes for the $\gz$-linearized system $\, ^{(\gz)}L[\mathring\Psi]=0$, with $\gz$ the round metric defined in \eqref{eq:def-gz}. 
Therefore, in this section, all horizontal operators $\divh$, $\curlh$, $\cdots$ are defined through $\gz$.
Since $\mathring\Psi_4=\mathring\thh$ and $\mathring\Psi_8=\mathring\kh$ are traceless with respect to $\gz$, hence fully supported on $\ell\geq 2$ modes, they can be disregarded in the analysis below. As we shall see in the following proposition, the $\ell=1$ modes are completely determined by the conditions at spatial infinity. This is unlike the $\ell\geq 2$ modes, where we have to take into account the additional freedom given by the four free scalars.\footnote{This is, of course, under the condition that the gauge scalars are specified.}
\begin{proposition}
Consider the $\gz$-linearized system $\, ^{(\gz)}L[\mathring\Psi]=0$. 
\begin{itemize}
\item[(i)] 
If we impose the conditions
\bea\lab{eq:center-of-mass-zero}
\lim_{r\to \infty} r^3 (\mathring \Psi_1)_{\ell=1,i}=\mathring\cc_i,\quad \lim_{r\to\infty} r^4 (\mathring \Psi_2)_{\ell=1} =0,
\eea
then we have 
\bea\lab{eq:Psi-1-2-3-linearized-ell=1}
(\mathring\Psi_1)_{\ell=1,i}=\mathring\cc_i r^{-3}+O(|\mathring\cc| r^{-4}), \quad (\mathring \Psi_2)_{\ell=1}=O(|\mathring\cc| r^{-5}),\quad 
(\mathring \Psi_3)_{\ell=1,i}= \frac 12 \mathring\cc_i r^{-2}+O(\mathring\cc_i r^{-3}).
\eea
\item[(ii)] 
If, in addition, we impose the conditions
\bea\lab{eq:condition-infinity-linearized-P-J}
\lim_{r\to\infty} r^2 (\mathring \Psi_7)_{\ell=1}=0 ,\quad \lim_{r\to \infty} r^4(\curlh \mathring{\Psi}_9)_{\ell=1,i}=\mathring{\bf a}_i, \quad i=-1,0,1,
\eea
then we have
\beaa
(\mathring\Psi_{10})_{\ell=1}=(\mathring\Psi_{7})_{\ell=1}=0,\quad (\curlh \mathring{\Psi}_9)_{\ell=1,i}=r^{-4} \mathring{\bf a}_i, \quad i=-1,0,1.
\eeaa
\end{itemize}
\end{proposition}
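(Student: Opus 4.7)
The strategy is to exploit the block-triangular structure of $\, ^{(\gz)}L$ noted in Remark \ref{rem:block-triangular}, combined with the fact that on the round metric $\gz$ the operator $\laph$ acts on $\ell=1$ modes simply as multiplication by $-2r^{-2}$, so that several elliptic relations reduce to algebraic ones.

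For (i), I would restrict the first three rows of the system $\, ^{(\gz)}L[\mathring\Psi]=0$ to $\ell=1$ modes. Because the $\ell=0$ projections appearing in rows 1--3 contribute nothing at $\ell=1$, the third row collapses to the algebraic relation
\[
-2\Up^{1/2}r^{-2}(\mathring\Psi_3)_{\ell=1} \;=\; (\mathring\Psi_2)_{\ell=1}-\Up^{1/2}r^{-1}(\mathring\Psi_1)_{\ell=1},
\]
which expresses $(\mathring\Psi_3)_{\ell=1}$ in terms of $(\mathring\Psi_1)_{\ell=1}$ and $(\mathring\Psi_2)_{\ell=1}$. Substituting this into rows 1 and 2 yields a closed $2\times 2$ first-order ODE system in $r$ for these two quantities. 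At the Minkowski level ($m=0$, $\Up=1$) the system diagonalizes into $(r^3(\mathring\Psi_1)_{\ell=1})'=0$ and a similar equation forcing $(\mathring\Psi_2)_{\ell=1}\equiv 0$ once the prescribed limit at infinity vanishes. The Schwarzschild corrections, of size $\Up^{\pm 1/2}-1=O(m/r)$, perturb this system by lower-order terms which, upon integration from infinity using the prescribed limits, produce precisely the $O(|\mathring\cc|r^{-4})$ and $O(|\mathring\cc|r^{-5})$ remainders asserted in \eqref{eq:Psi-1-2-3-linearized-ell=1}.

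For (ii), I would focus on rows 7, 8, 9, 10 at $\ell=1$. The key observation is that $\mathring\Psi_8=\mathring\kh$ is traceless (hence supported on $\ell\geq 2$) and $(\frac 12\laph\mathring\Psi_7, 0)_{\ell\geq 2}$ has no $\ell=1$ content, so row 8 at $\ell=1$ collapses to $-\Up^{-1/2}\Psi_{12}|_{\ell=1}=0$, forcing $\K_{\ell=1}=\Kd_{\ell=1}=0$. Row 10 at $\ell=1$ (with $\Kk_{\ell=1}=0$ by the prescription and $\overline{\Psi_{10}}=0$ by the imposed mean condition) then reads $\laph\bigl((\mathring\Psi_{10})_{\ell=1}+\tfrac 12(\mathring\Psi_7)_{\ell=1}\bigr)=0$, and invertibility of $\laph$ on $\ell=1$ gives $(\mathring\Psi_{10})_{\ell=1}=-\tfrac 12(\mathring\Psi_7)_{\ell=1}$. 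Plugging this into row 7 produces the closed ODE $\pa_r\bigl(r^2(\mathring\Psi_7)_{\ell=1}\bigr)=0$; the prescribed vanishing $\lim_{r\to\infty}r^2(\mathring\Psi_7)_{\ell=1}=0$ forces $(\mathring\Psi_7)_{\ell=1}\equiv 0$, and hence $(\mathring\Psi_{10})_{\ell=1}\equiv 0$. Finally, row 9 at $\ell=1$, using $\Kd_{\ell=1}=0$ and $\Kkd_{\ell=1}=0$, reduces to $\pa_r(r^4\curlh\mathring\Psi_9)|_{\ell=1}=0$; hence $r^4(\curlh\mathring\Psi_9)_{\ell=1,i}$ is constant in $r$ and is identified with $\mathring{\bf a}_i$ by the prescribed limit.

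The main technical obstacle lies in part (i): while the leading orders $(\mathring\Psi_1)_{\ell=1}\sim\mathring\cc r^{-3}$ and $(\mathring\Psi_3)_{\ell=1}\sim\frac 12 \mathring\cc r^{-2}$ follow from a straightforward ansatz match, extracting the \emph{sharp} remainders of order $r^{-4}$ and $r^{-5}$ requires expanding $\Up^{\pm 1/2}$ to first order in $m/r$ and iterating the integral representation of the ODE once to capture the next correction. The remaining steps in both parts are direct linear-algebra and first-order linear ODE computations made transparent by the block-triangular structure.
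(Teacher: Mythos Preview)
Your proposal is correct and follows essentially the same strategy as the paper for both parts: project the relevant rows of $^{(\gz)}L[\mathring\Psi]=0$ to $\ell=1$, use the algebraic collapse of the elliptic row (row 3, respectively row 10) via $\laph J_{1,\mm}=-2r^{-2}J_{1,\mm}$, reduce to first-order linear ODEs, and integrate from infinity. Part (ii) matches the paper's argument line by line.

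The only difference worth noting is in how the Schwarzschild corrections are handled in part (i). You propose to solve the $\Up=1$ system exactly and then perturb, ``iterating the integral representation once.'' The paper instead writes the full system for $\bigl(r^3(\mathring\Psi_1)_{\ell=1}-\mathring\cc_i,\ r^4(\mathring\Psi_2)_{\ell=1}\bigr)$ with leading matrix $r^{-1}\bigl(\begin{smallmatrix}0&1\\0&2\end{smallmatrix}\bigr)$, observes that this matrix---while not symmetric---is accretive in a suitable inner product on $\mathbb{R}^2$, and invokes Lemma~\ref{lem:Duhamel-round-case} to obtain a uniformly bounded backward solution operator for the \emph{full} (not just Minkowski) system in one stroke. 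One Picard iteration as you describe does capture the leading correction, but closing the argument rigorously still requires either the full iteration to converge or a Gronwall step---which is precisely what the accretiveness observation provides. The paper's framing also has the advantage of anticipating the general $\ell\geq 2$ analysis in Section~\ref{section-nonlocal-eqn-Kc-kac-round}, where the analogous accretiveness of $\delta' I+\tfrac{2}{\ell(\ell+1)}\bigl(\begin{smallmatrix}-1&1\\-1&1\end{smallmatrix}\bigr)$ (Lemma~\ref{eq:lemma-Lyapunov-matrix}) is the key structural input.
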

\begin{remark}
We will show in Appendix \ref{appendix:physical-quantities} that the conditions \eqref{eq:center-of-mass-zero} and \eqref{eq:condition-infinity-linearized-P-J} are the linearized version of the conditions $\C_i=-\frac{1}{8\pi m}\cc_i$, $\P_i =0$, $\J_i=\frac{1}{8\pi}{\bf a}_i$ with $\C_i$, $\P_i$, $\J_i$ defined in \eqref{eq:def-ADM-charges}.
\end{remark}

\begin{proof}[Proof of (i)]
The corresponding rows of $\mathring\Psi_1=\mathring\thc$, $\mathring\Psi_2=\mathring\Kc$, and $\mathring\Psi_3=\mathring\ao$ in $\, ^{(\gz)}L[\mathring\Psi]=0$, see \eqref{eq:def-L-operator}, when projected to $\ell=1$, read 
\bea
\pa_r (\mathring\Psi_1)_{\ell=1}&=& 
-2r^{-1}(\mathring\Psi_1)_{\ell=1}-2 (1-3mr^{-1}) r^{-2} (\mathring\Psi_3)_{\ell=1}, \\
\pa_r (\mathring \Psi_2)_{\ell=1} &=& -3r^{-1} (\mathring \Psi_2)_{\ell=1}-2\Up^\frac 12 r^{-3} (\mathring\Psi_3)_{\ell=1}-\frac 12 \Up^{-\frac 12}(\laph \mathring \Psi_1)_{\ell=1},\\
\lab{eq:Psi-3-ell=1}
-2\Up^{\frac 12} r^{-2} (\mathring\Psi_3)_{\ell=1}&=& (\mathring \Psi_2)_{\ell=1}-\Up^\frac 12 r^{-1} (\mathring\Psi_1)_{\ell=1},
\eea
which, by eliminating $(\mathring\Psi_3)_{\ell=1}$, can be reduced to
\beaa
\pa_r (\mathring\Psi_1)_{\ell=1}+3r^{-1} (\mathring\Psi_1)_{\ell=1} &=& O(mr^{-2}) (\mathring\Psi_1)_{\ell=1} +(1+O(mr^{-1})) (\mathring \Psi_2)_{\ell=1},\\
\pa_r (\mathring \Psi_2)_{\ell=1} +2r^{-1} (\mathring \Psi_2)_{\ell=1} &=& \Up^{-\frac 12} (1-\Up) r^{-2} (\mathring\Psi_1)_{\ell=1},
\eeaa
or, in the matrix form,
\beaa
\pa_r \begin{pmatrix} (\mathring\Psi_1)_{\ell=1} \\ (\mathring \Psi_2)_{\ell=1} \end{pmatrix} = \begin{pmatrix} -3r^{-1} & 1 \\ 0  & -2r^{-1} \end{pmatrix} \begin{pmatrix} (\mathring\Psi_1)_{\ell=1} \\ (\mathring \Psi_2)_{\ell=1} \end{pmatrix} +\begin{pmatrix} O(mr^{-1}) (\mathring\Psi_2)_{\ell=1} +O(mr^{-2})(\mathring\Psi_1)_{\ell=1} \\ O(mr^{-2}) (\mathring \Psi_2)_{\ell=1} \end{pmatrix},
\eeaa
and hence,
\beaa
\pa_r \begin{pmatrix} r^3 (\mathring\Psi_1)_{\ell=1} \\ r^{4} (\mathring \Psi_2)_{\ell=1} \end{pmatrix} = \begin{pmatrix} 0 & 1 \\ 0 & 2 \end{pmatrix}r^{-1} \begin{pmatrix} r^3 (\mathring\Psi_1)_{\ell=1} \\ r^{4} (\mathring \Psi_2)_{\ell=1} \end{pmatrix} +O(mr^{-2}) \begin{pmatrix} r^3 (\mathring\Psi_1)_{\ell=1} \\ r^4 (\mathring \Psi_2)_{\ell=1} \end{pmatrix},
\eeaa
or, with the $\mathring\cc$-part subtracted,
\beaa
\pa_r \begin{pmatrix} r^3 (\mathring\Psi_1)_{\ell=1,i}-\mathring\cc_i \\ r^{4} (\mathring \Psi_2)_{\ell=1,i} \end{pmatrix} &=& \begin{pmatrix} 0 & 1 \\ 0 & 2 \end{pmatrix}r^{-1} \begin{pmatrix} r^3 (\mathring\Psi_1)_{\ell=1,i} \\ r^{4} (\mathring \Psi_2)_{\ell=1,i} \end{pmatrix} +O(mr^{-2}) \begin{pmatrix} r^3 (\mathring\Psi_1)_{\ell=1,i} \\ r^4 (\mathring \Psi_2)_{\ell=1,i} \end{pmatrix}+O(|\mathring\cc| r^{-2}) \\
&=& \begin{pmatrix} 0 & 1 \\ 0 & 2 \end{pmatrix}r^{-1} \begin{pmatrix} r^3 (\mathring\Psi_1)_{\ell=1,i}-\mathring\cc_i \\ r^{4} (\mathring \Psi_2)_{\ell=1,i} \end{pmatrix} +O(mr^{-2}) \begin{pmatrix} r^3 (\mathring\Psi_1)_{\ell=1,i}-\mathring\cc_i \\ r^4 (\mathring \Psi_2)_{\ell=1,i} \end{pmatrix}+O(|\mathring\cc| r^{-2}),
\eeaa
where, for the second equality, we use that the first column of $\begin{pmatrix} 0 & 1 \\ 0 & 2 \end{pmatrix}$ is zero.
The matrix $\begin{pmatrix} 0 & 1 \\ 0 & 2 \end{pmatrix}$ is not symmetric, hence not non-negative definite; however, it is accretive with respect to some modified inner product over $\mathbb{R}^2$, see Lemma \ref{lem:Duhamel-round-case}. This allows us to integrate the equation from $r=\infty$, using the condition \eqref{eq:center-of-mass-zero}, and obtain 
\beaa
r^3 (\mathring\Psi_1)_{\ell=1,i}-\mathring\cc_i =O(|\mathring\cc| r^{-1}),\quad r^{4} (\mathring \Psi_2)_{\ell=1}=O(|\mathring\cc| r^{-1}),
\eeaa
as required. The expansion of $(\mathring\Psi_3)_{\ell=1}$ then follows from \eqref{eq:Psi-3-ell=1}, which we used to eliminate $(\mathring\Psi_3)_{\ell=1}$.
\end{proof}

\begin{proof}[Proof of (ii)]
The corresponding rows of $\, ^{(\gz)}L[\mathring\Psi]=0$ in fact come from
projecting the linearized version of equations \eqref{eq:N-trt-linearized}-\eqref{eq:N-curl-Xi} into $\ell=1$ modes, with the condition $\displaystyle{(\divh\mathring\Psi_9)_{\ell=1}=0}$. We have\footnote{Note that in view of \eqref{eq:def-L-operator}, the equation of $\mathring\Psi_8$ implies that $\mathring\Psi_{12}=0$.}
\beaa
\pa_r (\mathring\Psi_7)_{\ell=1} &=& 
2 r^{-1} (\mathring\Psi_{10})_{\ell=1}  - r^{-1} (\mathring\Psi_7)_{\ell=1} ,\\
\pa_r (\curlh \mathring\Psi_9)_{\ell=1} &=& -4 r^{-1} (\curlh \mathring\Psi_9)_{\ell=1}, \\
(\laph \mathring\Psi_{10})_{\ell=1} &=& - \frac 12 (\laph \mathring\Psi_7)_{\ell=1}.
\eeaa
The third equation simply gives $(\mathring\Psi_7)_{\ell=1}=-2(\mathring\Psi_{10})_{\ell=1}$. Combining this with the first equation gives
\beaa
\pa_r (\mathring\Psi_7)_{\ell=1} &=& -2 r^{-1} (\mathring\Psi_7)_{\ell=1},\\
\pa_r (\curlh\mathring\Psi_9)_{\ell=1} &=& -4 r^{-1} (\curlh\mathring\Psi_9)_{\ell=1}.
\eeaa
Therefore, the solution is completely determined from the condition at infinity, which we impose in \eqref{eq:condition-infinity-linearized-P-J}.
Hence, we obtain $(\mathring\Psi_7)_{\ell=1}=0$ and $(\curlh\mathring\Psi_9)_{\ell=1,i}=r^{-4} \mathring{\bf a}_i$. 
\end{proof}

\subsection{Precise statement of the main theorem}
We now state the precise form of the main theorem.
\begin{theorem}[Main Theorem]\lab{thm:main-precise}
There exists a sufficiently small constant $\eps>0$, such that given $m>0$, $r_0>2m$, two constant triplets ${\bf a}=({\bf a}_1,{\bf a}_2,{\bf a}_3)$, ${\bf c}=(\cc_1,\cc_2,\cc_3)$ that are $\eps$-close to zero in $\mathbb{R}^3$, and four scalar functions $\Bb$, $\dual\Bb$, $\Kk$, $\Kkd$, supported on $\ell\geq 2$ in the sense of \eqref{eq:decomposition-ell-geq-2}, satisfying
\bea\lab{eq:main-thm-time-symmetric-B-Bd-bounds}
\sup_{r\in [r_0,\infty)} r^{3+\de} ||(\Bb,\dual\Bb,\Kk)||_{\H^s(S_r)}\leq \eps, \quad \sup_{r\in [r_0,\infty)} r^{4+\de} ||\Kkd||_{\H^s(S_r)}\leq \eps, \quad \text{for some integer $s\geq 3$},
\eea
then there exists a metric $g$ and a symmetric $2$-tensor $k$ on $\Si=(r_0,\infty)\times \mathbb{S}^2$ solving the constraint equation \eqref{ece} such that, under our choice of the frame, for which $\mu_{\ell\geq 1}=\nu=0$, we have
\beaa
\left(\divh Y- \Bb\right)_{\ell\geq 2}= 0,\quad 
\left(\curlh Y-\dual\Bb \right)_{\ell\geq 2}= 0,\quad 
\left(\laph(\ah\Pi)-\Kk \right)_{\ell\geq 2}= 0,\quad 
\left(r^{-4}(\pa_r (r^4\curlh \Xi))-\dual\Kk \right)_{\ell\geq 2}= 0.
\eeaa
Moreover, the ADM charges defined in \eqref{eq:def-ADM-charges} satisfy
\bea\lab{eq:main-thm-ADM-charges}
 \E= m, \quad 
 \J_i=\frac{1}{8\pi}{\bf a}_i,\quad 
 \P_i=0,\quad
 \C_i = -\frac{1}{8\pi m}\cc_i.
 \eea
\end{theorem}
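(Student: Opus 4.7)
The plan is to construct the solution by a Picard-type iteration on the system \eqref{eq:HCS-schematic-form}, exploiting the block-triangular structure identified in Remark \ref{rem:block-triangular}. At each step I fix the horizontal metric $\ga^{(n)}$ from the previous iterate (starting from $\gz$ at $n=0$), solve the $\ga^{(n)}$-linear problem $\, ^{(\ga^{(n)})}L[\Psi^{(n+1)}] = (\text{inputs}) + \err(\Psi^{(n)},\ga^{(n)})$, and update the metric by integrating \eqref{eq:N-transport-go} inward from spatial infinity against a round asymptotic. I would close the iteration in a weighted $\H^s$-space reflecting the heuristic sizes $\Ga_0 \sim r^{-\de}$, $\Ga_1 \sim r^{-1-\de}$, $\Ga_2 \sim r^{-2-\de}$ of Definition \ref{def:schematic-notation} and Remark \ref{rem:schematic-notations}, with free-scalar inputs confined to the ball of size $\eps$ given by \eqref{eq:main-thm-time-symmetric-B-Bd-bounds}.

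The first substep of the linear problem is to solve the main block $(\Psi_1,\Psi_2,\Psi_3) = (\thc,\Kc,\ao)$. Equations \eqref{eq:R-transport-kac}--\eqref{eq:R-transport-ao}, combined with the elimination \eqref{eq:heuristic-substitution-ell=0-mu} of $\muc$, form a coupled system of two radial transport equations and one spherical elliptic equation for $\ao$. On $\ell \geq 2$, this reduces to standard Hodge estimates for a shifted Laplacian plus transport integrated from $r = \infty$. On $\ell = 1$, I would invoke the analysis in Section \ref{subsect:ell-0-1-constraints}: the governing $2 \times 2$ matrix is not symmetric but accretive in a modified inner product, which suffices to integrate from infinity once the prescribed center of mass $\cc$ enters through \eqref{eq:center-of-mass-zero}. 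On $\ell = 0$, the condition \eqref{eq:spherical-mean-ah} removes the remaining ambiguity of the radial coordinate and determines $\overline{\ao}$ from $\overline{\thc}$.

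With the main block in hand, the remaining rows of $\, ^{(\ga^{(n)})}L[\Psi]$ decouple further. The Codazzi system \eqref{eq:unconditional-Codazzi-1} determines $\thh$ by inverting the injective elliptic operator $\d_1\d_2$ on $\ell \geq 2$; \eqref{eq:unconditional-d1p-1} recovers $\pp$ via $\d_1$; and the input $(\Bb,\Bbd)$ together with the $\ell \leq 1$ substitution \eqref{eq:heuristic-substitution-ell=1-B} gives $Y$ by an elliptic estimate. The transport equations \eqref{eq:N-trt-linearized}--\eqref{eq:N-curl-Xi} for $\trt$, $\divh\Xi$, $\curlh\Xi$ are integrated from $r = \infty$ using \eqref{eq:condition-infinity-linearized-P-J} to encode the prescribed angular momentum ${\bf a}$ and vanishing linear momentum, with $\Kk$ and $\Kkd$ on $\ell \geq 2$ feeding the right-hand sides. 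Elliptic inversion of $\laph$ using \eqref{eq:condition-spherical-mean-Pi} recovers $\Pi$; $\kh$ is then obtained by Hodge-inverting $\d_1\d_2$ in \eqref{eq:N-div-Xi}--\eqref{eq:N-curl-Xi}. Throughout, discrepancies between $\ga^{(n)}$ and $\gz$ are absorbed into perturbative remainders as indicated in Remark \ref{rem:inverse-metric}.

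The metric update via \eqref{eq:N-transport-go} produces $\ga^{(n+1)}$, and I would close the loop by a contraction argument in the above weighted $\H^s$ ball, noting that $\err$ is at least quadratic in the $\Ga_i$ (hence $O(\eps^2)$ with Sobolev embedding on the sphere for $s$ large enough), while the $\, ^{(\ga^{(n)})} L$-inverse is uniformly bounded in $n$. The ADM identities \eqref{eq:main-thm-ADM-charges} then follow from the prescribed $\ell \leq 1$ asymptotics \eqref{eq:center-of-mass-zero}, \eqref{eq:condition-infinity-linearized-P-J} together with the formulas in Appendix \ref{appendix:physical-quantities}. The main obstacle I anticipate is the borderline decay rate $r^{-1-\de}$: the critical couplings $-2(1 - 3mr^{-1}) r^{-2}\ao$ in \eqref{eq:R-transport-kac} and $r^{-1}\thc$ in \eqref{eq:R-transport-ao}, together with the non-symmetric $\ell = 1$ matrix, prohibit solving the three main-block equations one at a time and force a coupled weighted estimate controlling $\thc$, $\Kc$ and $\ao$ simultaneously. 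Careful use of \eqref{eq:integral-Minkowski-inequality} for the transport components and tracking of $\Up = 1 - 2m/r$ factors away from $r = 2m$ should make this feasible for $r_0 > 2m$.
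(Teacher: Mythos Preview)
Your overall strategy matches the paper's: iterate on the block-triangular system, solve the main block $(\thc,\Kc,\ao)$ first, recover the remaining quantities by Hodge and transport, update the metric, and close by contraction. You also correctly flag the coupled main block at borderline decay as the principal obstacle.

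The genuine gap is in how you resolve that obstacle for $\ell\geq 2$. You describe it as ``standard Hodge estimates for a shifted Laplacian plus transport'' and locate the non-symmetry issue only at $\ell=1$; in fact the $\ell=1$ matrix $\begin{pmatrix}0&1\\0&2\end{pmatrix}$ is diagonalizable and benign, while the real difficulty is at $\ell\geq 2$. After eliminating $\ao$ and projecting to modes, the transport system for $(r^{2+\de'}(\thc)_{\ell,\mm},\,r^{3+\de'}(\Kc)_{\ell,\mm})$ has coefficient matrix $A_\ell=\de' I+\frac{2}{\ell(\ell+1)}Q$ with $Q=\begin{pmatrix}-1&1\\-1&1\end{pmatrix}$ \emph{nilpotent}. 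Because $A_\ell$ is not diagonalizable, a naive energy estimate costs a logarithm and is not uniform in $\ell$. The paper's key device (Lemma~\ref{eq:lemma-Lyapunov-matrix}) is an explicit Lyapunov inner product $G_{\de'}$ on $\mathbb{R}^2$ under which every $A_\ell$ is accretive uniformly in $\ell$; only then does Lemma~\ref{lem:Duhamel-round-case} combine with \eqref{eq:integral-Minkowski-inequality} to close. A further point you omit: since $\tilde\ga\neq\gz$, the inverse Laplacian $\laph_{\tilde\ga}^{-1}$ mixes modes, producing a remainder operator $\RR$ (Proposition~\ref{prop:bound-RR}) that forces a separate fixed-point argument \emph{within} the main block (Section~\ref{sec:summing-up}), on top of the outer iteration in $n$.

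A second omission is the passage from the limit of the iteration to an actual solution of \eqref{ece}. The scheme is engineered so that the limiting equations formally reproduce the HCS, but you must still check that the abstract limits $\Psi_2^{(\infty)},\Psi_6^{(\infty)}$ coincide with the genuine Gauss curvature $K(\ga^{(\infty)})-r^{-2}$ and the curvature $1$-form $Y(g^{(\infty)})$, and then compare the limiting system against the unconditional identities of Propositions~\ref{prop:Unconditional-equations-1}--\ref{prop:constraint-equation-in-frame} to extract $\CC_{Ham}=\CC_{Mom}=0$. This is nontrivial (Lemma~\ref{eq:lemma-limit-properties}, Proposition~\ref{prop:limit-solves-constraint}) and should be part of your plan.
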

\begin{remark}\lab{rem:main-thm}
Note that the four scalars $(\Bb,\Bbd,\Kk,\Kkd)$ are all at the level of one derivative of curvature (two derivatives of the components of $k$). The theorem, therefore, asserts that we can produce general perturbed initial data with decay rate $O(r^{-1-\de})$ at the metric level. However, compared with $(\Bb, \Bbd,\Kk)$ that is allowed to decay at $O(r^{-4-\de})$, $\Kkd$ must decay one order faster, as is manifest by its alignment with $r^{-4} \pa_r(r^4 \curlh \Xi)$, an expression naturally comes from \eqref{eq:N-curl-Xi}. This is in fact related to the remark in \cite[Page 11]{CK} on the existence of the angular momentum: While the metric is allowed to decay at the $r^{-\frac 32}$ level in \cite{CK}, it is shown through the momentum constraint $\slashed{\CC}_{Ham}=0$ that the angular momentum exists despite the lack of decay at first glance. The equation \eqref{eq:N-curl-Xi} in fact comes from the same momentum constraint, see Appendix \ref{subsect:proof-constraint-equation-in-frame}. 
\end{remark}

\subsection{List of notations and conventions}
For the benefit of the reader, we  recall below  the main  notations we have  introduced:
\bea\lab{eq:list-of-notations}
\bsplit
& p_a:=g(\nab_N N,e_a),\quad  \th_{ab}:=g(\nab_a N, e_b),\quad \slashed{R}_{ab}:=R(N,e_a,N,e_b),\quad Y_a:=R(N, e_b, e_b, e_a),\\ 
& \Th_{ab}:=k(e_a,e_b),\quad \Xi_a:=k(N,e_a),\quad \Pi :=k(N,N),\quad \ah=(N r)^{-1},\\
& \mu=-\slashed{\Delta}(\log\ah)+K-\frac 14(\trth)^2,\quad \nu=\divh\Xi, \\
& g=\ah^2 dr^2+\ga,\quad \ga\0=r^2 (\gs)=r^2 ((d\vth^1)^2+\sin^2 (\vth^1) (d\vth^2)^2).
\end{split}
\eea
We wish to construct solutions such that
\beaa
    \Bb:=(\divh Y)_{\ell\geq 2},\quad \Bbd:=(\curlh Y)_{\ell\geq 2},\quad \Kk:=(\laph\Pi)_{\ell\geq 2},\quad \Kkd:= (r^{-4}\pa_r(r^4\curlh \Xi))_{\ell\geq 2}.
\eeaa
For a general metric $\ga$, we use the notation $\overline{f}^\ga$ for the spherical mean with respect to $\ga$. We drop the $\ga$ when there is no danger of confusion.
We use the notation $\widecheck{\psi}$ for the quantity $\psi$ subtracted by its value in Schwarzschild.

\section{Technical lemmas}
\subsection{Equivalent norms}
We have the following equivalence of the Sobolev norms defined through $\gz$ and $\ga$.
\begin{lemma}\lab{lemma:equivalence-norms}
Consider a $2$-sphere $S_r$ equipped with standard spherical coordinates and the associated rescaled round metric $\gz=r^2(\gs)$. 
Suppose that another metric $\ga$ on $S$ satisfies, for some integer $s\geq 3$,
\bea\lab{eq:bound-metric-equivalence-lemma}
r^{-1} ||\ga-\gz||_{\H^{s+1}(S_r)} \leq \mathring \eps\ll 1.
\eea
Then, denoting by $\nabh$ the covariant derivative of $\ga$, we have,
\begin{itemize}
\item For all $i\leq s+2$ and scalar field $\phi$, we have
\beaa
 ||(r\nabh)^{\leq i} \phi||_{L^2(S_r,\ga)} \sim ||\phi||_{\H^{i}(S_r)};
\eeaa
\item For all $i \leq s+1$ and rank-$k$ covariant tensor $U=U_{a_1\cdots a_k}$,
\beaa
 ||(r\nabh)^{\leq i} \phi||_{L^2(S_r,\ga)} \sim ||\phi||_{\H^{i}(S_r)}.
\eeaa
\end{itemize}
In both cases, the two-sided implicit constant can be taken to be $(1+C\mathring\eps)$ for some constant $C>0$.
\end{lemma}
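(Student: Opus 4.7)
The plan is to reduce the equivalence to the $\gz$-norms by unwinding the covariant derivative $\nabh^\ga$ into $\nabz$ plus Christoffel corrections, and then close the estimate by an induction on $i$ using standard Sobolev product inequalities on the fixed reference sphere $(S_r,\gz)$.

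First I would establish the base case. Since the hypothesis \eqref{eq:bound-metric-equivalence-lemma} together with Sobolev embedding $\H^2(S_r)\hookrightarrow L^\infty(S_r)$ on a two-dimensional sphere gives $\|\ga-\gz\|_{L^\infty(S_r,\gz)}\lesssim r^{-1}\|\ga-\gz\|_{\H^2(S_r)}\lesssim \mathring\eps$, both the pointwise tensor norms $|U|_\ga$ vs.\ $|U|_\gz$ and the area forms $d\vol_\ga$ vs.\ $d\vol_\gz$ differ by a factor $1+O(\mathring\eps)$; hence the $L^2$ case of both statements is immediate with constant $(1+C\mathring\eps)$.

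Next I would run the induction on the number of derivatives, based on the schematic identity
\[
\nabh^\ga U \;=\; \nabz U + \Ga(\ga;\gz)\ast U,\qquad \Ga(\ga;\gz)=O(\nabz(\ga-\gz)),
\]
from Remark \ref{rem:inverse-metric}. Iterating this identity produces
\[
(\nabh^\ga)^i U \;=\; \nabz^i U + \sum c\,(\nabz^{j_1}\Ga)\ast\cdots\ast(\nabz^{j_p}\Ga)\ast \nabz^l U,
\]
where each summand has $p\geq 1$ Christoffel factors with $j_1+\cdots+j_p+l\leq i-1$ and $l\leq i-1$. For each of these product terms I would apply the Sobolev product estimate on $(S_r,\gz)$,
\[
\|fg\|_{L^2(S_r)}\;\lesssim\; r^{-1}\|f\|_{\H^{2}(S_r)}\|g\|_{L^2(S_r)},\qquad \|fg\|_{\H^{i}(S_r)}\;\lesssim\; r^{-1}\bigl(\|f\|_{L^\infty}\|g\|_{\H^i}+\|f\|_{\H^i}\|g\|_{L^\infty}\bigr),
\]
(with the usual Moser-type extension to several factors), and absorb every Christoffel contribution into the constant using the smallness $r^{-1}\|\ga-\gz\|_{\H^{s+1}}\leq\mathring\eps$. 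The differential count explains the one-derivative gap between scalars and tensors: for a scalar, $\nabh^\ga\phi=\nabz\phi$ already (no Christoffel in the first derivative), so the $i$-th derivative involves at most $i-2$ derivatives of $\Ga$, i.e.\ $i-1$ derivatives of $\ga-\gz$, accommodating $i\leq s+2$; for a rank-$k$ tensor, the first application already produces a Christoffel term, so one loses one level and recovers only $i\leq s+1$. The reverse inequality (the $\gtrsim$ direction) follows by rewriting $\nabz U=\nabh^\ga U-\Ga\ast U$ and bootstrapping along the same induction.

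The routine part is the combinatorial book-keeping of multi-linear Christoffel products; the one place that requires care is the weight tracking. I would write each $\nabz$ as $r^{-1}(r\nabz)$ so that all $\H^i(S_r)$ norms carry their intrinsic $r$-weights, and then verify that in every product the smallness factor takes the form $r^{-1}\|\ga-\gz\|_{\H^{s+1}}\leq\mathring\eps$ with no leftover powers of $r$. This is the main technical obstacle, but it is purely bookkeeping once the schematic recursion and the base case are in place, and it closes the desired equivalence with constant $(1+C\mathring\eps)$.
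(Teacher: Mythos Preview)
Your proposal is correct and follows essentially the same approach as the paper: expand $\nabh^i$ in terms of $\nabz^i$ plus Christoffel corrections $\Ga(\ga;\gz)=O(\nabz(\ga-\gz))$, then close via $L^2$--$L^\infty$ product estimates using the Sobolev control $r^{-1}\|\ga-\gz\|_{\H^{s+1}}\leq\mathring\eps$. The paper records the recursion in the slightly more compact schematic form \eqref{eq:inductive-nabh-i-phi}--\eqref{eq:inductive-nabh-i-U} (keeping only a single $(\ga-\gz)$-factor, since multi-Christoffel terms are even smaller) and isolates the key counting observation that for $s\geq3$ and $i\leq s+2$ one of the two indices in each product is $\leq s-1$ and hence controllable in $L^\infty$; your explanation of the scalar/tensor one-derivative gap is the same as theirs.
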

\begin{proof}
By standard Sobolev embeddings, we have $||(r\nabz)^i U||_{L^\infty(S_r)}\lesssim r^{-1} || U ||_{\H^s(S_r)}$ for $i\leq s-1$. In particular, by \eqref{eq:bound-metric-equivalence-lemma}, we infer $||(r\nabz)^i (\ga-\gz)||_{L^\infty(S_r)}\lesssim \mathring\eps$ for $i\leq s-1$.

Recall that the covariant derivative of $\ga$ (resp. $\gz$) is $\nabh$ (resp. $\nabz$). In view of Remark \ref{rem:inverse-metric},
for a scalar field $\phi$, we have, schematically,
$\nabh \phi=\nabz \phi$, $\nabh^2 \phi = (\nabz)^2 \phi+\nabz(\ga-\gz) \cdot \nabz \phi$, and, inductively,
\bea\lab{eq:inductive-nabh-i-phi}
\nabh^i \phi = (\nabz)^i \phi+\sum_{\substack{i_1+i_2= i,\\i_1\leq i-1, i_2\leq i-1}} (\nabz)^{i_1}(\ga-\gz) \cdot (\nabz)^{i_2} \phi.
\eea
For a general horizontal covariant tensor $U$, we have, schematically,
$\nabh U=\nabz U+\nabz (\ga-\gz)\cdot U$, $\nabh^2 \phi = (\nabz)^2 \phi+\nabz (\ga-\gz) \cdot \nabz U+(\nabz)^2 (\ga-\gz) \cdot U$, and, inductively,
\bea\lab{eq:inductive-nabh-i-U}
\nabh^i U= (\nabz)^i U+\sum_{i_1+i_2= i, i_2\leq i-1} (\nabz)^{i_1}(\ga-\gz) \cdot (\nabz)^{i_2} U.
\eea
Using \eqref{eq:inductive-nabh-i-phi},
\beaa
||\nabh^i \phi-(\nabz)^i \phi ||_{L^2(S_r)}\lesssim \sum_{\substack{i_1+i_2= i,\\i_1\leq i-1, i_2\leq i-1}} ||(\nabz)^{i_1}(\ga-\gz) \cdot (\nabz)^{i_2} \phi||_{L^2(S_r)}.
\eeaa
Since $s+2\geq 5$, for $i\leq s+2$, either $i_1$ or $i_2$ in the sum is no greater than $s-1$, for which we can apply the $L^\infty$ estimate, leaving the other controlled by the $L^2$-type norms. The estimate then easily follows. The case for covariant tensor follows similarly using \eqref{eq:inductive-nabh-i-U}.
\end{proof}

\subsection{Hodge estimates}
\begin{lemma}\lab{lemma:Hodge-estimate-round}
Consider a $2$-sphere $S_r$ equipped with standard spherical coordinates and the associated rescaled round metric $\gz=r^2(\gs)$, and another metric $\ga$ on $S$ which satisfies $r^{-1} ||\ga-\gz||_{\H^{s+1}(S_r)}\leq \mathring\eps \ll 1$ for some $s\geq 3$. Suppose for $\xi\in \ss_1$ and $h\in \ss_2(S,\ga)$ we have
\beaa
\d_1 \xi=(f,\dual f),\quad \d_2 h=F.
\eeaa
Then the following estimates hold for all $i\leq s$:
\bea\lab{eq:Hodge-estimate-round}
||\xi||_{\H^{i+1}(S_r)}\lesssim r ||(f,\dual f)||_{\H^i(S_r)},\quad ||h||_{\H^{i+1}(S_r)} \lesssim r ||F||_{\H^i(S_r)}.
\eea
\end{lemma}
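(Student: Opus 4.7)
The plan is to convert each first-order Hodge equation into a coercive second-order elliptic equation via the identities \eqref{eq:dcalident}, and then close the estimate by integration by parts together with a commutator induction in the order of differentiation. The smallness hypothesis $r^{-1}||\ga-\gz||_{\H^{s+1}(S_r)} \le \mathring\eps \ll 1$ will be used in two complementary ways: first, it forces the Gauss curvature of $\ga$ to satisfy $K = r^{-2}(1 + O(\mathring\eps))$, and in particular $K \ge c r^{-2}$ for some $c > 0$ uniformly on $S_r$; second, via Lemma \ref{lemma:equivalence-norms}, it allows us to interchange freely between the $\ga$-adapted and $\gz$-adapted Sobolev norms, so that all intermediate estimates can be carried out with respect to $\ga$ without loss.

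\textbf{Base case} $(i=0)$. Pairing $\d_1 \xi = (f, \dual f)$ with itself in $L^2(\ga)$ and invoking the formal adjoint relation \eqref{eq:formal-adjoint} together with $\d_1^*\d_1 = -\laph_1 + K$, integration by parts yields
\[
\int_S \bigl( |\nabh \xi|^2 + K |\xi|^2 \bigr)\, d\vol_\ga \;=\; ||(f, \dual f)||_{L^2(\ga)}^2.
\]
Since $K \ge c r^{-2}$, this gives $r^{-1}||\xi||_{L^2(\ga)} + ||\nabh \xi||_{L^2(\ga)} \lesssim ||(f,\dual f)||_{L^2(\ga)}$, which, by Lemma \ref{lemma:equivalence-norms}, is exactly $||\xi||_{\H^1(S_r)} \lesssim r\, ||(f, \dual f)||_{\H^0(S_r)}$. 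The analogous argument with $\d_2^*\d_2 = -\tfrac{1}{2}\laph_2 + K$ produces the corresponding base bound for $h$; note that on the round sphere (and hence on a small perturbation of it) the operator $-\tfrac{1}{2}\laph_2 + K$ is strictly positive on $\ss_2$, so no kernel issues arise.

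\textbf{Induction on $i$.} For $i \ge 1$, I would commute the Hodge equation with $\nabh$,
\[
\d_1(\nabh \xi) \;=\; \nabh(f, \dual f) + [\d_1, \nabh]\xi,
\]
and iterate this procedure $i$ times. Each commutator is zeroth order in $\xi$; its coefficients are polynomials in the horizontal curvature (which, in dimension two, is completely captured by $K$) and its $\nabh$-derivatives. By the hypothesis $r^{-1}||\ga - \gz||_{\H^{s+1}(S_r)} \le \mathring\eps$ and Remark \ref{rem:inverse-metric}, one has $|(r\nabh)^j K| \lesssim r^{-2}$ for $j \le s - 1$ in $L^\infty$, and an $L^2$ bound of the same size at the top order $j = s$, which supplies exactly the $r^{-2}$ factor needed to align the commutator output with the $r$-weights of the $\H^\cdot(S_r)$-norms. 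Applying the base elliptic inequality to each commuted equation and absorbing the commutator contribution through the inductive hypothesis gives
\[
||\xi||_{\H^{i+1}(S_r)} \;\lesssim\; r\,||(f, \dual f)||_{\H^i(S_r)} + r^{-1} ||\xi||_{\H^i(S_r)} \;\lesssim\; r\, ||(f, \dual f)||_{\H^i(S_r)},
\]
up to $i = s$, which is precisely where the regularity budget on the metric is exhausted. The estimate for $\d_2$ is entirely analogous.

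\textbf{Main obstacle.} The only nontrivial point is the commutator bookkeeping at higher order: one must split the commutator terms according to the distribution of derivatives between the curvature factor and $\xi$, estimating $\nabh^j K$ in $L^\infty$ whenever $j \le s - 1$ (which requires $s \ge 3$ for Sobolev embedding) and only at top order in $L^2$, where it pairs against an $L^\infty$-controlled factor of $\xi$. Once this dichotomy is set up, each commutator contributes a factor of $r^{-2}$ that precisely matches the $r^2$ gap between the weighted $\H^{i+1}(S_r)$ and $\H^{i-1}(S_r)$ norms, and the induction closes. All remaining steps are routine.
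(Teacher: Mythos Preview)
Your proposal is correct and follows essentially the same route as the paper's proof: commute the Hodge system with $\nabh^i$, control the resulting curvature commutators via the bound $r^2 K_\ga = 1 + O(\mathring\eps)$ inherited from the metric hypothesis, apply the base elliptic estimate, and close by induction, using Lemma~\ref{lemma:equivalence-norms} to pass between $\ga$- and $\gz$-adapted norms. The only cosmetic difference is that you spell out the base case via the identity $\d_1^*\d_1 = -\laph_1 + K$ and integration by parts, whereas the paper simply cites this as the standard Hodge estimate from \cite[Lemma~2.2.2]{CK}; the remaining commutation-and-induction argument is the same in both.
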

\begin{proof}
We only prove the first inequality, as the second is similar. Commuting the equations with $\nabh^i$, we have, schematically,
\beaa
\d_1 \nabh^i \xi=\nabh^i (f,\dual f)+\nabh^{i-1} (K_\ga \cdot \xi) . 
\eeaa
Here we adopt the convention that $\nabh^{-1}\psi=0$ for any quantity $\psi$. In view of the assumption for $\ga$ and standard Sobolev embeddings, we have $r^{-1} ||(K_\ga-r^{-2}) ||_{L^{\infty}(S_r)}\lesssim \mathring\eps$, i.e., $r^{2} K$ is uniformly close to $1$.
The standard Hodge estimate (\cite[Lemma 2.2.2]{CK}) is then applicable and implies
\beaa
||\nabh^{i+1} \xi||_{L^2(S,\ga)}+ r^{-1} || \nabh^i \xi||_{L^2(S,\ga)} \lesssim || \nabh^i (f,\dual f)+\nabh^{i-1} (K_\ga \cdot \xi)||_{L^2(S,\ga)}.
\eeaa
Therefore, applying Lemma \ref{lemma:equivalence-norms} to $\xi$, $(f,\dual f)$ and the schematic $1$-form $K_\ga\cdot \xi$, we obtain, for $i\leq s$,
\beaa
||\xi||_{\H^{i+1}(S_r)} &\lesssim & ||(r\nabh)^{\leq i+1}\xi||_{L^2(S,\ga)} \lesssim \sum_{j\leq i} ||(r\nabh)^{j+1}\xi||_{L^2(S,\ga)} +||\xi||_{L^2(S,\ga)} \\
& \lesssim & \sum_{j\leq i} ||r (r\nabh)^j (f,\dual f) +r^2(r\nabh)^{j-1} (K_\ga \cdot \xi) ||_{L^2(S,\ga)} +r ||(f,\dual f)||_{L^2(S,\ga)}\\
&\lesssim &  r ||(f,\dual f)||_{\H^i(S_r)}+ r^2 ||(K_\ga \cdot \xi) ||_{\H^{i-1}(S_r)}+r ||(f,\dual f)||_{L^2(S_r)}\\
&\lesssim & r ||(f,\dual f)||_{\H^i(S_r)} +||\xi||_{\H^{i-1}(S_r)} ,
\eeaa
where for the nonlinear term $K_\ga\cdot \xi$, we used the standard $L^2$-$L^\infty$ type estimates, with $L^\infty$ applied to the factor with less derivatives.
The estimate for $\xi$ then follows by induction. The estimate for $h$ follows in a similar way.
\end{proof}

We also have the following estimate regarding the $\ell\leq 1$ part of $\dga_1 \dga_2 h$, which is heuristically mostly supported on $\ell\geq 2$.
\begin{lemma}
Suppose $h$ is a symmetric $2$-tensor on $(S,\ga)$. Then,\footnote{Here we do not assume that $h$ is traceless with respect to $\ga$, but we extend the definition of $\d_2^\ga$ trivially, to all symmetric $2$-tensors, by $\d_2^\ga h:=\divh^\ga h$.}
\begin{equation}\lab{eq:estimate-d1-d2-ell-leq-1}
r^{-1} || (\dga_1 \dga_2 h)_{\ell\leq 1} ||_{\H^s(S_r)} 
\lesssim ||\lapz \trz h||_{L^{\infty}(S_r)}+ r^{-2} ||(r\nabz)^{\leq 2} h||_{L^{\infty}(S_r)} ||(r\nabz)^{\leq 2}(\ga-\gz) ||_{L^{\infty}(S_r)}.
\end{equation}
\end{lemma}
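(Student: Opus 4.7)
The approach is to reduce to the round metric $\gz$, then decompose $h$ into its $\gz$-traceless part and trace part, applying Lemma \ref{Le:Si*-ell=1modes} to the former.

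First, I would estimate the error of replacing $\dga_1\dga_2$ by $\dgz_1\dgz_2$. Using Remark \ref{rem:inverse-metric}, the differences $\ga^{-1}-\gz^{-1}$ and $\Ga(\ga;\gz)$ are of sizes $O(\ga-\gz)$ and $O(\nabz(\ga-\gz))$ respectively. Expanding the two successive divergences produces schematically
\beaa
\bigl|\dga_1\dga_2 h-\dgz_1\dgz_2 h\bigr|\lesssim r^{-2}\,\|(r\nabz)^{\le 2}(\ga-\gz)\|_{L^\infty(S_r)}\,\bigl|(r\nabz)^{\le 2}h\bigr|.
\eeaa
Since $\|\psi_{\ell\le 1}\|_{\H^s(S_r)}\lesssim r\|\psi\|_{L^\infty(S_r)}$ for any scalar $\psi$ by \eqref{eq:Hs-Linfty-ell-leq-1}, the contribution of this error to $r^{-1}\|(\dga_1\dga_2 h)_{\ell\le 1}\|_{\H^s(S_r)}$ is bounded by the second term on the right-hand side of \eqref{eq:estimate-d1-d2-ell-leq-1}.

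It therefore remains to bound $r^{-1}\|(\dgz_1\dgz_2 h)_{\ell\le 1}\|_{\H^s(S_r)}$. To this end I would decompose $h=\tilde h+\tfrac12(\trz h)\,\gz$, where $\tilde h$ is $\gz$-traceless. Since $\nabz\gz=0$, one has $\dgz_2\bigl(\tfrac12(\trz h)\gz\bigr)=\tfrac12\nabz(\trz h)=-\tfrac12(\dgz_1)^*(\trz h,0)$, so by the identity $\dgz_1(\dgz_1)^*=-\lapz$ from \eqref{eq:dcalident},
\beaa
\dgz_1\dgz_2\bigl(\tfrac12(\trz h)\gz\bigr)=\bigl(\tfrac12\lapz\trz h,\;0\bigr).
\eeaa
On the other hand, $\tilde h\in\ss_2(S,\gz)$, so the formal adjointness \eqref{eq:formal-adjoint} relative to $\gz$ combined with Lemma \ref{Le:Si*-ell=1modes} (and the triviality of $(\dgz_1)^*$ on constants) yields
\beaa
\langle\dgz_1\dgz_2\tilde h,(J_p,0)\rangle_{\gz}=\langle\tilde h,(\dgz_2)^*(\dgz_1)^*(J_p,0)\rangle_{\gz}=0,
\eeaa
and analogously for the pairing with $(0,J_p)$, for every $\ell\le 1$ harmonic $J_p$. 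Consequently $(\dgz_1\dgz_2\tilde h)_{\ell\le 1}=0$.

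Combining the two pieces gives $(\dgz_1\dgz_2 h)_{\ell\le 1}=\bigl(\tfrac12(\lapz\trz h)_{\ell\le 1},\,0\bigr)$, and a second application of \eqref{eq:Hs-Linfty-ell-leq-1} bounds this in the $\H^s(S_r)$ norm by $r\|\lapz\trz h\|_{L^\infty(S_r)}$. After dividing by $r$ and adding the error estimate of the first paragraph, this is precisely \eqref{eq:estimate-d1-d2-ell-leq-1}. The main observation—and the one step that requires any insight—is that the $\gz$-traceless part of $h$ is killed at $\ell\le 1$ by Lemma \ref{Le:Si*-ell=1modes}, while the trace part is forced to generate the $\lapz\trz h$ term on the right-hand side via the identities $\nabz\gz=0$ and $\dgz_1(\dgz_1)^*=-\lapz$; everything else is routine bookkeeping.
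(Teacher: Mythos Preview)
Your proof is correct and follows essentially the same approach as the paper: compare $\dga_1\dga_2$ to $\dgz_1\dgz_2$ via the schematic expansion, split $h$ into its $\gz$-traceless part $\widehat h$ and the trace part $\tfrac12(\trz h)\gz$, kill $(\dgz_1\dgz_2\widehat h)_{\ell\le 1}$ using Lemma~\ref{Le:Si*-ell=1modes}, and compute the trace part explicitly. Your exposition is slightly more detailed (you spell out the adjointness argument and the identity $\dgz_1(\dgz_1)^*=-\lapz$ where the paper simply asserts the results), but the logic is identical.
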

\begin{proof}
Recall from \eqref{eq:Hs-Linfty-ell-leq-1} that $r^{-1} || (\dga_1 \dga_2 h)_{\ell\leq 1} ||_{\H^s(S_r)}\lesssim || (\dga_1 \dga_2 h)_{\ell\leq 1} ||_{L^\infty(S_r)}$.
We have, schematically,
\beaa
\dga_1\dga_2 h=\dgz_1 \dgz_2 h + (\ga-\gz)\cdot (\nabz)^2 h+\nabz (\ga-\gz)\cdot \nabz h+(\nabz)^2 (\ga-\gz)\cdot h .
\eeaa
Note that $\widehat h:=h-\frac 12 (\trz h) \gz$ is traceless, and hence we have $(\dgz_1\dgz_2 \widehat h)_{\ell\leq 1}=0$ by \eqref{eq:basicestimatesforJp-onSi_*}. Therefore,
\beaa
(\dgz_1 \dgz_2 h)_{\ell\leq 1} = \left(\dgz_1 \dgz_2 \Big(\frac 12 (\trz h) \gz\Big)\right)_{\ell\leq 1} =\Big(\frac 12 \lapz \trz h,0\Big)_{\ell\leq 1}.
\eeaa
The estimate \eqref{eq:estimate-d1-d2-ell-leq-1} then follows by combining these relations.
\end{proof}

\subsection{Commutation formulas} 
We first give the commutation formula between $\nabh_N$ and $\nabh_a$ on horizontal covariant tensors.
\begin{lemma}
We have 
\bea\lab{eq:commutation-nab-N-nab-a}
[\nabh_N, \nabh_a] U_{b_1\cdots b_k} = -\th_{ac} \nabh_c U_{b_1\cdots b_k} -\pp_a \nabh_N U_{b_1\cdots b_k} +(p_{b_i} \th_{ac} +p_c \th_{ab_i}+\in_{b_i c} \dual Y_a) U_{b_1\cdots c\cdots b_k}.
\eea
\end{lemma}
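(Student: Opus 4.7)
The plan is to derive the formula from the ambient Ricci identity and carefully track horizontal/normal decompositions at every step. First I would record the frame-level identities $\nab_N e_b=\nabh_N e_b-p_b N$, $\nab_{e_a}e_b=\nabh_{e_a}e_b-\theta_{ab}N$, $\nab_N N=p_c e_c$, $\nab_{e_a}N=\theta_{ac}e_c$, where $\nabh_X Y$ denotes the horizontal projection of $\nab_X Y$. The crucial preliminary observation is that although $U$ is horizontal, $\nab_X U$ is not: a direct Leibniz-rule computation using horizontality of $U$ yields the slot-insertion corrections
\begin{equation*}
(\nab_{e_a}U)(e_{b_1},\ldots,N,\ldots,e_{b_k})=-\theta_{ac}\,U_{b_1\cdots c\cdots b_k},\qquad (\nab_N U)(e_{b_1},\ldots,N,\ldots,e_{b_k})=-p_c\,U_{b_1\cdots c\cdots b_k},
\end{equation*}
with $N$ and $c$ in the $i$-th slot. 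These corrections are the source of the cross-terms involving $p_{b_i}\theta_{ac}$ and $p_c\theta_{ab_i}$ in the target formula.

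Next I would apply the ambient Ricci identity
\begin{equation*}
(\nab_N\nab_{e_a}U-\nab_{e_a}\nab_N U-\nab_{[N,e_a]}U)(e_{b_1},\ldots,e_{b_k})=-\sum_i U(e_{b_1},\ldots,R(N,e_a)e_{b_i},\ldots,e_{b_k}),
\end{equation*}
and expand each piece in the horizontal frame. Expanding $\nab_N(\nab_{e_a}U)$ via Leibniz produces Christoffel-type terms $\Gamma_N{}_{b_i}{}^c\,\nabh_a U_{b_1\cdots c\cdots b_k}$ together with cross-terms $p_{b_i}\theta_{ac}\,U_{b_1\cdots c\cdots b_k}$ arising from the slot-insertion identity above; similarly $\nab_{e_a}(\nab_N U)$ produces Christoffel-type terms $\Gamma_a{}_{b_i}{}^c\,\nabh_N U_{b_1\cdots c\cdots b_k}$ together with cross-terms of the form $\theta_{ab_i}p_c\,U_{b_1\cdots c\cdots b_k}$. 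For the Lie bracket, the decomposition $[N,e_a]=(\Gamma_N{}_a{}^c-\theta_{ac})e_c-p_a N$ (with $\Gamma_N{}_a{}^c:=g(\nab_N e_a,e_c)$), together with horizontality of $U$ killing the $N$-part of $\nab_{[N,e_a]}e_{b_i}$, collapses $\nab_{[N,e_a]}U$ to $(\Gamma_N{}_a{}^c-\theta_{ac})\nabh_c U-p_a\nabh_N U$; this contributes the $-\theta_{ac}\nabh_c U$ and $-p_a\nabh_N U$ pieces. For the curvature term, the identity \eqref{eq:R-Nabc} gives $R_{Nab_ic}=\in_{b_ic}\dual Y_a$, while horizontality of $U$ kills the $N$-component of $R(N,e_a)e_{b_i}$, so the curvature contribution is $\sum_i\in_{b_ic}\dual Y_a\,U_{b_1\cdots c\cdots b_k}$.

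The final step is to reassemble the result into the tensorial commutator on the left-hand side by writing
\begin{equation*}
\nabh_N(\nabh_a U)_{b_1\cdots b_k}=N(\nabh_a U_{b_1\cdots b_k})-\Gamma_N{}_a{}^c\,\nabh_c U_{b_1\cdots b_k}-\sum_i\Gamma_N{}_{b_i}{}^c\,\nabh_a U_{b_1\cdots c\cdots b_k},
\end{equation*}
and the analogous identity for $\nabh_a(\nabh_N U)_{b_1\cdots b_k}$, then substituting the Ricci-identity expansion for $N(\nabh_a U_{b_1\cdots b_k})-e_a(\nabh_N U_{b_1\cdots b_k})$. Remarkably, every frame-dependent connection coefficient $\Gamma$ cancels pairwise, as befits the tensoriality of the final expression, leaving precisely the claimed formula. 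The main obstacle is the bookkeeping of these cancellations amid many frame-dependent quantities. As a sanity check I would first verify the scalar case $k=0$, where the formula reduces to $[\nabh_N,\nabh_a]f=-\theta_{ac}\nabh_c f-p_a\nabh_N f$ and follows directly from the Lie-bracket decomposition, and then the $1$-form case $k=1$; the general $k$ then follows by the identical Leibniz mechanism applied to each of the $k$ tensor slots, producing the $\sum_i$ in the statement.
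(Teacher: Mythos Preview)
Your approach is correct and closely parallels the paper's, but is organized around the vectorfield form of the Ricci identity rather than the tensorial one. The paper proves only the case $k=1$ (asserting higher $k$ is similar) and works with the \emph{tensorial} second derivative $\nab_N\nab_a\xi_b:=(\nab^2\xi)_{Nab}$, so that the ambient Ricci identity reads simply $\nab_N\nab_a\xi_b-\nab_a\nab_N\xi_b=R_{Nabc}\xi_c$ with no Lie-bracket term. It then applies the covariant rules \eqref{eq:covariant-rules} directly to the ambient $(0,2)$-tensor $\nab\xi$ to pass from $\nab_N(\nab\xi)_{ab}$ to $\nabh_N\nabh_a\xi_b$ (and likewise from $\nab_a(\nab\xi)_{Nb}$ to $\nabh_a\nabh_N\xi_b$) in one step each; the formula follows immediately upon subtraction. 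Your route through the vectorfield identity with $\nab_{[N,e_a]}U$ is equivalent but introduces the frame-dependent coefficients $\Gamma_N{}_a{}^c$, $\Gamma_N{}_{b_i}{}^c$, $\Gamma_a{}_{b_i}{}^c$, which then must be shown to cancel pairwise---precisely the bookkeeping you identify as the main obstacle. The paper's tensorial packaging absorbs all of these $\Gamma$'s from the outset, so the calculation is shorter and tensoriality is never in doubt; you may find it worthwhile to recast your argument in that form.
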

\begin{proof}
We only prove the case $k=1$ for simplicity, and the higher rank cases are similar.
For a horizontal $1$-form $\xi$, we have $\nab_a\xi_b 
=\nabh_a \xi_b$, $\nab_N \xi_b = \nabh_N \xi_b$. Therefore, we have, see \eqref{eq:covariant-rules} for the calculation rules,
\beaa
\nab_N \nab_a \xi_b &=&  \nab_N (\nab \xi)_{ab} = \nabh_N (\nab \xi)_{ab} +\pp_a (\nab\xi )_{N b}+\pp_b (\nab\xi)_{a N} =\nabh_N \nabh_a \xi_b+\pp_a (\nab_N \xi_b) +\pp_b (\nab_a \xi_N) \\
&=& \nabh_N \nabh_a \xi_b+\pp_a \nabh_N \xi_b -\pp_b \th_{ac} \xi_c,
\eeaa
and
\beaa
\nab_a\nab_N \xi_b &=& \nab_a (\nab \xi)_{Nb} =\nabh_a (\nab \xi)_{Nb} -\th_{ac} (\nab\xi)_{cb} +\th_{ab} (\nab\xi)_{NN} \\
&=& \nabh_a \nabh_N \xi_b -\th_{ac} \nabh_c \xi_b +\th_{ab} (\pp_c \xi_c).
\eeaa
On the other hand, we have, using \eqref{eq:R-Nabc},
\beaa
\nab_N \nab_a \xi_b - \nab_a\nab_N \xi_b =R_{Nabc} \xi_c=\, \in_{bc} \xi_c \dual Y_a.
\eeaa
Therefore,
\beaa
\nabh_N\nabh_a \xi_b -\nabh_a\nabh_N \xi_b &=& \nab_a\nab_N \xi_b -p_a \nabh_N \xi_b+p_b \th_{ac}\xi_c -(\nab_a\nab_N \xi_b +\th_{ac}\nabh_c \xi_b -\th_{ab} p_c \xi_c) \\
&=&  \in_{bc} \xi_c \dual Y_a -p_a \nabh_N \xi_b+p_b \th_{ac}\xi_c -\th_{ac}\nabh_c \xi_b +\th_{ab} p_c \xi_c,
\eeaa
as required.
\end{proof}
Note that in most situations, we will commute with $\nabz$ rather than $\nabh$; 
When the metric $g=g\0=\Up^{-1} dr^2+\ga\0$, \eqref{eq:commutation-nab-N-nab-a} simplifies to 
$[\nabz_{\pa_r}, \nabz] U = -r^{-1} \nabz U$, or equivalently,
\bea\lab{eq:commutation-nabz-r-nabz}
[\nabz_{\pa_r}, r\nabz] U = 0.
\eea

{\bf Lie derivatives.}
Recall the usual definition for Lie derivatives on $k$-covariant tensor on $\Si$
\beaa
\Lie_X T_{i_1\cdots i_k}=\nab_X T+\nab_{i_1} X^{j} T_{j\cdots i_k}+\cdots +\nab_{i_k} X^j T_{i_1\cdots j}.
\eeaa
Such a definition, as is well-known, is in fact independent of the metric. When $T=U_{a_1\cdots a_k}$ is a horizontal tensor, the Lie derivative $\Lie_X U$
is not necessarily a horizontal tensor. Following  \cite{Chr1}, see also  \cite{GKS}\footnote{\cite{GKS} extends the definition of  \cite{Chr1} to non-integrable  structures.},      we can instead define the projected Lie derivative
\bea\lab{eq:def-projected-Lie-derivatives}
\Ls_X U_{a_1\cdots a_k}:=\nabh_X U_{a_1\cdots a_k}+\nab_{a_1} X^{b} U_{b\cdots a_k}+\cdots +\nab_{a_k} X^b U_{a_1\cdots b}.
\eea
In particular, we have, for $X=f\pa_r$,
\bea\lab{eq:scalar-multiple-projected-Lie-derivatives}
\bsplit
\Ls_{f\pa_r} U_{a_1\cdots a_k} &= \nabh_{f\pa_r} U_{a_1\cdots a_k}+\nab_{a_1} (f\pa_r)^{b} U_{b\cdots a_k}+\cdots +\nab_{a_k} (f\pa_r)^b U_{a_1\cdots b} \\
&= f\nabh_{\pa_r} U_{a_1\cdots a_k}+f\nab_{a_1} (\pa_r)^{b} U_{b\cdots a_k}+\cdots +f\nab_{a_k} (\pa_r)^b U_{a_1\cdots b}\\
&= f\Ls_{\pa_r} U_{a_1\cdots a_k}.
\end{split}
\eea
This is independent of the metric as long as $\pa_r$ is orthogonal to the $r$-spheres. Therefore, we compute using the metric $g\0=\Up^{-1} dr^2+\gz$,  
for which we have $(\nabz_A \pa_r)_B=r^{-1}  \gz_{AB}$. Therefore, using the definition \eqref{eq:def-projected-Lie-derivatives}, we have 
\bea\lab{eq:projected-Lie-equals-covariant}
\Ls_{\pa_r} U_{a_1\cdots a_k} =\nabz_{\pa_r} U_{a_1\cdots a_k}+k r^{-1} U_{a_1\cdots a_k}.
\eea

\subsection{Transport lemma}
\begin{lemma}\lab{lem:transport-lemma}
Suppose a scalar or horizontal covariant tensor $\psi$ satisfies, for some nonnegative integer $i$,
\beaa
\nabz_{\pa_r} \psi+\la r^{-1} \psi=F, \quad \text{and } r^{-1} \|r^\la \psi \|_{\H^i(S_r)}\to 0, \text{ as $r\to\infty$}.
\eeaa
Then we have
\beaa
r^{-1} \|r^\la \psi \|_{\H^i (S_r)} \lesssim \int_r^\infty r'^{-1} \|r'^\la F \|_{\H^i(S_{r'})}\, dr'.
\eeaa
\end{lemma}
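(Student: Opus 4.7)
The plan is to reduce the estimate to a pure $L^2$-energy identity on a single horizontal sphere, after two reductions: absorbing the weight $r^\lambda$, and absorbing the $(r\nabz)^j$ derivatives using the commutation identity \eqref{eq:commutation-nabz-r-nabz}. First, I set $\tilde\psi := r^\lambda\psi$. Since $r$ depends only on the radial coordinate so that $\nabz_a r=0$ for horizontal indices, a direct computation gives
\[
\nabz_{\pa_r}\tilde\psi \;=\; r^\lambda\bigl(\nabz_{\pa_r}\psi+\lambda r^{-1}\psi\bigr) \;=\; r^\lambda F.
\]

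Next, I iterate the commutation \eqref{eq:commutation-nabz-r-nabz} to conclude that $w_j := (r\nabz)^j\tilde\psi$ satisfies $\nabz_{\pa_r}w_j = (r\nabz)^j(r^\lambda F) = r^\lambda (r\nabz)^j F$ (again using $\nabz_a r=0$ so that the scalar $r^\lambda$ passes through each $r\nabz$). This is the key linearising step: the tensorial higher-derivative equation has exactly the same structure as the scalar one, with the inhomogeneity replaced by $(r\nabz)^j F$ and no extra error terms, thanks precisely to the fact that $\nabz_{\pa_r}$ and $r\nabz$ exactly commute for the round metric.

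Third, I convert this into a differential inequality on the unit sphere. Since $dA_{\gz}=r^2\,d\sigma_{\gs}$, the normalisation built into the lemma's conclusion gives $r^{-1}\|w_j\|_{L^2(S_r)} = \|w_j\|_{L^2(\mathbb{S}^2,d\sigma_{\gs})}$, so the $r$-dependent area factor drops out and I can differentiate under the $\mathbb{S}^2$-integral cleanly. Cauchy--Schwarz then yields
\[
\Bigl|\tfrac{d}{dr}\bigl(r^{-1}\|w_j\|_{L^2(S_r)}\bigr)\Bigr| \;\le\; \|\nabz_{\pa_r}w_j\|_{L^2(\mathbb{S}^2)} \;=\; r^{-1}\,\|r^\lambda(r\nabz)^j F\|_{L^2(S_r)}.
\]

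Finally, I integrate from $r$ to $\infty$. The decay hypothesis $r^{-1}\|r^\lambda\psi\|_{\H^i(S_r)}\to 0$ controls each $r^{-1}\|w_j\|_{L^2(S_r)}$ at infinity, providing the boundary term needed to close the integration. Summing over $j\le i$, using the triangle inequality (or the integral Minkowski inequality \eqref{eq:integral-Minkowski-inequality} if one prefers to keep the $L^2$-in-$j$ structure), gives the claimed bound. The only point that requires any care is the bookkeeping of the $r^2$ in the area element of $\gz$ versus the $r^{-1}$ prefactor in the norm; once these are seen to cancel, the rest is a standard energy integration with no substantive obstacle.
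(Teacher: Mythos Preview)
Your proof is correct and follows essentially the same approach as the paper's: rewrite the equation as $\nabz_{\pa_r}(r^\lambda\psi)=r^\lambda F$, use that $d\vol_{r^{-2}\gz}$ is $r$-independent together with $\pa_r|\cdot|_{\gz}^2=2\langle\cdot,\nabz_{\pa_r}\cdot\rangle_{\gz}$ to derive the differential inequality for $r^{-1}\|r^\lambda\psi\|_{L^2(S_r)}$, integrate from infinity, and handle higher derivatives by commuting with $r\nabz$ via \eqref{eq:commutation-nabz-r-nabz}. The only cosmetic difference is that the paper treats $i=0$ first and then commutes, whereas you commute first and then run the energy estimate uniformly in $j$.
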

\begin{proof}
The equation can be written as $\pa_r(r^\la \psi)=r^\la F$.
Since $d\vol_{r^{-2}\ga\0}$ is independent of $r$, we have
\beaa
\Big|\pa_r \int_{S_r}  |r^{\la-1} \psi|^2 d\vol_{\ga\0}\Big| &=& \Big|\pa_r \int_{S_r}  |r^{\la} \psi|^2 d\vol_{r^{-2}\ga\0}\Big| \lesssim \Big|2\int_{S_r} (\pa_r(r^{\la} \psi)\cdot r^{\la} \psi)\, d\vol_{r^{-2}\ga\0}\Big| \\
&=& \Big| 2\int_{S_r} r^\la F \cdot r^{\la} \psi\, d\vol_{r^{-2}\ga\0} \Big| = \Big|2\int_{S_r} r^{\la-1} F \cdot r^{\la-1} \psi\, d\vol_{\ga\0} \Big| \\
&\lesssim & \| r^{\la-1} F\|_{L^2(S_r)} \| r^{\la-1} \psi\|_{L^2(S_r)},
\eeaa
i.e., $|\pa_r (\|r^{\la-1} \psi\|_{L^2(S_r)}^2)| \lesssim \|r^{\la-1} F\|_{L^2(S_r)} \|r^{\la-1} \psi \|_{L^2(S_r)}$. Therefore, either $\psi=0$, in which case the lemma automatically holds, or we can divide both sides by $\|r^{\la-1} \psi \|_{L^2(S_r)}$ to infer that $|\pa_r (\|r^{\la-1} \psi \|_{L^2(S_r)})| \lesssim \|r^{\la-1} F\|_{L^2(S_r)}$. Therefore, the estimate follows for $i=0$. For positive integers $i$, it follows similarly by commuting the equation with $r\nabz$ using the commutation formula \eqref{eq:commutation-nabz-r-nabz}.
\end{proof}

\subsection{Solvability lemma for the operator \texorpdfstring{$\d_1\d_2\d^*_2\d^*_1$}{d1d2d2stard1star}} 
We study the solvability of the following equation on $(S,\ga)$:
\beaa
\d_1\d_2 h=(F, \dual F), \quad h\in \ss_2.
\eeaa
Recall that, see Section 2.2 in \cite{CK}, as the formal adjoints of the injective elliptic operators $\d_2$ and $\d_1$ on $2$-spheres, $\d_2^*$ and $\d_1^*$ are surjective. Therefore, for each $h$, there exists $(f,\dual f)\in\ss_0$ such that $\d_2^* \d_1^* (f,\dual f)=h$. The equation then becomes
\beaa
\d_1\d_2\d^*_2\d^*_1 (f,\dual f)=(F, \dual F).
\eeaa
Recall that these operators are defined in Definition \ref{def:Hodge-operators}.

We now prove the following lemma, which is a slight generalization of Lemma 2.19 in \cite{KS-GCM1}.
\begin{lemma}\lab{lemma:solvability-div}
Consider the operator $L:=\d_1\d_2\d^*_2\d^*_1$ on $(S,\ga)$, where $S$ is equipped with a standard spherical coordinate and a constant $r>0$, and hence admits the metric $\ga\0$. Suppose that the metric satisfies the estimate $||(r\nabz)^{\leq 4} (\ga-\gz) ||_{L^{\infty}(S)}\leq \mathring\eps\ll 1$. Then the following statements hold:
\begin{itemize}
\item The operator $L=\d_1\d_2\d^*_2\d^*_1$ is a densely-defined self-adjoint operator on $L^2(S,\ga)\times L^2(S,\ga)$. In addition to the zero eigenvalue corresponding to two trivial kernel elements $(1,0)$ and $(0,1)$, there exist six eigenvalues of $L$, denoted by $\la_p$, $\dual \la_p$ with $p=0,+,-$, satisfying $|\la_p|, |\dual\la_p|\lesssim \mathring\eps r^{-4}$, with real-valued eigenfunction pairs 
\beaa
(j,\dual j)_{\la_p}=(J_p,0)+O(\mathring\eps),\quad (j,\dual j)_{\dual\la_p}=(0,J_p)+O(\mathring\eps).
\eeaa 
Any other eigenvalue $\la$ of $L$ satisfies $|\la|\gtrsim r^{-4}$.
\item For the equation 
\bea\lab{eq:4-th-order-equation-h-dual-h}
\d_1\d_2\d^*_2\d^*_1 (f,\dual f)=\sum_{p=0,+,-} (F+c_p J_p+c_0,\dual F+\dual c_p J_p+\dual c_0),
\eea
there exist unique constants $c_0$, $\dual c_0$, $c_p$, $\dual c_p$, for which \eqref{eq:4-th-order-equation-h-dual-h} 
has a  unique solution $(f,\dual f)$   orthogonal to $(1,0)$, $(0,1)$, $(j,\dual j)_{\la_p}$, $(j,\dual j)_{\dual\la_p}$ in $L^2(S,\ga)$.   Moreover, the constants satisfy the estimate
\beaa
|(c_0,\dual c_0)|\lesssim |\overline{(F,\dual F)}^\ga| ,\quad |c_p+ \langle F,J_p\rangle_{r^{-2}\gz} |+|\dual c_p+\langle \dual F,J_p\rangle_{r^{-2}\gz}|\lesssim \mathring \eps r^{-1} ||(F,\dual F)||_{L^2(S,\ga)}.
\eeaa
\end{itemize}
\end{lemma}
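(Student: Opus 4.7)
Self-adjointness of $L = \d_1\d_2 \d^*_2 \d^*_1$ on $L^2(S,\ga)\times L^2(S,\ga)$ is immediate from the formal adjointness \eqref{eq:formal-adjoint}:
\[
\langle L(f,\dual f),(g,\dual g)\rangle_\ga = \langle \d^*_2\d^*_1(f,\dual f),\, \d^*_2\d^*_1(g,\dual g)\rangle_\ga,
\]
so $L$ is the square of a closed densely-defined operator, hence non-negative self-adjoint with compact resolvent and discrete spectrum. The two trivial kernel elements $(1,0)$ and $(0,1)$ are accounted for by $(\dga_1)^*(c,0) = (\dga_1)^*(0,c) = 0$.

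To locate the rest of the spectrum, the plan is to first diagonalize $L^{(\gz)}$ in the round case by spherical harmonics. Using the identities \eqref{eq:dcalident} together with $-\laph J_{\ell \mm} = \ell(\ell+1) r^{-2} J_{\ell \mm}$, a direct iteration gives
\[
L^{(\gz)}(J_{\ell \mm},0) = \mu_\ell\, (J_{\ell \mm},0), \qquad L^{(\gz)}(0,J_{\ell \mm}) = \mu_\ell\, (0,J_{\ell \mm}), \qquad \mu_\ell := \frac{(\ell-1)\ell(\ell+1)(\ell+2)}{2\, r^4}.
\]
Hence $\ker L^{(\gz)}$ is $8$-dimensional (spanned by the $\ell\le 1$ modes), while the remaining spectrum lies in $[\mu_2,\infty) = [12\, r^{-4},\infty)$. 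Writing $L^{(\ga)} = L^{(\gz)} + R$, where $R$ is a fourth-order operator whose coefficients depend polynomially on $\ga-\gz$, $\ga^{-1}$, and up to four $\nabz$-derivatives of $\ga-\gz$, the hypothesis $||(r\nabz)^{\leq 4}(\ga-\gz)||_{L^\infty(S)}\le \mathring\eps$ combined with Lemma \ref{lemma:equivalence-norms} yields $||R u||_{L^2(S,\ga)} \lesssim \mathring\eps\, r^{-4}\, \sum_{i\le 4}||(r\nabh)^i u||_{L^2(S,\ga)}$. Applying the Riesz projection along a circle centered at $0$ of radius, say, $r^{-4}$ (which separates the $8$-dim kernel from the bulk spectrum of $L^{(\gz)}$), standard analytic perturbation theory shows that the associated spectral projection $P^{(\ga)}$ differs from $P^{(\gz)}$ by $O(\mathring\eps)$ in operator norm. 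Consequently $L^{(\ga)}$ has exactly $8$ eigenvalues (counted with multiplicity) in the disk of radius $O(\mathring\eps\, r^{-4})$, with the rest of the spectrum in $[c\, r^{-4},\infty)$. Two of them are exactly $0$ since $(1,0),(0,1)\in\ker L^{(\ga)}$ always; the other six are $\la_p, \dual\la_p$. Applying $P^{(\ga)}$ to $(J_p,0), (0,J_p)$ and renormalizing yields eigenvectors of the form $(J_p,0)+O(\mathring\eps)$ and $(0,J_p)+O(\mathring\eps)$.

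For the solvability statement, let $E := \text{span}\{(1,0),(0,1),(j,\dual j)_{\la_p},(j,\dual j)_{\dual\la_p}\}$ be the $8$-dim spectral subspace just constructed; it is $L^{(\ga)}$-invariant, and on its $\ga$-orthogonal complement $E^\perp$ the operator $L^{(\ga)}$ is invertible with norm bounded below by $\gtrsim r^{-4}$. If $(f,\dual f)$ solves \eqref{eq:4-th-order-equation-h-dual-h} and is orthogonal to $E$, then testing against each basis vector $\varphi$ of $E$ gives $\langle L(f,\dual f),\varphi\rangle = 0$ (trivially when $\varphi=(1,0),(0,1)$, and by self-adjointness $\la_p\langle (f,\dual f),\varphi\rangle = 0$ for the remaining six). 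Hence the RHS of \eqref{eq:4-th-order-equation-h-dual-h} must itself be $E$-orthogonal, an $8\times 8$ linear system for $(c_0,\dual c_0,c_p,\dual c_p)$ whose matrix is a near-identity perturbation of its round counterpart, where by \eqref{eq:basicestimatesforJp-onSi_*} the modes $(1,0),(0,1),(J_p,0),(0,J_p)$ are mutually orthogonal. This system therefore has a unique solution, and $(f,\dual f) = (L^{(\ga)}|_{E^\perp})^{-1}$ applied to the (now $E$-orthogonal) RHS supplies the unique $(f,\dual f)\in E^\perp$. For the quantitative estimates, pairing against $(1,0),(0,1)$ makes $c_0,\dual c_0$ equal to $-\overline{F}^\ga,-\overline{\dual F}^\ga$ up to $O(\mathring\eps)$-small couplings to $c_p,\dual c_p$, giving $|(c_0,\dual c_0)|\lesssim |\overline{(F,\dual F)}^\ga|$; pairing against $(j,\dual j)_{\la_p}$ and using its explicit form $(J_p,0)+O(\mathring\eps)$ together with the vanishing $\langle J_p,1\rangle_{r^{-2}\gz}=0$ from \eqref{eq:basicestimatesforJp-onSi_*} leaves $c_p + \langle F,J_p\rangle_{r^{-2}\gz}$ as an $O(\mathring\eps\, r^{-1}||(F,\dual F)||_{L^2(S,\ga)})$ quantity, and analogously for $\dual c_p$.

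The main technical obstacle is the quantitative perturbation bound for $R = L^{(\ga)} - L^{(\gz)}$ underpinning paragraph two: to secure the sharp $O(\mathring\eps\, r^{-4})$ eigenvalue shift one must track that every derivative of $\ga-\gz$ entering the coefficients of $R$ comes paired with an extra factor of $r^{-1}$, so that after the scaling $\gz\mapsto r^{-2}\gz$ to the unit round sphere, $r^4 L^{(\ga)}$ is a genuine $O(\mathring\eps)$-perturbation of $r^4 L^{(\gz)}$ in operator norm; this is where the $L^\infty$-type hypothesis $||(r\nabz)^{\leq 4}(\ga-\gz)||_{L^\infty(S)}\le \mathring\eps$ (rather than a merely $L^2$ bound) is essential for the resolvent integral to stay uniformly controlled across the spectral gap.
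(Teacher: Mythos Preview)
Your proposal is correct and follows essentially the same strategy as the paper: establish non-negativity and self-adjointness from the $A^*A$ structure (the paper invokes the Friedrichs extension; your phrasing ``square of a closed operator'' should read $A^*A$ with $A=\d_2^*\d_1^*$, but the conclusion is the same), identify the round spectrum $\mu_\ell=\tfrac{1}{2}(\ell-1)\ell(\ell+1)(\ell+2)r^{-4}$, perturb to locate the eight small eigenvalues, and then determine the constants by orthogonality conditions.

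One refinement worth noting: the paper exploits that $(j,\dual j)_{\la_p}$ and $(j,\dual j)_{\dual\la_p}$ are \emph{exactly} $\ga$-orthogonal to the constant pairs $(1,0),(0,1)$, since these are eigenvectors of the self-adjoint $L$ for distinct eigenvalues ($\la_p$ versus $0$). This decouples the determination of $(c_p,\dual c_p)$ from $(c_0,\dual c_0)$: pairing \eqref{eq:4-th-order-equation-h-dual-h} against $(j,\dual j)_{\la_p}$ kills the $c_0,\dual c_0$ terms exactly, yielding a $6\times 6$ system for $(c_p,\dual c_p)$ alone, and then $c_0,\dual c_0$ are read off from the $\ga$-spherical mean. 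You instead rely on the round orthogonality $\langle J_p,1\rangle_{r^{-2}\gz}=0$ and treat everything as an $8\times 8$ near-identity system; this also works, but the paper's observation is cleaner and avoids the residual $O(\mathring\eps)$ couplings between $c_0$ and $c_p$ in your final estimate.
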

\begin{proof}
In view of \eqref{eq:formal-adjoint}, it is clear that $L$ is symmetric on $C^\infty(S)\times C^\infty(S)$ with respect to the inner product of $L^2(S,\ga)\times L^2(S,\ga)$. Since $L$ is also clearly non-negative, there exists a Friedrichs extension, still denoted by $L$, that is densely defined in $L^2(S,\ga)\times L^2(S,\ga)$ and self-adjoint. Note that when $r^{-2}\ga=\gs$, the operator reads $\lapz (\lapz+2r^{-2})$ that acts on scalar pairs, which, in addition to the two constant kernels, has a $6$-dimensional kernel spanned by $(J_p,0)$ and $(0,J_p)$. The first part of the lemma then follows from the fact that $r^{-2}\ga$ is a perturbation of $\gs$. Note also that since constant function pairs lie in the kernel of $L$, we have $\langle (j,\dual j)_{\la_p}, (c_1,c_2)\rangle_\ga =\langle (j,\dual j)_{\dual\la_p}, (c_1,c_2)\rangle_\ga=0$ for any constants $c_1$, $c_2$.

For the equation \eqref{eq:4-th-order-equation-h-dual-h}, we take its inner product with the eigenfunction pair and obtain
\beaa
\langle \d_1\d_2\d^*_2\d^*_1 (f,\dual f),(j,\dual j)_{\la_p}\rangle_\ga =\langle (F,\dual F), (j,\dual j)_{\la_p}\rangle_\ga +  \sum_{q=0,+,-} \langle( c_q   J_q+c_0,  \dual c_q  J_q+\dual c_0),  (j,\dual j)_{\la_p}\rangle_\ga  
\eeaa 
Recall that we require that the solution $(f,\dual f)$ is orthogonal to $(j,\dual j)_{\la_p}$. 
Therefore, since $L$ is self-adjoint and $(j,\dual j)_{\la_p}$ are eigenfunction pairs, the left-hand side is zero, and so are the terms with $c_0$ and $\dual c_0$ as we just remarked. Therefore, we deduce
\beaa
\sum_{q=0,+,-} \left\langle( c_q   J_q,  \dual c_q  J_q),  (j,\dual j)_{\la_p}\right\rangle_{r^{-2}\ga}  = -\left\langle (F,\dual F), (j,\dual j)_{\la_p}\right\rangle_{r^{-2}\ga}. 
\eeaa
Now using the fact that $(j,\dual j)_{\la_p}=(J_p,0)+O(\mathring\eps)$, the left hand side equals $(\de_{pq}+O(\mathring\eps))c_q$. We then also take the inner product of \eqref{eq:4-th-order-equation-h-dual-h} with $(j,\dual j)_{\dual \la_p}$. This gives a linear system of $c_p$, $\dual c_p$ whose coefficient matrix is $O(\mathring\eps)$-close to the identity matrix, and hence we obtain the unique existence of $(c_p,\dual c_p)$. The uniqueness of $(c_0,\dual c_0)$ is then also clear by taking the spherical mean with respect to $\ga$ for \eqref{eq:4-th-order-equation-h-dual-h}. 
The bounds for the constants also follow directly from the relations
\beaa
\left\langle (F,\dual F), (j,\dual j)_{\la_p}\right\rangle_{r^{-2}\ga} &=& \left\langle F,J_p\right\rangle_{r^{-2}\ga}+\left\langle (F,\dual F),O(\mathring\eps)\right\rangle_{r^{-2}\ga},\\
 \left\langle (F,\dual F), (j,\dual j)_{\dual\la_p}\right\rangle_{r^{-2}\ga} &=& \left\langle \dual F,J_p\right\rangle_{r^{-2}\ga}+\left\langle (F,\dual F),O(\mathring\eps)\right\rangle_{r^{-2}\ga},
\eeaa
the bound for $\ga-\gz$, and H\"{o}lder's inequality. 
The existence and uniqueness of $(f,\dual f)$ also follows easily from the fact that $L$ is invertible on the orthogonal complement of $\mathrm{span}\{(1,0),(0,1),(j,\dual j)_{\la_p}, (j,\dual j)_{\dual \la_p}\}$. 
\end{proof}

\begin{corollary}\lab{cor:solvability}
For the equation
\bea\lab{eq:2-th-order-equation-h}
\d_1\d_2 h=\sum_{p=0,+,-} (F+c_p J_p+c_0,\dual F+\dual c_p J_p+\dual c_0),
\eea
there exist unique constants $c_0$, $\dual c_0$, $c_p$, $\dual c_p$ for which \eqref{eq:2-th-order-equation-h}  has a unique solution $h\in\ss_2(S,\ga)$.
Moreover, the constants satisfy the estimate
\bea\lab{eq:estimate-cor-solvability}
|(c_0,\dual c_0)|\lesssim |\overline{(F,\dual F)}^\ga| ,\quad |c_p+ \langle F,J_p\rangle_{r^{-2}\gz} |+|\dual c_p+\langle \dual F,J_p\rangle_{r^{-2}\gz}|\lesssim \mathring \eps r^{-1} ||(F,\dual F)||_{L^2(S,\ga)}.
\eea
\end{corollary}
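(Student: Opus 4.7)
The plan is to reduce \eqref{eq:2-th-order-equation-h} to the fourth-order equation of Lemma \ref{lemma:solvability-div} via the ansatz $h = \d_2^* \d_1^*(f, \dual f)$. The first step is to observe that on $(S, \ga)$, the composition $\d_2^* \d_1^* : \ss_0 \times \ss_0 \to \ss_2$ is surjective: since $\d_1$ is injective on $\ss_1$ and $\d_2$ is injective on $\ss_2$ on the round sphere (and hence, by perturbation under the smallness hypothesis on $\ga - \gz$, also on $(S, \ga)$), their formal adjoints $\d_1^*$ and $\d_2^*$ are correspondingly surjective onto $\ss_1$ and $\ss_2$. Consequently any $h \in \ss_2$ admits a representation $h = \d_2^* \d_1^*(f, \dual f)$, and substituting this ansatz into \eqref{eq:2-th-order-equation-h} turns it into
\[
\d_1 \d_2 \d_2^* \d_1^*(f, \dual f) \;=\; \sum_{p=0,+,-}\bigl(F + c_p J_p + c_0,\ \dual F + \dual c_p J_p + \dual c_0\bigr),
\]
which is precisely the fourth-order equation \eqref{eq:4-th-order-equation-h-dual-h}.

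I would then apply Lemma \ref{lemma:solvability-div} directly to produce the unique constants $c_0, \dual c_0, c_p, \dual c_p$ satisfying the estimate \eqref{eq:estimate-cor-solvability}, together with a solution $(f, \dual f)$ orthogonal to the (approximate) kernel of $L = \d_1\d_2\d_2^*\d_1^*$, and set $h := \d_2^* \d_1^*(f, \dual f)$. To see that $h$ is well-defined independently of the choice of representative $(f, \dual f)$ modulo $\ker L$, I would invoke the general identity $\ker L = \ker(\d_2^* \d_1^*)$, which holds because any $(f, \dual f) \in \ker L$ satisfies
\[
\|\d_2^*\d_1^*(f,\dual f)\|_{L^2(S,\ga)}^2 \;=\; \langle L(f,\dual f),(f,\dual f)\rangle_\ga \;=\; 0.
\]
Uniqueness of $h$ as a solution to \eqref{eq:2-th-order-equation-h} then follows from the injectivity of $\d_1\d_2$ on $\ss_2$: two candidate solutions $h_1,h_2$ with the same constants would satisfy $\d_1\d_2(h_1-h_2)=0$, so $\d_2(h_1-h_2)\in \ker\d_1=\{0\}$, and the injectivity of $\d_2$ on $\ss_2$ forces $h_1=h_2$.

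The uniqueness of the constants $(c_0,\dual c_0,c_p,\dual c_p)$, as well as the bounds \eqref{eq:estimate-cor-solvability}, transfer verbatim from Lemma \ref{lemma:solvability-div}, since the reduction above is a bijection at the level of the constants. The only genuinely technical point, and thus the main obstacle, is the perturbative transfer of the functional-analytic facts (surjectivity of $\d_2^*\d_1^*$ and injectivity of $\d_1\d_2$) from the round metric $\gz$ to the perturbed metric $\ga$; this is handled by the hypothesis $\|(r\nabz)^{\leq 4}(\ga-\gz)\|_{L^\infty(S)} \leq \mathring\eps \ll 1$ together with standard perturbation of elliptic Hodge systems, and all remaining steps are immediate.
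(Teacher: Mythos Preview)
Your proposal is correct and follows essentially the same approach as the paper's proof, which is extremely terse (two sentences): it invokes the surjectivity of $\d_2^*\d_1^*$ to reduce to Lemma~\ref{lemma:solvability-div} for the constants, and the injectivity of $\d_1\d_2$ on $\ss_2$ for uniqueness of $h$. One minor remark: the injectivity of $\d_1$ and $\d_2$ (hence surjectivity of their adjoints) holds on any $2$-sphere with positive Gauss curvature via the identities \eqref{eq:dcalident}, as recalled just before Lemma~\ref{lemma:solvability-div}, so no perturbation argument is actually required for that step.
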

\begin{proof}
The uniqueness of $c_0$, $\dual c_0$, $c_p$, $\dual c_p$ follows from that $h$ can be expressed in the form $\d_2^*\d_1^* (f,\dual f)$. The uniqueness of $h$ follows from the fact that $\d_1\d_2$ has no kernel. 
\end{proof}

\section{Sketch of the proof of the main theorem}\lab{sect:proof}
\subsection{The linear iteration system}
Recall that we are solving the equations on a base manifold $\Si:=(r_0,\infty) \times \mathbb{S}^2$, and the spherical modes are accordingly defined in Section \ref{sect:spherical-harmonic-Hodge-operators}.
According to the statement of Theorem \ref{thm:main-precise}, at the level of $\ell\geq 2$ modes, we prescribe $(\Bb,\Bbd,\Kk, \Kkd)$, and we iteratively find the data such that
\beaa
\left(\divh Y-\Bb\right)_{\ell\geq 2}=0,\quad \left(\curlh Y-\Bbd\right)_{\ell\geq 2}=0,\quad 
\left(\laph(\ah\Pi)-\Kk \right)_{\ell\geq 2}= 0,\quad 
\left(r^{-4}(\pa_r (r^4\curlh \Xi))-\dual\Kk \right)_{\ell\geq 2}= 0.
\eeaa
More precisely,   we show that the sequence of iterates      $\Psi\n$    of the system   \eqref{eq:HCS-schematic-form}  converge to the desired solution. 
Motivated by    the  equation \eqref{eq:N-transport-go}   in Proposition \ref{prop:linearized-eqns-time-symmetry},    starting with  $\ga^{(0)}$ defined in \eqref{eq:def-gz}, $\ga\n$ is determined iteratively by solving the transport equation\footnote{with the boundary condition at infinity given by $\|\ga\nn-\gz\|_{\H^s}\to 0$. This is ensured in the space where we seek solutions, see \eqref{eq:def-norm-Psi-n}.}
\bea
\lab{eq:metric-iteration}
\slashed{\Lie}_{\pa_r} (r^{-2}\ga\nn)&=2r^{-2}\ah\n\thh\nn+\ah\n\thc\nn(r^{-2}\ga\n)+2\Up^\frac 12 \ao\nn r^{-1} (r^{-2}\ga\n).
\eea
Given $\ga\n$, we can define the horizontal operators $\nabh\n$, $\divh\n$, $\curlh\n$, $\laph\n$, $\d_1\n$, $\d_2\n$, $\cdots$, as well as the spherical mean $\overline{\phi}^{(n)}$ of a scalar field $\phi$ with respect to $\ga\n$. Recalling the definition of $\Psi$    in \eqref{eq:def-Psi-1-12-intro}, the   iterate  $\Psi\n$ reads
\bea\lab{eq:def-Psi-1-12}
\bsplit
\Psi_1\n &=\thc\n,\quad \Psi_2\n=\Kc\n,\quad \Psi_3\n=\ao\n,\quad \Psi_4\n=\thh\n,\quad \Psi_5\n=\pp\n,\quad \Psi_6\n=Y\n, \\
\Psi_7\n &=\trt\n,\quad \Psi_8\n=\kh\n,\quad \Psi_9\n = \Xi\n,\quad \Psi_{10}\n=\Pi\n,\\  \Psi_{11}\n &=(\B_{\ell\leq 1}\n,\Bd_{\ell\leq 1}\n),\quad \Psi_{12}\n=(\K_{\ell\leq 1}\n,\Kd_{\ell\leq 1}\n).
\end{split}
\eea 
We introduce the following norm
\bea\lab{eq:def-norm-Psi-n}
\bsplit
\|(\Psi\n,\ga\n) \|_s&:=\sup_{r\in [r_0,\infty)} \Big(r^{1+\de} \|(\Psi_1\n, \Psi_4\n,\Psi_5\n,\Psi_7\n,\Psi_8\n, \Psi_9\n, \Psi_{10}\n) \|_{\H^{s+1}(S_r)}\\
& \quad \quad \quad +r^{\de} \|\Psi_3\n\|_{\H^{s+2}(S_r)} +r^{2+\de} \|(\Psi_2\n, \Psi_6\n) \|_{\H^s(S_r)} +r^{\de} \|\ga\n-\ga\0\|_{\H^{s+1}(S_r)}\\
& \quad \quad \quad +r^{4+\de} |(\Psi_{11}\n,\Psi_{12}\n)|+r^\de | r^3 (\Psi_1\n)_{\ell=1,i}-\cc_i |\Big),
\end{split}
\eea
where the $\H^s$ norms are defined in Definition \ref{def:L2-Hs-S_r} . 
\begin{remark}
Note that the weights in \eqref{eq:def-norm-Psi-n} are consistent with  the  differentiability order  of   the corresponding quantities, as pointed out in Remark \ref{rem:schematic-notations}.
\end{remark}
We consider the following iteration system, motivated by Proposition \ref{prop:linearized-eqns-time-symmetry}:
\bea
\lab{eq:iteration-kac}
(\pa_r+2r^{-1})\Psi_1\nn &=& \Up^{-\frac 12}\widetilde\mu_{\ell=0}\nn+\Psi_3\n\muc_{\ell=0}\n -2(1-3mr^{-1}) r^{-2}\Psi_3\nn+\Ga_1\n\cdot \Ga_1\n,\\
\lab{eq:iteration-Kc}
(\pa_r+3r^{-1}) \Psi_2\nn &=& r^{-1}\widetilde\mu_{\ell=0}\nn -2\Up^\frac 12 r^{-3}\Psi_3\nn - \Up^{-\frac 12} (\Bb+\widetilde\B_{\ell\leq 1}\nn) \\
\nonumber & & -\Psi_3\n (\Bb+\widetilde\B_{\ell\leq 1,aux}\n)+\Ga_1\n\cdot \Ga_2\n, \\
\lab{eq:iteration-divP}
\Up^\frac 12 \laph\n \Psi_3\nn  &= & \Psi_2\nn-\overline{\Psi_2\nn}^{(n)}
-\Up^\frac 12 r^{-1} (\Psi_1\nn-\overline{\Psi_1\nn}\n)\\
\nonumber & & +\Ga_1\n\cdot\Ga_1\n-\overline{\Ga_1\n\cdot\Ga_1\n}\n
-\laph\n(\Ga_0\n\cdot\Ga_0\n),\\  
\lab{eq:iteration-average-a}
\overline{\Psi_3\nn}\n&=& -\frac 12 \Up^{-1} r \overline{\Psi_1\nn}\n, 
\eea
\vspace{-3.8ex}
\bea
\lab{eq:iteration-h-dualh}
 \d_1\n \d_2\n \Psi_4\nn  &=& \frac 12 \left(\laph\n \Psi_1\nn,0\right)-(\Bb,\Bbd) -\Psi_{11}\nn, \\
\lab{eq:iteration-Delta-ah}
\d_1\n \Psi_5\nn &=& \left(-\Up^\frac 12 \laph\n \Psi_3\nn+\laph\n(\Ga_0\n\cdot \Ga_0\n),0\right),
\eea
\vspace{-4.8ex}
\bea
\lab{eq:iteration-Y}
\d_1\n \Psi_6\nn &=&  (\Bb, \Bbd)+\Psi_{11}\nn  -\overline{(\Bb, \Bbd)+\Psi_{11}\nn}\n,\\
\lab{eq:iteration-trt}
(\pa_r+ r^{-1})\Psi_7\nn &=& 2r^{-1}\Psi_{10}\nn+\Ga_1\n\cdot\Ga_1\n,\\
\lab{eq:iteration-Thh}
\d_1\n \d_2\n \left(\ah\n \Psi_8\nn\right) &=& \frac 12 \left(\ah\n\laph\n \Psi_7\nn,0\right)+(\Kk, -\Kkd)+\Psi_{12}\nn+\Ga_1\n\cdot \Ga_2\n, 
\eea
\vspace{-4.8ex}
\bea
\lab{eq:iteration-Xi}
\d_1\n \Psi_9\nn &=& \bigg(0, \frac{3}{4\pi}  r^{-4} \sum_i{\bf a}_i\om_i+r^{-4} \int_r^\infty r'^4 (\Kkd-\Kd_{\ell\leq 1}\nn)\, dr' \bigg ) \\
\nonumber & & -\bigg(0, \overline{\displaystyle  \frac{3}{4\pi}  r^{-4} \sum_i{\bf a}_i \om_i+r^{-4} \int_r^\infty {r'^4} (\displaystyle{\Kkd}-\Kd_{\ell\leq 1}\nn)\, dr'}\n\bigg), 
\eea
\vspace{-3.5ex}
\bea
\lab{eq:iteration-Pi}
\laph\n \left(\ah\n \Psi_{10}\nn\right) &=& \Kk +\widetilde \K_{\ell\leq 1}\nn-\overline{\Kk+\widetilde \K_{\ell\leq 1}\nn}\n, \\
\lab{eq:iteration-Pi-mean}
\overline{\displaystyle\ah\n\Psi_{10}\nn}\n &=& \overline{\Psi_3\n \Psi_{10}\n}\n,
\eea
along with the metric iterates introduced in  \eqref{eq:metric-iteration}:
\bea\lab{eq:iteration-ga-metric}
\slashed{\Lie}_{\pa_r} (r^{-2}\ga\nn)&=2r^{-2}\ah\n\Psi_4\nn+\ah\n\Psi_1\nn(r^{-2}\ga\n)+2\Up^\frac 12 \Psi_3\nn r^{-1} (r^{-2}\ga\n).
\eea
We explain the notations used here:
\begin{itemize}
\item  The sets  $\Psi^{(n)} $ are iterates of the set $\Psi$  introduced in \eqref{eq:def-Psi-1-12}. For simplicity, in various places, we still denote $\ah\n=1+\Psi_3\n$.
\item For a scalar field $f$, we use $\overline{f}\n$ to denote the spherical mean of $f$ with respect to the metric $\ga\n$.
\item The schematic notations $\Ga_i\n$ for $i=0,1,2$ are defined as in Definition \ref{def:schematic-notation} labeled with $\n$. The dot products in terms like $\Ga_1\n\cdot \Ga_1\n$ are defined with respect to $\ga\n$.
\item The expression $\widetilde\B_{\ell\leq 1} \nn$ stands for
\bea\lab{eq:tilde-BB}
\widetilde\B_{\ell\leq 1} \nn := \frac 12 (\laph\n \Psi_1\nn)_{\ell= 1}+\frac 12(\laph\n \Psi_1\n)_{\ell=0}-\left(\mathcal P_1(\d_1\n\d_2\n \Psi_4\n)\right)_{\ell\leq 1}. 
\eea
We shall also make use of the auxiliary notation
\bea
\lab{eq:tilde-BBNL}
\widetilde\B_{\ell\leq 1,aux} \n := \frac 12 (\laph\n \Psi_1\n)_{\ell= 1}+\frac 12(\laph\n \Psi_1\n)_{\ell=0}-\left(\mathcal P_1(\d_1\n\d_2\n \Psi_4\n)\right)_{\ell\leq 1}.
\eea
Here $\mathcal P_1$ denotes the trivial projection to the first component of a pair of scalars $(\cdot,\cdot)\in \sk_0$. Both $\widetilde\B_{\ell\leq 1} \nn$ and $\widetilde\B_{\ell\leq 1,aux} \n$ behave like\footnote{Recall that $\B_{\ell\leq 1}\nn$ is the first component of $\Psi_{11}\nn$ (defined in \eqref{eq:def-Psi-1-12}).} $\B_{\ell\leq 1}\nn$ in the limit as $n\to\infty$. In particular, in \eqref{eq:tilde-BB}, we distinguish linear and nonlinear terms using $\nn$ and $\n$, see heuristics already in \eqref{eq:heuristic-substitution-ell=1-B}.
\item For a similar reason, we introduce the expression $\widetilde \K_{\ell\leq 1}\nn$:
\begin{equation}\lab{eq:widetilde-K-ell-leq-1}
\widetilde \K_{\ell\leq 1}\nn:= -\frac 12 (\ah\n \laph\n \Psi_7\nn)_{\ell= 1}+\mathcal P_1\left( \d_1\n\d_2\n(\ah\n\Psi_8\n)\right)_{\ell\leq 1}-\frac 12 (\ah\n \laph\n \Psi_7\n)_{\ell=0}+(\Ga_1\n\cdot\Ga_2\n)_{\ell\leq 1},
\end{equation}
where the $\Ga_1\n\cdot\Ga_2\n$ takes the same precise form as the one in \eqref{eq:iteration-Thh}. See heuristics already in \eqref{eq:heuristic-substitution-ell=1-K}.
\item Consistent with the definition of $\muc$ in \eqref{eq:checked-Schw}, $\muc\n$ denotes
\bea\lab{eq:muc-n}
\bsplit
\muc\n:=\mu\n-2mr^{-3}&= -\laph\n (\log\ah\n)+K\n-\frac 14 (\trth\n)^2 -2mr^{-3}\\
&= -\laph\n (\log\Psi_3\n)+\Psi_2\n-\Up^\frac 12 r^{-1}\Psi_1\n-\frac 14 (\Psi_1\n)^2.
\end{split}
\eea
Similar to \eqref{eq:tilde-BB}, \eqref{eq:widetilde-K-ell-leq-1}, we denote
\bea\lab{eq:tilde-mu-ell=0}
\widetilde\mu_{\ell=0}\nn:= (\Psi_2\nn-\Up^\frac 12 r^{-1}\Psi_1\nn)_{\ell=0} -(\laph\n \log\Psi_3\n)_{\ell=0}-\frac 14 ((\Psi_1\n)^2 )_{\ell=0}.
\eea
See heuristics already in \eqref{eq:heuristic-substitution-ell=0-mu}.
\end{itemize}
The $\ell\leq 1$ quantities $\Psi_{11}\nn$, $\Psi_{12}\nn$  will be  determined by equations \eqref{eq:iteration-h-dualh}, \eqref{eq:iteration-Thh}    using  Corollary \ref{cor:solvability}, i.e., by projections on the $\ell\le 1$ modes.

\subsection{Solving the main part $L_{main}$}\lab{subsect:solve-main-part}
To solve the iteration system \eqref{eq:iteration-kac}-\eqref{eq:iteration-ga-metric} at each step, we need to study the linear operator $^{(\tilde\ga)}L$ defined in \eqref{eq:def-L-operator}. As we  have pointed out in Section \ref{subsubsect:linear-operator-L},  $^{(\tilde\ga)}L$ has a  triangular structure,  such that we can focus on the main part   the HCS  system  
\bea
\lab{eq:L-mainn}
\, ^{(\tilde\ga)}L_{main}[\Psi_{main}]= \begin{pmatrix}
  0 \\
  \Bb \\
  0  \end{pmatrix}+err,
\eea
where $\Psi_{main}=(\Psi_1,\Psi_2,\Psi_3)$, and
\bea\lab{eq:L-main}
\, ^{(\tilde\ga)}L_{main}[\Psi_{main}]=\begin{pmatrix}
 (\pa_r+2r^{-1})\Psi_1+2(1-3mr^{-1}) r^{-2} \Psi_3 -\Up^{-\frac 12} (\Psi_2-\Up^\frac 12 r^{-1} \Psi_1)_{\ell=0} \\
  (\pa_r+3r^{-1})\Psi_2 +2\Up^\frac 12 r^{-3} \Psi_3 + \frac 12 \Up^{-\frac 12}  (\laph\Psi_1)_{\ell=1}-r^{-1}(\Psi_2-\Up^\frac 12 r^{-1} \Psi_1)_{\ell=0} \\ 
  (\Up^\frac 12 \laph \Psi_3, \overline{\Psi_3}) -(\Psi_2-\Up^\frac 12 r^{-1} \Psi_1-\overline{\displaystyle \Psi_2-\Up^\frac 12 r^{-1} \Psi_1}, -\frac 12 \Up^{-1} r \overline{\Psi_1})
  \end{pmatrix}.
\eea
In  \eqref{eq:L-mainn}, $err$ denotes lower order terms that only involve nonlinear quantities from the previous step.
In view of the third row of \eqref{eq:L-mainn}, the scalar $\Psi_3=\ao$ can be written as\footnote{\lab{ft:inverse-laph}For given $\tilde\ga$, we extend the definition of $\laph_{\tilde\ga}^{-1}$ by defining $\laph_{\tilde\ga}^{-1} \phi:=\laph_{\tilde \ga}^{-1} (\phi-\overline\phi^{\tilde\ga})$.}
\beaa
\Psi_3=\Up^{-\frac 12} \laph_{\tilde\ga}^{-1} (\Psi_2-\Up^\frac 12 r^{-1} \Psi_1)-\frac 12 \Up^{-1} r \overline{\Psi_1}^{\tilde\ga}+err.
\eeaa
The system \eqref{eq:L-mainn} is then reduced to the following system for $\Psi_1$ and $\Psi_2$:
\bea\lab{eqns:nonlocal-eqn-Kc-kac}
\bsplit
(\pa_r+2r^{-1})\Psi_1 &\; = \; -2(1-3mr^{-1}) r^{-2}\left(\Up^{-\frac 12} \laph_{\tilde\ga}^{-1} (\Psi_2-\Up^\frac 12 r^{-1} \Psi_1)-\frac 12 \Up^{-1} r\overline{\Psi_1}^{\tilde\ga}
\right)\\
& \quad+\Up^{-\frac 12} (\Psi_2-\Up^\frac 12 r^{-1} \Psi_1)_{\ell=0}+F_1,  \\
  (\pa_r+3r^{-1})\Psi_2 &\; = \; -2\Up^\frac 12 r^{-3} \left(\Up^{-\frac 12} \laph_{\tilde\ga}^{-1} (\Psi_2-\Up^\frac 12 r^{-1} \Psi_1)-\frac 12 \Up^{-1} r\overline{\Psi_1}^{\tilde\ga}
 \right)- \frac 12 \Up^{-\frac 12} (\laph_{\tilde \ga}\Psi_1)_{\ell=1}\\
 &\quad +r^{-1} (\Psi_2-\Up^\frac 12 r^{-1} \Psi_1)_{\ell=0}+F_2,
  \end{split}
\eea
with $F_1$ and $F_2$  inhomogeneous terms    depending  the right-hand side of \eqref{eq:L-mainn}, i.e.   free scalars or nonlinear terms. In particular, $F_2$ contains the free scalar $\Bb$.

 We prove  the following proposition regarding  the system \eqref{eqns:nonlocal-eqn-Kc-kac}:
\begin{proposition}\lab{prop:solve-unknowns-kac-Kc}
Consider the system \eqref{eqns:nonlocal-eqn-Kc-kac} with a given metric $\tilde\ga$ satisfying
\bea\lab{eq:bound-condition-nonlocal-metric}
\sup_{r\in [r_0,\infty)} r^{\de} ||\tilde\ga-\gz||_{\H^{s+1}(S_r)}\lesssim \eps_1.
\eea
There exist constants $\eps_0, \eps_1>0$ such that for any $\eps<\eps_0$ and $\cc\in \mathbb{R}^3$ with $|\cc|\leq \eps$, if the following bounds hold true:
\bea\lab{eq:bound-condition-nonlocal-F}
\sup_{r\in [r_0,\infty)} r^{-1} ||r^{3+\de} F_1,r^{4+\de} F_2, r^{4+\de} (F_1)_{\ell=1}, r^{5+\de} (F_2)_{\ell=1}||_{\H^s(S_{r})} \lesssim \eps,
\eea
then for some suitable constant $C>0$, there exists a unique solution to the system \eqref{eqns:nonlocal-eqn-Kc-kac} satisfying
\bea\lab{eq:solve-nonlocal-vanishing-condition}
\sup_{r\in [r_0,\infty)} r^{-1} ||r^{2+\de} \Psi_1, r^{3+\de} \Psi_2||_{\H^s(S_r)}\leq C\eps,\quad \sup_{r\in [r_0,\infty)} r^\de | r^3 (\Psi_1)_{\ell=1,i}-\cc_i , r^4 (\Psi_2)_{\ell=1,i} | \leq C\eps.
\eea
\end{proposition}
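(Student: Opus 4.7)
Fix $\tilde\ga$; then \eqref{eqns:nonlocal-eqn-Kc-kac} is \emph{linear} in $(\Psi_1,\Psi_2)$. The plan is to integrate from $r=\infty$ after decomposing into spherical modes with respect to the round background $\gz$. Introduce the rescaled unknowns $W_1:=r^2\Psi_1$, $W_2:=r^3\Psi_2$. Since $\laph_{\tilde\ga}^{-1}$ scales like $r^2$ while the nonlocal coefficients in \eqref{eqns:nonlocal-eqn-Kc-kac} carry $r^{-2}$ and $r^{-3}$, the system takes the schematic form
\[\pa_r W \;=\; r^{-1} M(r)[W] \;+\; r^{-1} H(r),\]
where $M(r)$ is a uniformly bounded linear operator on $\H^s(S_r)\times \H^s(S_r)$ and $H=(r^3 F_1,\,r^4 F_2)$ is controlled by $\eps$ through \eqref{eq:bound-condition-nonlocal-F}. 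The projections onto the $\ell=0$, $\ell=1$ and $\ell\geq 2$ sectors with respect to $\gz$ approximately commute with $M(r)$ up to $O(\eps_1)$ errors inherited from \eqref{eq:bound-condition-nonlocal-metric} via Lemmas \ref{lemma:equivalence-norms} and \ref{lemma:Hodge-estimate-round}; hence I analyze each sector separately and absorb the coupling errors by a Gronwall argument at the end.

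For the $\ell\geq 2$ sector the $(\cdot)_{\ell=0}$ and $(\cdot)_{\ell=1}$ source terms are absent and $\laph_{\tilde\ga}^{-1}$ has norm $\lesssim r^2/6$, so $M(r)$ restricted to this subspace is uniformly bounded on $\H^s$. Applying Lemma \ref{lem:transport-lemma} to each component of $W$ and iterating by Gronwall yields $r^{-\de}\|W\|_{\H^s}\lesssim\eps$, given the decay at $r=\infty$ enforced by \eqref{eq:solve-nonlocal-vanishing-condition}. For the $\ell=1$ sector, the system reduces to a finite-dimensional $2\times 2$ ODE in $r$ for $(r^3(\Psi_1)_{\ell=1,i},\,r^4(\Psi_2)_{\ell=1,i})$ whose leading matrix is exactly $\begin{pmatrix}0 & 1\\ 0 & 2\end{pmatrix}$, as already appears in part (i) of Section \ref{subsect:ell-0-1-constraints}. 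This matrix is not symmetric but is accretive with respect to a modified inner product on $\RRR^2$ (Lemma \ref{lem:Duhamel-round-case}); this legitimizes integration from $r=\infty$ with the prescribed boundary value $(\cc_i,0)$ and yields the second bound in \eqref{eq:solve-nonlocal-vanishing-condition}. For the $\ell=0$ sector, $\laph_{\tilde\ga}^{-1}$ annihilates the constant modes by the convention in footnote \ref{ft:inverse-laph}, so the equations decouple from the nonlocal part and reduce to a triangular $2\times 2$ ODE for the means, integrable directly from infinity with zero boundary data.

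The main technical obstacle is the coupling between the three sectors when $\tilde\ga\neq\gz$: the projections $P_0,P_1,P_{\geq 2}$ no longer commute with $\laph_{\tilde\ga}^{-1}$ nor with the means $\overline{\,\cdot\,}^{\tilde\ga}$. I expect each commutator to produce a contribution of size $\eps_1$ in the relevant weighted norm, quantified by Lemma \ref{lemma:equivalence-norms}, which, after choosing the smallness thresholds $\eps_0,\eps_1$ sufficiently small, is absorbable by the backward Gronwall argument from $r=\infty$. Uniqueness then follows from linearity together with the vanishing at infinity built into \eqref{eq:solve-nonlocal-vanishing-condition}, which forces any homogeneous solution to be zero; existence is delivered by the integration scheme described above, converting the bound on $\laph_{\tilde\ga}\Psi_3$ into the desired estimate on $(\Psi_1,\Psi_2)$ through \eqref{eqns:nonlocal-eqn-Kc-kac} itself.
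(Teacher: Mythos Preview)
There is a genuine gap in your treatment of the $\ell\geq 2$ sector. After your rescaling $W=(r^2\Psi_1,\,r^3\Psi_2)$, the leading coefficient in the $(\ell,\mm)$-mode equation is $r^{-1}\cdot\frac{2}{\ell(\ell+1)}Q$ with $Q=\begin{pmatrix}-1&1\\-1&1\end{pmatrix}$ nilpotent. A nonzero nilpotent matrix is not accretive with respect to \emph{any} inner product on $\RRR^2$, so Lemma~\ref{lem:Duhamel-round-case} does not apply here; and a backward Gronwall from $r=\infty$ with kernel $r^{-1}$ produces the divergent factor $\exp\bigl(\int_r^\infty r'^{-1}\,dr'\bigr)$. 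Your proposed ``apply Lemma~\ref{lem:transport-lemma} to each component and iterate by Gronwall'' therefore cannot close (a direct bootstrap in the weighted norm likewise fails unless one imposes the unwanted restriction $\de>|M_\ell|$). The paper's remedy (see the outline at the start of Section~\ref{section-nonlocal-eqn-Kc-kac-round}) is to insert an \emph{extra} weight $r^{\de'}$ with $0<\de'<\de$, so that the leading matrix becomes $A_\ell=\de' I+\frac{2}{\ell(\ell+1)}Q$; this is still not accretive for the Euclidean inner product when $\de'$ is small, but Lemma~\ref{eq:lemma-Lyapunov-matrix} constructs a Lyapunov inner product $G_{\de'}$ under which $A_\ell$ is strictly accretive uniformly in $\ell\geq 2$, whence Lemma~\ref{lem:Duhamel-round-case} yields uniformly bounded backward solution operators $U_\ell(r,r^*)$. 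The $r$-integrability you need then comes from the gap $\de-\de'>0$ in the source, not from the coefficient matrix.

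A secondary omission concerns the mode coupling. The commutator errors from $\laph_{\tilde\ga}^{-1}$ and $\overline{\,\cdot\,}^{\tilde\ga}$ are not merely ``of size $\eps_1$'': the paper shows (Proposition~\ref{prop:bound-RR}, Lemma~\ref{lem:bound-RR-new-c}) that the coupling operator $\RR$ carries an additional decaying factor $r^{-1-\de}$, and it is this extra decay---combined with the deliberately different $r$-weights on the $\ell=1$ block ($r^3,r^4$) versus the $\ell\neq 1$ blocks ($r^{2+\de'},r^{3+\de'}$)---that makes the final contraction in Section~\ref{sec:summing-up} close with a genuinely small factor. Your sketch does not track how these distinct weights interact through $\RR$, and without that bookkeeping the absorption step is formal.
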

\begin{proof}
See  Section \ref{section-nonlocal-eqn-Kc-kac-round}. 
\end{proof}

\subsection{Boundedness estimates}
We are now ready to prove the boundedness result of the iterates.
\begin{proposition}\lab{prop:boundedness}
There exists $\eps>0$ such that for given $m>0$, $\cc,{\bf a}\in\mathbb{R}^3$, and $(\Bb,\Bbd,\Kk,\Kkd)$ as considered in the statement of Theorem \ref{thm:main-precise}, there exists a constant $C_b >0$ and a positive integer $s$, such that
for each nonnegative integer $n$,  if  $(\Psi\n,\ga\n)$ satisfies $||(\Psi\n,\ga\n)||_s \leq C_b \eps$, then there exists a unique solution $(\Psi\nn,\ga\nn)$ to the system \eqref{eq:iteration-kac}-\eqref{eq:iteration-ga-metric} verifying $||(\Psi\nn,\ga\nn)||_s\leq C_b \eps$.
\end{proposition}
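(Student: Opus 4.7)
\textbf{Proof strategy for Proposition \ref{prop:boundedness}.}

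The plan is to exploit the triangular block structure of $^{(\ga\n)}L$ identified in Remark \ref{rem:block-triangular} to solve the iteration system \eqref{eq:iteration-kac}--\eqref{eq:iteration-ga-metric} one block at a time, and then verify the weighted norm \eqref{eq:def-norm-Psi-n} at each stage. Throughout, the hypothesis $\|(\Psi\n,\ga\n)\|_s \leq C_b\eps$ gives, via Lemma \ref{lemma:equivalence-norms}, the equivalence of $\nabh\n$-based and $\nabz$-based Sobolev norms, so that estimates may be carried out on $\ga\n$ without losing control of constants; it also ensures, via Definition \ref{def:schematic-notation} and Remark \ref{rem:schematic-notations}, that all the nonlinear error terms of schematic type $\Ga_i\n \cdot \Ga_j\n$ satisfy bounds of the form $r^{-2-i-j-2\de}\eps^2$ in $\H^s$, i.e.\ one extra power of $r^{-1-\de}$ and one extra power of $\eps$ compared to the expected size of the corresponding linear term.

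\textbf{Step 1 (the main triangular block).} First, I solve for $(\Psi_1\nn,\Psi_2\nn,\Psi_3\nn)$ from \eqref{eq:iteration-kac}--\eqref{eq:iteration-average-a}. After substituting $\Psi_3\nn$ via the $\ell\geq 1$ inversion of $\laph\n$ and using the $\ell=0$ condition \eqref{eq:iteration-average-a}, the system reduces to the coupled nonlocal transport system \eqref{eqns:nonlocal-eqn-Kc-kac} for $(\Psi_1\nn,\Psi_2\nn)$ with $\tilde\ga=\ga\n$ and inhomogeneities $F_1,F_2$ collecting the $\Ga\n\cdot\Ga\n$ terms, the free scalar $\Bb$, and the auxiliary term $\widetilde\B_{\ell\leq 1,aux}\n$. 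Each of these inputs is bounded in the weights of \eqref{eq:bound-condition-nonlocal-F}: the quadratic $\Ga\n\cdot\Ga\n$ terms yield an $\eps^2 \ll \eps$ contribution, $\Bb$ is controlled by the hypothesis \eqref{eq:main-thm-time-symmetric-B-Bd-bounds} of Theorem \ref{thm:main-precise}, and $\widetilde\B_{\ell\leq 1,aux}\n$ is linear in previous iterates and hence $O(\eps)$. Proposition \ref{prop:solve-unknowns-kac-Kc} then delivers $(\Psi_1\nn,\Psi_2\nn)$ with the sharp bound \eqref{eq:solve-nonlocal-vanishing-condition}, and $\Psi_3\nn$ is recovered by elliptic inversion of $\laph\n$.

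\textbf{Step 2 (the remaining block, $\ell\geq 2$ part).} With $(\Psi_1\nn,\Psi_2\nn,\Psi_3\nn)$ fixed, the $\ell\geq 2$ projections of the other equations decouple triangularly. I solve $(\laph\n(\ah\n\Psi_{10}\nn))_{\ell\geq 2}$ directly from \eqref{eq:iteration-Pi} (since $\Kk$ is given and supported on $\ell\geq 2$), then apply Hodge estimates (Lemma \ref{lemma:Hodge-estimate-round}) to obtain $(\ah\n\Psi_{10}\nn)_{\ell\geq 2}$, hence $\Psi_{10}\nn$. Next, equation \eqref{eq:iteration-trt} is a linear transport equation for $\Psi_7\nn$ with source $2r^{-1}\Psi_{10}\nn$, which I integrate using Lemma \ref{lem:transport-lemma}; likewise \eqref{eq:iteration-Xi} is integrated from infinity, where the free scalar $\Kkd$ enters. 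The tensorial quantities $\Psi_4\nn,\Psi_5\nn,\Psi_6\nn,\Psi_8\nn$ are recovered from the Hodge systems \eqref{eq:iteration-h-dualh}, \eqref{eq:iteration-Delta-ah}, \eqref{eq:iteration-Y}, \eqref{eq:iteration-Thh} via Lemma \ref{lemma:Hodge-estimate-round}, whose right-hand sides are by now fully controlled. At each stage the weight in \eqref{eq:def-norm-Psi-n} matches precisely the scaling of the source, so a suitably large choice of $C_b$ (depending only on the implicit constants from the transport and Hodge estimates) absorbs the linear terms while the nonlinear terms contribute $O(\eps^2) \ll \eps$.

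\textbf{Step 3 (the $\ell\leq 1$ constants $\Psi_{11}\nn,\Psi_{12}\nn$ and the metric).} The main obstacle, and the place where care is needed, is the determination of $\Psi_{11}\nn=(\B_{\ell\leq 1}\nn,\Bd_{\ell\leq 1}\nn)$ and $\Psi_{12}\nn=(\K_{\ell\leq 1}\nn,\Kd_{\ell\leq 1}\nn)$. These are not freely specified; rather, they are the constants produced by Corollary \ref{cor:solvability} when solving the fourth-order Hodge systems induced by \eqref{eq:iteration-h-dualh} and \eqref{eq:iteration-Thh}, which guarantees unique existence and the sharp bound $|\Psi_{11}\nn|+|\Psi_{12}\nn| \lesssim \mathring\eps r^{-1}\|(F,\dual F)\|_{L^2}$. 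Combined with the expected $r^{-3-\de}$ decay of the right-hand sides, this yields exactly the $r^{-4-\de}$ bound required by \eqref{eq:def-norm-Psi-n}. A mild coupling remains with the $\ell=1$ modes of $\Psi_1\nn,\Psi_2\nn,\Psi_7\nn$ appearing in $\widetilde\B_{\ell\leq 1}\nn$ and $\widetilde\K_{\ell\leq 1}\nn$, but because these linear couplings enter only through operators that contract on $\ell=1$ (i.e.\ $\laph\n$ acting on the $\ell=1$ mode carries a factor $r^{-2}$), the full $\ell=1$ subsystem is diagonally dominant and can be inverted perturbatively around its Schwarzschildian linearization analyzed in Section \ref{subsect:ell-0-1-constraints}. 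Finally, $\ga\nn$ is produced by integrating the transport equation \eqref{eq:iteration-ga-metric} from $r=\infty$ with prescribed decay to $\gz$, using Lemma \ref{lem:transport-lemma}; the bound $r^\de\|\ga\nn-\gz\|_{\H^{s+1}}\leq C_b\eps$ follows from the already-established bounds on $\ah\n,\Psi_1\nn,\Psi_3\nn,\Psi_4\nn$. Collecting the estimates from all three steps and choosing $C_b$ large enough and $\eps$ small enough closes the induction.
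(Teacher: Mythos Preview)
Your overall strategy---exploit the block-triangular structure, invoke Proposition~\ref{prop:solve-unknowns-kac-Kc} for the main block, then Hodge and transport lemmas for the rest---matches the paper's. But there is a genuine gap in your treatment of the $\ell\leq 1$ constants.

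You misquote Corollary~\ref{cor:solvability}: it does \emph{not} say $|c_p|\lesssim \mathring\eps r^{-1}\|(F,\dual F)\|_{L^2}$. It says $|c_p+\langle F,J_p\rangle|$ is that small; the constant $c_p$ itself is essentially $-\langle F,J_p\rangle_{r^{-2}\gz}$, which for \eqref{eq:iteration-Thh} is $\approx \tfrac12(\ah\n\laph\n\Psi_7\nn)_{\ell=1}$. With only the generic bound $\|\Psi_7\nn\|\lesssim \eps r^{-2-\de}$ this gives $|\Kd_{\ell\leq 1}\nn|\lesssim \eps r^{-4-\de}$, and then the integral $r^{-4}\int_r^\infty r'^4\,\Kd_{\ell\leq 1}\nn\,dr'$ in \eqref{eq:iteration-Xi} \emph{diverges} (since $r'^{-\de}$ is not integrable). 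So the $\Psi_9\nn$ estimate cannot close.

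What is missing is an \emph{improved} $\ell=1$ bound $r^{-1}\|(\Psi_7\nn)_{\ell=1}\|_{\H^{s+1}}\lesssim \eps^2 r^{-3-2\de}$, which the paper obtains by substituting the expression \eqref{eq:expression-Psi-10} for $\Psi_{10}\nn$ (in terms of $\Psi_7\nn$) into \eqref{eq:iteration-trt}, reducing to a closed nonlocal transport equation for $\Psi_7\nn$ alone (equation~\eqref{eq:Psi-7-nonlocal-long}), and solving it via the same contraction mechanism as Proposition~\ref{prop:solve-unknowns-kac-Kc}. The improvement on $\ell=1$ then comes from the fact that $\Kk$ is supported on $\ell\geq 2$, so only nonlinear terms survive there. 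Your handwave that ``the full $\ell=1$ subsystem is diagonally dominant'' does not supply this; in particular, your Step~2 ordering (determine $\Psi_{10}\nn$ first, then $\Psi_7\nn$) cannot be carried out as written, because $\Psi_{10}\nn$ depends on $(\Psi_7\nn)_{\ell=1}$ through $\widetilde\K_{\ell\leq 1}\nn$, and $\laph\n$ mixes modes so the $\ell\geq 2$ part of $\Psi_{10}\nn$ is not decoupled either.
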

\begin{proof}
The proposition relies on the triangular structure, discussed in Remark \ref{rem:block-triangular}, and follows from the following steps below.
\end{proof}

{\bf Step 1.} 
We first apply Proposition \ref{prop:solve-unknowns-kac-Kc} to obtain $\Psi_1\nn$ and $\Psi_2\nn$ by verifying the requirement \eqref{eq:bound-condition-nonlocal-F}. We can then obtain the estimate for $\Psi_1\nn$ with one additional derivative. We then also retrieve the estimate for $\Psi_3\nn$.
\begin{proposition}\lab{prop:step-1}
We have
\begin{equation}\lab{eq:step1-estimate}
\begin{split}
\sup_{r\in [r_0,\infty)} & r^{-1} ||r^{2+\de} \Psi_1\nn ||_{\H^{s+1}(S_r)}+ r^{-1} ||r^{3+\de} \Psi_2\nn||_{\H^s(S_r)}+r^{-1} ||r^{1+\de} \Psi_3\nn||_{\H^{s+2}(S_r)}\\
&+ r^\de |r^3 (\Psi_1\nn)_{\ell=1,i} -\cc_i | \lesssim \eps.
\end{split}
\end{equation}
\end{proposition}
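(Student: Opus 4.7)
The strategy follows the two-step architecture of Section~\ref{subsect:solve-main-part}. First, I eliminate $\Psi_3\nn$ from the main block \eqref{eq:iteration-kac}--\eqref{eq:iteration-average-a} using the elliptic constraint \eqref{eq:iteration-divP}, arriving at the nonlocal coupled first-order transport system for $(\Psi_1\nn,\Psi_2\nn)$ of the form \eqref{eqns:nonlocal-eqn-Kc-kac}. The forcings $F_1,F_2$ consist of the prescribed free scalar $\Bb$ (appearing in $F_2$) together with the quadratic contributions $\Psi_3\n\muc_{\ell=0}\n$, $\Psi_3\n\cdot\widetilde\B_{\ell\leq 1,aux}\n$, and the schematic products $\Ga_1\n\cdot\Ga_1\n$, $\Ga_1\n\cdot\Ga_2\n$ evaluated on the $n$-th iterate. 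Proposition~\ref{prop:solve-unknowns-kac-Kc} does the heavy lifting for the $(\Psi_1\nn,\Psi_2\nn)$ system, so the essential task is to verify its hypotheses \eqref{eq:bound-condition-nonlocal-metric} and \eqref{eq:bound-condition-nonlocal-F}.

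The metric bound \eqref{eq:bound-condition-nonlocal-metric} is immediate from the induction hypothesis $\|(\Psi\n,\ga\n)\|_s\leq C_b\eps$ applied to $\tilde\ga=\ga\n$. For \eqref{eq:bound-condition-nonlocal-F}, the hypothesis $\sup_r r^{3+\de}\|\Bb\|_{\H^s(S_r)}\leq \eps$ handles the linear piece of $F_2$, and each quadratic term is estimated by standard $L^2$--$L^\infty$ product inequalities on $S_r$, with Sobolev embedding applied to the factor of lower regularity (available since $s\geq 3$) and Lemma~\ref{lemma:equivalence-norms} used to pass between $\ga\n$- and $\gz$-based norms. The weights recorded in Definition~\ref{def:schematic-notation} together with Remark~\ref{rem:schematic-notations} then force every quadratic contribution into $C_b^2\eps^2\cdot r^{-3-\de}$ for $F_1$ and $C_b^2\eps^2\cdot r^{-4-\de}$ for $F_2$; the sharper $\ell=1$ bounds follow from the same product estimates together with the observation that the dominant linear-in-$\Bb$ contribution to $F_2$ lives on $\ell\geq 2$ by hypothesis. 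For $\eps$ sufficiently small, $C_b^2\eps^2\ll C_b\eps$, so Proposition~\ref{prop:solve-unknowns-kac-Kc} yields $(\Psi_1\nn,\Psi_2\nn)$ satisfying \eqref{eq:solve-nonlocal-vanishing-condition} with constant $C$ independent of $C_b$; fixing $C_b\geq 2C$ then closes the estimate for $(\Psi_1\nn,\Psi_2\nn)$ at the $\H^s$ level.

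With $\Psi_1\nn,\Psi_2\nn$ in hand, $\Psi_3\nn$ is recovered from \eqref{eq:iteration-divP}: since $\laph\n$ on $(S_r,\ga\n)$ is an $O(\eps)$-perturbation of $\lapz$ (by Lemma~\ref{lemma:equivalence-norms}), classical spherical elliptic regularity yields a two-derivative gain on the mean-zero part, which, combined with the $\ell=0$ prescription \eqref{eq:iteration-average-a}, produces $\Psi_3\nn\in\H^{s+2}$ with weight $r^{-\de}$ as claimed. To upgrade $\Psi_1\nn$ to $\H^{s+1}$, I return to \eqref{eq:iteration-kac}: the right-hand side now involves $\Psi_3\nn$ at the level $\H^{s+2}$, the $\ell=0$ quantity $\widetilde\mu_{\ell=0}\nn$ (trivially in any Sobolev class on $S_r$), and the quadratic $\Ga_1\n\cdot\Ga_1\n$ term, whose factors are each controlled in $\H^{s+1}$ by the induction hypothesis, so that the right-hand side is bounded in $\H^{s+1}(S_r)$ with weight $r^{-3-\de}$. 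Commuting with one additional $r\nabz$ derivative via \eqref{eq:commutation-nabz-r-nabz} and invoking the transport lemma~\ref{lem:transport-lemma} with $\lambda=2$ then delivers the bound on $r^{-1}\|r^{2+\de}\Psi_1\nn\|_{\H^{s+1}(S_r)}$ required by \eqref{eq:step1-estimate}.

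The principal delicate point is preserving the precise derivative counts across the two derivative-gaining operations: the elliptic inversion in \eqref{eq:iteration-divP} exactly absorbs the two-derivative gap between the weights of $\Psi_2\nn$ and $\Psi_3\nn$ encoded in $\|\cdot\|_s$, and the transport upgrade for $\Psi_1\nn$ crucially leverages this gained regularity through the $r^{-2}\Psi_3\nn$ coefficient term in \eqref{eq:iteration-kac}. In parallel, the $\ell=1$ mode of $\Psi_1\nn$ must hit the center-of-mass target $\cc_i$, which is already encoded both in $\|\cdot\|_s$ and in the vanishing condition \eqref{eq:solve-nonlocal-vanishing-condition} of Proposition~\ref{prop:solve-unknowns-kac-Kc}; matching its expected $\cc_i r^{-3}$ decay at infinity is handled by the sharper $\ell=1$ bounds in \eqref{eq:bound-condition-nonlocal-F}, which is why carrying out that mode-by-mode bookkeeping is the most delicate part of the argument.
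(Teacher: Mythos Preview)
Your proposal is correct and follows essentially the same route as the paper: eliminate $\Psi_3\nn$ via \eqref{eq:iteration-divP}--\eqref{eq:iteration-average-a} to reduce to the form \eqref{eqns:nonlocal-eqn-Kc-kac}, verify the forcing hypotheses of Proposition~\ref{prop:solve-unknowns-kac-Kc} (using in particular that $\Bb_{\ell=1}=0$ to get the sharper $\ell=1$ bound on $F_2$), recover $\Psi_3\nn\in\H^{s+2}$ by elliptic inversion, and finally upgrade $\Psi_1\nn$ to $\H^{s+1}$ by commuting \eqref{eq:iteration-kac} with $(r\nabz)^{s+1}$ and applying Lemma~\ref{lem:transport-lemma}. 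The paper carries out exactly these steps, with the only cosmetic difference that it explicitly isolates the nonlinear remainders $\NN\n[\ao]$, $\NN\n[\widetilde\B_{\ell\leq 1}]$, $\NN\n[\mu]$ hidden in $\widetilde\mu_{\ell=0}\nn$ and $\widetilde\B_{\ell\leq 1}\nn$ before estimating $F_1,F_2$.
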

\begin{proof}
See Section \ref{proof:step-1}.
\end{proof}

{\bf Step 2.} Now, since we have obtained $\Psi_1\nn$, we can apply the Codazzi equation \eqref{eq:iteration-h-dualh} to obtain $\Psi_{11}\nn$ and $\Psi_4\nn$. Then we also solve for $\Psi_6\nn$. The estimate for $\Psi_5\nn$ follows from the estimate for $\Psi_3\nn$.
\begin{proposition}\lab{prop:step-2}
We have
\beaa
\sup_{r\in [r_0,\infty)} r^{-1} ||r^{2+\de} (\Psi_4\nn,\Psi_5\nn)||_{\H^{s+1}(S_r)}+ r^{-1} ||r^{3+\de} \Psi_6\nn||_{\H^s(S_r)}+r^{5+\de} |\Psi_{11}\nn | \lesssim \eps.
\eeaa
\end{proposition}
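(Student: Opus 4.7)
The idea is to solve the three decoupled equations \eqref{eq:iteration-h-dualh}, \eqref{eq:iteration-Delta-ah}, \eqref{eq:iteration-Y} independently, using as inputs the Step~1 bounds from Proposition \ref{prop:step-1} and the prescribed scalars $(\Bb,\Bbd)$. The induction hypothesis $\|(\Psi\n,\ga\n)\|_s\leq C_b\eps$ ensures that $\ga\n$ is close enough to $\ga\0$ in the sense of \eqref{eq:bound-metric-equivalence-lemma} for Lemma \ref{lemma:Hodge-estimate-round} and Corollary \ref{cor:solvability} to apply with $\mathring\eps=O(\eps)$.

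First I will handle \eqref{eq:iteration-h-dualh} by applying Corollary \ref{cor:solvability}, rewriting the equation in the form
\beaa
\d_1\n\d_2\n\Psi_4\nn \;=\; \bigl(\tfrac12\laph\n\Psi_1\nn-\Bb,\;-\Bbd\bigr)\;-\;\Psi_{11}\nn,
\eeaa
with the four $\ell\leq 1$ scalars comprising $-\Psi_{11}\nn$ playing the role of the free constants $(c_0+\sum_p c_p J_p,\,\dual c_0+\sum_p\dual c_p J_p)$ of \eqref{eq:2-th-order-equation-h}. This simultaneously produces a unique $\Psi_4\nn\in\ss_2$ and uniquely determines $\Psi_{11}\nn$, together with the quantitative bound \eqref{eq:estimate-cor-solvability}. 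To convert that bound into the target weight $|\Psi_{11}\nn|\lesssim\eps r^{-4-\de}$ I will use: (i) that $\overline{\laph\n\Psi_1\nn}^{(n)}=0$ automatically, while $|\overline{\Bb}^{\ga\n}|+|\overline{\Bbd}^{\ga\n}|$ is suppressed because $(\Bb,\Bbd)$ are $\ga\0$-supported on $\ell\geq 2$ and $\ga\n-\ga\0$ is small; and (ii) that
\beaa
\langle\laph\n\Psi_1\nn,J_p\rangle_{r^{-2}\gz}\;=\;-2r^{-2}\langle\Psi_1\nn,J_p\rangle_{r^{-2}\gz}+\text{l.o.t.},
\eeaa
which, combined with the Step~1 control $|r^3(\Psi_1\nn)_{\ell=1,i}-\cc_i|\lesssim\eps r^{-\de}$ and $|\cc|\leq\eps$, contributes at order $\eps r^{-5}$, safely within the required weight. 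Once $\Psi_{11}\nn$ is so controlled, Lemma \ref{lemma:Hodge-estimate-round} applied to the same equation delivers $r^{-1}\|\Psi_4\nn\|_{\H^{s+1}(S_r)}\lesssim\eps r^{-2-\de}$, consuming the Step~1 bound $r^{-1}\|\laph\n\Psi_1\nn\|_{\H^{s-1}(S_r)}\lesssim\eps r^{-3-\de}$ and the assumption \eqref{eq:main-thm-time-symmetric-B-Bd-bounds}.

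For $\Psi_5\nn$, equation \eqref{eq:iteration-Delta-ah} is a pure $\d_1\n$-system whose source is the Laplacian of a scalar, so its mean vanishes automatically and Lemma \ref{lemma:Hodge-estimate-round} applies without any $\ell\leq 1$ compatibility issue; the $\H^{s+2}$ bound on $\Psi_3\nn$ from Step~1 then yields $r^{-1}\|\Psi_5\nn\|_{\H^{s+1}(S_r)}\lesssim\eps r^{-2-\de}$. For $\Psi_6\nn$ the right-hand side of \eqref{eq:iteration-Y} already has its $\ga\n$-mean subtracted, enforcing the $\ell=0$ compatibility of $\d_1\n$ by construction, and the same Hodge lemma gives $r^{-1}\|\Psi_6\nn\|_{\H^s(S_r)}\lesssim\eps r^{-3-\de}$, using the just-obtained bound on $\Psi_{11}\nn$ and the assumed decay of $(\Bb,\Bbd)$.

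The one delicate point will be the first step: correctly identifying $\Psi_{11}\nn$ as the $\ell\leq 1$ slack absorbed by Corollary \ref{cor:solvability}, and verifying that the inner products $\langle\laph\n\Psi_1\nn,J_p\rangle$ — which at first sight would decay only like $(\Psi_1\nn)_{\ell=1}\sim r^{-3}$ — in fact inherit the extra $r^{-2}$ factor from the Laplacian and thus reach the $r^{-4-\de}$ weight built into the norm \eqref{eq:def-norm-Psi-n}. Beyond this observation, everything reduces to a routine application of the Hodge estimates with Step~1 serving as input.
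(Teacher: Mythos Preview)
Your proposal is correct and follows essentially the same route as the paper: apply Corollary~\ref{cor:solvability} to \eqref{eq:iteration-h-dualh} to simultaneously extract $\Psi_{11}\nn$ and $\Psi_4\nn$, using that $\overline{\laph\n\Psi_1\nn}^{(n)}=0$ and that $(\Bb,\Bbd)_{\ell\le 1}=0$ with respect to $\gz$ for the $\ell=0$ piece, and the identity $(\laph\n\Psi_1\nn)_{\ell=1}\approx -2r^{-2}(\Psi_1\nn)_{\ell=1}$ together with the Step~1 control on $(\Psi_1\nn)_{\ell=1}$ for the $\ell=1$ piece; then invoke Lemma~\ref{lemma:Hodge-estimate-round} for $\Psi_4\nn$, $\Psi_5\nn$, $\Psi_6\nn$. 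One small remark: the bound you obtain (and the paper obtains) for $\Psi_{11}\nn$ is $|\Psi_{11}\nn|\lesssim\eps r^{-5}$, coming from the $|\cc|r^{-5}$ contribution; this comfortably closes the norm weight $r^{4+\de}$ in \eqref{eq:def-norm-Psi-n}, though it is marginally weaker than the $r^{5+\de}$ written in the proposition statement itself.
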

\begin{proof}
See Section \ref{proof:step-2}.
\end{proof}

{\bf Step 3.}
We solve \eqref{eq:iteration-trt} and \eqref{eq:iteration-Pi} for $\Psi_7$ and $\Psi_{10}$. This requires solving a coupled $\ell=1$ part, which we have analyzed at the linear level in Section \ref{subsect:ell-0-1-constraints}, and the remaining part that is decoupled.
\begin{proposition}\lab{prop:step-4}
We have
\beaa
\sup_{r\in [r_0,\infty)} r^{-1} ||r^{2+\de} (\Psi_7\nn, \Psi_{10}\nn) ||_{\H^{s+1} (S_r)}\lesssim \eps.
\eeaa
\end{proposition}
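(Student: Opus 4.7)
The system for $(\Psi_7\nn,\Psi_{10}\nn)$ couples only through the $\ell=1$ modes, via the term $-\frac{1}{2}(\ah\n\laph\n\Psi_7\nn)_{\ell=1}$ inside $\widetilde\K_{\ell\leq 1}\nn$ from \eqref{eq:widetilde-K-ell-leq-1}; this mirrors the linearized analysis of Section \ref{subsect:ell-0-1-constraints}(ii). My plan is accordingly to treat the $\ell\geq 2$ sector, the $\ell=0$ mean, and the coupled $\ell=1$ sector separately, in that order.

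For the $\ell\geq 2$ part, equation \eqref{eq:iteration-Pi} reduces to $\laph\n(\ah\n\Psi_{10}\nn)_{\ell\geq 2} = \Kk + \text{err}$, since $\widetilde\K_{\ell\leq 1}\nn$ does not contribute at $\ell\geq 2$ apart from a quadratic correction arising from the $\ga\n$-mean subtraction. Elliptic inversion of $\laph\n$ on the $\ell\geq 2$ subspace, via the Hodge machinery of Lemma \ref{lemma:Hodge-estimate-round}, gains two derivatives and a weight factor of $r^2$; combined with $\|\Kk\|_{\H^s}\lesssim \eps r^{-3-\de}$ from \eqref{eq:main-thm-time-symmetric-B-Bd-bounds}, this yields $\|(\ah\n\Psi_{10}\nn)_{\ell\geq 2}\|_{\H^{s+2}}\lesssim \eps r^{-1-\de}$. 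Dividing out the nonvanishing $\ah\n=1+\Psi_3\n$ recovers $(\Psi_{10}\nn)_{\ell\geq 2}$ with the desired decay. The $\ell=0$ mean is fixed directly by equation \eqref{eq:iteration-Pi-mean}: $\overline{\ah\n\Psi_{10}\nn}^{(n)}=\overline{\Psi_3\n\Psi_{10}\n}^{(n)}=O(\eps^2)$ by the inductive hypothesis.

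The $\ell=1$ sector is the main obstacle. Projecting \eqref{eq:iteration-Pi} to $\ell=1$ and substituting the expression \eqref{eq:widetilde-K-ell-leq-1}, I obtain
\begin{equation*}
(\laph\n(\ah\n\Psi_{10}\nn))_{\ell=1} \;=\; -\tfrac{1}{2}(\ah\n\laph\n\Psi_7\nn)_{\ell=1} + \text{(lower-order terms built from step $n$ data)}.
\end{equation*}
Since $\laph\n$ acts on $\ell=1$ modes as $-2r^{-2}+O(\eps r^{-2})$, up to discrepancies between $\gz$ and $\ga\n$ which are quadratic, this simplifies schematically to $(\Psi_{10}\nn)_{\ell=1} = -\tfrac{1}{2}(\Psi_7\nn)_{\ell=1} + \text{err}$. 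Substituting into \eqref{eq:iteration-trt} at $\ell=1$ yields a closed transport equation $(\pa_r+2r^{-1})(\Psi_7\nn)_{\ell=1} = \text{err}$, which I integrate from infinity using the boundary condition $\lim_{r\to\infty}r^2(\Psi_7\nn)_{\ell=1}=0$ that encodes $\P_i=0$, in analogy with \eqref{eq:condition-infinity-linearized-P-J}. The transport lemma (Lemma \ref{lem:transport-lemma}) with $\la=2$ then produces $(\Psi_7\nn)_{\ell=1}=O(\eps r^{-2-\de})$, and the corresponding bound on $(\Psi_{10}\nn)_{\ell=1}$ follows immediately.

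Finally, to close the estimate on $\Psi_7\nn$ at $\ell\geq 2$ (and at $\ell=0$), I substitute the already-obtained $\Psi_{10}\nn$ into \eqref{eq:iteration-trt} and apply Lemma \ref{lem:transport-lemma} with $\la=1$, integrating from infinity where $r\Psi_7\nn$ naturally decays. The delicate point throughout is reconciling the $\ga\n$-mean and the $\gz$-based modal projection while keeping all coupling errors quadratic; the design of $\widetilde\K_{\ell\leq 1}\nn$ in \eqref{eq:widetilde-K-ell-leq-1} is precisely what allows the freely prescribed scalar $\Kk$ to appear cleanly at $\ell\geq 2$ while the $\Psi_7$-coupling enters only through the single controlled term at $\ell=1$.
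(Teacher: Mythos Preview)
Your structural reading is correct---the $\ell\ge 2$ part of $\Psi_{10}\nn$ is driven by $\Kk$, the $\ell=0$ mean by \eqref{eq:iteration-Pi-mean}, and the $\ell=1$ sector closes as a transport equation $(\pa_r+2r^{-1})(\Psi_7\nn)_{\ell=1}=\text{err}$---but the sequential-by-mode scheme you propose does not actually decouple. The $\gz$-based projections $(\cdot)_{\ell\ge 2}$, $(\cdot)_{\ell=1}$ do not commute with $\laph\n$ or with multiplication by $\ah\n$, so in your Step~1 the quantity $(\ah\n\Psi_{10}\nn)_{\ell\ge 2}$ still receives contributions from $\widetilde\K_{\ell\le 1}\nn$ (via $(\laph\n)^{-1}$ mode-mixing, cf.\ Proposition~\ref{prop:bound-RR}), and hence from $(\Psi_7\nn)_{\ell=1}$, which you have not yet determined. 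Likewise in Step~3 the ``err'' on the right of the $\ell=1$ transport equation is not built only from step-$n$ data: terms such as $(\ao\n\laph\n\Psi_7\nn)_{\ell=1}$ and $\RR_{1,\mm}((\laph\n)^{-1}\Kk,\widetilde\K_{\ell\le 1}\nn)$ couple in $\Psi_7\nn$ at \emph{all} modes with an $\eps$-small coefficient. Since these couplings involve the step-$(n{+}1)$ unknowns themselves, you cannot simply bound them and integrate; an existence argument is required, and the proposal does not supply one.

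The paper's proof avoids the modal splitting altogether. It substitutes the expression \eqref{eq:expression-Psi-10}, i.e.\ $\ah\n\Psi_{10}\nn=(\laph\n)^{-1}(\Kk+\widetilde\K_{\ell\le 1}\nn)+\overline{\Psi_3\n\Psi_{10}\n}^{(n)}$, directly into \eqref{eq:iteration-trt} to obtain a single nonlocal transport equation \eqref{eq:Psi-7-nonlocal-long} for $\Psi_7\nn$. Existence and the bound $r^{-1}\|\Psi_7\nn\|_{\H^{s+1}}\lesssim\eps r^{-2-\de}$ are then established in Lemma~\ref{eq:existence-Psi-7-nn} by a contraction argument strictly analogous to (and simpler than) Section~\ref{section-nonlocal-eqn-Kc-kac-round}: one writes the equation as $(\pa_r+2r^{-1})\Psi_7\nn=r^{-1}(\Psi_7\nn)_{\ell\neq 1}+\RR(r^{-1}\Psi_7\nn)+\text{known}$, notes that the first term has a favorable sign and the second carries an $\eps r^{-1-\de}$ factor, and closes in the weighted norm. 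The improved $\ell=1$ estimate (needed later in Proposition~\ref{prop:step-5}) is then read off a posteriori. Your mode-by-mode picture is exactly the leading-order content of this argument; what is missing is the contraction layer that makes it rigorous.
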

\begin{proof}
See Section \ref{proof:step-4}.
\end{proof}

{\bf Step 4.} We solve the Codazzi equation \eqref{eq:iteration-Thh} for $\Psi_8\nn$, which also determines $\Psi_{12}\nn$, and the div-curl equation \eqref{eq:iteration-Xi} for $\Psi_9\nn$.
\begin{proposition}\lab{prop:step-5}
We have
\beaa
\sup_{r\in [r_0,\infty)} r^{-1} ||r^{2+\de} (\Psi_8\nn, \Psi_9\nn) ||_{\H^{s+1} (S_r)}+r^{5+\de} |\Psi_{12}\nn| \lesssim \eps.
\eeaa
\end{proposition}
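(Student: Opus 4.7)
The plan is to solve equations \eqref{eq:iteration-Thh} and \eqref{eq:iteration-Xi} separately. For \eqref{eq:iteration-Thh}, written schematically as $\d_1\n\d_2\n(\ah\n\Psi_8\nn) = (F,\dual F) + \Psi_{12}\nn$ with $F=\tfrac12\ah\n\laph\n\Psi_7\nn+\Kk+(\Ga_1\n\cdot\Ga_2\n)_1$ and $\dual F=-\Kkd+(\Ga_1\n\cdot\Ga_2\n)_2$, I apply Corollary \ref{cor:solvability}: it simultaneously produces the unique $\ell\leq 1$ data $\Psi_{12}\nn=(\K_{\ell\leq 1}\nn,\Kd_{\ell\leq 1}\nn)$ (as the obstruction-removing projection onto the near-cokernel of $\d_1\n\d_2\n$) and the unique solution $\ah\n\Psi_8\nn\in\ss_2(S_r,\ga\n)$. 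For \eqref{eq:iteration-Xi}, the right-hand side has $\ga\n$-mean zero by construction of the subtraction; since $\d_1\n:\ss_1\to\ss_0$ has trivial kernel on the $2$-sphere and its range consists of mean-zero pairs, a unique solution $\Psi_9\nn$ exists and is controlled via the Hodge estimate Lemma \ref{lemma:Hodge-estimate-round}.

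To prove $r^{5+\de}|\Psi_{12}\nn|\lesssim \eps$, apply estimate \eqref{eq:estimate-cor-solvability}. Since $\Kk$ and $\Kkd$ are supported on $\ell\geq 2$, they contribute neither to the spherical mean nor to $\langle\cdot,J_p\rangle_{r^{-2}\gz}$. The $J_p$-projection of $\ah\n\laph\n\Psi_7\nn$, after integration by parts and $\laph_{\gz}J_p=-2r^{-2}J_p$, reduces to $-2r^{-2}(\Psi_7\nn)_{\ell=1,p}$, which inherits its smallness from the near-linear-vanishing $(\mathring\Psi_7)_{\ell=1}=0$ established in Section \ref{subsect:ell-0-1-constraints} together with Proposition \ref{prop:step-4}. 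The nonlinear $\Ga_1\n\cdot\Ga_2\n$ piece is quadratically small of pointwise order $\eps^2 r^{-5-2\de}$ by Definition \ref{def:schematic-notation}; the $\mathring\eps r^{-1}\|(F,\dual F)\|_{L^2(S_r,\ga\n)}$ error term in \eqref{eq:estimate-cor-solvability} is of the same size once one uses $\mathring\eps\lesssim \eps r^{-\de}$ (from the metric bound in \eqref{eq:def-norm-Psi-n}) and $\|(F,\dual F)\|_{L^2(S_r)}\lesssim \eps r^{-3-\de}$. Assembling these yields the required $r^{-5-\de}$ decay for $\Psi_{12}\nn$. Two applications of Lemma \ref{lemma:Hodge-estimate-round} to $\d_2\n$ and $\d_1\n$ then give $\|\ah\n\Psi_8\nn\|_{\H^{s+1}(S_r)}\lesssim r^2\|(F,\dual F)-\Psi_{12}\nn\|_{\H^{s-1}(S_r)}\lesssim \eps r^{-1-\de}$, and hence the required bound on $\Psi_8\nn$ follows from $\ah\n=1+O(\mathring\eps)$.

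For \eqref{eq:iteration-Xi}, Lemma \ref{lemma:Hodge-estimate-round} yields $\|\Psi_9\nn\|_{\H^{s+1}(S_r)}\lesssim r\|\mathrm{RHS}\|_{\H^s(S_r)}$. The ADM term $\tfrac{3}{4\pi}r^{-4}\sum_i{\bf a}_i\om_i$ has $\H^s(S_r)$ norm of order $\eps r^{-3}$. For the integral term, the integral Minkowski inequality \eqref{eq:integral-Minkowski-inequality} combined with the scaling identity $\|f\|_{\H^s(S_r)}=(r/r')\|f\|_{\H^s(S_{r'})}$ for an angular function $f$ gives
\[
\Big\|r^{-4}\int_r^\infty r'^4(\Kkd-\Kd_{\ell\leq 1}\nn)\,dr'\Big\|_{\H^s(S_r)}\lesssim r^{-3}\int_r^\infty r'^3\|\Kkd-\Kd_{\ell\leq 1}\nn\|_{\H^s(S_{r'})}\,dr'\lesssim \eps r^{-3-\de},
\]
using $\|\Kkd\|_{\H^s(S_{r'})}\lesssim\eps r'^{-4-\de}$ from the main hypothesis together with the sharp $r'^{-5-\de}$ bound on $\Kd_{\ell\leq 1}\nn$ just established (note that the weaker $r'^{-4-\de}$ bound from the norm alone would fail to be integrable, so the improved decay of $\Psi_{12}\nn$ is essential here). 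The $\ga\n$-mean subtraction changes these bounds only by a bounded factor. In total $\|\Psi_9\nn\|_{\H^{s+1}(S_r)}\lesssim \eps r^{-2-\de}$, which is stronger than the stated bound. The principal obstacle is precisely the sharpness of the $\Psi_{12}\nn$ estimate: extracting $r^{-5-\de}$ (one power better than what the plain $L^2$-based estimate in \eqref{eq:estimate-cor-solvability} delivers) relies on the combination of (i) the $\ell\geq 2$ support of $\Kk,\Kkd$, (ii) the near-$\ell=1$-vanishing of $\Psi_7\nn$ inherited from the linearized ODE analysis, and (iii) the quadratic smallness of $\Ga_1\n\cdot\Ga_2\n$, without which the integral representation of $\Psi_9\nn$ would diverge.
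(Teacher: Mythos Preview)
Your approach matches the paper's: apply Corollary~\ref{cor:solvability} to \eqref{eq:iteration-Thh} to extract $\Psi_{12}\nn$ and $\Psi_8\nn$, using the $\ell\geq 2$ support of $(\Kk,\Kkd)$ and the improved $\ell=1$ bound on $\Psi_7\nn$ (the second estimate in \eqref{eq:improved-bound-Psi-7-nn}) for the cokernel projections; then a Hodge estimate for \eqref{eq:iteration-Xi}. Your remark that the sharp $r^{-5-\de}$ decay of $\Kd_{\ell\leq 1}\nn$ is genuinely needed for the $r'$-integrability in the $\Psi_9\nn$ estimate is a good point the paper leaves implicit.

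One slip to correct: you write $\mathring\eps\lesssim \eps r^{-\de}$, but the hypothesis of Lemma~\ref{lemma:solvability-div} is an $L^\infty$ bound on $(r\nabz)^{\leq 4}(\ga-\gz)$, so by Sobolev from the metric term in \eqref{eq:def-norm-Psi-n} one actually has $\mathring\eps\lesssim \eps r^{-1-\de}$ (this is exactly what the paper uses). With your stated $\mathring\eps$ the error term would be $\eps^2 r^{-4-2\de}$, which for small $\de$ is \emph{larger} than the required $r^{-5-\de}$ and would break the $\Psi_{12}\nn$ bound; the missing $r^{-1}$ is what makes your ``of the same size as $\eps^2 r^{-5-2\de}$'' claim true.
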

\begin{proof}
See Section \ref{proof:step-5}.
\end{proof}

{\bf Step 5.}
We derive the estimate for the spherical metric $\ga\nn$ using \eqref{eq:iteration-ga-metric}.
\begin{proposition}\lab{prop:step-3}
We have
\beaa
\sup_{r\in [r_0,\infty)} r^{-1} ||\ga\nn-\gz||_{\H^{s+1}(S_r)} \lesssim \eps r^{-1-\de}.
\eeaa
\end{proposition}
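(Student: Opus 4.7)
The plan is to view equation \eqref{eq:iteration-ga-metric} as a transport equation along $\pa_r$ for the normalized difference $U := r^{-2}\ga\nn - \gs$, integrate from $r=\infty$ via the transport Lemma~\ref{lem:transport-lemma}, and close using the bounds already obtained in Propositions~\ref{prop:step-1} and \ref{prop:step-2}. First I would note that since $\gs$ is $r$-independent in the standard spherical coordinates, a short Christoffel computation for $\gz = \Up^{-1}dr^2+r^2\gs$ gives $\nabz_{\pa_r}\gs = -2r^{-1}\gs$, so by \eqref{eq:projected-Lie-equals-covariant} with $k=2$ we have $\slashed{\Lie}_{\pa_r}(\gs)=0$. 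Subtracting this trivial identity from \eqref{eq:iteration-ga-metric} and applying \eqref{eq:projected-Lie-equals-covariant} once more, $U$ satisfies the honest transport equation
\[
\nabz_{\pa_r} U + 2r^{-1} U \;=\; 2r^{-2}\ah\n\Psi_4\nn + \ah\n\Psi_1\nn (r^{-2}\ga\n) + 2\Up^{1/2}\Psi_3\nn r^{-1}(r^{-2}\ga\n) \;=:\; F .
\]

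The boundary condition $r^{-1}\|r^2 U\|_{\H^{s+1}(S_r)} = r^{-1}\|\ga\nn-\gz\|_{\H^{s+1}(S_r)} \to 0$ as $r\to\infty$ is the one built into the iteration scheme, see the footnote preceding \eqref{eq:metric-iteration}. With $\la=2$ and $i=s+1$, Lemma~\ref{lem:transport-lemma} then yields
\[
r^{-1}\|\ga\nn-\gz\|_{\H^{s+1}(S_r)} \;\lesssim\; \int_r^\infty r'^{-1}\,\|r'^2 F\|_{\H^{s+1}(S_{r'})}\, dr' .
\]
Next I would estimate the three contributions to $r^2 F$. Using $\|\Psi_4\nn\|_{\H^{s+1}}\lesssim\eps r^{-1-\de}$ (Proposition~\ref{prop:step-2}), $\|\Psi_1\nn\|_{\H^{s+1}}\lesssim\eps r^{-1-\de}$ and $\|\Psi_3\nn\|_{\H^{s+1}}\lesssim\eps r^{-\de}$ (Proposition~\ref{prop:step-1}), together with the bootstrap hypothesis $\|\ga\n-\gz\|_{\H^{s+1}}\lesssim \eps r^{-\de}$ (which keeps the frame components of $r^{-2}\ga\n$ uniformly bounded in $\H^{s+1}$ relative to a $\gz$-orthonormal frame, via Lemma~\ref{lemma:equivalence-norms}), a standard $L^2$--$L^\infty$ product estimate on $S_r$ produces $\|r^2 F\|_{\H^{s+1}(S_r)}\lesssim \eps r^{-1-\de}$. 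Substituting and performing the elementary $r'$-integration gives
\[
r^{-1}\|\ga\nn-\gz\|_{\H^{s+1}(S_r)} \;\lesssim\; \eps\int_r^\infty r'^{-2-\de}\,dr' \;\lesssim\; \eps\, r^{-1-\de},
\]
which is the stated bound.

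The only mildly delicate point is the identity $\slashed{\Lie}_{\pa_r}(\gs)=0$, which captures the fact that the background radial foliation preserves the rescaled round metric; all remaining ingredients are routine applications of the transport lemma, the equivalence-of-norms lemma, and standard Sobolev product estimates on $S_r$. I do not expect any serious obstruction in this step.
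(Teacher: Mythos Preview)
Your proposal is correct and follows essentially the same approach as the paper: both use $\slashed{\Lie}_{\pa_r}(r^{-2}\gz)=0$, convert the Lie transport equation to a $\nabz_{\pa_r}$-equation via \eqref{eq:projected-Lie-equals-covariant}, and integrate from infinity with the transport Lemma~\ref{lem:transport-lemma}. The only cosmetic difference is that the paper multiplies through by $r^2$ to write the equation directly for $\ga\nn-\gz$ and applies the lemma with $\la=0$, whereas you keep the variable $U=r^{-2}(\ga\nn-\gz)$ and apply it with $\la=2$; these are algebraically identical.
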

\begin{proof}
See Section \ref{proof:step-3}.
\end{proof}

\subsection{Contraction estimates}
We use the notation $\de \psi\nn:=\psi\nn-\psi\n$ for a general quantity $\psi$. There should be no difficulty in distinguishing this notation with the constant $\de>0$ appearing in the $r$-weights.
We show that 
\bea\lab{eq:contraction-relation}
||\de(\Psi\nnn,\ga\nnn)||_s \leq C ||\de(\Psi\nn,\ga\nn)||_s,
\eea
 for some positive constant $C<1$. Note that here we define $||\cdot||_s$ as in \eqref{eq:def-norm-Psi-n} but with $\cc_i$ and $\gz$ removed. This is conceptually straightforward and follows in a similar way as the boundednesss result, and hence we leave the details to Section \ref{subsect:contraction-details}.

\subsection{The limit $(g\i,k\i)$}\lab{subsect:thelimit}
The goal of this subsection is to prove
 $\Ga_1\i=\Ga_1(g\i,k\i)$, $\Ga_2\i=\Ga_2(g\i,k\i)$, where $g\i$ and $k\i$ are appropriately identified below.
In other words, all limiting quantities are identified with the corresponding geometric quantities associated with $(g\i,k\i)$.
The fact that $(g\i,k\i)$ solves the constraint equation \eqref{ece} will then be an easy corollary.
\subsubsection{The limiting equations}
In view of \eqref{eq:contraction-relation}, we see that $\{(\Psi\n,\ga\n)\}$ is a Cauchy sequence under the norm $||\cdot||_s$. Therefore, we obtain a limit $(\Psi\i,\ga\i)$
satisfying $||(\Psi\i,\ga\i)||_s\leq C\eps$ by the boundedness statement in Proposition \ref{prop:boundedness}.  
According to our way of introducing the unknowns $\Psi_1$, $\Psi_2$, and $\Psi_3$, we denote $
\ah\i:=\Up^{-\frac 12}+\Psi_3\i$, $K\i:=\Psi_2\i+r^{-2}$, $\trth\i:=\Psi_1\i+2\Up^\frac 12 r^{-1}$, and
\beaa
\mu\i:= -\laph\i(\log\ah\i)+K\i-\frac 14(\trth\i)^2.
\eeaa
We expect that these quantities turn out to be precisely those naturally connected to the limiting initial data set.

The limit $(\Psi\i,\ga\i)$ solves the following system, by taking $n\to \infty$ for the equations \eqref{eq:iteration-kac}-\eqref{eq:iteration-Pi}: 
\bea
\lab{eq:iteration-kac-ii}
(\pa_r+2r^{-1})\Psi_1\i &=& \Up^{-\frac 12}\muc_{\ell=0}\i+\ao\i\muc_{\ell=0}\i -2(1-3mr^{-1}) r^{-2}\Psi_3\i+\Ga_1\i\cdot \Ga_1\i,\\ 
\lab{eq:iteration-Kc-ii}
(\pa_r+3r^{-1}) \Psi_2\i &=& r^{-1}\muc_{\ell=0}\i -2\Up^\frac 12 r^{-3}\Psi_3\i - \Up^{-\frac 12} (\Bb+\widetilde
\B_{\ell\leq 1}\i) \\
\nonumber & & -\Psi_3\i (\Bb+\widetilde
\B_{\ell\leq 1}\i)
 +\Ga_1\i\cdot \Ga_2\i, \\
\lab{eq:iteration-divP-ii}
\Up^\frac 12 \laph\i \Psi_3\i  &=& \Psi_2\i-\overline{\Psi_2\i}\i 
-\Up^\frac 12 r^{-1} (\Psi_1\i-\overline{\Psi_1\i}\i)\\
\nonumber & &+\Ga_1\i\cdot\Ga_1\i-\overline{\Ga_1\i\cdot\Ga_1\i}\i
-\laph\i(\Ga_0\i\cdot\Ga_0\i), 
\eea
\vspace{-3.5ex}
\bea
\lab{eq:iteration-average-a-ii}
\overline{\Psi_3\i}\i&=& -\frac 12 \Up^{-1} r \overline{\Psi_1\i}\i, \\
\lab{eq:iteration-h-dualh-ii}
 \d_1\i \d_2\i \Psi_4\i  &=& \frac 12 (\laph\i \Psi_1\i,0)-(\Bb,\Bbd) -(\B_{\ell\leq 1}\i,\Bd_{\ell\leq 1}\i) \\
\lab{eq:iteration-Delta-ah-ii}
\d_1\i \Psi_5\i &=& \left(-\Up^\frac 12 \laph\i \ao\i+\laph\i(\Ga_0\i\cdot \Ga_0\i),0\right)\\
\lab{eq:iteration-Y-ii}
\d_1\i \Psi_6\i &=& (\Bb+\B_{\ell\leq 1}\i, \Bbd+\Bd_{\ell\leq 1}\i)  -(\overline{\Bb+\B_{\ell\leq 1}\i}\i, \overline{\displaystyle \Bbd+\Bd_{\ell\leq 1}\i}\i),\\
\lab{eq:iteration-trt-ii}
(\pa_r+r^{-1})\Psi_7\i &=& 2r^{-1}\Psi_{10}\i+\Ga_1\i\cdot\Ga_1\i,\\
\lab{eq:iteration-Thh-ii}
\d_1\i \d_2\i (\ah\i\Psi_8\i) &=& \Big(\frac 12 \ah\i\laph\i \Psi_7\i+\Kk, - \Kkd\Big)+(\K_{\ell\leq 1}\i,\Kd_{\ell\leq 1}\i)+\Ga_1\i\cdot \Ga_2\i, 
\eea
\vspace{-4.5ex}
\bea
\lab{eq:iteration-Xi-ii}
\d_1\i \Psi_9\i &=& \bigg(0, \frac{3}{4\pi}  r^{-4} \sum_i{\bf a}_i\om_i+r^{-4} \int_r^\infty r'^4 (\Kkd-\Kd_{\ell\leq 1}\i)\, dr' \bigg ) \\
\nonumber & & -\bigg(0, \overline{\displaystyle  \frac{3}{4\pi}  r^{-4} \sum_i{\bf a}_i \om_i+r^{-4} \int_r^\infty {r'^4} (\displaystyle{\Kkd}-\Kd_{\ell\leq 1}\i)\, dr'}\i\bigg), 
\eea
\vspace{-3.5ex}
\bea
\lab{eq:iteration-Pi-ii}
\laph\i (\ah\i\Psi_{10}\i) &=& \Kk+\K_{\ell\leq 1}\i-\overline{\Kk+\K_{\ell\leq 1}\i}\i.\\
\overline{\displaystyle\ah\i\Psi_{10}\i}\i&=&  \overline{\Psi_3\i \Psi_{10}\i}\i.
\eea
We note the we have used the observation that by taking the limit of \eqref{eq:tilde-BB}, \eqref{eq:tilde-BBNL}, the quantities $\widetilde\B_{\ell\leq 1}\i$, $\widetilde\B_{\ell\leq 1,aux}\i$ are the same:
\bea\lab{eq:limit-tilde-B}
\widetilde\B_{\ell\leq 1} \i=\widetilde\B_{\ell\leq 1,aux}\i=\frac 12 (\laph\i \Psi_1\i)_{\ell= 1}+\frac 12(\laph\i \Psi_1\i)_{\ell=0}-\left(\mathcal P_1(\d_1\i\d_2\i \Psi_4\i)\right)_{\ell\leq 1}.
\eea
Therefore, we write both of them as $\widetilde\B_{\ell\leq 1}\i$.
Similarly, by comparing the limit of \eqref{eq:tilde-mu-ell=0} and \eqref{eq:muc-n}, we see that $\tilde\mu_{\ell=0}\i=\muc_{\ell=0}\i$, and we write both of them as $\muc\i_{\ell=0}$.

Moreover, taking the limit of \eqref{eq:iteration-ga-metric} gives
\bea\lab{eq:metric-limit}
\slashed{\Lie}_{\pa_r} (r^{-2}\ga\i)&=2r^{-2}\ah\i\Psi_4\i+\ah\i\Psi_1\i(r^{-2}\ga\i)+2\Up^\frac 12 \Psi_3\i r^{-1} (r^{-2}\ga\i).
\eea
We now define the metric 
\bea\lab{eq:def-g-limit}
g\i:=(\ah\i)^2 dr^2+\ga\i \text{ on $\Si=(r_0,\infty) \times \mathbb{S}^2$}.
\eea
This provides a choice of the triad $\{N\i,e_a\i\}_{a=1,2}$. We then define the   ``second fundamental form''  $k\i$ through its components:
\begin{equation}\lab{eq:def-k-infty}
k\i(e_a\i,e_b\i):=(\Psi_8\i)_{ab}+\frac 12 \Psi_7\i \de_{ab},\quad k\i(N\i,e_a\i):=(\Psi_9\i)_a,\quad k\i(N\i,N\i):=\Psi_{10}\i.
\end{equation}

\subsubsection{The limit $(g\i,k\i)$ verifies the constraint equation}
It remains to show that $(g\i,k\i)$ solves the constraint equation \eqref{ece}. To prove this, we first need several observations listed in the following lemma:
\begin{lemma}\lab{eq:lemma-limit-properties}
The following statements hold true:
\begin{enumerate}
\item The horizontal tensors $\Psi_4\i$ and $\Psi_8\i$ are traceless with respect to $\ga\i$.
\item\lab{statement:2} With respect to the metric $g\i:=(\ah\i)^2 dr^2+\ga\i$, the quantities $\Psi_4\i$ and $\Psi_1\i+2\Up^\frac 12 r^{-1}$ are exactly the traceless part and the trace of the second fundamental form of the $r$-spheres. We hence denote $\thh\i=\Psi_4\i$ and $\trth\i=\Psi_1\i+2\Up^\frac 12 r^{-1}$ without ambiguity.
\item We have $\Psi_5\i=-\nabh\i (\log\ah\i)$, $\mu_{\ell\geq 1}\i=0$, and the average of $\Psi_3\i+\frac 12 \Up^{-1} r \Psi_1\i$ vanishes with respect to $\ga\i$.
\item For the quantity defined in \eqref{eq:limit-tilde-B}, we have $\widetilde\B_{\ell\leq 1} \i =\B_{\ell\leq 1} \i$, the latter being the first component of $\Psi_{11}\i\in \sk_0$.
\item Denote by $Y(g\i)$ the horizontal tensor $Y$ with respect to $g\i$ defined through \eqref{eq:def-curvature-components-Si}. Then we have $\Psi_6\i=Y(g\i)$.
\item Denote the Gauss curvature of $\ga\i$ by $K(\ga\i)$. Then we have $\Psi_2\i=K(\ga\i)-r^{-2}$.
\end{enumerate}
\end{lemma}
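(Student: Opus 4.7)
My plan is to verify each of the six statements as a compatibility relation between the limiting iteration equations and the intrinsic geometry of $(g\i,k\i)$, exploiting uniqueness for Hodge systems on the $2$-sphere and for $\pa_r$-transport equations with prescribed decay at $r=\infty$. For statement (1), at every step $\Psi_4\nn$ and $\ah\n\Psi_8\nn$ are produced by equations of the form $\d_1\n\d_2\n(\cdot)=(F,\dual F)$; by Corollary~\ref{cor:solvability} the resulting symmetric $2$-tensor lies in $\ss_2(S,\ga\n)$, hence is $\ga\n$-traceless, and this passes to the limit. For (2), I would compare the limiting metric transport \eqref{eq:metric-limit} with the direct computation of the second fundamental form of the $r$-spheres in $g\i=(\ah\i)^2\,dr^2+\ga\i$, using $\th^{g\i}=\tfrac{1}{2}(\ah\i)^{-1}\LL_{\pa_r}\ga\i$ and the identity \eqref{eq:projected-Lie-equals-covariant} for $\Ls_{\pa_r}$; the tracelessness of $\Psi_4\i$ from (1) then unambiguously forces $\trth\i=\Psi_1\i+2\Up^{\frac 12}r^{-1}$ and $\thh\i=\Psi_4\i$. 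For (3), the two components of \eqref{eq:iteration-Delta-ah-ii} read $\curlh\i\Psi_5\i=0$ and $\divh\i\Psi_5\i=-\Up^{\frac 12}\laph\i\ao\i+\laph\i(\Ga_0\i\cdot\Ga_0\i)$; expanding $\log\ah\i=\log(\Up^{-\frac 12}+\Psi_3\i)$ shows that $\log\ah\i-\Up^{\frac 12}\ao\i$ equals a sphere-constant plus a quadratic term of schematic type $\Ga_0\i\cdot\Ga_0\i$, and injectivity of $\d_1\i$ on mean-free $1$-forms identifies $\Psi_5\i=-\nabh\i(\log\ah\i)$. Inserting this into \eqref{eq:iteration-divP-ii} and comparing with \eqref{eq:muc-n} gives $\muc_{\ell\geq 1}\i=0$, hence $\mu_{\ell\geq 1}\i=0$, while \eqref{eq:iteration-average-a-ii} is the stated average identity.

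For (4), I project the limiting Codazzi equation \eqref{eq:iteration-h-dualh-ii} onto $\ell\leq 1$; since $\Bb,\Bbd$ are supported on $\ell\geq 2$, the first component reads
\[
\bigl(\mathcal P_1(\d_1\i\d_2\i\Psi_4\i)\bigr)_{\ell\leq 1}=\tfrac 12(\laph\i\Psi_1\i)_{\ell\leq 1}-\B_{\ell\leq 1}\i,
\]
which together with \eqref{eq:limit-tilde-B} yields $\widetilde\B_{\ell\leq 1}\i=\B_{\ell\leq 1}\i$. For (5), once (2) gives $\thh\i=\Psi_4\i$, the Codazzi identity \eqref{eq:unconditional-Codazzi} reads $Y(g\i)=\tfrac 12\nabh\i\trth\i-\divh\i\Psi_4\i$; applying $\d_1\i$, noting that $\nabh\i(2\Up^{\frac 12}r^{-1})=0$ so $\laph\i\trth\i=\laph\i\Psi_1\i$, and substituting \eqref{eq:iteration-h-dualh-ii} produces $\divh\i Y(g\i)=\Bb+\B_{\ell\leq 1}\i$ and $\curlh\i Y(g\i)=\Bbd+\Bd_{\ell\leq 1}\i$. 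Meanwhile, \eqref{eq:iteration-Y-ii} gives the same system for $\Psi_6\i$ modulo subtracted spherical means. Since $\divh\i$ and $\curlh\i$ of horizontal $1$-forms annihilate the $\ga\i$-spherical mean, the subtracted means in \eqref{eq:iteration-Y-ii} must themselves vanish; then $\Psi_6\i-Y(g\i)$ lies in the kernel of $\d_1\i$, which on a topological $2$-sphere contains only the zero $1$-form, so $\Psi_6\i=Y(g\i)$.

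Statement (6) will be obtained by showing that $\Psi_2\i$ and $K(\ga\i)-r^{-2}$ satisfy the same $\pa_r$-transport equation with matching decay at $r=\infty$ and then invoking uniqueness for this first-order linear ODE on each spherical mode. The equation for $\Psi_2\i$ is the $n\to\infty$ limit of \eqref{eq:iteration-Kc-ii}, while the one for $K(\ga\i)$ is the geometric Bianchi identity \eqref{eq:unconditional-Bianchi} applied to the $r$-foliation of $g\i$. Substituting $\ah\i$, $\trth\i=\Psi_1\i+2\Up^{\frac 12}r^{-1}$, $\thh\i=\Psi_4\i$, $\pp\i=\Psi_5\i=-\nabh\i(\log\ah\i)$, $Y(g\i)=\Psi_6\i$ (from (2), (3), (5)), the gauge relation $\mu_{\ell\geq 1}\i=0$, and the $\ell\leq 1$ identification from (4), then subtracting the Schwarzschild value $r^{-2}$, one should recover precisely \eqref{eq:iteration-Kc-ii}. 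The main obstacle is this final bookkeeping: the iteration equation was derived in Proposition~\ref{prop:linearized-eqns-time-symmetry} by collecting many quadratic corrections into the schematic slot $\Ga_1\cdot\Ga_2$, so one has to unfold the schematic notation and check that every such term is reproduced with the correct coefficient when the Bianchi identity is expanded in the frame $\{N\i,e_a\i\}$. Because the schematic notation is organized around the Schwarzschild splits $\ah\i=\Up^{-\frac 12}+\Psi_3\i$ and $\ga\i=\gz+(\ga\i-\gz)$, one must track these splits consistently through each nonlinear factor; this is where the argument is most delicate, though tedious rather than conceptually hard, and once (6) is in hand the fact that $(g\i,k\i)$ solves \eqref{ece} will follow as a direct corollary from the equations satisfied by the limiting quantities.
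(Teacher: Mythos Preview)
Your proposal is correct and follows essentially the same route as the paper for statements (1)--(5): tracelessness passes to the limit; the metric transport equation identifies $\th^{g\i}$; the precise form of the nonlinear term in \eqref{eq:iteration-Delta-ah-ii} together with injectivity of $\d_1\i$ gives $\Psi_5\i=-\nabh\i(\log\ah\i)$, and then \eqref{eq:iteration-divP-ii} yields $\mu_{\ell\geq 1}\i=0$; projecting \eqref{eq:iteration-h-dualh-ii} to $\ell\leq 1$ gives (4); and the Codazzi identity plus injectivity of $\d_1\i$ gives (5).

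For (6) you also have the right strategy (compare the two transport equations and invoke uniqueness with decay at infinity), but you over-anticipate the difficulty. You do \emph{not} need to unfold the schematic $\Ga_1\cdot\Ga_2$ terms and match them coefficient by coefficient. The key observation the paper uses is that the iteration equation \eqref{eq:iteration-Kc-ii} was obtained \emph{from} the unconditional equation \eqref{eq:R-transport-Kc} by the very derivation in Proposition~\ref{prop:linearized-eqns-time-symmetry}; hence the schematic slots in both equations carry literally the same algebraic expression (see Remark~\ref{rem:eq-Psi-2-nonlinear-terms} for which quantities actually occur). By (2), (3), (4), (5) every ingredient of those expressions has already been identified with its geometric counterpart for $g\i$, with the sole exception of $\Kc$ itself. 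Subtracting the two equations therefore yields directly
\[
\pa_r\bigl(\Kc\i-\Kc(\ga\i)\bigr)=-2r^{-1}\bigl(\Kc\i-\Kc(\ga\i)\bigr)+\Ga_1\i\cdot\bigl(\Kc\i-\Kc(\ga\i)\bigr),
\]
where the $-2r^{-1}$ arises from the explicit $-3r^{-1}$ together with the $r^{-1}\muc$ contribution (since $\muc$ contains $K$ linearly). With $\Ga_1\i=O(\eps r^{-2-\de})$ and $r^2(\Kc\i-\Kc(\ga\i))\to 0$, integration from infinity gives the result. Note also that this is a scalar transport equation in $r$, not one that decouples mode-by-mode; the phrase ``on each spherical mode'' should be replaced by a Gronwall/transport-lemma argument in an $\H^s(S_r)$ or $L^\infty(S_r)$ norm.
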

\begin{proof}
See Section \ref{sect:proof-lemma-limit-properties}. 
\end{proof}

\begin{proposition}\lab{prop:limit-solves-constraint}
The data $(g\i,k\i)$ solves the Einstein constraint equation \eqref{ece}.
\end{proposition}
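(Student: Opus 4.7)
The plan is, first, to invoke Lemma~\ref{eq:lemma-limit-properties} to identify the limiting unknowns $\Psi\i$ with the intrinsic Ricci and curvature quantities of $(g\i,k\i)$: parts (1)--(6) guarantee that $\thh\i,\trth\i,\ah\i,K(\ga\i),Y(g\i),\pp\i,\thc\i,\Kc\i,\ao\i$ coincide with the corresponding expressions from Section~\ref{sect:HorizDecompSi} applied to $(g\i,k\i)$, and that the components in \eqref{eq:def-k-infty} are genuinely $\Th\i,\Xi\i,\Pi\i$. Once this identification is in place, the unconditional equations of Propositions~\ref{prop:Unconditional-equations-1} and~\ref{prop:constraint-equation-in-frame} apply verbatim to $(g\i,k\i)$. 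The strategy is then to derive each of $\CC_{Ham}(g\i,k\i)=0$, $(\CC_{Mom})_N(g\i,k\i)=0$, and $\slashed{\CC}_{Mom}(g\i,k\i)=0$ by subtracting a suitable limiting iteration equation from the corresponding unconditional equation: both hold, and their difference is precisely a multiple of the relevant constraint quantity.

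For the Hamiltonian constraint, I would compare the limit \eqref{eq:iteration-kac-ii}, multiplied by $\ah\i$ to convert $\pa_r$ into $\nabh_{N\i}$ acting on $\trth\i$, with the unconditional structure equation \eqref{eq:structure-constraint-H}. Substituting the definition of $\muc\i$ from \eqref{eq:muc-n}, using the gauge condition $\mu_{\ell\ge 1}\i=0$ (part (3) of the lemma), and employing the Gauss identity \eqref{eq:unconditional-Gauss} to reconcile $K\i$ with $\Riem_{abab}$, the two equations differ by exactly $-\tfrac12\ah\i\CC_{Ham}(g\i,k\i)$, forcing its vanishing. Analogously, for $(\CC_{Mom})_N(g\i,k\i)=0$, I would compare the limit of \eqref{eq:iteration-trt-ii} with \eqref{eq:structure-constraint-Phi-R}; the gauge condition $\nu\i=\divh\i\Xi\i=0$, which holds since the right-hand side of \eqref{eq:iteration-Xi-ii} is constructed in $\sk_0$ with vanishing first component, together with the geometric identification $\Pi\i=\Psi_{10}\i$, yields $-\ah\i(\CC_{Mom})_N(g\i,k\i)=0$ directly.

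The tangential momentum constraint $\slashed{\CC}_{Mom}(g\i,k\i)$ is the most delicate, since no iteration equation targets $\nabh_N\Xi$ directly; only its divergence and curl are controlled. The plan is to commute the unconditional equation \eqref{eq:structure-constraint-Phi-a} with $\divh\i$ and $\curlh\i$ and compare the results against the limiting iteration equations \eqref{eq:iteration-Pi-ii}--\eqref{eq:iteration-Xi-ii}, using \eqref{eq:iteration-Thh-ii} to express $\divh\i\divh\i(\ah\i\kh\i)$ in terms of $\laph\i\trt\i$, $\laph\i(\ah\i\Pi\i)$, and the free scalars $\Kk,\Kkd$. The $\ell\ge 2$ modes of $\divh\i(\ah\i\slashed{\CC}_{Mom})$ and $\curlh\i(\ah\i\slashed{\CC}_{Mom})$ cancel by construction, while the $\ell\le 1$ modes are eliminated using parts (4)--(5) of Lemma~\ref{eq:lemma-limit-properties} together with \eqref{eq:limit-tilde-B}. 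Injectivity of $\d_1\i$ modulo constants on $\mathbb{S}^2$ then gives $\ah\i\slashed{\CC}_{Mom}=0$ up to a constant horizontal $1$-form, which in turn vanishes since $\slashed{\CC}_{Mom}$ is genuinely mean-zero as a $1$-form.

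The main obstacle I foresee is precisely the $\ell\le 1$ bookkeeping in the last step: verifying that the auxiliary substitutions \eqref{eq:heuristic-substitution-ell=1-B}--\eqref{eq:heuristic-substitution-ell=1-K} and their nonlinear refinements \eqref{eq:tilde-BB}, \eqref{eq:widetilde-K-ell-leq-1} --- which the iteration uses to close the system at low modes --- are indeed consistent at the limit with the true geometric Codazzi identity \eqref{eq:unconditional-Codazzi} and with the $\divh Y$ identity for $(g\i,k\i)$. Without this consistency, one cannot rule out residual $\ell\le 1$ contributions to $\slashed{\CC}_{Mom}$. Here the identifications $\widetilde\B_{\ell\le 1}\i=\B_{\ell\le 1}\i$ and $\Psi_6\i=Y(g\i)$ from parts (4)--(5) of Lemma~\ref{eq:lemma-limit-properties} are the crucial inputs, and their careful deployment is what renders the argument complete.
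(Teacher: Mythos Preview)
Your overall strategy matches the paper's: after invoking Lemma~\ref{eq:lemma-limit-properties} to identify the limiting quantities with the intrinsic geometric data of $(g\i,k\i)$, you subtract each limiting iteration equation from the corresponding unconditional equation of Proposition~\ref{prop:constraint-equation-in-frame}, and the difference isolates the relevant constraint quantity. For $\CC_{Ham}$ and $(\CC_{Mom})_N$ this works essentially as you outline (the paper compares with \eqref{eq:R-transport-kac} and \eqref{eq:N-trt-linearized} rather than \eqref{eq:structure-constraint-H} and \eqref{eq:structure-constraint-Phi-R} directly, but these are equivalent after undoing the derivation in Appendix~\ref{appendix-derivation-linearized-equations}).

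For $\slashed{\CC}_{Mom}$ your plan is again the paper's, but your handling of the low-mode residual misfires. Parts (4)--(5) of Lemma~\ref{eq:lemma-limit-properties} concern $\B_{\ell\le 1}$ and $Y$, which live in the spatial (Gauss--Codazzi) sector and are not invoked at all for the momentum constraint; there is no analogous lemma statement for $\K_{\ell\le 1}$ that you could cite. What actually happens is simpler: comparing \eqref{eq:N-div-Xi}--\eqref{eq:N-curl-Xi} (applied to $(g\i,k\i)$) with the two components of \eqref{eq:iteration-Thh-ii}, using \eqref{eq:iteration-Pi-ii} and the $\pa_r$-derivative of the second component of \eqref{eq:iteration-Xi-ii}, one finds that $\divh\i(\ah\i\slashed{\CC}_{Mom}\i)$ and $\curlh\i(\ah\i\slashed{\CC}_{Mom}\i)$ each equal an expression that is \emph{constant on the sphere}. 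Taking the $\ga\i$-average of both sides then forces these constants to vanish, since the divergence and curl of a $1$-form integrate to zero. No $\ell\le 1$ bookkeeping from the lemma is needed. Finally, $\d_1\i$ is genuinely injective on horizontal $1$-forms on $\mathbb{S}^2$ --- there is no ``modulo constants'' caveat and no nontrivial constant $1$-form --- so $\slashed{\CC}_{Mom}\i=0$ follows immediately.
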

\begin{proof}
This follows from comparing the equations \eqref{eq:iteration-kac-ii}, \eqref{eq:iteration-trt-ii}, \eqref{eq:iteration-Thh-ii} with the unconditional equation \eqref{eq:R-transport-kac}, \eqref{eq:N-trt-linearized}, \eqref{eq:N-div-Xi}, \eqref{eq:N-curl-Xi}, along with the statements in Lemma \ref{eq:lemma-limit-properties}. We leave the details to Section \ref{subsect:proof-limit-solves-constraint}.

\subsection{Conclusions}\lab{sec:conclusions}
We have proved that $(g\i,k\i)$ solves the Einstein constraint equation \eqref{ece}, and under the ambient $r$-foliation, the corresponding geometric quantities satisfy the following estimate:
\beaa
&&r^{-1} ||\ah\i-\Up^{-\frac 12}||_{\H^{s+2}(S_r)}\lesssim \eps r^{-1-\de},\quad r^{-1} ||\ga\i-\gz||_{\H^{s+1}(S_r)}\lesssim \eps r^{-1-\de}, \\
&& r^{-1} || \thc\i,\thh\i,p\i,\trt\i,\kh\i,\Xi\i,\Pi\i||_{\H^{s+1}(S_r)}\lesssim \eps r^{-2-\de}, \\
&& r^{-1} || \Kc\i, Y\i||_{\H^{s}(S_r)} \lesssim \eps r^{-3-\de}.
\eeaa
Moreover, in the proof of Lemma \ref{eq:lemma-limit-properties} and Proposition \ref{prop:limit-solves-constraint}, we obtain the relations
\beaa
&&(\divh\i Y\i)_{\ell\geq 2}=\Bb,\quad (\curlh\i Y\i)_{\ell\geq 2}=\Bbd, \\
&& (\laph\i (\ah\i \Pi\i))_{\ell\geq 2}=\Kk,\quad r^{-4} \pa_r (r^4 (\curlh\i\Xi\i))_{\ell\geq 2}=\Kkd,\\
&& \mu_{\ell\geq 1}\i=0,\quad \divh\i\Xi\i=0.
\eeaa
We also have the following limits
\beaa
\lim_{r\to\infty} r^3(\thc\i)_{\ell=1,i}=\cc_i,\quad \lim_{r\to\infty} r^2(\trt\i)_{\ell=1,i}=0,\quad \lim_{r\to\infty} r^4(\curlh\i\Xi\i)_{\ell=1,i}={\bf a}_i,
\eeaa
and hence, in view of Proposition \ref{prop:relation-ADM-charges}, proves \eqref{eq:main-thm-ADM-charges} regarding the ADM charges.

\section{Details in the proof of the main theorem}

\subsection{Proof of Proposition \ref{prop:solve-unknowns-kac-Kc}}\lab{section-nonlocal-eqn-Kc-kac-round}

We first outline the main ideas  in the  proof of the Proposition.
\begin{enumerate}
\item Since $\tilde\ga$ is in general not round, $\laph_{\tilde\ga}^{-1}$ mixes the different modes. In Section \ref{section-nonlocal-eqn-Kc-kac-perturbed}, we show that, for any scalar field $\phi$,
\beaa
(\laph^{-1}_{\tilde\ga} \phi)_{\ell,\mm}=-\frac {r^2}{\ell(\ell+1)} (r^{-2} \laph\0 \laph^{-1}_{\tilde\ga} \phi)_{\ell,\mm}=-\frac {r^2}{\ell(\ell+1)} (\phi_{\ell,\mm}+\RR(\phi)_{\ell,\mm}),
\eeaa
where $\RR\colon \H^s\to \H^s$ is a linear operator satisfying $||\RR(\phi)||_{\H^s(S_r)}\lesssim \eps_1 r^{-1-\de} ||\phi||_{\H^s(S_r)}$.
\item We study the projection into spherical harmonic modes $J_{\ell,\mm}$ based on the background coordinates. Each mode satisfies the system of the form, for $\u\in\mathbb{R}^2$,
\beaa
\pa_r \u=r^{-1} A(r) \u+r^{-2} B(r) \u+\F,
\eeaa
with a vanishing condition at infinity. It is crucial for the first matrix $A(r)$ on the right to  be  accretive\footnote{i.e., $\langle A\v, \v\rangle_H+\langle \v,A\v\rangle_H\geq 0$ for all $\v\in\mathbb{R}^2$ for some inner product $H$.} for some inner product over $\mathbb{R}^2$. Under this assumption, we first provide a version of Duhamel formula in Lemma \ref{lem:Duhamel-round-case} in Section \ref{sec:Duhamel-formula}. The equations written in modes are derived in Section \ref{sec:equations-in-modes}.
\item In Section \ref{sec:solution-operators-in-modes}, we study the equations projected into different modes. 
\begin{itemize}
\item For $\ell\geq 2$, the equation reads 
 \beaa
\pa_r\begin{pmatrix}
  r^2(\Psi_1)_{\ell,\mm} \\
  r^3 (\Psi_2)_{\ell,\mm}
\end{pmatrix}&=& r^{-1} \begin{pmatrix}
  -\frac{2 }{\ell(\ell+1)} & \frac {2}{\ell(\ell+1)}  \\
  -\frac{2}{\ell(\ell+1)} & \frac{2}{\ell(\ell+1)}
\end{pmatrix} \begin{pmatrix}
  r^2(\Psi_1)_{\ell,\mm} \\
  r^3(\Psi_2)_{\ell,\mm}
\end{pmatrix}+\begin{pmatrix}
  r^2(F_1)_{\ell,\mm} \\
  r^3(F_2)_{\ell,\mm}
\end{pmatrix}+l.o.t.,
\eeaa
where the first matrix on the right is a nilpotent matrix, in particular, not accretive. To deal with this, we consider instead the unknown $\begin{pmatrix}
  r^{2+\de'}(\Psi_1)_{\ell,\mm} \\
  r^{3+\de'} (\Psi_2)_{\ell,\mm}
\end{pmatrix}$ with $0<\de'<\de$, so that the first matrix of the new system becomes, as is shown in Lemma \ref{eq:lemma-Lyapunov-matrix}, positive definite under a certain inner product over $\mathbb{R}^2$ for all $\ell\geq 2$. This verifies the condition of Lemma \ref{lem:Duhamel-round-case} and allows us to construct the solution to such a system. 
\item The corresponding analysis of the matrix for $\ell\leq 1$ parts is easier by incorporating appropriate $r$-weights. Note that, however, as has already appeared in Section \ref{subsect:ell-0-1-constraints}, the $\ell=1$ part contains a non-zero center-of-mass tail $\cc_{\mm} J_{1,\mm} r^{-3}$ that has to be subtracted from $(\Psi_1)_{1,\mm}$.
\end{itemize}
\item The system \eqref{eqns:nonlocal-eqn-Kc-kac} can now be rewritten in the form
\bea\lab{eq:eqn-v-outline}
\pa_r \v_{\ell,\mm} =r^{-1} A_\ell \v_{\ell,\mm}+r^{-2}B_\ell(r)\v_{\ell,\mm}+\F_{\ell,\mm} +\RR^{new}_{\ell,\mm}(\v),
\eea
where $\RR^{new}$ is an appropriate weighted\footnote{with the weight depending on $\ell$.} version of $\RR$. Due to different weights for different $\ell$, one can only expect $\RR^{new}$ to satisfy a relaxed uniform estimate, and it is important that this still provides enough $r$-decaying weights.
In Section \ref{sec:summing-up}, we use \eqref{eq:eqn-v-outline} to prove the existence and uniqueness of the solution by the contraction argument. 
\end{enumerate}

\subsubsection{The perturbed metric and the $\RR$ operator}\lab{section-nonlocal-eqn-Kc-kac-perturbed}
Recall that the assumption on the given perturbed metric $\tilde\ga$ in \eqref{eq:bound-condition-nonlocal-metric} reads
\bea\lab{eq:bound-ga-tilde}
\sup_{r\in [r_0,\infty)} r^{\de} ||\tilde\ga-\gz||_{\H^{s+1}(S_r)}\lesssim \eps_1,
\eea
where $\eps_1$ is a small constant to be determined.

We need to deal with the fact that the operator $\laph_{\tilde\ga}^{-1}$ mixes different modes. For $\ell\geq 1$, we write 
\bea\lab{eq:def-RR-l-m}
(r^{-2} \laph_{\tilde\ga}^{-1} \phi)_{\ell,\mm}=-\frac {1}{\ell(\ell+1)} (\lapz \laph_{\tilde\ga}^{-1} \phi)_{\ell,\mm}=-\frac {1}{\ell(\ell+1)} (\phi_{\ell,\mm}+\RR_{\ell,\mm}(\phi)),
\eea
and for $\ell=0$, we write, schematically,
\bea\lab{eq:def-RR-0}
\left((r^{-2} \laph^{-1} \phi)_{\ell=0},\; \overline{\phi}^{\tilde\ga}-\overline{\phi}^{\gz}\right) = \RR_{\ell=0} (\phi).
\eea
\begin{definition}
\lab{Def:RR-operator}
The linear operator $\RR$ is defined by $\RR(\phi):=\sum_{\ell=0}^\infty \sum_{\mm=-\ell}^\ell \RR_{\ell,\mm}(\phi) J_{\ell,\mm}$.
\end{definition}
\begin{proposition}\lab{prop:bound-RR}
The linear operator $\RR$ satisfies the bound
\beaa
||\RR (\phi) ||_{\H^s(S_r)} \lesssim \eps_1 r^{-1-\de} ||\phi||_{\H^s(S_r)}.
\eeaa
\end{proposition}
\begin{proof}
Since $J_{\ell,\mm}$ and $r^{-2}\gz$ are independent of $r$, by the definition \eqref{eq:def-l-m-modes} of the modes, it remains true that $(\pa_r\phi)_{\ell,\mm}=\pa_r(\phi_{\ell,\mm})$. Since for $\ell\geq 1$, $\lapz J_{\ell,\mm}=-\frac{\ell(\ell+1)}{r^2} J_{\ell,\mm}$,
we have
\bea\lab{eq:projection-inverse-laplacian}
\bsplit
(\laph^{-1}_{\tilde\ga} \phi)_{\ell,\mm} &=  \int_{S_r} (\laph^{-1}_{\tilde\ga} \phi) J_{\ell,\mm}\, d\vol_{\gs}=\int_{S_r} (\laph^{-1}_{\tilde\ga} \phi) \left(-\frac {r^2}{\ell(\ell+1)}\right)\lapz J_{\ell,\mm}\, d\vol_{\gs} \\
&=  -\frac {r^2}{\ell(\ell+1)} \int_{S_r} (\lapz \laph^{-1}_{\tilde\ga} \phi) J_{\ell,\mm}\, d\vol_{\gs}=-\frac {r^2}{\ell(\ell+1)} (\lapz \laph^{-1}_{\tilde\ga} \phi)_{\ell,\mm}.
\end{split}
\eea
Then, combining \eqref{eq:projection-inverse-laplacian} with the definition \eqref{eq:def-RR-l-m}, we have 
\beaa
\RR(\phi)_{\ell,\mm}=-\phi_{\ell,\mm}-\frac{\ell(\ell+1)}{r^2} (\laph_{\tilde \ga}^{-1} \phi)_{\ell,\mm}= (\lapz \laph^{-1}_{\tilde\ga} \phi)_{\ell,\mm} -\phi_{\ell,\mm}.
\eeaa
We then apply Lemma \ref{lemma:estimate-inverse-Laplacian}, as well as Lemma \ref{lem:l=0-RR} for the $\ell=0$ part defined in \eqref{eq:def-RR-0}, to obtain, using \eqref{eq:Hs-norm-modes-new},
\beaa
||\RR (\phi) ||_{\H^s}^2 = r^2 \left(|\RR_{\ell=0} (\phi)| +\sum_{\ell=1}^\infty \sum_{\mm=-\ell}^\ell (1+\ell^2)^s |(\RR(\phi))_{\ell,\mm} |^2\right) \lesssim (\eps_1 r^{-1-\de})^2 ( ||\phi||_{L^2(S_r)}^2 +  ||\phi||_{\H^s(S_r)}^2).
\eeaa
Therefore, it remains to prove Lemma \ref{lemma:estimate-inverse-Laplacian} and Lemma \ref{lem:l=0-RR}.
\end{proof}

\begin{lemma}\lab{lemma:estimate-inverse-Laplacian}
Assume \eqref{eq:bound-ga-tilde} holds. For any integer $s\geq 0$ and scalar field $\phi$, we have
\bea\lab{eq:estimate-inverse-Laplacian}
||\lapz \laph_{\tilde\ga}^{-1}\phi -\phi_{\ell\geq 1}||_{\H^s(S_r)} 
\lesssim \eps_1 r^{-1-\de}  ||\phi||_{\H^s(S_r)}.
\eea
Note from footnote \ref{ft:inverse-laph} that the domain of $\laph_{\tilde\ga}^{-1}$ is extended through
$\laph_{\tilde\ga}^{-1} \phi:= \laph^{-1}_{\tilde\ga} (\phi-\overline{\phi}^{\tilde\ga})$.
\end{lemma}
\begin{proof}
We write $r^2\laph_{\tilde\ga}=r^2\lapz+H$. In view of Remark \ref{rem:inverse-metric}, $H$ is of the form
\beaa
H \phi=O(\tilde\ga-\gz) \cdot (r\nabz)^2 \phi+O(r\nabz \tilde\ga ) \cdot r\nabz \phi.
\eeaa
Therefore, applying the $(r\nabz)$ derivatives  $s$ times, using Definition \ref{def:L2-Hs-S_r}, we obtain, by standard $L^2$-$L^\infty$ estimates, for  $s \ge  3$,
\beaa
||H \phi||_{\H^{s}(S_r) } &\lesssim & ||(r\nabz)^{\leq s}\left(O(\tilde\ga-\gz) \cdot (r\nabz)^2 \phi+O(r\nabz \tilde\ga) \cdot r\nabz \phi\right)||_{L^2(S_r)} \\
&\lesssim & r^{-1} ||\tilde\ga- \gz||_{\H^{s+1}(S_r)} ||\phi||_{\H^{s+2}(S_r)}\\
&\lesssim & \eps_1 r^{-1-\de} ||\phi||_{\H^{s+2}(S_r)}.
\eeaa
We have the identity
\beaa
\lapz \laph_{\tilde\ga}^{-1} \phi- \phi_{\ell\geq 1} &=& (r^2\lapz) (r^2\laph_{\tilde\ga})^{-1} \phi- \phi_{\ell\geq 1}=(r^2\laph_{\tilde \ga}-H)(r^2\laph_{\tilde\ga})^{-1} \phi-\phi+\overline{\phi}^{\gz}\\
&=&(\phi-\overline{\phi}^{\tilde \ga})-H(r^2\laph_{\tilde\ga})^{-1} \phi-\phi+\overline{\phi}^{\gz}\\
&=& (\overline{\phi}^{\tilde \ga}-\overline{\phi}^{\gz}) -  H(r^2\laph_{\tilde\ga})^{-1} \phi.
\eeaa
Therefore,  
\bea\lab{eq:estimate-lapz-laph-inverse-long}
\bsplit
||\lapz \laph^{-1} \phi- \phi_{\ell\geq 1} ||_{\H^s(S_r)} &\lesssim  ||\overline{\phi}^{\tilde \ga}-\overline{\phi}^{\gz}||_{L^2(S_r)} +  ||H(r^2\laph_{\tilde\ga})^{-1} \phi||_{\H^s(S_r)}\\
&\lesssim  
|| \tilde \ga- \gz ||_{L^\infty(S_r)} ||\phi||_{L^2(S_r)} + \eps_1 r^{-1-\de} ||r^{-2} \laph_{\tilde\ga}^{-1} \phi||_{\H^{s+2}(S_r)} \\
&\lesssim \eps_1 r^{-1-\de} \left(||r^{-2} \laph_{\tilde\ga}^{-1} \phi||_{\H^{s+2}(S_r)}+||\phi||_{L^2(S_r)}\right).
\end{split}
\eea
It then remains to estimate $|| r^{-2} \laph_{\tilde\ga}^{-1} \phi ||_{\H^{s+2}(S_r)}$. Notice that the estimate \eqref{eq:estimate-lapz-laph-inverse-long} in fact implies
\beaa
||\lapz \laph_{\tilde\ga}^{-1} \phi||_{\H^s(S_r)} \lesssim  ||\phi_{\ell\geq 1} ||_{\H^s(S_r)}+\eps_1 r^{-1-\de} \left(||r^{-2} \laph_{\tilde\ga}^{-1} \phi||_{\H^{s+2}(S_r)}+||\phi||_{L^2(S_r)}\right).
\eeaa
Sttandard elliptic estimates for $r^2\lapz$  imply $||r^{-2}\laph_{\tilde\ga}^{-1} \phi||_{\H^{s+2}(S_r)}\lesssim ||\phi||_{\H^s(S_r)}+\eps_1 r^{-1-\de} ||r^{-2}\laph_{\tilde\ga}^{-1} \phi||_{\H^{s+2}(S_r)}$, hence $||r^{-2}\laph_{\tilde\ga}^{-1} \phi||_{\H^{s+2}(S_r)}\lesssim ||\phi||_{\H^s(S_r)}$. Plugging this back to \eqref{eq:estimate-lapz-laph-inverse-long}, we obtain the desired estimate.
\end{proof}

\begin{lemma}\lab{lem:l=0-RR}
Suppose that \eqref{eq:bound-ga-tilde} holds. We have
\beaa
|(r^{-2} \laph_{\tilde\ga}^{-1} \phi)_{\ell=0}| , |\overline{\phi}^{\tilde\ga}-\overline{\phi}^{\gz}| \lesssim \eps_1 r^{-1-\de} (r^{-1} ||\phi||_{L^2(S_r)}).
\eeaa
\end{lemma}
\begin{proof}
We have 
\beaa
r^{-2} \int_{S_r} (r^{-2} \laph_{\tilde\ga}^{-1} \phi) \, d\vol_{\gz}& =& r^{-2}\int_{S_r} (r^{-2} \laph_{\tilde\ga}^{-1}) \phi\, (d\vol_{\gz}-d\vol_{\tilde\ga}) \\
&\lesssim & ||r^{-2} \laph_{\tilde\ga}^{-1}\phi||_{L^\infty(S_r)} \Big|r^{-2} \int_{S_r} (\sqrt{\det (\tilde\ga_{ab})}-1) \in_{ab}\0 \Big| \\
& \lesssim & ||\tilde\ga- \gz||_{L^\infty(S_r)} ||r^{-2} \laph_{\tilde\ga}^{-1}\phi||_{L^\infty(S_r)} \lesssim \eps_1 r^{-1-\de} (r^{-1} ||\phi||_{L^2(S_r)}),
\eeaa
where $\in_{ab}=(d\vol_{\gz})_{ab}$ denotes the volume form of $\gz$, which satisfies $\int_{S_r} \in\0 =4\pi r^2$.
The estimate for $|\overline{\phi}^{\tilde\ga}-\overline{\phi}^{\gz}|$ is similar, and in fact we have already used it in the proof of Lemma \ref{lemma:estimate-inverse-Laplacian}.
\end{proof}

\subsubsection{Duhamel's formula, accretiveness of matrices}\lab{sec:Duhamel-formula}
The following Lemma  establishes    a  Duhamel type  representation formula\footnote{This will be applied  to the  specific modes of  the system \eqref{eqns:nonlocal-eqn-Kc-kac}.} for systems of the type 
\eqref{eq:lemma-evolution}.

\begin{lemma}\lab{lem:Duhamel-round-case}
Take an inner product $\langle\cdot,\cdot\rangle_H$ over $\mathbb{R}^2$ independent of $r$. Consider the equation
\bea\lab{eq:lemma-evolution-homogeneous}
\pa_r \u=r^{-1} A(r) \u+r^{-2} B(r) \u
\eea
for $\mathbb{R}^2$-valued vector $\u=\u(r)$. If $N(r)$ is accretive with respect to $H$, i.e., $\langle A\v, \v\rangle_H+\langle \v,A\v\rangle_H\geq 0$ for all $r$ and $\v\in\mathbb{R}^2$, and $B(r)=O(1)$, then the solution operator $U(r,r^*)$ for $r<r^*$, defined through
\bea\lab{eq:def-sol-op-ell}
\pa_r U(r,r^*)=(r^{-1} A(r)+r^{-2}B(r)) U(r,r^*),\quad U(r^*,r^*)=I,
\eea
satisfies $||U(r,r^*)||\leq C$ uniformly for all $r,r^*$ with $r_0\leq r<r^*$.

Moreover, for the inhomogeneous equation
\bea\lab{eq:lemma-evolution}
\pa_r \u=r^{-1} A(r) \u+r^{-2} B(r) \u+\N
\eea
with $\N\in L^1((r_0,\infty),\mathbb{R}^2)$ and the condition  
\bea\lab{eq:lemma-evolution-infinity-condition}
\lim_{r\to \infty} ||\u(r)||_H=0,
\eea
there exists a unique solution $\u\in C^1((r_0,\infty),\mathbb{R}^2 )$ to \eqref{eq:lemma-evolution} satisfying \eqref{eq:lemma-evolution-infinity-condition}. In fact, $\u$ can be expressed as
\bea\lab{eq:Duhamel-linear-1}
\u=-\int_r^\infty U(r,r') \N(r') \, dr'.
\eea
\end{lemma}
\begin{proof}
We first derive the following boundedness estimate
\beaa
\frac{d}{dr} ||U(r,r^*)\u||_{H}^2 &=& \Big\langle U(r,r^*)\u, r^{-1} A U(r,r^*)\u \Big\rangle_H+\Big\langle U(r,r^*)\u, r^{-1} A U(r,r^*)\u \Big\rangle_H+O(r^{-2}) ||U(r,r^*)\u||_H^2  \\
& \geq & - O(r^{-2}) ||U(r,r^*)\u||_H^2, 
\eeaa
where the accretiveness of $A$ is crucially used.
Hence, we have $\frac{d}{dr}\left(\exp\left(-\int_r^{r^*} O(r'^{-2}) \, dr'\right) ||U(r,r^*) \u||_H^2\right)\geq 0$, i.e.,
\bea\lab{eq:Duhamel-boundedness-operator}
||U(r,r^*) \u||_H^2\lesssim ||\u||_H^2 \exp\left(\int_r^{r^*} O(r'^{-2}) \, dr'\right) \lesssim ||\u||_H^2.
\eea
The formula \eqref{eq:Duhamel-linear-1} itself proves the existence of the solution to the inhomogeneous equation \eqref{eq:lemma-evolution}, for which the condition \eqref{eq:lemma-evolution-infinity-condition} is verified using the boundedness \eqref{eq:Duhamel-boundedness-operator} and the integrablity of $\N$. To see the uniqueness, suppose there are two solutions $\u_1$, $\u_2$. Then $\u_1-\u_2$ solves the homogeneous equation, and hence for each $r$, $r'$ with $r<r'$, $(\u_1-\u_2)(r)=U(r,r')(\u_1(r')-\u_2(r'))$. If $|| \u_1(r)- \u_2(r)||=c\neq 0$ for some $r$, then for each $r'>r$, $|| \u_1(r')- \u_2(r')||\gtrsim || \u_1(r)-\u_2(r)||= c>0$, contradicting the covergence $\lim_{s\to \infty} ||\u_j(s)||=0$, $j=1,2$. This proves the uniqueness.
\end{proof}

We will use below the following property of a nilpotent matrix $Q=\begin{pmatrix}
  -1 & 1 \\
  -1 & 1
\end{pmatrix}$. 
\begin{lemma}\lab{eq:lemma-Lyapunov-matrix}
Let $Q=\begin{pmatrix}
  -1 & 1 \\
  -1 & 1
\end{pmatrix}$.
Then, for any given $\de'>0$ and the matrix $A=\de' I+xQ$, there exists a positive-definite matrix $G_{\de'}$ such that the matrix
\beaa
G_{\de'} A+A^T G_{\de'}
\eeaa
is positive-definite for all $x\in (0,1)$. In other words, there exists a positive-definite
inner product $\langle\cdot,\cdot\rangle_{G_{\de'}}$ on $\mathbb{R}^2$ such that the matrix $\frac 12(A+A^*)$ is positive definite with respect to $\langle\cdot,\cdot\rangle_{G_{\de'}}$ for all $x\in (0,1)$, where $A^*$ is the adjoint also with respect to this inner product. In particular, $A$ verifies the accretiveness required in Lemma \ref{lem:Duhamel-round-case}.
\end{lemma}
\begin{proof}
The matrix $A:=\de' I+x Q$ is not symmetric, and its symmetrized matrix is not always positive definite for all $x\in (0,1)$. To deal with this, we consider the following inner product in $\mathbb{R}^2$:
\beaa
\langle \v, \w\rangle_{G_{\de'}}:= \v^T G_{\de'} \w,\quad G_{\de'} :=\begin{pmatrix} 1& -1+(\de')^2 \\ -1+(\de')^2 & 1 \end{pmatrix}.
\eeaa
We compute
\beaa
 Q^T   \begin{pmatrix} 1& -1+(\de')^2 \\ -1+ (\de')^2 & 1 \end{pmatrix} + \begin{pmatrix} 1& -1+(\de')^2 \\ -1+(\de')^2 & 1 \end{pmatrix} Q = \begin{pmatrix} -2(\de')^2 & 0 \\ 0 & 2 (\de')^2 \end{pmatrix}.
\eeaa
Therefore, we have, using $A=\de'I +xQ$,
\beaa
\frac 12(G_{\de'} A+A^T G_{\de'}) &=& \de' G_{\de'}+\frac 12 x (G_{\de'} Q+Q^T G_{\de'}) \\
&=& \begin{pmatrix} \de' -x(\de')^2 & -\de'(1-(\de')^2) \\ -\de'(1-(\de')^2) & \de' +x(\de')^2 \end{pmatrix} =\de' \begin{pmatrix} 1-x \de' & -1+(\de')^2 \\ -1+(\de')^2 & 1 +x\de' \end{pmatrix}.
\eeaa
The last matrix is positive-definite since it is symmetric, $1\pm x\de'>0$, and its determinant is $1-x^2(\de')^2 -1+2(\de')^2-(\de')^4=(2-x^2)(\de')^2 -(\de')^4 >0$ for all $x\in (0,1)$ and $\de\leq 1$. For $\de'>1$ one can simply take $G_{\de'}=I$. This concludes the proof.
\end{proof}
\begin{remark}
The proof is an explicit construction of the solution to the Lyapunov matrix equation; see e.g. \cite{Parks1992} for a historical review.
\end{remark}

\subsubsection{Derivation of the projected equation in modes}\lab{sec:equations-in-modes}
\def\Pc{\widecheck{\Psi}}
We now derive the equations projected into modes. We introduce the notation, with $\cc$,  the prescribed  center-of-mass parameter appeared in \eqref{eq:solve-nonlocal-vanishing-condition}. 
\bea\lab{eq:Psi-1-checked}
\Pc_1 := \Psi_1 - \frac{3}{4\pi}\sum_{i=1}^3 \cc_i \om_i r^{-3},
\eea
where we recall from Remark \ref{rem:omega-i-ell=1} that the functions $\om_i$ only differ from $J_{1,\mm}$ by a constant factor $\sqrt{4\pi/3}$. According to this notation, we have $(\Pc_1)_{\ell\neq 1}=(\Psi_1)_{\ell\neq 1}$, and the last condition in \eqref{eq:solve-nonlocal-vanishing-condition} reads $\lim_{r\to \infty} (\Pc_1)_{\ell=1}=0$.
\begin{proposition}\lab{prop:derivation-eqns-modes}
For the system \eqref{eqns:nonlocal-eqn-Kc-kac}, we denote its components in spherical harmonic modes:
\bea\lab{def:bold-v-components}
\v_{\ell,\mm}=\begin{pmatrix}
  r^{2+\de'} (\Psi_1)_{\ell,\mm} \\
  r^{3+\de'} (\Psi_2)_{\ell,\mm}
\end{pmatrix},\quad \ell=0 \text{ or } \ell\geq 2,\qquad \v_{\ell,\mm}=\begin{pmatrix}
  r^3 (\Pc_1)_{\ell,\mm} \\
  r^{4} (\Psi_2)_{\ell,\mm}
\end{pmatrix},\quad \ell=1,
\eea
where $0<\de'<\de$. 
Then, the system \eqref{eqns:nonlocal-eqn-Kc-kac} is equivalent to the following projected equations into the spherical harmonic modes defined in \eqref{eq:def-l-m-modes}:
\bea\lab{eq:S-inverse-on-v-equation}
\pa_r \v_{\ell,\mm} &=& r^{-1} A_\ell \v_{\ell,\mm}+r^{-2}B_\ell(r)\v_{\ell,\mm}+\F_{\ell,\mm}+O(\ell^{-2}) \RR_{\ell,\mm}\left(r^{2+\de'}\Psi_2,r^{1+\de'} \Pc_1\right),\quad \ell\geq 2, \\
\lab{eq:v-equation-ell=1}
\pa_r \v_{1,\mm}&=& 
r^{-1} A_1 \v_{1,\mm}    +r^{-2} B_1(r) \v_{1,\mm} +\F_{1,\mm}  + O(r^3) \RR_{1,\mm}\left(\Psi_2, r^{-1} \Pc_1\right) ,\\
\lab{eq:v-equation-ell=0}
\pa_r \v_{\ell=0} &=& r^{-1} A_0
\v_{\ell=0}+r^{-2} B_0(r)\v_{\ell=0}+\F_{\ell=0}
+O(r^{2+\de'}) \RR_{\ell=0} \left(\Psi_2,r^{-1}\Pc_1\right).
\eea
Here,
\beaa
A_\ell=\begin{pmatrix}
  \de'-\frac 2{\ell(\ell+1)} & \frac 2{\ell(\ell+1)}  \\
  -\frac 2{\ell(\ell+1)} & \de'+\frac 2{\ell(\ell+1)}
\end{pmatrix} \text{ for }\ell\geq 2,\quad A_1=\begin{pmatrix}
  0 & 1  \\
  0 & 2 
\end{pmatrix},\quad A_0=\begin{pmatrix}
\de' & 1 \\
0 & 1+\de'
\end{pmatrix},
\eeaa
the matrices $B_\ell(r)$ have all their entries bounded uniformly in $r$ and $\ell$, and the inhomogeneous terms read
\bea\lab{eq:def-bold-F-ell-neq-1}
\F_{\ell,\mm}=\begin{pmatrix}
  r^{2+\de'} (F_1)_{\ell,\mm} \\
  r^{3+\de'} (F_2)_{\ell,\mm}
\end{pmatrix}+O(r^{-2+\de'}) \RR_{\ell,\mm} (\textstyle \sum_i \cc_i \om_i), \quad \ell=0 \text{ or } \ell\geq 2,
\eea
and
\bea\lab{eq:def-bold-F-ell=1}
 \F_{1,\mm}:=r^{-2} (\textstyle\sum_i\cc_i \om_i) B_1(r)  \begin{pmatrix} 1 \\ 0\end{pmatrix}   +           \begin{pmatrix}
r^3 (F_1)_{1,\mm} \\ 
 r^{4} (F_2)_{1,\mm} 
 \end{pmatrix}+O(r^{-1}) \RR_{1,\mm} (\textstyle\sum_i\cc_i \om_i),\quad \ell=1.
\eea
Moreover, the following bounds hold true for $\F:=\sum_{\ell=0}^\infty \sum_{\mm=-\ell}^\ell \F_{\ell,\mm} J_{\ell,\mm}$:
\bea\lab{eq:bounds-bold-F}
r^{-1} ||\F ||_{\H^s(S_r)} \lesssim \eps r^{-1-(\de-\de')},\quad |\F_{1,\mm}| \lesssim \eps r^{-1-\de}.
\eea

\end{proposition}
\begin{proof} We proceed as follows:

{\bf   Case $\ell\geq 2$.}
Projecting \eqref{eqns:nonlocal-eqn-Kc-kac} to modes with $\ell\geq 2$ and using \eqref{eq:def-RR-l-m}, we obtain
\bea
\bsplit
(\pa_r+2r^{-1}) (\Psi_1)_{\ell,\mm} &\; = \; -2(1-3mr^{-1}) r^{-2}\cdot \Up^{-\frac 12} (-\frac{r^2}{\ell(\ell+1)} ) (\Psi_2-\Up^\frac 12 r^{-1} \Psi_1)_{\ell,\mm}+(F_1)_{\ell,\mm} \\
&\quad +O(\ell^{-2}) \RR_{\ell,\mm} (\Psi_2,r^{-1} \Psi_1),  \\
  (\pa_r+3r^{-1}) (\Psi_2)_{\ell,\mm} &\; = \; -2\Up^\frac 12 r^{-3} \cdot\Up^{-\frac 12} (-\frac{r^2}{\ell(\ell+1)} ) (\Psi_2-\Up^\frac 12 r^{-1} \Psi_1)_{\ell,\mm} +(F_2)_{\ell,\mm} \\ 
& \quad +O(\ell^{-2}) \RR_{\ell,\mm} (r^{-1} \Psi_2,r^{-2} \Psi_1),
  \end{split}
\eea
or, in the matrix form for $\Psi_1$ and $r\Psi_2$, using that $\Up=1+O(m r^{-1})$,
\beaa
\pa_r\begin{pmatrix}
  (\Psi_1)_{\ell,\mm} \\
  r (\Psi_2)_{\ell,\mm}
\end{pmatrix}&=&\begin{pmatrix}
  -2-\frac{2 }{\ell(\ell+1)} & \frac {2}{\ell(\ell+1)}  \\
  -\frac{2}{\ell(\ell+1)} & -2+\frac{2}{\ell(\ell+1)}
\end{pmatrix}r^{-1}\begin{pmatrix}
  (\Psi_1)_{\ell,\mm} \\
  r (\Psi_2)_{\ell,\mm}
\end{pmatrix}+r^{-2}B_{\ell}(r)\begin{pmatrix}
   (\Psi_1)_{\ell,\mm} \\
  r (\Psi_2)_{\ell,\mm}
\end{pmatrix}+\begin{pmatrix}
  (F_1)_{\ell,\mm} \\
  r(F_2)_{\ell,\mm}
\end{pmatrix} \\
& &+O(\ell^{-2}) \RR_{\ell,\mm} (\Psi_2,r^{-1} \Psi_1),
\eeaa
where $B_\ell(r)$ is a matrix whose entries are bounded uniformly in $r$ and $\ell$.
Mutiplying each row by $r^{2+\de'}$ for some positive $\de'<\de$, we have
\beaa
\pa_r\begin{pmatrix}
  r^{2+\de'} (\Psi_1)_{\ell,\mm} \\
  r^{3+\de'} (\Psi_2)_{\ell,\mm}
\end{pmatrix}
&=&\begin{pmatrix}
  \de'-\frac 2{\ell(\ell+1)} & \frac 2{\ell(\ell+1)}  \\
  -\frac 2{\ell(\ell+1)} & \de'+\frac 2{\ell(\ell+1)}
\end{pmatrix}r^{-1}\begin{pmatrix}
  r^{2+\de'} (\Psi_1)_{\ell,\mm} \\
  r^{3+\de'} (\Psi_2)_{\ell,\mm}
\end{pmatrix} +r^{-2}B_{\ell}(r)\begin{pmatrix}
  r^{2+\de'} (\Psi_1)_{\ell,\mm} \\
  r^{3+\de'} (\Psi_2)_{\ell,\mm}
\end{pmatrix}\\
& &+\begin{pmatrix}
  r^{2+\de'} (F_1)_{\ell,\mm} \\
  r^{3+\de'} (F_2)_{\ell,\mm}
\end{pmatrix}+O(\ell^{-2}) \RR_{\ell,\mm}(r^{2+\de'}\Psi_2,r^{1+\de'} \Psi_1),
\eeaa
and the last term can be further decomposed as, using \eqref{eq:Psi-1-checked},
\beaa
\RR_{\ell,\mm}(r^{2+\de'}\Psi_2,r^{1+\de'} \Psi_1)=O(r^{-2+\de'}) \RR_{\ell,\mm} (\textstyle \sum_i \cc_i \om_i)+\RR_{\ell,\mm}(r^{2+\de'}\Psi_2,r^{1+\de'} \Pc_1).
\eeaa
This proves the expression \eqref{eq:S-inverse-on-v-equation}.

{\bf Case $\ell=0$.}
Projecting the system \eqref{eqns:nonlocal-eqn-Kc-kac} to the $\ell=0$ mode, we obtain
\bea
\bsplit
(\pa_r+2r^{-1})(\Psi_1)_{\ell=0} &\; = \; -2(1-3mr^{-1}) r^{-2}\left(\Up^{-\frac 12} \laph_{\tilde\ga}^{-1} (\Psi_2-\Up^\frac 12 r^{-1} \Psi_1)\right)_{\ell=0}+r^{-1} \overline{\Psi_1}^{\tilde\ga}\\
&\qquad 
+\Up^{-\frac 12}(\Psi_2-\Up^\frac 12 r^{-1} \Psi_1)_{\ell=0}
+(F_1)_{\ell=0},  \\
  (\pa_r+3r^{-1})(\Psi_2)_{\ell=0} &\; = \; -2\Up^\frac 12 r^{-3} \left(\Up^{-\frac 12} \laph_{\tilde\ga}^{-1} (\Psi_2-\Up^\frac 12 r^{-1} \Psi_1)\right)_{\ell=0}+ \Up^{-\frac 12} r^{-2} \overline{\Psi_1}^{\tilde\ga} \\
 & \qquad   +r^{-1} (\Psi_2-\Up^\frac 12 r^{-1} \Psi_1)_{\ell=0}
  +(F_2)_{\ell=0},
  \end{split}
\eea
where we again recall that we extend the definition of $\laph^{-1}_{\tilde\ga}$ using footnote \ref{ft:inverse-laph}.
Using \eqref{eq:def-RR-0}, we write
\beaa
\pa_r\begin{pmatrix}
  (\Psi_1)_{\ell=0} \\
  (\Psi_2)_{\ell=0}
\end{pmatrix}&=&\begin{pmatrix}
  -2r^{-1} & \Up^\frac 12  \\
  (\Up^{-\frac 12}-\Up^\frac 12) r^{-2} & -2r^{-1} 
\end{pmatrix}
\begin{pmatrix}
  (\Psi_1)_{\ell=0} \\
  (\Psi_2)_{\ell=0}
\end{pmatrix}           +           \begin{pmatrix}
 (F_1)_{\ell=0} \\ 
  (F_2)_{\ell=0} 
\end{pmatrix}+\RR_{\ell=0} (\Psi_2,r^{-1}\Psi_1) \begin{pmatrix} 1 \\ O(r^{-1}) \end{pmatrix}.
\eeaa
Denoting $\v_{\ell=0}=\begin{pmatrix}
  r^{2+\de'}(\Psi_1)_{\ell=0} \\
  r^{3+\de'} (\Psi_2)_{\ell=0}
\end{pmatrix}$ and $\F_{\ell=0}=\begin{pmatrix}
  r^{2+\de'}(F_1)_{\ell=0} \\
  r^{3+\de'} (F_2)_{\ell=0}
\end{pmatrix}$, we have
\beaa
\pa_r \v_{\ell=0}=r^{-1} \begin{pmatrix}
\de' & 1 \\
0 & 1+\de'
\end{pmatrix}
\v_{\ell=0}+r^{-2} B_0(r)\v_{\ell=0}+\F_{\ell=0}
+O(r^{2+\de'}) \RR_{\ell=0} (\Psi_2,r^{-1}\Psi_1),
\eeaa
for some matrix $B_0(r)$ with all its entries uniformly bounded in $r$. The last term can be further decomposed as, using \eqref{eq:Psi-1-checked},
\beaa
\RR_{\ell=0}(r^{2+\de'}\Psi_2,r^{1+\de'} \Psi_1)=O(r^{-2+\de'}) \RR_{\ell=0} (\textstyle \sum_i \cc_i \om_i)+\RR_{\ell=0}(r^{2+\de'}\Psi_2,r^{1+\de'} \Pc_1).
\eeaa
This proves the expression \eqref{eq:v-equation-ell=0}.

{\bf Case $\ell=1$.} Projecting the system \eqref{eqns:nonlocal-eqn-Kc-kac} to the $\ell=1$ modes, we obtain
\bea
\bsplit
(\pa_r+2r^{-1})(\Psi_1)_{1,\mm} &\; = \; -2(1-3mr^{-1}) r^{-2}\Up^{-\frac 12} \left(\laph_{\tilde\ga}^{-1} (\Psi_2-\Up^\frac 12 r^{-1} \Psi_1)\right)_{1,\mm}+(F_1)_{\ell,\mm},  \\
  (\pa_r+3r^{-1}) (\Psi_2)_{1,\mm} &\; = \; -2\Up^\frac 12 r^{-3} \Up^{-\frac 12} \left(\laph_{\tilde\ga}^{-1} (\Psi_2-\Up^\frac 12 r^{-1} \Psi_1)\right)_{1,\mm}- \frac 12 \Up^{-\frac 12} (\laph_{\tilde \ga}\Psi_1)_{1,\mm}+(F_2)_{\ell,\mm}.
  \end{split}
\eea
Using \eqref{eq:def-RR-l-m}, we can rewrite the system as
\bea
\bsplit
(\pa_r+2r^{-1})(\Psi_1)_{1,\mm} &\; = \; (1-3mr^{-1})\Up^{-\frac 12}  (\Psi_2-\Up^\frac 12 r^{-1} \Psi_1)_{\ell,\mm}+O(1)\RR_{1,\mm}(\Psi_2,r^{-1}\Psi_1)+(F_1)_{\ell,\mm},  \\
  (\pa_r+3r^{-1})(\Psi_2)_{1,\mm} &\; = \;  r^{-1} (\Psi_2-\Up^\frac 12 r^{-1} \Psi_1)_{\ell,\mm}+r^{-2}\Up^{-\frac 12} (\Psi_1)_{1,\mm}+O(r^{-1})\RR_{1,\mm}(\Psi_2,r^{-1}\Psi_1)+(F_2)_{\ell,\mm},
  \end{split}
\eea
or, in the matrix form,
\beaa
\pa_r\begin{pmatrix}
 (\Psi_1)_{1,\mm} \\
 (\Psi_2)_{1,\mm}
\end{pmatrix}&=&\begin{pmatrix}
  -3r^{-1} +O(mr^{-2}) & 1+O(mr^{-1})  \\
  (\Up^\frac 12 -\Up^{-\frac 12})r^{-2} & -2r^{-1} 
\end{pmatrix}
\begin{pmatrix}
 (\Psi_1)_{1,\mm} \\
 (\Psi_2)_{1,\mm}
\end{pmatrix}            +           \begin{pmatrix}
  (F_1)_{1,\mm}\\
  (F_2)_{1,\mm}
\end{pmatrix} +\RR_{1,\mm}(\Psi_2,r^{-1}\Psi_1)\begin{pmatrix}
  O(1)\\
  O(r^{-1})
\end{pmatrix}.
\eeaa
This can be further written as
\beaa
\pa_r\begin{pmatrix}
 r^3(\Psi_1)_{1,\mm} \\
 r^4 (\Psi_2)_{1,\mm}
\end{pmatrix}&=& r^{-1} \begin{pmatrix}
  0 & 1  \\
  0 & 2 
\end{pmatrix}
\begin{pmatrix}
 r^3(\Psi_1)_{1,\mm} \\
 r^4 (\Psi_2)_{1,\mm}
\end{pmatrix}            +    r^{-2} B_1(r) \begin{pmatrix}
 r^3(\Psi_1)_{1,\mm} \\
 r^4 (\Psi_2)_{1,\mm}
\end{pmatrix} +       \begin{pmatrix}
  r^3(F_1)_{1,\mm}\\
  r^4(F_2)_{1,\mm}
\end{pmatrix} +r^3\RR_{1,\mm}(\Psi_2,r^{-1}\Psi_1),
\eeaa
for some matrix $B_1(r)$ with all its entries uniformly bounded in $r$. 

Recall  that in view of \eqref{eq:solve-nonlocal-vanishing-condition}, $r^3(\Psi_1)_{1,\mm}$ does not vanish at infinity. However, as remarked after \eqref{eq:Psi-1-checked}, $r^3 (\Pc_1)_{1,\mm}$ does, and therefore, we consider $\v_{1,\mm}=\begin{pmatrix}
  r^3 (\Pc_1)_{1,\mm}\\
  r^{4} (\Psi_2)_{1,\mm}
\end{pmatrix}$. Since the first column of $\begin{pmatrix}
  0 & 1  \\
  0 & 2 
\end{pmatrix}$ is zero, we have \beaa\begin{pmatrix}
  0 & 1  \\
  0 & 2 
\end{pmatrix}
\begin{pmatrix}
 r^3(\Psi_1)_{1,\mm} \\
 r^4 (\Psi_2)_{1,\mm}
\end{pmatrix}=\begin{pmatrix}
  0 & 1  \\
  0 & 2 
\end{pmatrix} \v_{1,\mm}.
\eeaa
The system then reads,     
\beaa
\pa_r \v_{1,\mm}&=& \begin{pmatrix}
  0 & 1  \\
  0 & 2 
\end{pmatrix}
r^{-1} \v_{1,\mm}    +r^{-2} B_1(r) \v_{1,\mm}   +r^{-2} (\textstyle\sum_i\cc_i \om_i) B_1(r)  \begin{pmatrix} 1 \\ 0\end{pmatrix}   +           \begin{pmatrix}
r^3 (F_1)_{1,\mm} \\ 
 r^{4} (F_2)_{1,\mm} 
\end{pmatrix} \\
& & + O(r^3) \RR_{1,\mm}(\Psi_2, r^{-1} \Psi_1)  \\
&=& \begin{pmatrix}
  0 & 1  \\
  0 & 2 
\end{pmatrix}
r^{-1} \v_{1,\mm}    +r^{-2} B_1(r) \v_{1,\mm}   +r^{-2} (\textstyle\sum_i\cc_i \om_i) B_1(r)  \begin{pmatrix} 1 \\ 0\end{pmatrix}   +           \begin{pmatrix}
r^3 (F_1)_{1,\mm} \\ 
 r^{4} (F_2)_{1,\mm} 
\end{pmatrix} \\
& & +O(r^{-1}) \RR_{1,\mm} (\textstyle\sum_i\cc_i \om_i) + O(r^3) \RR_{1,\mm}(\Psi_2, r^{-1} \Pc_1) ,
\eeaa
Therefore, denoting 
\bea\lab{eq:def-F-ell=1}
\F_{1,\mm}:=r^{-2} \cc_{\mm} B_1(r)  \begin{pmatrix} 1 \\ 0\end{pmatrix}   +           \begin{pmatrix}
r^3 (F_1)_{1,\mm} \\ 
 r^{4} (F_2)_{1,\mm} 
\end{pmatrix}+O(r^{-1}) \RR_{1,\mm} (\textstyle\sum_i\cc_i \om_i),
\eea
we obtain the expression \eqref{eq:v-equation-ell=0} as required. The equivalence of the equations in modes and the original system \eqref{eqns:nonlocal-eqn-Kc-kac} is also clear since $\{J_{\ell,\mm}\}$ is a complete orthonormal basis over $L^2(S_r)$.

It remains to verify the bounds for $\F$. 
Recall the condition \eqref{eq:bound-condition-nonlocal-F}
\beaa
\sup_{r\in [r_0,\infty)} r^{-1} ||r^{3+\de} F_1,r^{4+\de} F_2, r^{4+\de} (F_1)_{\ell=1}, r^{5+\de} (F_2)_{\ell=1}||_{\H^s(S_{r})} \lesssim \eps.
\eeaa
As a result, by definition \eqref{eq:def-bold-F-ell-neq-1},
\beaa
r^{-1} ||\F_{\ell\neq 1} ||_{\H^s(S_r)} &\lesssim & \eps r^{-1-(\de-\de')}+r^{-2+\de'} \cdot r^{-1} ||\RR_{\ell\neq 1}(\textstyle \sum_i \cc_i \om_i)||_{\H^s} \\
&\lesssim & \eps r^{-1-(\de-\de')}+ \eps_1 |\cc| r^{-3-\de+\de'} \\
&\lesssim & \eps r^{-1-(\de-\de')},
\eeaa
and, by \eqref{eq:def-bold-F-ell=1},
\beaa
|\F_{1,\mm}| &\lesssim & |r^{-2} \cc_{\mm}|+|r^3 (F_1)_{1,\mm}, r^4(F_2)_{1,\mm}|+r^{-1} |\RR_{1,\mm}(\textstyle\sum_i\cc_i \om_i)| \\
& \lesssim & |\cc| r^{-2} + \eps r^{-1-\de} +r^{-1} |\cc| \eps_1 r^{-1-\de} \\
&\lesssim & \eps r^{-1-\de},
\eeaa
where we used that, in view of the bound for $\RR$ established in Proposition \ref{prop:bound-RR},
\beaa
|\RR_{1,\mm}(\textstyle\sum_i\cc_i \om_i)|\lesssim r^{-1}||\RR (\textstyle\sum_i\cc_i \om_i )||_{\H^s(S_r)} \lesssim r^{-1}\cdot \eps_1 r^{-1-\de} || \textstyle\sum_i\cc_i \om_i ||_{\H^s(S_r)} \lesssim \eps_1 |\cc| r^{-1-\de}.
\eeaa
Moreover, the bound for $|\F_{1,\mm}|$ means that we can in fact replace $\F_{\ell\neq 1}$ with $\F$ for the first estimate.
This concludes the proof of Proposition \ref{prop:derivation-eqns-modes}. 
\end{proof}

 {\bf The combined expression.} Since the $r$-weights we put in for different modes are different, we need to derive a uniform bound for the perturbative $\RR$ terms. This is done through the lemma below.
  \begin{lemma}\lab{lem:bound-RR-new-c}
 The system \eqref{eqns:nonlocal-eqn-Kc-kac} can be written in modes as
  \bea\lab{eq:eqn-in-RR-new}
\pa_r \v_{\ell,\mm} =r^{-1} A_\ell \v_{\ell,\mm}+r^{-2}B_\ell(r)\v_{\ell,\mm}+\F_{\ell,\mm}+ \RR^{new}_{\ell,\mm}(\v),
\eea
where the linear operator $\RR^{new}$ satisfies
\beaa
||\RR^{new} (\v)||_{\H^s(S_r)} \lesssim \eps_1 r^{-1-\de-\de'} ||\v||_{\H^s(S_r)}.
\eeaa
 \end{lemma}
 \begin{proof} 
According to Proposition \ref{prop:derivation-eqns-modes}, the system \eqref{eqns:nonlocal-eqn-Kc-kac} is equivalent to
 \bea
\pa_r \v_{\ell,\mm} =r^{-1} A_\ell \v_{\ell,\mm}+r^{-2}B_\ell(r)\v_{\ell,\mm}+\F_{\ell,\mm}+\RR^{new}_{\ell,\mm}(\v),
\eea
where the $\RR^{new}$ term reads, schematically, in terms of $\RR$ defined in Definition \ref{Def:RR-operator},
\beaa
\RR^{new}_{\ell,\mm}(\v)=\begin{cases}
\RR_{1,\mm}\left(r^2 \Pc_1,r^3\Psi_2\right)  & \ell =1 ,\\
\RR_{\ell,\mm} \left(r^{1+\de'}\Pc_1, r^{2+\de'} \Psi_2\right), & \ell=0 \text{ or }\ell\geq 2.
\end{cases}
\eeaa
Since $\de'<\de<1$, relaxing  the $r$ weights for $\ell\neq 1$, we have for each $\ell$ that \beaa
\RR^{new}_{\ell,\mm}(\v) = O(1) \RR_{\ell,\mm} (r^2 \Pc_1 ,r^3\Psi_2).
\eeaa
 Therefore we have, using the bound for $\RR$ in Proposition \ref{prop:bound-RR},  
\beaa
||\RR^{new} (\v)||_{\H^s(S_r)} &\lesssim & ||\RR (r^2 \Pc_1,r^3\Psi_2)||_{\H^s(S_r)} \lesssim \eps_1 r^{-1-\de} || r^2 \Pc_1, r^3\Psi_2||_{\H^s(S_r)} \\
&\lesssim & \eps_1 r^{-1-\de-\de'} ||\v_{\ell \neq 1}||_{\H^s(S_r)}+\eps_1 r^{-2-\de} ||\v_{\ell=1}||_{\H^s(S_r)}\\
&\lesssim & \eps_1 r^{-1-\de-\de'} ||\v||_{\H^s(S_r)}.
\eeaa
 where we used that $\v_{\ell,\mm}=\begin{pmatrix}
  r^{2+\de'} (\Psi_1)_{\ell,\mm} \\
  r^{3+\de'} (\Psi_2)_{\ell,\mm}
\end{pmatrix}$ for $\ell\neq 1$ and $\v_{1,\mm}=\begin{pmatrix}
  r^3 (\Pc_1)_{1,\mm} \\
  r^4 (\Psi_2)_{1,\mm}
\end{pmatrix}$ as defined in \eqref{def:bold-v-components}.
This concludes the proof of Lemma \ref{lem:bound-RR-new-c}.
 \end{proof}

\subsubsection{The solution operators in modes}\lab{sec:solution-operators-in-modes}
In this part, we verify that the matrices $A_\ell$ satisfy the accretiveness required in Lemma \ref{lem:Duhamel-round-case} for all $\ell$, hence giving uniformly bounded backward solution operators introduced in \eqref{eq:def-sol-op-ell}. 

{\bf The case $\ell\geq 2$.}
For simplicity, we denote $x=\frac{1}{\ell(\ell+1)}$ and consider $x\in (0,\frac 16]$, corresponding to $\ell\geq 2$. Denote the matrix 
\bea\lab{eq:matrix-x}
Q:=\begin{pmatrix}
  -1 & 1 \\
  -1 & 1
\end{pmatrix}
\eea
as in Lemma \ref{eq:lemma-Lyapunov-matrix}. Then we have $A_\ell=\de' I+x Q$, which verifies the condition of Lemma \ref{eq:lemma-Lyapunov-matrix}. Therefore, 
$A_\ell:=\de' I+x Q$ is accretive for some inner product $H$ over $\mathbb{R}^2$, and hence by Lemma \ref{lem:Duhamel-round-case}, we obtain a solution operator $U_{\ell}(r,r^*)$ for all $\ell \leq 2$
\bea\lab{eq:sol-op-ell-geq-2}
\pa_r U_{\ell} (r,r^*)=(r^{-1} A_\ell+r^{-2}B_\ell(r) ) U_\ell (r,r^*),\quad U_\ell(r^*,r^*)=I,
\eea
where $U_\ell$ is uniformly bounded.

{\bf The case $\ell\leq 1$.} Since the matrices $A_0=\begin{pmatrix}
\de' & 1 \\
0 & 1+\de'
\end{pmatrix}$, $A_{1}=\begin{pmatrix}
  0 & 1  \\
  0 & 2 
\end{pmatrix}$ can both be diagonalized to a positive definite matrix, they easily verify the accretiveness condition. Hence, we obtain the backward solution operators $U_0(r,r^*)$, $U_1(r,r^*)$ through
\bea\lab{eq:def-sol-op-0}
&&\pa_r U_0(r,r^*)=(r^{-1} A_0+r^{-2} B_0(r))U_0(r,r^*),\quad U_0(r^*,r^*)=I,\\
\lab{eq:def-sol-op-1}
&&\pa_r U_1(r,r^*)=\left(r^{-1} A_{1}+r^{-2} B_1(r) \right) U_1(r,r^*) ,\quad U_1(r^*,r^*)=I,
\eea
and they are both uniformly bounded, as stated in Lemma \ref{lem:Duhamel-round-case}.

\subsubsection{The inhomogeneous solution}
\begin{proposition}\lab{prop:bound-solution-inhomogeneous}
Define $\mathring\v$ using
\beaa
\mathring\v_{\ell,\mm}=-\int_r^\infty U_\ell (r,r') \F_{\ell,\mm}(r')\, dr'.
\eeaa
We have the estimate
\bea\lab{eq:bound-v-0-perturbed}
r^{-1+(\de-\de')} || \mathring\v ||_{\H^s(S_r)} + r^\de |\mathring \v_{1,\mm}|\lesssim \eps.
\eea
\end{proposition}
\begin{proof}
For $\ell=1$, we apply the second bound in \eqref{eq:bounds-bold-F}, which yields
\beaa
|\mathring \v_{1,\mm}|\lesssim \eps r^{-\de}.
\eeaa
Moreover, using the first bound in \eqref{eq:bounds-bold-F}
\beaa
r^{-1} ||\F||_{\H^s(S_r)} \lesssim \eps r^{-1-(\de-\de')},\quad 
\eeaa
we obtain 
\beaa
r^{-1} ||\mathring\v ||_{\H^s(S_r)} &\leq & \Big(\sum_{\ell=0}^\infty \sum_{\mm=-\ell}^\ell (1+\ell^2)^s |\mathring\v_{\ell,\mm}|^2\Big)^\frac 12\leq  C \Bigg(\sum_{\ell=0}^\infty \sum_{\mm=-\ell}^\ell \Big (\int_r^\infty  (1+\ell^2)^s |\F_{\ell,\mm} | (r')\, dr' \Big)^2\Bigg)^\frac 12 \\
& \leq & C \int_r^\infty \Bigg(\sum_{\ell=0}^\infty \sum_{\mm=-\ell}^\ell (1+\ell^2)^s | \F_{\ell,\mm}|^2(r') \Bigg)^\frac 12 \, dr' \\
& \leq & C \int_r^\infty r'^{-1} ||\F||_{\H^s(S_{r'})}\, dr'\\
&\lesssim & \eps r^{-(\de-\de')},
\eeaa
where we used the integral Minkowski inequality \eqref{eq:integral-Minkowski-inequality} from the third inequality.
This concludes the proof of Proposition \ref{prop:bound-solution-inhomogeneous}.
\end{proof}

 \subsubsection{The contraction argument}\lab{sec:summing-up}
Since the $\RR_{\ell,\mm}^{new}$ terms can involve different modes of $\v$, in order to obtain the solution of \eqref{eqns:nonlocal-eqn-Kc-kac}, or equivalently \eqref{eq:eqn-in-RR-new}, 
we need  a physical space  norm  to estimate $\v$, independent of  its modes  $\v_{\ell,\mm}$. We define
\bea\lab{eq:def-contraction-norm}
|| \v ||_{\VV}:= \sup_{r\in [r_0,\infty)} \left(r^{-1+(\de-\de')} ||\v ||_{\H^s(S_r)}+ r^\de |\v_{\ell=1}|\right),
\eea
and seek solutions in the following neighborhood of $\mathring\v$:
\beaa
\VV_{C\eps} :=\{\v\colon  ||\v ||_{\VV} < C \eps\},
\eeaa
where $C$ is a positive constant to be determined. 
In view of \eqref{eq:Duhamel-linear-1}, solutions to \eqref{eq:eqn-in-RR-new} satisfy
\bea\lab{eq:fixed-point-eqn-v}
\v=\Phi(\v),
\eea
where the map $\Phi$ is defined through $\Phi(\v):=\sum_{\ell=0}^\infty \sum_{\mm=-\ell}^\ell \Phi(\v)_{\ell,\mm} J_{\ell,\mm}$, with
\beaa
\Phi(\v)_{\ell,\mm} : =- \int_r^\infty U_\ell (r,r')\left(\F_{\ell,\mm}+\RR^{new}_{\ell,\mm}(\v)\right)\, dr'.
\eeaa
Conversely, any $\v$ satisfying \eqref{eq:fixed-point-eqn-v} gives a solution to the original system \eqref{eqns:nonlocal-eqn-Kc-kac}.
Recall that for $\ell\geq 2$, $U_\ell$ is the same solution operator defined in \eqref{eq:sol-op-ell-geq-2}, and for $\ell=0,1$, $U_\ell$ is defined respectively in \eqref{eq:def-sol-op-0}, \eqref{eq:def-sol-op-1}.

It suffices to show that $\Phi(\VV_{C\eps})\subset \VV_{C\eps}$ and $\Phi$ is a contraction in $\VV_{C\eps}$ with respect to the norm $ ||\cdot ||_{\VV}$. 
We have
\beaa
\Phi(\v)_{\ell,\mm} &=& - \int_r^\infty U_\ell(r,r')\left( \F_{\ell,\mm}+\RR^{new}_{\ell,\mm}(\v)\right)\,  dr'\\
&=& \mathring\v_{\ell,\mm} -\int_r^\infty U_\ell(r,r') \RR^{new}_{\ell,\mm}(\v) \, dr'.
\eeaa
Therefore, 
\beaa
r^{-1} ||\Phi(\v)-\mathring\v ||_{\H^s(S_r)} &\leq & \Big(\sum_{\ell=0}^\infty \sum_{\mm=-\ell}^\ell (1+\ell^2)^s |\Phi(\v)_{\ell,\mm}-\mathring\v_{\ell,\mm}|^2\Big)^\frac 12\\
& \leq & C \Bigg(\sum_{\ell=0}^\infty \sum_{\mm=-\ell}^\ell \Big (\int_r^\infty  (1+\ell^2)^s |\RR^{new}_{\ell,\mm}(\v) | (r')\, dr' \Big)^2\Bigg)^\frac 12 \\
& \leq & C \int_r^\infty \Bigg(\sum_{\ell=0}^\infty \sum_{\mm=-\ell}^\ell (1+\ell^2)^s | \RR^{new}_{\ell,\mm}(\v) |^2(r') \Bigg)^\frac 12 \, dr' \\
& \leq & C \int_r^\infty r'^{-1} ||\RR^{new}(\v)||_{\H^s(S_{r'})}\, dr',
\eeaa
where we used the integral Minkowski inequality \eqref{eq:integral-Minkowski-inequality} from the second line to the third line.
Then, using Lemma \ref{lem:bound-RR-new-c} and \eqref{eq:bound-v-0-perturbed},
\beaa
r^{-1+(\de-\de')} ||\Phi(\v)-\mathring\v||_{\H^s(S_r)} &\leq & C r^{\de-\de'} \int_r^\infty  r'^{-1}\cdot  \eps_1 r'^{-1-\de-\de'}  ||\v||_{\H^s(S_{r'})} \, dr' \\
&\leq & Cr^{\de-\de'} \int_r^\infty C \eps_1 (r'^{-2-\de-\de'})\cdot r'^{1-(\de-\de')}||\v||_{\VV} \, dr' \\
& \leq & C \eps_1 \eps \ll \eps,
\eeaa
for suitable $C>0$.\footnote{We omit writing $\de^{-1}$ since $\de>0$ is a given constant.} We also have
\beaa
r^\de |\Phi(\v)_{\ell=1}-\mathring\v_{\ell=1}| &\leq & C r^\de \int_r^\infty |
\RR^{new}_{1,\mm}(\v) | dr'\leq C r^\de \int_r^\infty r'^{-1} ||\RR^{new}(\v)||_{L^2(S_{r'})} \, dr' \\
&\leq & C r^\de \int_r^\infty \eps_1 r'^{-2-\de-\de'} ||\v||_{\H^s (S_{r'})}\, dr' \leq C r^\de \left(\int_r^\infty \eps_1 r'^{-2-\de-\de'}\cdot r'^{1-(\de-\de')} \, dr'\right) ||\v||_{\VV} \\
&\leq & C \eps_1 \eps \ll \eps.
\eeaa
Therefore, we obtain $||\Phi(\v)-\mathring\v||_{\VV}\ll \eps$. From \eqref{eq:bound-v-0-perturbed} we know that $||\mathring\v||_{\VV}\lesssim \eps$, and hence we see that $\Phi(\VV_{C\eps})\subset \VV_{C\eps}$ for suitable $C>0$. 

To prove that $\Phi$ is a contraction, we note that
\beaa
(\Phi(\v_1)-\Phi(\v_2))_{\ell,\mm}=- \int_r^\infty U_\ell (r,r') \RR^{new}_{\ell,\mm} (\v_1-\v_2)\, dr'.
\eeaa
Hence, by similar estimates using Lemma \ref{lem:bound-RR-new-c}, we obtain
\beaa
r^{-1+(\de-\de')} ||\Phi(\v_1)-\Phi(\v_2)||_{\H^s(S_r)} & \leq & C r^\de \int_r^\infty r'^{-1} ||\RR^{new} (\v_1-\v_2)||_{\H^s(S_{r'})}\, dr' \\
& \leq & C r^\de \int_r^\infty  \eps_1 r'^{-2-\de-\de'}  ||\v_1-\v_2||_{\H^s(S_{r'})} \, dr' \\
&\leq & C \eps_1 r^\de \left(\int_r^\infty r'^{-1-2\de}\, dr'\right)\sup_{r\in [r_0,\infty)}  r^{-1+(\de-\de')} ||\v_1-\v_2||_{\H^s(S_r)} \\
&\leq & C \eps_1 ||\v_1-\v_2||_{\VV},
\eeaa
and
\beaa
r^\de |\Phi(\v_1)_{\ell=1}-\Phi(\v_2)_{\ell=1} |&\leq & C r^\de \int_r^\infty |
\RR^{new}_{1,\mm}(\v_1-\v_2) | dr'\leq C r^\de \int_r^\infty r'^{-1} ||\RR^{new}(\v_1-\v_2)||_{L^2(S_r')} \, dr' \\
&\leq & C r^\de \int_r^\infty \eps_1 r'^{-2-\de-\de'} ||\v_1-\v_2 ||_{\H^s (S_r')}\leq C r^\de \left(\int_r^\infty \eps_1 r'^{-1-2\de}\, dr'\right) ||\v_1-\v_2||_{\VV} \\
&\leq & C \eps_1  ||\v_1-\v_2||_{\VV}.
\eeaa
Therefore, by the fixed point theorem, we obtain a unique solution $\v$ in $\VV_{C\eps}$, which, when expressed in terms of $\Psi_1$ and $\Psi_2$, verifies \eqref{eq:solve-nonlocal-vanishing-condition} in view of the definition of $||\cdot||_{\VV}$ in \eqref{eq:def-contraction-norm}. This concludes the proof of Proposition \ref{prop:solve-unknowns-kac-Kc}.

\subsection{Boundedness estimates: Proof of Proposition \ref{prop:boundedness}}
\begin{remark}
Throughout this proof, the implicit constants in the symbol $\lesssim $ do not include the bootstrap constant $C_b$ stated in the Proposition \ref{prop:boundedness}.
\end{remark}
\begin{remark}
The $L^\infty$ estimates needed in the proof can be easily derived by standard Sobolev embedding from the $L^2$ estimates:
\bea\lab{eq:L-infty-bounds-Psi-n}
\bsplit
& r^{2+\de} || (r\nabz)^{\leq s-1} (\Psi_1\n, \Psi_4\n,\Psi_5\n,\Psi_7\n,\Psi_8\n, \Psi_9\n, \Psi_{10}\n) ||_{L^\infty(S_r)} \\
&+r^{3+\de} || (r\nabz)^{\leq s-2} (\Psi_2\n, \Psi_6\n) ||_{L^\infty(S_r)} +r^{1+\de} ||(r\nabz)^{\leq s}\Psi_3\n||_{L^\infty(S_r)}\\
& +r^{1+\de} || (r\nabz)^{\leq s} (\ga\n-\ga\0) ||_{L^\infty(S_r)} \lesssim C_b \eps.
\end{split}
\eea
\end{remark}
\begin{remark}
Throughout the proof, we will use the following bound, ensured by the assumption of Proposition \ref{prop:boundedness}, without explicit reference:
\bea\lab{eq:iteration-ga-bound-beginning}
r^\de ||\tilde\ga-\gz||_{\H^{s+1}(S_r)}\lesssim C_b \eps.
\eea
In particular, this allows us to apply the Hodge estimate in Lemma \ref{lemma:Hodge-estimate-round}. 
\end{remark}
We now proceed as follows.

\subsubsection{Proof of Proposition \ref{prop:step-1}}\lab{proof:step-1}
We explicitly write down the expression of $\Psi_3\nn$ using \eqref{eq:iteration-divP} and \eqref{eq:iteration-average-a}:
\bea\lab{eq:boundness-proof-Psi-3-expression}
\bsplit
\Psi_3\nn &= \Up^{-\frac 12}(\laph\n)^{-1} \Big(\Psi_2\nn-\Up^\frac 12 r^{-1} \Psi_1\nn+\Ga_1\n\cdot\Ga_1\n\Big)-\Up^{-\frac 12}(\laph\n)^{-1} \laph\n(\Ga_0\n\cdot\Ga_0\n) \\
& \quad  -\frac 12 \Up^{-1} r \overline{\Psi_1\nn}\n.
\end{split}
\eea
Here we again adopt the extended definition of $(\laph\n)^{-1}$ as in footnote \ref{ft:inverse-laph}.

We first apply Proposition \ref{prop:solve-unknowns-kac-Kc} to obtain $\Psi_1\nn$ and $\Psi_2\nn$. 
Denote the error terms
\beaa
\NN\n[\ao]&:=& -\Up^{-\frac 12}(\laph\n)^{-1} \laph\n(\Ga_0\n\cdot\Ga_0\n)+\Up^{-\frac 12} (\laph\n)^{-1} (\Ga_1\n\cdot\Ga_1\n) \\
&=& -\Up^{-\frac 12}(\Ga_0\n\cdot\Ga_0\n-\overline{\Ga_0\n\cdot\Ga_0\n}\n)+\Up^{-\frac 12} (\laph\n)^{-1} (\Ga_1\n\cdot\Ga_1\n), \\
\NN\n[\widetilde \B_{\ell\leq 1}] &:=& \frac 12(\laph\n \Psi_1\n)_{\ell=0}-\left(\mathcal P_1(\d_1\n\d_2\n \Psi_4\n)\right)_{\ell\leq 1},\\
\NN\n[\mu] &:=& -(\laph\n \log\Psi_3\n)_{\ell=0}-\frac 14 ((\Psi_1)^2)_{\ell=0}.
\eeaa
 Then the system of $\Psi_1\nn$ and $\Psi_2\nn$, originating from \eqref{eq:iteration-kac}, \eqref{eq:iteration-Kc}, reads, in view of \eqref{eq:tilde-BB},
 \bea\lab{eq:system-thc-Kc-detailed}
\bsplit
(\pa_r+2r^{-1})\Psi_1\nn &\; = \; -2(1-3mr^{-1}) r^{-2}\left(\Up^{-\frac 12} (\laph\n)^{-1} (\Psi_2\nn-\Up^\frac 12 r^{-1} \Psi_1\nn)-\frac 12 \Up^{-1} r\overline{\Psi_1\nn}\n
\right)\\
& \quad+\Up^{-\frac 12} (\Psi_2\nn-\Up^\frac 12 r^{-1} \Psi_1\nn)_{\ell=0}+\Psi_3\n\muc_{\ell=0}\n+\Ga_1\n\cdot\Ga_1\n \\
&\quad + \Up^{-\frac 12} \NN\n[\mu] -2(1-3mr^{-1}) r^{-2} \NN\n[\ao], \\
  (\pa_r+3r^{-1})\Psi_2\nn &\; = \; -2\Up^\frac 12 r^{-3} \left(\Up^{-\frac 12} (\laph\n)^{-1} (\Psi_2\nn-\Up^\frac 12 r^{-1} \Psi_1\nn)-\frac 12 \Up^{-1} r\overline{\Psi_1\nn}\n
\right) \\
 &\quad +r^{-1} (\Psi_2\nn-\Up^\frac 12 r^{-1} \Psi_1\nn)_{\ell=0}+r^{-1}\NN\n[\mu] -2\Up^\frac 12 r^{-3} \NN\n[\ao] \\
 &\quad   -\Up^{-\frac 12}\Big (\Bb+\frac 12(\laph\n \Psi_1\nn)_{\ell=1}+\NN\n[\widetilde\B_{\ell\leq 1}]\Big)+ \Psi_3\n (\Bb+\widetilde\B_{\ell\leq 1, aux}\n)+\Ga_1\n\cdot \Ga_2\n.
  \end{split}
\eea 
The system \eqref{eq:system-thc-Kc-detailed} is of the form \eqref{eqns:nonlocal-eqn-Kc-kac}, with $\tilde\ga=\ga\n$, and
\beaa
F_1&=& -2(1-3mr^{-1}) r^{-2} \NN\n[\ao] +\Up^{-\frac 12} \NN\n [\mu] +\Psi_3\n \muc\n_{\ell=0} +\Ga_1\n\cdot\Ga_1\n,\\
F_2&=& -\Up^{-\frac 12} (\Bb+\NN\n[\widetilde\B_{\ell\leq 1}])+ \Psi_3\n (\Bb+\widetilde\B_{\ell\leq 1, aux}\n)+\Ga_1\n\cdot\Ga_2\n-2\Up^\frac 12 r^{-3}\NN\n[\ao]+r^{-1} \NN\n [\mu].
\eeaa
We now verify the bounds required in Proposition \ref{prop:solve-unknowns-kac-Kc}. We have
\beaa
r^{-1} ||\NN\n[\ao]||_{\H^{s+1}(S_r)} &\lesssim & r^{-1} ||\Ga_0\n\cdot\Ga_0\n||_{\H^{s+1}(S_r)}+ r^{-1} ||(\laph\n)^{-1} (\Ga_1\n\cdot\Ga_1\n)||_{\H^{s+1}(S_r)} \\
&\lesssim & C_b^2 \eps^2 r^{-2-2\de}.
\eeaa
Applying \eqref{eq:estimate-d1-d2-ell-leq-1}, 
we obtain
\beaa
r^{-1} ||\NN\n[\widetilde\B_{\ell\leq 1}] ||_{\H^s(S_r)} &\lesssim & r^{-1} ||(\laph\n \Psi_1\n)_{\ell=0}||_{\H^s(S_r)}+ r^{-1} ||(\d_1\n\d_2\n \Psi_4\n)_{\ell\leq 1}||_{\H^s(S_r)} \\
&\lesssim &  C_b \eps r^{-1-\de} r^{-2} ||\Psi_1\n||_{L^\infty}+  ||\laph\0 \trh\0 \Psi_4\n||_{L^\infty}\\
& & + C_b \eps r^{-3-\de} ||(r\nabh\0)^{\leq 2} \Psi_4\n||_{L^\infty} \\
& \lesssim & C_b^2 \eps^2 r^{-5-2\de},
\eeaa
\beaa
r^{-1} ||\NN\n[\mu]||_{\H^s(S_r)} &\lesssim & | -(\laph\n \log\Psi_3\n)_{\ell=0}|+\frac 14 |(\Psi_1)^2 | \lesssim C_b \eps^2 r^{-2}\cdot r^{-1-\de} \cdot r^{-1-\de}+C_b^2 \eps^2 r^{-4-2\de}\\
&\lesssim & C_b^2 \eps^2 r^{-4-2\de}.
\eeaa
Therefore, we deduce
\beaa
r^{-1} ||F_1||_{\H^{s+1}(S_r)} &\lesssim & r^{-1}||r^{-2} \NN\n[\ao]||_{\H^{s+1}} +r^{-1}||\NN\n [\mu] ||_{\H^{s+1}} + r^{-1}||\Psi_3\n \muc\n_{\ell=0}||_{\H^{s+1}} +r^{-1} ||\Ga_1\n\cdot\Ga_1\n||_{\H^{s+1} } \\
&\lesssim & C_b^2 \eps^2 r^{-4-2\de}+  C_b \eps^2 r^{-1-\de}\cdot r^{-3-\de} \lesssim C_b^2 \eps^2 r^{-4-2\de},
\eeaa
\beaa
r^{-1} ||F_2 ||_{\H^s(S_r)} &\lesssim & r^{-1} ||\Bb||_{\H^s} +r^{-1}||\NN\n[\widetilde\B_{\ell\leq 1}]||_{\H^s}+ r^{-1}||\Psi_3\n (\Bb+\widetilde\B_{\ell\leq 1, aux}\n)||_{\H^s}\\
& & +r^{-1}||\Ga_1\n\cdot\Ga_2\n||_{\H^s}+r^{-1} || r^{-3}\NN\n[\ao]||_{\H^s}+r^{-1} ||r^{-1}\NN\n [\mu]||_{\H^s} \\
&\lesssim & \eps r^{-4-\de}+C_b^2 \eps^2 r^{-5-2\de}.
\eeaa
Moreover, since $\Bb_{\ell=1}=0$, we have
\beaa
r^{-1} ||(F_2)_{\ell=1} ||_{\H^s(S_r)} &\lesssim & C_b^2 \eps^2 r^{-5-2\de}.
\eeaa
Therefore, for given center of mass value $\cc\in \mathbb{R}^3$, applying Proposition \ref{prop:solve-unknowns-kac-Kc} to \eqref{eq:system-thc-Kc-detailed}, we obtain the unique solution $(\Psi_1\nn,\Psi_2\nn)$ verifying the bounds
\begin{equation}\lab{eq:bounds-Psi-1-2-nn}
\sup_{r\in [r_0,\infty)} r^{-1} ||r^{2+\de} \Psi_1\nn, r^{3+\de} \Psi_2\nn||_{\H^s(S_r)}\lesssim \eps,\quad \sup_{r\in [r_0,\infty)} r^\de | r^3 (\Psi_1\nn)_{\ell=1,i}-\cc_i , r^4 (\Psi_2\nn)_{\ell=1,i} | \lesssim \eps.
\end{equation}
Note that the right-hand side is $\eps$ instead of $C_b \eps$.

To derive the estimate for $\Psi_3\nn$, it suffices to recall the expression \eqref{eq:boundness-proof-Psi-3-expression}, which, again in view of Lemma \ref{lemma:estimate-inverse-Laplacian}, implies
\bea\lab{eq:bound-Psi-3-nn}
r^{-1} ||\Psi_3\nn||_{\H^{s+2}}\lesssim \eps r^{-1-\de}.
\eea
\begin{remark}
Using the second bound in \eqref{eq:bounds-Psi-1-2-nn}, we also easily deduce the behavior of $(\Psi_3\nn)_{\ell=1}$ using \eqref{eq:boundness-proof-Psi-3-expression}:
\bea\lab{eq:expansion-Psi-3-ell=1}
|(\Psi_3\nn)_{\ell=1,i}-\frac 12 \cc_i r^{-2}| \lesssim C_b \eps r^{-1-\de} \cdot ||r^2 \Psi_2\nn, r\Psi_1\nn||_{L^\infty} \lesssim C_b^2 \eps^2 r^{-2-2\de}.
\eea
Such a more precise estimate will be useful in Appendix \ref{appendix:physical-quantities}.
\end{remark}

We now further derive the $\H^{s+1}$ estimates of $\Psi_1\nn$. Commuting the equation \eqref{eq:iteration-kac} with $(r\nabz)^{s+1}$ using \eqref{eq:commutation-nabz-r-nabz}, we have
\beaa
(\pa_r+2r^{-1}) (r\nabz)^{s+1} \Psi_1\nn &=& 
-2(1-3mr^{-1}) r^{-2} (r\nabz)^{s+1} \Psi_3\nn+(r\nabz)^{s+1} (\Ga_1\n\cdot \Ga_1\n).
\eeaa
Directly applying Lemma \ref{lem:transport-lemma} to this equation, using the bound \eqref{eq:bound-Psi-3-nn} we just obtained, we deduce 
\bea\lab{eq:highest-order-Psi-1}
r^{-1} ||r^{2} \Psi_1\nn ||_{\H^{s+1}(S_r)}\lesssim \int_r^\infty r'^{-1}\cdot r'^2  ||r'^{-2} \Psi_3\nn+\Ga_1\n\cdot \Ga_1\n||_{\H^{s+1}(S_{r'})} \, dr' \lesssim \eps r^{-\de}.
\eea
This finishes the proof of Proposition \ref{prop:step-1}.

\subsubsection{Proof of Proposition \ref{prop:step-2}}\lab{proof:step-2}
We proceed to determine $\Psi_4\nn$ and $\Psi_{11}\nn$ (which is supported on $\ell\leq 1$) from the equation \eqref{eq:iteration-h-dualh}:
\beaa
 \d_1\n \d_2\n \Psi_4\nn  = \frac 12 (\laph\n \Psi_1\nn,0)-(\Bb,\Bbd) -\Psi_{11}\nn.
\eeaa
Taking into account that $\Psi_1\nn$ has already been obtained, 
we can apply Corollary \ref{cor:solvability} with $(S,\ga)=(S_r,\ga\n)$ to obtain a unique $\Psi_{11}\nn$ for which \eqref{eq:iteration-h-dualh} is solvable.

Moreover, using the estimate \eqref{eq:estimate-cor-solvability} with $\mathring\eps=C_b \eps r^{-1-\de}$, noticing also that 
$\laph\n \Psi_1\nn$ on the right-hand side of \eqref{eq:iteration-h-dualh} has zero spherical mean over $\ga\n$, we have
\beaa
|(\Psi_{11}\nn)_{\ell=0} | &\lesssim  &|\overline{-(\Bb,\Bbd)}\n| \lesssim ||(\Bb,\Bbd)||_{L^\infty(S_r)} ||\ga\n-\gz||_{L^\infty(S_r)} \lesssim C_b^2 \eps^2 r^{-5-2\de}, 
\eeaa
\beaa
|(\Psi_{11}\nn)_{\ell=1}-\frac 12 (\laph\n \Psi_1\nn,0)_{\ell=1}| &\lesssim & C_b \eps r^{-1-\de} \cdot r^{-1}|| (\Bb,\Bbd) ||_{L^2(S_r,\ga\n)}\\ 
&\lesssim & C_b^2 \eps^2 r^{-5-2\de},
\eeaa
where we have used  
the equivalence of norms from Lemma \ref{lemma:equivalence-norms}.
Then, using the improved estimate \eqref{eq:step1-estimate} for $(\Psi_1\nn)_{\ell=1}$, we obtain
\beaa
|(\Psi_{11}\nn)_{\ell=1}|\lesssim |\cc|r^{-5}+\eps r^{-5-\de}+C_b^2 \eps^2 r^{-5-2\de} \lesssim \eps r^{-5}.
\eeaa
 Corollary \ref{cor:solvability} also implies that the solution $\Psi_4\nn$ to \eqref{eq:iteration-h-dualh} exists and, in view of Lemma \ref{lemma:Hodge-estimate-round} applied to $\d_1\d_2$,
	\beaa
		r^{-1} ||\Psi_4\nn||_{\H^{s+1}(S_r)}\lesssim r^{-1} \cdot r^2 ||(\Bb,\Bbd)||_{\H^{s-1}(S_r)} +r^{-1} || \Psi_1\nn ||_{\H^{s+1}(S_r)} \lesssim \eps r^{-2-\de}.
	\eeaa 	
	Similarly, applying Lemma \ref{lemma:Hodge-estimate-round}, we obtain $\Psi_5\nn$ from \eqref{eq:iteration-Delta-ah} and show that it verifies the estimate 
\beaa
r^{-1}  ||\Psi_5\nn||_{\H^{s+1}(S_r)} \lesssim  ||\laph\n \Psi_3\nn||_{\H^s(S_r)}+ ||\laph\n(\Ga_0\n\cdot\Ga_0\n)||_{\H^s(S_r)} \lesssim  r^{-2} \cdot \eps r^{-\de} \lesssim \eps r^{-2-\de}.
\eeaa	
To conclude this step, we derive the estimate of $\Psi_6\nn$ through the equation \eqref{eq:iteration-Y}. We obtain, by Lemma \ref{lemma:Hodge-estimate-round},
\beaa
r^{-1} ||\Psi_6 \nn||_{\H^{s+1}(S_r)}\lesssim ||(\Bb,\Bbd)||_{\H^{s}(S_r)} +|| \Psi_{11}\nn ||_{\H^{s}(S_r)}\lesssim \eps r^{-3-\de}.
\eeaa

\subsubsection{Proof of Proposition \ref{prop:step-4}}\lab{proof:step-4}
We recall the equations \eqref{eq:iteration-trt}, \eqref{eq:iteration-Pi}, and \eqref{eq:iteration-Pi-mean}:
\beaa
(\pa_r+ r^{-1})\Psi_7\nn &=& 2r^{-1}\Psi_{10}\nn+\Ga_1\n\cdot\Ga_1\n,\\
\laph\n \left(\ah\n \Psi_{10}\nn\right) &=& \Kk+\widetilde \K_{\ell\leq 1}\nn-\overline{\Kk+\widetilde \K_{\ell\leq 1}\nn}\n, \\
\overline{\displaystyle\ah\n\Psi_{10}\nn}\n &=& \overline{\Psi_3\n \Psi_{10}\n}\n.
\eeaa
The equations \eqref{eq:iteration-Pi} and \eqref{eq:iteration-Pi-mean} imply the following expression of $\Psi_{10}\nn$:
\bea\lab{eq:expression-Psi-10}
\ah\n \Psi_{10}\nn=(\laph\n)^{-1}(\Kk+\widetilde \K_{\ell\leq 1}\nn)+\overline{\Psi_3\n \Psi_{10}\n}\n.
\eea
Plugging this into the equation of $\Psi_7\nn$, we obtain
\bea\lab{eq:Psi-7-nonlocal}
(\pa_r+r^{-1}) \Psi_7\nn=2r^{-1} (\ah\n)^{-1} (\laph\n)^{-1}(\Kk+\widetilde \K_{\ell\leq 1}\nn)+\Ga_1\n\cdot\Ga_1\n+2r^{-1} \overline{\Psi_3\n \Psi_{10}\n}\n,
\eea
where we recall \eqref{eq:widetilde-K-ell-leq-1}
\beaa
\widetilde \K_{\ell\leq 1}\nn:= \mathcal P_1\left( \d_1\n\d_2\n(\ah\n\Psi_8\n)\right)_{\ell\leq 1}-\frac 12 (\ah\n \laph\n \Psi_7\nn)_{\ell= 1}-\frac 12 (\ah\n \laph\n \Psi_7\n)_{\ell=0}+(\Ga_1\n\cdot\Ga_2\n)_{\ell\leq 1}.
\eeaa
Therefore, with all quantities labeled with $\, \n$ viewed as known quantities, \eqref{eq:Psi-7-nonlocal} is an equation of $\Psi_7\nn$:
\bea\lab{eq:Psi-7-nonlocal-long}
\bsplit
(\pa_r+r^{-1}) \Psi_7\nn &= 2r^{-1} (\ah\n)^{-1} (\laph\n)^{-1} \left(-\frac 12 (\ah\n \laph\n \Psi_7\nn)_{\ell= 1}\right)+\Ga_1\n\cdot\Ga_1\n+2r^{-1}\overline{\Psi_3\n \Psi_{10}\n}\n \\
&\quad +2r^{-1} (\ah\n)^{-1} (\laph\n)^{-1}\left(\Kk+\mathcal P_1\left( \d_1\n\d_2\n(\ah\n\Psi_8\n)\right)_{\ell\leq 1} 
+(\Ga_1\n\cdot\Ga_2\n)_{\ell\leq 1}\right).
\end{split}
\eea
\begin{lemma}\lab{eq:existence-Psi-7-nn}
There exists a constant $C>0$ such that the equation \eqref{eq:Psi-7-nonlocal-long} has a unique solution $\Psi_7\nn$ verifying $||\Psi_7\nn||_s\leq C\eps$. More precisely, the solution satisfies
\bea\lab{eq:improved-bound-Psi-7-nn}
r^{-1} ||\Psi_7\nn||_{\H^{s+1}} \lesssim \eps r^{-2-\de},\quad r^{-1} ||(\Psi_7\nn)_{\ell=1}||_{\H^{s+1}} \lesssim C_b^2 \eps^2 r^{-3-2\de}.
\eea
\end{lemma}
\begin{proof}
This is a situation similar to, but much simpler than, the one we dealt with in Section \ref{section-nonlocal-eqn-Kc-kac-round}, and hence we only provide a sketch.\footnote{In particular, here we only have a single equation \eqref{eq:Psi-7-nonlocal-long} rather than a system, and the $\ell=1$ condition is zero at infinity, in contrast to the nonzero $\cc$ in Section \ref{section-nonlocal-eqn-Kc-kac-round}.}
Applying Lemma \ref{lemma:estimate-inverse-Laplacian}, 
we can write equation \eqref{eq:Psi-7-nonlocal-long} in the form
\beaa
(\pa_r+r^{-1}) \Psi_7\nn &=& -r^{-1} (\Psi_7\nn)_{\ell=1} + \RR(r^{-1}  \Psi_7\nn)+\Ga_1\n\cdot\Ga_1\n+r^{-1} \Ga_0\n\cdot \Ga_1\n \\
& & +2r^{-1} (\ah\n)^{-1} (\laph\n)^{-1}\left(\Kk+\mathcal P_1\left( \d_1\n\d_2\n(\ah\n\Psi_8\n)\right)_{\ell\leq 1} +(\Ga_1\n\cdot\Ga_2\n)_{\ell\leq 1}\right),
\eeaa
for some error linear operator $\RR$ that has similar properties\footnote{More precisely, the bound in Proposition \ref{prop:bound-RR}. At a heuristic level, $\RR$ provides an additional $\eps r^{-1-\de}$ factor.} as the $\RR$ 
introduced in Definition \ref{Def:RR-operator}.
Alternatively, the equation can be written as
\beaa
(\pa_r+2r^{-1}) \Psi_7\nn &=& r^{-1} (\Psi_7\nn)_{\ell\neq 1} + \RR(r^{-1}  \Psi_7\nn)+\Ga_1\n\cdot\Ga_1\n+r^{-1} \Ga_0\n\cdot \Ga_1\n \\
& & +2r^{-1} (\ah\n)^{-1} (\laph\n)^{-1}\left(\Kk+\mathcal P_1\left( \d_1\n\d_2\n(\ah\n\Psi_8\n)\right)_{\ell\leq 1} 
+(\Ga_1\n\cdot\Ga_2\n)_{\ell\leq 1}\right).
\eeaa
In the latter form, the first term on the right is a positive term, i.e., a special case of the positive definite matrix studied in Section \ref{section-nonlocal-eqn-Kc-kac-round}, and hence can be neglected. We then repeat the contraction argument in Section \ref{sec:summing-up}, in an easier situation, to obtain the existence of $\Psi_7\nn$ in the space consistent with the estimate
\beaa
r^{-1} ||\Psi_7\nn||_{\H^{s+1}} \lesssim \eps r^{-2-\de}.
\eeaa
We then project the equation to $\ell=1$ to obtain an improved estimate for $\ell=1$. The main reason for the improvement is that the free scalar $\Kk$, while only decaying at the rate $r^{-4-\de}$, is not supported on $\ell=1$. Therefore, the $\ell=1$ part of the right-hand side consists of only nonlinear terms. Since the existence of $\Psi_7\nn$ and its $\H^{s+1}$ bound have been obtained, such an improved estimate for $(\Psi_7\nn)_{\ell=1}$ is straightforward using the bound for the error operator $\RR$. 
\end{proof}
To conclude the proof of Proposition \ref{prop:step-4}, we apply the bound \eqref{eq:improved-bound-Psi-7-nn} for $\Psi_7\nn$ we just obtained to \eqref{eq:expression-Psi-10} and derive the estimate for $\Psi_{10}\nn$:
\beaa
r^{-1} ||\Psi_{10}\nn||_{\H^{s+1}} &\lesssim & r^{-1}\cdot r^2 ||\Kk||_{\H^{s-1}}+ r^{-1}||\Psi_{7}\nn||_{\H^{s+1}}+r^{-1}\cdot r^2 ||\big( \d_1\n\d_2\n(\ah\n\Psi_8\n)\big)_{\ell\leq 1}||_{\H^{s+1}}\\
& & +r^2 |(\Ga_1\n\cdot\Ga_2\n)_{\ell\leq 1}|+|\Ga_0\n\cdot\Ga_1\n| \\
&\lesssim & \eps r^{-2-\de}+C_b^2 \eps^2 r^{-3-2\de} \lesssim \eps r^{-2-\de}.
\eeaa
Note that we used that the term $\big( \d_1\n\d_2\n(\ah\n\Psi_8\n)\big)_{\ell\leq 1}$, in view of \eqref{eq:estimate-d1-d2-ell-leq-1}, is in fact nonlinear.

\subsubsection{Proof of Proposition \ref{prop:step-5}}\lab{proof:step-5}
We recall the equations \eqref{eq:iteration-Thh} and \eqref{eq:iteration-Xi}:
\beaa
\d_1\n \d_2\n \left(\ah\n \Psi_8\nn\right) &=& \frac 12 \left(\ah\n\laph\n \Psi_7\nn,0\right)+(\Kk, -\Kkd)+\Psi_{12}\nn+\Ga_1\n\cdot \Ga_2\n, \\
\d_1\n \Psi_9\nn &=& \bigg(0, \frac{3}{4\pi}  r^{-4} \sum_i{\bf a}_i\om_i+r^{-4} \int_r^\infty r'^4 (\Kkd-\Kd_{\ell\leq 1}\nn)\, dr' \bigg ) \\
\nonumber & & -\bigg(0, \overline{\displaystyle  \frac{3}{4\pi}  r^{-4} \sum_i{\bf a}_i \om_i+r^{-4} \int_r^\infty {r'^4} (\displaystyle{\Kkd}-\Kd_{\ell\leq 1}\nn)\, dr'}\n\bigg). 
\eeaa
Since we have determined $\Psi_7\nn$, we can apply Corollary \ref{cor:solvability} to the first equation with $\mathring\eps=C_b \eps r^{-1-\de}$ to obtain
\beaa
|(\Psi_{12}\nn)_{\ell=0} | &\lesssim  &|\overline{(\Kk,-\Kkd)}\n| \lesssim ||(\Kk,-\Kkd)||_{L^\infty(S_r)} ||\ga\n-\gz||_{L^\infty(S_r)} \lesssim C_b \eps^2 r^{-5-2\de}, 
\eeaa
\beaa
&&|(\Psi_{12}\nn)_{\ell=1}+\frac 12 \Up^{-\frac 12}(\laph\n \Psi_7\nn)_{\ell=1}| \\
&\lesssim & C_b \eps r^{-1-\de}\cdot r^{-1}||(\Kk,-\Kkd)+\frac 12 \Psi_3\n (\laph\n \Psi_7\nn,0)+\Ga_1\n\cdot \Ga_2\n ||_{L^2(S_r,\ga\n)} \\ 
&\lesssim & C_b \eps^2 r^{-5-2\de},
\eeaa
and the second estimate implies, in view of the improved $\ell=1$ bound for $\Psi_7\nn$ obtained in Lemma \ref{eq:existence-Psi-7-nn}, 
\beaa
|(\Psi_{12}\nn)_{\ell=1}|\lesssim C_b^2 \eps^2 r^{-5-2\de}+C_b \eps^2 r^{-5-2\de}\lesssim \eps r^{-5-2\de}.
\eeaa
 Corollary \ref{cor:solvability} then, in addition, implies that the solution $\Psi_8\nn$ to \eqref{eq:iteration-Thh} exists and, in view of Lemma \ref{lemma:Hodge-estimate-round} applied to $\d_1\d_2$,
	\beaa
		r^{-1} ||\Psi_8\nn||_{\H^{s+1}(S_r)}\lesssim r^{-1} \cdot r^2 ||(\Kk,-\Kkd)||_{\H^{s-1}(S_r)} +r^{-1} || \Psi_7\nn ||_{\H^{s+1}(S_r)} \lesssim \eps r^{-2-\de}.
	\eeaa 	
This proves the estimate for $\Psi_8\nn$. In view of the assumption on $\Kkd$ in \eqref{eq:main-thm-time-symmetric-B-Bd-bounds}, we have
\beaa
r^{-1} \big \| r^{-4} \int_r^\infty r'^4 (\Kkd) dr' \big\|_{\H^{s}} \lesssim \eps r^{-4-\de},
\eeaa
and hence we obtain, by the Hodge estimate in Lemma \ref{lemma:Hodge-estimate-round} to \eqref{eq:iteration-Xi},
\beaa
r^{-1} ||\Psi_9\nn||_{\H^{s+1}}\lesssim |{\bf a}| r^{-3}+\eps r^{-3-\de} \lesssim \eps r^{-2-\de}.
\eeaa

\subsubsection{Proof of Proposition \ref{prop:step-3}}\lab{proof:step-3}
We now derive the estimate for the spherical metric $\ga\nn$. Since $\slashed{\Lie}_{\pa_r} (r^{-2}\gz)=0$, the left-hand side of \eqref{eq:iteration-ga-metric} can be rewritten as $\slashed{\Lie}_{\pa_r} (r^{-2}\ga\nn-r^{-2}\gz)$. Then, using \eqref{eq:projected-Lie-equals-covariant}, the equation \eqref{eq:iteration-ga-metric} is equivalent to
\beaa
\nabz_{\pa_r} (\ga\nn-\gz) &=2\ah\n\Psi_4\nn+\ah\n\Psi_1\nn \ga\n+2\Up^\frac 12 \Psi_3\nn r^{-1} \ga\n,
\eeaa
where we recall the notations
\beaa
\ah\n=\Up^{-\frac 12}+\ao\n=\Up^{-\frac 12}+\Psi_3\n,\quad \thc\n=\Psi_1\n,\quad \thh\n=\Psi_4\n.
\eeaa
Since we seek solution with $||(\Psi\nn,\ga\nn)||_s<\infty$, we have $\lim_{r\to \infty} r^{-1} ||r (\ga\nn-\gz)||_{\H^{s+1}(S_r)}=0$. This is already stronger than what we need for applying Lemma \ref{lem:transport-lemma} with $\la=0$, and hence, using the improved $\H^{s+1}$ bounds for $\Psi_1\nn$, $\Psi_3\nn$, $\Psi_4\nn$ obtained in previous steps, 
we obtain
\beaa
r^{-1} ||\ga\nn-\gz||_{\H^{s+1}(S_r)} \lesssim \eps r^{-1-\de}.
\eeaa
This proves Proposition \ref{prop:step-3}.

\subsection{Contraction estimates}\lab{subsect:contraction-details}
We use the notation $\de \psi\nn:=\psi\nn-\psi\n$ for a general quantity $\psi$. 
We aim to show the contraction estimate $||\de(\Psi\nnn,\ga\nnn)||_s \leq C ||\de(\Psi\nn,\ga\nn)||_s$ for some positive constant $C<1$. Note again that here we define $||\cdot||_s$ as in \eqref{eq:def-norm-Psi-n} but with $\cc_i$ and $\gz$ removed.

\subsubsection{The main part}\lab{sec:contraction-main-part}
We first analyze the main part regarding $(\de\Psi_1\nnn, \de\Psi_2\nnn, \de\Psi_3\nnn)$. 
\begin{proposition}\lab{prop:contraction-main}
The quantities $\de\Psi_1\nnn$, $\de\Psi_2\nnn$, $\de\Psi_3\nnn$ satisfy the following system
\bea
\lab{eq:de-thc-nn}
(\pa_r+3r^{-1})(\de\Psi_1\nnn) &=& \Up^{-\frac 12}(\de\Psi_2\nnn-\Up^\frac 12 r^{-1}\de\Psi_1\nnn)_{\ell=0}-2(1-3mr^{-1}) r^{-2} (\de\Psi_3\nnn)\\
\nonumber & &+\NN[\de\Psi_1],\\
\lab{eq:de-Kc-nn}
(\pa_r+3r^{-1}) (\de\Psi_2\nnn) &=&  r^{-1} (\de\Psi_2\nnn-\Up^\frac 12 r^{-1}\de\Psi_1\nnn)_{\ell=0} -2\Up^\frac 12 r^{-3}(\de\Psi_3\nnn) \\
\nonumber & & -\frac 12 \Up^{-\frac 12}(\laph\nn \de\Psi_1\nnn)_{\ell= 1}+\NN[\de\Psi_2],\\
\Up^\frac 12 \laph\nn (\de\Psi_3\nnn) &=& (\de\Psi_2\nnn)-(\overline{\de\Psi_2\nnn}^{(n+1)})-\Up^\frac 12 r^{-1} (\de\Psi_1\nnn-\overline{\de\Psi_1\nnn}\nn)\\
\nonumber & & +\NN[\de\Psi_3], \\
\overline{\de\Psi_3\nnn}\nn &=&  -\frac 12 \Up^{-1} r \overline{\de\Psi_1\nnn}\nn+\NN_{av}[\de\Psi_3],
\eea
where the remainders satisfy the bounds
\beaa
r^{-1} ||\NN[\de\Psi_1]||_{\H^{s+1}(S_r)} &\lesssim & \eps r^{-4-2\de} ||\de(\Psi\nn,\ga\nn)||_s,\\
r^{-1}  ||\NN[\de\Psi_3], r\NN[\de\Psi_2]||_{\H^{s}(S_r)} &\lesssim & \eps r^{-4-2\de} ||\de(\Psi\nn,\ga\nn)||_s,\\
r^{-1} ||\NN_{av}[\de\Psi_3]||_{\H^{s+1}(S_r)} &\lesssim & \eps r^{-2-2\de} ||\de(\Psi\nn,\ga\nn)||_s.
\eeaa
\end{proposition}
\begin{proof}
See Appendix \ref{proof:contraction-main}.
\end{proof}

We can then write
\begin{equation}\lab{eq:expression-de-Psi-3}
\de\Psi_3\nnn=\Up^{-\frac 12} (\laph\nn)^{-1} \left(\de\Psi_2\nnn-\Up^\frac 12 r^{-1} \de\Psi_1\nnn+\NN[\de\Psi_3] \right)   -\frac 12 \Up^{-1} r \overline{\de\Psi_1\nnn}\nn+\NN_{av}[\de\Psi_3].
\end{equation}
This reduces the system to the following one for $(\de\Psi_1\nnn,\de\Psi_2\nnn)$:
\beaa
(\pa_r+3r^{-1})(\de\Psi_1\nnn) &=& \Up^{-\frac 12}(\de\Psi_2\nnn-\Up^\frac 12 r^{-1}\de\Psi_1\nnn)_{\ell=0}+(1-3mr^{-1}) r^{-1} \Up^{-1} \overline{\de\Psi_1\nnn}\nn \\
& & -2(1-3mr^{-1}) r^{-2} \Up^{-\frac 12} (\laph\nn)^{-1} \left(\de\Psi_2\nnn-\Up^\frac 12 r^{-1} \de\Psi_1\nnn \right) \\
& &+\NN[\de\Psi_1]-2(1-3mr^{-1}) r^{-2}\left(\Up^{-\frac 12} (\laph\nn)^{-1} (\NN[\de\Psi_3])+\NN_{av}[\de\Psi_3])\right),\\
(\pa_r+3r^{-1}) (\de\Psi_2\nnn) &=&  r^{-1} (\de\Psi_2\nnn-\Up^\frac 12 r^{-1}\de\Psi_1\nnn)_{\ell=0} +\Up^{-\frac 12} r^{-2} \overline{\de\Psi_1\nnn}\nn \\
&& -2 r^{-3} (\laph\nn)^{-1} \left(\de\Psi_2\nnn-\Up^\frac 12 r^{-1} \de\Psi_1\nnn \right)  \\
 & & -\frac 12 \Up^{-\frac 12}(\laph\nn \de\Psi_1\nnn)_{\ell= 1}+\NN[\de\Psi_2]-2r^{-3} (\laph\nn)^{-1}(\NN[\de\Psi_3])\\
 & &  -2\Up^\frac 12 r^{-3} \NN_{av}[\de\Psi_3],
\eeaa
which is already of the form \eqref{eqns:nonlocal-eqn-Kc-kac}. Moreover, we have the bounds
\bea
\nonumber && r^{-1} ||\NN[\de\Psi_1]-2(1-3mr^{-1}) r^{-2}\left(\Up^{-\frac 12} (\laph\nn)^{-1} (\NN[\de\Psi_3])+\NN_{av}[\de\Psi_3])\right)||_{\H^{s+1}} \\
\lab{eq:bounds-contraction-coupled-1}
&\lesssim & \eps r^{-4-2\de} ||\de(\Psi\nn,\ga\nn) ||_s,
\eea
and
\bea
\nonumber && r^{-1} || \NN[\de\Psi_2]-2r^{-3} (\laph\nn)^{-1}(\NN[\de\Psi_3]) -2\Up^\frac 12 r^{-3} \NN_{av}[\de\Psi_3] ||_{\H^s}\\
\lab{eq:bounds-contraction-coupled-2}
 &\lesssim & \eps r^{-5-2\de} ||\de(\Psi\nn,\ga\nn) ||_s.
\eea
Since $\ga\nn$ satisfies the first condition for $\tilde\ga$ in \eqref{eq:bound-condition-nonlocal-metric}, and $(\de\Psi_1\nnn,\de\Psi_2\nnn)$ satisfies the condition \eqref{eq:solve-nonlocal-vanishing-condition} with $\cc_i$ replaced by $0$ in view of the boundedness result, applying Proposition \ref{prop:solve-unknowns-kac-Kc} to this system, we see that $(\de\Psi_1\nnn,\de\Psi_2\nnn)$ must coincide with the unique solution given by the proposition. Moreover, in terms of the resulting estimates for $(\de\Psi_1\nnn,\de\Psi_2\nnn)$, since the estimates in \eqref{eq:bounds-contraction-coupled-1}, \eqref{eq:bounds-contraction-coupled-2} have for each an additional factor of $r$ decay compared with what is needed in Proposition \ref{prop:solve-unknowns-kac-Kc}, it is in fact obvious from the proof of Proposition \ref{prop:solve-unknowns-kac-Kc} that one obtains a corresponding improvement for the solution:\footnote{Or, instead, one could stay content with the improvement for the $\ell=1$ part, which is also enough for the contraction estimates.}
\bea\lab{eq:de-Psi-1-2-improved-s}
r^{-1} ||r^{3+\de} \de\Psi_1\nnn,r^{4+\de} \de\Psi_2\nnn||_{\H^s(S_r)} \lesssim \eps ||\de (\Psi\nn,\ga\nn)||_s.
\eea
Plugging back to \eqref{eq:expression-de-Psi-3}, we also deduce
\bea\lab{eq:de-Psi-3-improved}
r^{-1} ||r^{2+\de} \de\Psi_3\nn ||_{\H^{s+2}(S_r)}\lesssim \eps ||\de(\Psi\nn,\ga\nn) ||_s.
\eea
We also obtain, similar to \eqref{eq:highest-order-Psi-1}, the $\H^{s+1}$ estimate for $\de\Psi_1\nn$ by commuting the equation \eqref{eq:de-thc-nn} with $r(\nabz)^{s+1}$:
\bea\lab{eq:de-Psi-1-improved-s+1}
r^{-1} ||r^{3+\de} \de\Psi_1\nnn||_{\H^{s+1}(S_r)} \lesssim \eps ||\de (\Psi\nn,\ga\nn)||_s.
\eea

\subsubsection{The remaining spatial part}
\begin{proposition}\lab{prop:contraction-remain}
The quantities $\de \Psi_4\nnn$, $\de \Psi_5\nnn$, $\de \Psi_6\nnn$ satisfy the following system
\bea
\lab{eq:de-Psi-4-nn}
\d_1\nn\d_2\nn \de\Psi_4\nnn &=& \frac 12 (\laph\nn \de\Psi_1\nnn,0)-\de\Psi_{11}\nnn +\NN[\de\Psi_4], \\
\lab{eq:de-Psi-5-nn}
\d_1\nn \de \Psi_5\nnn &=& -(\Up^\frac 12 \laph\nn \de\Psi_3\nnn,0 )+\NN[\de\Psi_5], \\
\lab{eq:de-Psi-6-nn}
\d_1\nn \de \Psi_6\nnn &=&  \de \Psi_{11}\nnn-\overline{\de \Psi_{11}\nnn}\nn +\NN[\de\Psi_6].
\eea
where the following bounds hold
\beaa
r^{-1} ||\NN[\de\Psi_4], r^{-1}\NN[\de\Psi_5], \NN[\de\Psi_6] ||_{\H^s} \lesssim \eps r^{-5-2\de} ||\de(\Psi\nn,\ga\nn)||_s.
\eeaa
\end{proposition}
\begin{proof}
See Appendix \ref{proof:contraction-remain}.
\end{proof}

We note that the horizontal tensor $\de\Psi_4\nnn=\Psi_4\nnn-\Psi_4\nn$ is not strictly traceless with respect to $\ga\nn$. We can rewrite \eqref{eq:de-Psi-4-nn} as
\bea\lab{eq:contraction-Codazzi}
\bsplit
& \d_1\nn\d_2\nn \Big(\de\Psi_4\nnn -\frac 12 (\trh\nn \de\Psi_4\nnn) \ga\nn\Big) \\
=&\;  \frac 12 (\laph\nn (\trh\nn \Psi_4\nn),0)+\frac 12 \laph\nn (\de\Psi_1\nnn,0)
-\de\Psi_{11}\nnn+\NN[\de\Psi_4],
\end{split}
\eea
where we used that
\beaa
\d_1\nn\d_2\nn ((\trh\nn \de\Psi_4\nnn)\ga\nn)=(\laph\nn (\trh\nn \de\Psi_4\nnn),0)=- (\laph\nn (\trh\nn \Psi_4\nn),0).
\eeaa
We then also have $\trh\nn \Psi_4\nn=O(\de\ga\nn\cdot \Psi_4\nn)$ since $\Psi_4\nn$ is traceless with respect to $\ga\n$.
The fact that the solution exists for \eqref{eq:contraction-Codazzi} implies, using Corollary \ref{cor:solvability}, that the $\ell\leq 1$ coefficients $\de\Psi_{11}\nnn$ satisfy the estimate 
\beaa
|\de\Psi_{11}\nnn | &\lesssim & r^{-1} ||\laph\nn (\de\ga\nn\cdot \Psi_4\nn)||_{L^2(S_r,\ga\nn)}+r^{-1} ||\laph\nn (\de\Psi_1\nnn)||_{L^2(S_r,\ga\nn)}\\
&& + r^{-1} ||\NN[\de\Psi_4]||_{L^2(S_r,\ga\nn)} \\
&\lesssim & r^{-1} ||\laph\nn (\de\ga\nn\cdot \Psi_4\nn)||_{L^2(S_r)} +  r^{-1} ||\laph\nn(\de\Psi_1\nnn)||_{L^2(S_r)}+ r^{-1} ||\NN[\de\Psi_4]||_{L^2(S_r)} \\
&\lesssim & \eps r^{-5-\de} ||\de(\Psi\nn,\ga\nn)||_s,
\eeaa
where we used the estimate for $\de\Psi_1\nn$ obtained in \eqref{eq:de-Psi-1-2-improved-s}.

To estimate $\de\Psi_4\nnn$, we now apply the Hodge estimate \eqref{eq:Hodge-estimate-round} to \eqref{eq:contraction-Codazzi} and obtain
\beaa
& & r^{-1} ||\de\Psi_4\nnn - \frac 12 (\trh\nn \de\Psi_4\nnn) \ga\nn||_{\H^{s+1}}  \lesssim  ||r^2 \laph\nn (\de\ga\nn\cdot \Psi_4\nn)||_{\H^{s-1}}  \\
& & + r^{-1} ||r^2 \laph\nn (\de\Psi_1\nnn)||_{\H^{s-1}}+ r^2 |\de\Psi_{11}\nnn|+r^{-1} ||r^2\NN[\de\Psi_4]||_{\H^{s-1}} \\
&\lesssim & \eps r^{-3-\de} ||\de(\Psi\nn,\ga\nn)||_s.
\eeaa
This implies 
\beaa
r^{-1} ||\de\Psi_4\nnn ||_{\H^{s+1}} \lesssim \eps r^{-3-\de} ||\de(\Psi\nn,\ga\nn)||_s.
\eeaa
We then apply the Hodge estimates \eqref{eq:Hodge-estimate-round} to  \eqref{eq:de-Psi-5-nn} and \eqref{eq:de-Psi-6-nn} to obtain, using the improved estimates for $\de\Psi_3\nnn$ and $\de\Psi_{11}\nnn$,
\beaa
r^{-1} || \de \Psi_5\nnn||_{\H^{s+1}} &\lesssim & r^{-1} ||r \laph\nn \de\Psi_3\nnn||_{\H^s} +r^{-1} ||r\NN[\de\Psi_5]||_{\H^s} \lesssim   \eps r^{-3-\de} ||\de(\Psi\nn,\ga\nn)||_s, \\
r^{-1} || \de \Psi_6\nnn||_{\H^{s}} &\lesssim & r |\de \Psi_{11}\nnn|+ r^{-1} ||r\NN[\de\Psi_6]||_{\H^{s-1}} \lesssim \eps r^{-4-\de} ||\de(\Psi\nn,\ga\nn)||_s.
\eeaa

\subsubsection{The $k$ part}
\begin{proposition}\lab{prop:contraction-k-part}
The quantities $\de \Psi_7\nnn$, $\de \Psi_8\nnn$, $\de \Psi_9\nnn$, $\de\Psi_{10}\nnn$ satisfy the following system
\bea
\lab{eq:de-Psi-7-nn}
(\pa_r+r^{-1}) \de\Psi_7\nnn &=& 2r^{-1} \de\Psi_{10}\nnn +\NN[\de\Psi_7],\\
\lab{eq:de-Psi-8-nn}
\d_1\nn \d_2\nn \left(\de (\ah\nn\Psi_8\nnn)\right) &=& \frac 12 (\ah\nn\laph\nn \de\Psi_7\nnn,0) +\de\Psi_{12}\nnn +\NN[\de\Psi_8],\\
\lab{eq:de-Psi-9-nn}
\d_1\nn \de\Psi_9\nnn &=& -\bigg(0,r^{-4} \int_r^\infty r'^4 \de\Kd_{\ell\leq 1}\nnn\, dr'\bigg)\\
\nonumber & & +\bigg(0,\overline{\displaystyle r^{-4} \int_r^\infty r'^4 (\de\Kd_{\ell\leq 1}\nnn)\, dr'}\nn\bigg)+\NN[\de\Psi_9], \\
\lab{eq:de-Psi-10-nn}
 \laph\nn \left(\de(\ah\nn\Psi_{10}\nnn)\right) &=& -\frac 12 \left(\ah\nn \laph\nn \de\Psi_7\nnn\right)_{\ell=1} \\
 \nonumber &&+\frac 12 \overline{\left(\ah\nn\laph\nn \de\Psi_7\nnn\right)_{\ell=1}}\nn+\NN[\de\Psi_{10}], \\
 \lab{eq:de-Psi-10-av-nn}
\overline{\displaystyle \de(\ah\nn\Psi_{10}\nnn)}\nn &=& \NN_{av}[\de\Psi_{10}],
\eea
\end{proposition}
where the following bounds hold
\beaa
r^{-1} ||\NN[\de\Psi_7] ||_{\H^{s+1}} &\lesssim & \eps r^{-4-2\de} ||\de(\Psi\nn,\ga\nn)||_s, \\
r^{-1} ||\NN[\de\Psi_8] ||_{\H^{s-1}} &\lesssim & \eps r^{-5-2\de} ||\de(\Psi\nn,\ga\nn)||_s,\\
r^{-1} ||\NN[\de\Psi_9] ||_{\H^s} &\lesssim & \eps r^{-4-2\de} ||\de(\Psi\nn,\ga\nn)||_s, \\
r^{-1} ||\NN[\de\Psi_{10}]||_{\H^{s-1}} &\lesssim & \eps r^{-5-2\de} ||\de(\Psi\nn,\ga\nn)||_s, \\
r^{-1} ||\NN_{av} [\de\Psi_{10}]||_{\H^{s+1}} &\lesssim & \eps r^{-3-2\de} ||\de(\Psi\nn,\ga\nn)||_s.
\eeaa
\begin{proof}
See Appendix \ref{proof:contraction-k-part}.
\end{proof}
The equations \eqref{eq:de-Psi-10-nn}, \eqref{eq:de-Psi-10-av-nn} implies
\beaa
\de(\ah\nn\Psi_{10}\nnn)=(\laph\nn)^{-1} \left(-\frac 12 \left(\ah\nn \laph\nn \de\Psi_7\nnn\right)_{\ell=1}+\NN[\de\Psi_{10}]\right)+\NN_{av}[\de\Psi_{10}].
\eeaa
Note that $\de(\ah\nn\Psi_{10}\nnn)=\ah\nn \de \Psi_{10}\nnn+\de\ah\nn \Psi_{10}\nn$. Therefore, we obtain an expression of $\de\Psi_{10}\nnn$. Plugging it into \eqref{eq:de-Psi-7-nn}, we derive the equation
\beaa
(\pa_r+r^{-1}) \de\Psi_7\nnn &=& 2r^{-1} (\ah\nn)^{-1} (\laph\nn)^{-1} \left(-\frac 12 \left(\ah\nn \laph\nn \de\Psi_7\nnn\right)_{\ell=1}+\NN[\de\Psi_{10}]\right)\\
& & +2r^{-1} (\ah\nn)^{-1}\left(\NN_{av}[\de\Psi_{10}]-\de\ah\nn \Psi_{10}\nn\right) +\NN[\de\Psi_7]
\eeaa
Recall that the boundedness result implies $r^{-1} ||\de\Psi_7\nnn||_{\H^{s+1}}\lesssim \eps r^{-2-\de}$. This provides the vanishing condition we need, and we can proceed as in Lemma \ref{eq:existence-Psi-7-nn} to obtain
\bea\lab{eq:contraction-estimate-Psi-7-nnn}
r^{-1} ||\de\Psi_7\nnn||_{\H^{s+1}} \lesssim \eps r^{-3-\de} ||\de(\Psi\nn,\ga\nn)||_s,
\eea
where we note that compared with the first estimate in \eqref{eq:improved-bound-Psi-7-nn}, the improvement on the decay rate arises from the fact that, unlike for the equation of $\Psi_7\nn$, here the leading contribution from the free scalar in $\Kk$ is cancelled.

Then, plugging back to the expression of $\de\Psi_{10}\nnn$, we obtain
\beaa
r^{-1} ||\de(\ah\nn\Psi_{10}\nnn)||_{\H^{s+1}} &\lesssim & \eps r^{-2-\de} ||\de(\Psi\nn,\ga\nn)||_s.
\eeaa
We now analyze the equation of $\de(\ah\nn\Psi_8\nnn)$. As in \eqref{eq:contraction-Codazzi}, since $\de(\ah\nn\Psi_8\nnn)$ is not necessarily traceless with respect to $\ga\nn$, we write
\bea\lab{eq:contraction-Codazzi-k}
\bsplit
& \d_1\nn\d_2\nn \Big(\de(\ah\nn\Psi_8\nnn) -\frac 12 (\trh\nn \de(\ah\nn\Psi_8\nnn)) \ga\nn\Big) \\
=&\;  \frac 12 (\laph\nn (\trh\nn (\ah\n\Psi_8\nn)),0)+\frac 12 (\laph\nn \de\Psi_7\nnn,0)
-\de\Psi_{12}\nnn+\NN[\de\Psi_8],
\end{split}
\eea
and the first term on the right can be further written in the form $O(\de\ga\nn\cdot \ah\n\Psi_8\nn)$, using that $\Psi_8\nn$ is traceless with respect to $\ga\n$.
Then, the fact that the solution exists for \eqref{eq:contraction-Codazzi-k} implies the following estimate for $\de\Psi_{12}\nnn$, using Corollary \ref{cor:solvability}:
\beaa
|\de\Psi_{12}\nnn | &\lesssim & r^{-1} ||\laph\nn (\de\ga\nn\cdot \ah\n\Psi_8\nn)||_{L^2(S,\ga\nn)}+r^{-1} ||\laph\nn (\de\Psi_7\nnn)||_{L^2(S,\ga\nn)}\\
&& + r^{-1} ||\NN[\de\Psi_8]||_{L^2(S,\ga\nn)} \\
&\lesssim & r^{-1} ||\laph\nn (\de\ga\nn\cdot \ah\n\Psi_8\nn)||_{L^2} +  r^{-1} ||\de\Psi_7\nnn||_{L^2}+ r^{-1} ||\NN[\de\Psi_8]||_{L^2} \\
&\lesssim & \eps r^{-5-\de} ||\de(\Psi\nn,\ga\nn)||_s,
\eeaa
where we used the estimate for $\de\Psi_7\nn$ we just obtained in \eqref{eq:contraction-estimate-Psi-7-nnn}.

We now apply the Hodge estimate \eqref{eq:Hodge-estimate-round} to \eqref{eq:contraction-Codazzi} and obtain
\beaa
& & r^{-1} ||\de(\ah\nn\Psi_8\nnn) -\frac 12 (\trh\nn \de(\ah\nn\Psi_8\nnn)) \ga\nn||_{\H^{s+1}} \\
 &\lesssim &  ||r^2 \laph\nn (\de\ga\nn\cdot \ah\n\Psi_8\nn)||_{\H^{s-1}}  \\
& & + r^{-1} ||r^2 \laph\nn (\de\Psi_7\nnn)||_{\H^{s-1}}+ r^2 |\de\Psi_{12}\nnn|+r^{-1} ||r^2\NN[\de\Psi_8]||_{\H^{s-1}} \\
&\lesssim & \eps r^{-3-\de} ||\de(\Psi\nn,\ga\nn)||_s.
\eeaa
This yields the estimate
\beaa
r^{-1} ||\de \Psi_8\nnn ||_{\H^{s+1}} \lesssim \eps r^{-3-\de} ||\de(\Psi\nn,\ga\nn)||_s.
\eeaa
To conclude, we apply the Hodge estimate \eqref{eq:Hodge-estimate-round} to \eqref{eq:de-Psi-9-nn} to deduce
\beaa
r^{-1} ||\de\Psi_9\nnn||_{\H^{s+1}} &\lesssim & r\cdot r^{-4}\int_r^\infty r'^4 |\de\Psi_{12}\nnn|\, dr'+\eps r^{-3-\de} ||\de(\Psi\nn,\ga\nn)||_s \\
&\lesssim & \eps r^{-3-\de} ||\de(\Psi\nn,\ga\nn)||_s.
\eeaa

\subsubsection{The horizontal metric}
Finally, we derive the equation of $\de \ga\nnn$ using \eqref{eq:iteration-ga-metric}
\beaa
\slashed{\Lie}_{\pa_r} (r^{-2}\de\ga\nnn) &=& 2r^{-2}(\ah\nn)^{-1}\de\Psi_4\nnn+(\ah\nn)^{-1}\de\Psi_1\nnn(r^{-2}\ga\nn)\\
& &+2\Up^\frac 12 \de\Psi_3\nnn r^{-1} (r^{-2}\ga\nn)+\NN[\de\ga],
\eeaa
where
\beaa
\NN[\de\ga] &:= & 2r^{-2} \left((\ah\nn)^{-1}-(\ah\n)^{-1}\right) \Psi_4\nn + r^{-2} \left((\ah\nn)^{-1} \ga\nn-(\ah\n)^{-1} \ga\n\right) \Psi_1\nn\\
& &+2 r^{-3} \Up^\frac 12 (\ga\nn-\ga\n) \Psi_3\nn.
\eeaa
Using \eqref{eq:projected-Lie-equals-covariant}, the equation is equivalent to
\beaa
\nabz_{\pa_r} (\de\ga\nnn) &=& 2(\ah\nn)^{-1}\de\Psi_4\nnn+(\ah\nn)^{-1}\de\Psi_1\nnn \ga\nn +2\Up^\frac 12 \de\Psi_3\nnn r^{-1} \ga\nn+r^2 \NN[\de\ga].
\eeaa
We omit the estimate of $\NN[\de\ga]$ since it contains additional small and decaying factors.
Integrating in the $r$-direction from infinity using Lemma \ref{lem:transport-lemma}, we obtain
\beaa
 r^{-1} ||\de \ga\nnn ||_{\H^{s+1}(S_r)} 
&\lesssim & \int_r^\infty r'^{-1} ||\de\Psi_4\nnn||_{\H^{s+1}(S_{r'})} + r'^{-1} ||\de\Psi_1\nnn||_{\H^{s+1}(S_{r'})}\\
& &+r'^{-2} ||\de\Psi_3\nnn||_{\H^{s+1}(S_{r'})} +r'^{-1} ||r^2\NN[\de\ga]||_{\H^{s+1}(S_{r'})} dr' \\ 
 &\lesssim & \eps r^{-2-\de} ||\de(\Psi\nn,\ga\nn)||_s.
\eeaa

\subsection{The limit $(g\i,k\i)$}
\subsubsection{Proof of Lemma \ref{eq:lemma-limit-properties}}\lab{sect:proof-lemma-limit-properties}
According  to the lemma, we need to verify the following statements:
\begin{enumerate}
\item The horizontal tensors $\Psi_4\i$ and $\Psi_8\i$ are traceless with respect to $\ga\i$.
\item With respect to the metric $g\i:=(\ah\i)^2 dr^2+\ga\i$, see \eqref{eq:def-g-limit},
 the quantities $\Psi_4\i$ and $\Psi_1\i+2\Up^\frac 12 r^{-1}$ are exactly the traceless part and the trace of the second fundamental form of the $r$-spheres. We hence denote $\thh\i=\Psi_4\i$ and $\trth\i=\Psi_1\i+2\Up^\frac 12 r^{-1}$ without ambiguity.
\item We have $\Psi_5\i=-\nabh\i (\log\ah\i)$, $\mu_{\ell\geq 1}\i=0$, and the average of $\Psi_3\i+\frac 12 \Up^{-1} r \Psi_1\i$ vanishes with respect to $\ga\i$.
\item For the quantity defined in \eqref{eq:limit-tilde-B}, we have $\widetilde\B_{\ell\leq 1} \i =\B_{\ell\leq 1} \i$, the latter being the first component of $\Psi_{11}\i\in \sk_0$.
\item Denote by $Y(g\i)$ the horizontal tensor $Y$ with respect to $g\i$ defined through \eqref{eq:def-curvature-components-Si}. Then we have $\Psi_6\i=Y(g\i)$.
\item Denote the Gauss curvature of $\ga\i$ by $K(\ga\i)$. Then we have $\Psi_2\i=K(\ga\i)-r^{-2}$.
\end{enumerate}

Recall  that, see equation \eqref{eq:def-Psi-1-12},
\beaa
\bsplit
\Psi_1\n &=\thc\n,\quad \Psi_2\n=\Kc\n,\quad \Psi_3\n=\ao\n,\quad \Psi_4\n=\thh\n,\quad \Psi_5\n=\pp\n,\quad \Psi_6\n=Y\n, \\
\Psi_7\n &=\trt\n,\quad \Psi_8\n=\kh\n,\quad \Psi_9\n = \Xi\n,\quad \Psi_{10}\n=\Pi\n,\\  \Psi_{11}\n &=(\B_{\ell\leq 1}\n,\Bd_{\ell\leq 1}\n),\quad \Psi_{12}\n=(\K_{\ell\leq 1}\n,\Kd_{\ell\leq 1}\n).
\end{split}
\eeaa
\begin{proof} We proceed as follows:

 {\bf $1^{st}$  statement:}
Since, by construction, we have $(\ga\n)^{AB} (\Psi_4\nn)_{AB}=(\ga\n)^{AB} (\Psi_8\nn)_{AB}=0$, the first statement follows by taking the limit $n\to\infty$.

 {\bf $2^{nd}$  statement:}
Note that the equation \eqref{eq:metric-limit} implies the following reversed derivation of the identity \eqref{eq:identity-metric-kac-ah} in the proof of Proposition \ref{prop:linearized-eqns-time-symmetry}:
\bea
\bsplit
 \slashed{\Lie}_{\pa_r} (r^{-2}\ga\i) 
 &=  2r^{-2}\ah\i\Psi_4\i+\ah\i\Psi_1\i (r^{-2}\ga\i)+2\Up^\frac 12 \Psi_3\i r^{-1} (r^{-2}\ga\i)\\
 &= 2r^{-2}\ah\i\Psi_4\i+\ah\i\Psi_1\i (r^{-2}\ga\i)+2\Up^\frac 12 (\ah\i- \Up^{-\frac 12}) r^{-1} (r^{-2}\ga\i)\\
 &=2r^{-2}\ah\i \Psi_4\i+\ah\i (\Psi_1\i+2\Up^\frac 12 r^{-1}) (r^{-2}\ga\i) -2r^{-1}(r^{-2}\ga\i).
\end{split}
\eea
Therefore, using that $\slashed{\Lie}_{(\ah\i)^{-1}\pa_r} \ga\i=(\ah\i)^{-1} \slashed{\Lie}_{\pa_r} (r^{-2}\ga\i) $ by the form of $g\i$ and \eqref{eq:scalar-multiple-projected-Lie-derivatives}, we deduce
\beaa
\slashed{\Lie}_{(\ah\i)^{-1}\pa_r} \ga\i=2\Psi_4\i+(\Psi_1\i +2\Up^\frac 12 r^{-1}) \ga\i,
\eeaa
and the second statement follows.

 {\bf  $3^{rd}$  statement:}
To prove the third statement, note that using the precise structure of the nonlinear term, \eqref{eq:iteration-Delta-ah-ii} implies $\d_1\i \Psi_5\i=-\laph\i (\log\ah\i)$. Since $\nabh\i (\log\ah\i)=\slashed{d} \log(\ah\i)$ is, with respect to $\ga\i$, the only curl-free $1$-form whose divergence equals $\laph\i (\log\ah\i)$, we have $\Psi_5\i=-\nabh\i (\log\ah\i)$. 
Similarly, 
using the precise structure of the nonlinear term in \eqref{eq:iteration-divP-ii}, in particular Remark \ref{rem:eq-Psi-3-nonlinear-terms} and the fact that the $\laph\i(\Ga_0\i\cdot\Ga_0\i)$ term turns $\Up^\frac 12 \laph\i\ah\i$ to $\laph\i(\log\ah\i)$, we have
\beaa
\laph\i(\log\ah\i)&=& \Psi_2\i -\overline{\Psi_2\i}^{(\infty)}-\frac 14 (\trth\i)^2+\frac 14\overline{\displaystyle(\trth\i)^2}\i.
\eeaa
Therefore, we have
\beaa
\mu_{\ell\geq 1}\i=\Big(-\laph\i(\log\ah\i)+\Psi_2\i-\frac 14(\trth\i)^2\Big)_{\ell\geq 1}=0.
\eeaa
The relation $\overline{\Psi_3\i}\i=-\frac 12 \Up^{-1} r \overline{\Psi_1\i}\i$ is also justified by taking the limit of the equation \eqref{eq:iteration-average-a}.

 {\bf  $4^{th}$  statement:}  We have pointed out in \eqref{eq:limit-tilde-B} that
\beaa
\widetilde\B_{\ell\leq 1} \i &=& \widetilde\B_{\ell\leq 1,aux} \i =\frac 12 (\laph\i \thc\i)_{\ell\leq 1}-\left(\mathcal P_1(\d_1\i\d_2\i \thh\i)\right)_{\ell\leq 1} \\
&=& \left(\frac 12 \laph\i \thc\i-\divh\i\divh\i \thh\i\right)_{\ell\leq 1}.
\eeaa
Comparing this with \eqref{eq:iteration-h-dualh-ii} projected to $\ell\leq 1$ and using that $\Bb_{\ell\geq 1}=0$, we see that $\widetilde\B_{\ell\leq 1} \i =\B_{\ell\leq 1} \i$. This proves the fourth statement.

 {\bf  $5^{th}$  statement:} 
In view of  the  $2^{nd}$ statement,  we  have established that the limit  $\th\i=\thh\i+\frac 12 \trth\i \ga\i$ is  in fact the second fundamental form  of the 
 $r$-foliation with respect to the metric $g\i$. We can therefore make use  of the  unconditional  equation \eqref{eq:unconditional-Codazzi} of Proposition \ref{prop:Unconditional-equations-1}, according to which,
\bea\lab{eq:unconditional-limit-Codazzi}
\divh\i \thh\i &=&\frac 12 \nabh\i\trth\i-Y(g\i).
\eea
Taking $\d_1\i$ of \eqref{eq:unconditional-limit-Codazzi} and comparing it with \eqref{eq:iteration-h-dualh-ii}, we deduce that
\beaa
\d_1\i Y(g\i)=(\Bb,\Bbd)+(\B_{\ell\leq 1}\i, \Bd_{\ell\leq 1}\i).
\eeaa
Comparing it with \eqref{eq:iteration-Y-ii}, we have
\beaa
\d_1\i (\Psi_6\i-Y(g\i))=\overline{\displaystyle(\Bb,\Bbd)+(\B_{\ell\leq 1}\i, \Bd_{\ell\leq 1}\i)}\i.
\eeaa
Taking the spherical mean over $\ga\i$, we see that the right-hand side is in fact zero. Therefore, we obtain that $\Psi_6\i=Y(g\i)$ using the injectivity of $\d_1\i$. This proves the fifth statement.

 {\bf  $6^{th}$  statement:}  Using the previous statements, 
we appeal to the unconditional equation \eqref{eq:R-transport-Kc} for $\Kc(\ga\i):=K(\ga\i)-r^{-2}$, applied to the metric $g\i$,\footnote{The derivation of unconditional equation \eqref{eq:R-transport-Kc} is independent of $k\i$, see Remark \ref{rem:eq-Psi-2-nonlinear-terms}.} as
\begin{equation}\lab{eq:limit-K-linearized-unconditional}
\pa_r \Kc(\ga\i) = r^{-1} \muc(g\i)-\ah\i \divh\i Y\i-3r^{-1}\Kc(\ga\i) -2\Up^\frac 12 r^{-3} \ao\i+
\Ga_1(g\i) \cdot \Ga_2(g\i).
\end{equation}
Here 
\beaa
\muc(g\i):=-\laph\i (\log\ah\i)+K(\ga\i)-\frac 14(\trth\i)^2-2mr^{-3},
\eeaa
and, due to our previous statements as well as  Remark \ref{rem:eq-Psi-2-nonlinear-terms}, we have $\Ga_1(g\i)=\Ga_1\i$ and   $\Ga_2(g\i)=\Ga_2\i$, with the exception that, whenever $\Kc\i$ appears, it is replaced by $\Kc(\ga\i)$. 
On the other hand, since we have proved that $\mu_{\ell\geq 1}\i=0$, $\widetilde\B_{\ell\leq 1}\i=\B_{\ell\leq 1}\i$, and $\divh\i Y\i=\Bb+\B_{\ell\leq 1}\i$, the equation \eqref{eq:iteration-Kc-ii} reads
\bea\lab{eq:limit-Kc-comparison}
(\pa_r+3r^{-1}) \Psi_2\i &=& r^{-1}\muc\i -2\Up^\frac 12 r^{-3}\Psi_3\i -\ah\i \divh\i Y\i
 +\Ga_1\i\cdot \Ga_2\i.
\eea
Note that since they both originate from \eqref{eq:R-transport-Kc}, the schematic forms in \eqref{eq:limit-K-linearized-unconditional} and \eqref{eq:limit-Kc-comparison} have the same expression,\footnote{
The precise expressions of the schematic terms $\Ga_1\i\cdot \Ga_1\i$, $\Ga_1\i\cdot \Ga_2\i$, etc. can be tracked down from the corresponding terms in the derivation of the equations in  Proposition \ref{prop:linearized-eqns-time-symmetry}. 
} apart from the difference between $\Kc(\ga\i)$ and $\Kc\i$. Therefore, taking the difference between \eqref{eq:limit-Kc-comparison} and \eqref{eq:limit-K-linearized-unconditional}, we obtain
\beaa
\pa_r \left(\Kc\i-\Kc(\ga\i)\right) =-2r^{-1} \left(\Kc\i-\Kc(\ga\i)\right)+\Ga_1\i\cdot \left(\Kc\i-\Kc(\ga\i)\right).
\eeaa
We already know that $r^3\Kc\i=r^3\Psi_2\i\to 0$ by the boundedness of $\Psi\i$ in $||\cdot||_s$. Moreover, since $\ga\i$ and its $r\nabz$ derivatives decay at the rate $r^{-1-\de}$, we deduce that $\Kc(\ga\i)=K(\ga\i)-r^{-2}$ decays at the rate $r^{-3-\de}$, so in particular $\lim_{r\to\infty} r^{3} \Kc(\ga\i) \to 0$. 
This is stronger than the condition $\lim_{r\to\infty} r^2 (\Kc\i-\Kc(\ga\i))=0$ needed here, and hence, using that $\Ga_1\i=O(\eps r^{-2-\de})$, we integrate from infinity to obtain $\Kc\i=\Kc(\ga\i)$, i.e., $K\i=K(\ga\i)$.
\end{proof}

\subsubsection{Proof of Proposition \ref{prop:limit-solves-constraint}}\lab{subsect:proof-limit-solves-constraint}
The goal is to prove that $(g\i,k\i)$ solves the Einstein constraint equation \eqref{ece}, where $g\i$ and $k\i$ are defined respectively in \eqref{eq:def-g-limit} and \eqref{eq:def-k-infty}.

Throughout this proof, we use the shorthand notation
\beaa
\CC_{Ham}\i:=\CC_{Ham}(g\i,k\i),\quad \CC_{Mom}\i:=\CC_{Mom}(g\i,k\i),\quad \slashed\CC_{Mom}\i:=\slashed{\CC}_{Mom}(g\i,k\i).
\eeaa
The way of defining $k\i$ in \eqref{eq:def-k-infty} implies 
\beaa
\Psi_7\i=\trt(g\i,k\i),\quad \Psi_8\i = \kh(g\i,k\i),\quad \Psi_9 = \Xi(g\i,k\i),\quad \Psi_{10} = \Pi(g\i,k\i).
\eeaa 
Therefore we will denote them by $\trt\i$, $\kh\i$, $\Xi\i$, and $\Pi\i$ without ambiguity. Together with the statements in Lemma \ref{eq:lemma-limit-properties}, we see that now all quantities in $\Ga_1\i$ and $\Ga_2\i$ have no ambiguities.

Using $\mu_{\ell\geq 1}\i=0$, and $K\i=K(\ga\i)$ from Lemma \ref{eq:lemma-limit-properties}, the equation \eqref{eq:iteration-kac-ii} implies
\beaa
(\pa_r+2r^{-1})\thc\i &=& \ah\i\muc\i  -2(1-3mr^{-1}) r^{-2}\ao\i+\Ga_1\i\cdot \Ga_1\i,
\eeaa
We now apply the unconditional equation \eqref{eq:R-transport-kac}, noting that we have shown in Lemma \ref{eq:lemma-limit-properties} that $\trth\i$ is the $N\i$-expansion with respect to $g\i$ and $\muc(g\i)=\muc\i$, 
\beaa
(\pa_r+2r^{-1})\thc\i &=& \ah\i\muc\i -2(1-3mr^{-2}) r^{-2}\ao\i+\Ga_1\i\cdot \Ga_1\i-\frac 12 \ah\i \CC_{Ham}\i.
\eeaa
Comparing the two equations, using that the schematic terms in fact have the same algebraic expression, we deduce $\CC_{Ham}\i=0$.

We now prove that the momentum constraint also vanishes. The first component of the equation \eqref{eq:iteration-Xi-ii} reads
\beaa
\divh\i \Xi\i =0.
\eeaa
Using this, the unconditional equation \eqref{eq:N-trt-linearized} applied to $(g\i,k\i)$ reads
\bea
\lab{eq:N-trt-linearized-uncond}
\pa_r \trt\i &=& 
 2 r^{-1} \Pi\i  - r^{-1} \trt\i +\Ga_1\i\cdot \Ga_1\i - \ah\i (\CC_{Mom}\i)_{N\i},
\eea
Comparing \eqref{eq:N-trt-linearized-uncond} with \eqref{eq:iteration-trt-ii} and using that $\Ga_1\i\cdot\Ga_1\i$ in the two equations have the same algebraic expression, we obtain $(\CC_{Mom}\i)_{N\i}=0$. 

The unconditional equations \eqref{eq:N-div-Xi}-\eqref{eq:N-curl-Xi} read
\bea\lab{eq:N-div-Xi-uncond-i}
\bsplit
(\pa_r +4r^{-1}) \divh\i\Xi \i &= -\divh\i\divh\i (\ah\i \kh\i) +\frac 12 \ah\i \laph\i \trt\i+\frac 12 \laph\i(\ah\i \Pi\i) \\
&\quad +\Ga_1\i\cdot \Ga_2\i+\divh\i (\ah\i \slashed\CC\i_{Mom}),
\end{split}
\eea
and
\bea
\lab{eq:N-curl-Xi-uncond-i}
r^{-4} \pa_r (r^4 \curlh\i \Xi\i) = - \curlh\i\divh\i (\ah\i \kh\i)+\Ga_1\i\cdot \Ga_2\i+\curlh\i (\ah\i \slashed\CC_{Mom}\i).
\eea
On the other hand, the two components of \eqref{eq:iteration-Thh-ii} read, respectively,
\bea
\lab{eq:div-div-Thh-i}
\divh\i \divh\i (\ah\i\Psi_8\i) &=& \frac 12 \ah\i \laph\i \Psi_7\i+\Kk +\K_{\ell\leq 1}\i +\Ga_1\i\cdot \Ga_2\i, \\
\lab{eq:curl-div-Thh-i}
\curlh\i \divh\i (\ah\i\Psi_8\i) &=&  -\Kkd+\Kd_{\ell\leq 1}\i+\Ga_1\i\cdot \Ga_2\i,
\eea
Note that \eqref{eq:iteration-Pi-ii} implies $\laph\i (\ah\i \Pi\i) =\Kk+\K_{\ell\leq 1}\i-\overline{\Kk+\K_{\ell\leq 1}\i}\i$. Therefore, compare \eqref{eq:div-div-Thh-i} with \eqref{eq:N-div-Xi-uncond-i}, we obtain
\bea\lab{eq:div-CC-r}
\divh\i (\ah\i \slashed \CC_{Mom}\i) = \overline{\Kk+\K_{\ell\leq 1}\i}\i.
\eea
The second component of \eqref{eq:iteration-Xi-ii} implies
\beaa
r^{-4}\pa_r (r^4 \curlh\i\Xi\i)=\Kkd-\Kd\i_{\ell\leq 1}-\pa_r \left(\overline{\displaystyle \int_r^\infty r'^4 (\Kkd-\Kd_{\ell\leq 1}\i) dr' }\i\right),
\eeaa
where the last term on the right is only dependent on $r$, and we denote it by $F(r)$. Plugging this into \eqref{eq:N-curl-Xi-uncond-i}, we obtain
\beaa
\Kkd-\Kd\i_{\ell\leq 1} - F(r)= - \curlh\i\divh\i (\ah\i \kh\i)+\Ga_1\i\cdot \Ga_2\i+\curlh\i (\ah\i \slashed\CC_{Mom}\i).
\eeaa
Comparing this with \eqref{eq:curl-div-Thh-i} and noting that the $\Ga_1\i\cdot\Ga_2\i$ terms of \eqref{eq:N-curl-Xi-uncond-i} and \eqref{eq:curl-div-Thh-i} come from the same equation, hence have identical algebraic expressions, we deduce
\bea\lab{eq:curl-CC-r}
\curlh\i (\ah\i \slashed\CC_{Mom}\i)=F(r).
\eea
Therefore, taking the spherical averages of \eqref{eq:div-CC-r} and \eqref{eq:curl-CC-r} over $\ga\i$, we see that $\overline{\Kk+\K_{\ell\leq 1}\i}\i=F(r)=0$. Then \eqref{eq:div-CC-r} and \eqref{eq:curl-CC-r} together read $\d_1\i \slashed\CC_{Mom}\i=0$, and hence we obtain $\slashed \CC_{Mom}\i=0$.
\end{proof}

\def\N{N}

\def\Ri{\mathrm{Ric}}
\def\Rb{\Riem}

\appendix

\section{Derivation of Horizontal Constraint System}\lab{appendix:derivation-structure-equations}
\subsection{Proof of Proposition \ref{prop:Unconditional-equations-1}}\lab{subsubsect:proof-unconditional-equations-1}
We have
\beaa
\nab_a N=\th_{ab} e_b,\quad \nab_a e_b=\nabh_a e_b-\th_{ab}N,\quad \nab_N e_a=\nabh_N e_a -\pp_a N,\quad \nabh_N N=\pp_a e_a.
\eeaa
Therefore, for a $1$-form $w$ on $\Si$, we have
\bea\lab{eq:covariant-rules}
\bsplit
\nab_N w_N &= N(w_N)-w(\nab_N N) = \nabh_N (w_N) - \pp_a w_a, \\
\nab_N w_a &= N(w_a)-w(\nab_N e_a)=\nabh_N w_a-w(\nab_N e_a-\nabh_N e_a)=\nabh_N w_a+\pp_a w_N, \\
\nab_a w_N &= e_a(w_N)-w(\nab_a N) = \nabh_a (w_N)-\th_{ab} w_b, \\
\nab_a w_b &= e_a(w_b) - w(\nab_a e_b) =\nabh_a w_b -w(\nab_a e_b-\nabh_a e_b) =\nabh_a w_b +\th_{ab} w_N,
\end{split}
\eea
and similar rules apply for tensors of higher ranks.

We now derive\footnote{Within the following displayed equation, $\nab_X \nab_Y$ means $(X^i \nab_i)(Y^j \nab_j)$.}
\beaa
    \nabh_a \pp_b&=&e_a(\pp_b)-\pp(\nabh_a e_b)=e_a(g(\nab_\N \N,e_b))-g(\nab_\N \N,\nabh_a e_b)
    \\
    &=& e_a(g(\nab_\N \N,e_b))-g(\nab_\N \N,\nab_a e_b)
    = g(\nab_a (\nab_\N \N),e_b)\\
    &=&g(\nab_\N (\nab_{a}\N),e_b)+g(\nab_{[e_a,\N]}\N,e_b)+\Rb(e_a,\N,e_b,\N)\\
    &=& g(\nab_\N (\th_{ac} e_c),e_b)+g(\nab_{\nab_a N} N,e_b)-g(\nab_{\nab_N e_a} N, e_b)+R(e_a,\N,e_b,\N) \\
    &=& g(\nab_{\N} (\th_{ac}e_c),e_b)+\th_{ac}g(\nab_c \N,e_b)+\pp_a g(\nab_\N \N,e_a)-g(\nab_{\nabh_N e_a} N,e_b)+R_{a\N b N}\\
     &=& g(N(\th_{ac})e_c+\th_{ac} \nabh_N e_c,e_b)+\th_{ac}g(\nab_c \N,e_b)+\pp_a g(\nab_\N \N,e_a)-g(\nabh_N e_a, e_c) g(\nab_c N,e_b)+\Rb_{a\N b N}\\
    &=& \nabh_N \th_{ab}+ \th_{ac}\th_{cb}+\pp_a \pp_b+\Rb_{a\N b N}+cov,
\eeaa
where, for the term $cov$ that contains $\nabh_N e_a$ type terms,
\beaa
cov&=&g\Big(\th_{ad} g(\nabh_N e_c,e_d)e_c+\th_{dc} g(\nabh_N e_a, e_d)e_c+\th_{ac}\nabh_N e_c,e_b\Big)-g(\nabh_N e_a, e_c) g(\nab_c N,e_b) \\
&=& \th_{ad} g(\nabh_N e_b,e_d)+\th_{db} g(\nabh_N e_a, e_d)+\th_{ac}g(\nabh_N e_c,e_b)-\th_{cb} g(\nabh_N e_a, e_c)\\
&=& \th_{ac} g(\nabh_N e_b,e_c)+\th_{cb} g(\nabh_N e_a, e_c)-\th_{ac}g(\nabh_N e_b,e_c)-\th_{cb} g(\nabh_N e_a, e_c)\\
&=& 0.
\eeaa
Therefore, we obtain
\bea\lab{eq:R-transport-ka}
\nabh_N \th_{ab}=\nabh_a \pp_b - \th_{ac}\th_{cb}-\pp_a \pp_b-\Rb_{a\N b N}.
\eea
Note that $\th_{ac}\th_{cb}=(\thh_{ac}+\frac 12 \trth \de_{ac})(\thh_{cb}+\frac 12\trth\de_{cb})=\thh_{ac}\thh_{cb}+\trth\thh_{ab}+\frac 14(\trth)^2 \de_{ab}$.
The trace part of the equation \eqref{eq:R-transport-ka} reads
\bea\lab{eq:R-transport-trka}
\nabh_N \trth=\divh \pp-|\thh|^2-\frac 12(\trth)^2-|\pp|^2-\trR.
\eea
This proves \eqref{eq:unconditional-N-trth}.
Also note that $\thh_{ac}\thh_{cb}$ only has trace part (equation (2.2.3) in \cite{CK}),
 i.e., $\thh_{ac}\thh_{cb}=\frac 12 |\thh|^2 \de_{ab}$. Therefore the traceless part of \eqref{eq:R-transport-ka} reads
\bea\lab{eq:R-transport-kah}
\nabh_N \thh=\nabh\hot \pp - \trth\, \thh-\pp\hot \pp-\Rhh.
\eea

{\bf The Gauss curvature.} 
The Gauss equation implies
\beaa
\Riem_{abab}&=&\slashed{\Riem}_{abab}-\th_{aa}\th_{bb}+\th_{ab}\th_{ba}\\
&=& 2K-(\trth)^2+\th\cdot\th\\
&=& 2K-\frac 12(\trth)^2+|\thh|^2,
\eeaa
i.e.,
\bea\lab{eq:K-in-Riem-abab}
2K=\Riem_{abab}-|\thh|^2+\frac 12(\trth)^2.
\eea
This proves \eqref{eq:unconditional-Gauss}.

{\bf The Codazzi equation.} We have
\beaa
\Riem_{Nabc}=\nabh_c \th_{ba}-\nabh_b \th_{ca}.
\eeaa
Note that the equation only has two independent components.
Contracting $a$ and $b$ we obtain
\beaa
\Riem_{Naac}&=& \nabh_c \th_{aa}-\nabh_a \th_{ca}=\nabh_c \trth-\nabh_a \Big(\thh_{ac}+\frac 12 \trth \de_{ac}\Big)\\
&=& -(\divh \thh)_c +\frac 12 \nabh_c(\trth),
\eeaa
i.e.,
\bea\lab{eq:Codazzi-divka-unconditional}
\divh \thh=\frac 12 \nabh \trth-Y.
\eea
This proves \eqref{eq:unconditional-Codazzi}.

We also have the following Bianchi-type equations. 
\begin{lemma}\lab{lem:R-abab-thh-squared}
    We have
    \beaa
    \nabh_N (\Riem_{abab}-|\thh|^2 )= -\trth\Riem_{abab}-2\divh Y+4\pp\cdot Y+\trth\,  \trR+2\trth|\thh|^2 - 2\thh\cdot (\nabh\hot \pp-\pp\hot \pp).
    \eeaa
\end{lemma}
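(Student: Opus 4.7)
The plan is to compute $N(R_{abab})$ by applying the contracted second Bianchi identity on $(\Si,g)$, namely $\nab^i\Ric_{ij}=\tfrac12\nab_j R_g$, at $j=N$, and then subtract the transport equation \eqref{eq:unconditional-transport-thh} contracted with $\thh$. Since the lemma's right-hand side naturally splits into a piece coming from $N(R_{abab})$ and a piece proportional to $\thh$, this two-pronged strategy should reproduce the stated identity after using $R_g=R_{abab}+2\trR$ from \eqref{eq:relation-Rg-trR-Rabab}.

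The first and main step is to compute $N(R_{abab})$. Expanding $\nab^i\Ric_{iN}=\nab_N\Ric_{NN}+\sum_a\nab_a\Ric_{aN}$, I will convert tensorial derivatives of $\Ric$ into derivatives of the scalars $\Ric_{NN}=\trR$ and $\Ric_{aN}$ via
\[
(\nab_N\Ric)(N,N)=N(\Ric_{NN})-2\Ric(\nab_N N,N),\qquad (\nab_a\Ric)(e_a,N)=e_a(\Ric_{aN})-\Ric(\nab_a e_a,N)-\Ric(e_a,\nab_a N),
\]
using $\nab_N N=p_ae_a$, $\nab_aN=\th_{ab}e_b$, and $\nab_ae_a=\nabh_a e_a-\th_{aa}N$. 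A short symmetry computation (summing the appropriate Riemann components using $R_{ijkl}=R_{klij}$ and antisymmetry in the last pair) gives $\Ric_{aN}=-Y_a$, so the $a$-sum produces $-\divh Y$ plus $\trth\,\trR-\th\cdot\Ric^H$; the $N$-component produces $N(\trR)+2\pp\cdot Y$. Equating with $\tfrac12N(R_g)=\tfrac12 N(R_{abab})+N(\trR)$ and solving gives
\[
N(R_{abab})=4\pp\cdot Y-2\divh Y+2\trth\,\trR-2\th\cdot\Ric^H.
\]
To eliminate $\th\cdot\Ric^H$, I apply the Gauss equation \eqref{eq:unconditional-Gauss} to express $\Ric_{ab}$ in terms of $\Rh_{ab}$, $K\ga_{ab}$ and quadratic pieces in $\th$; since $\thh$ is traceless one gets $\thh\cdot\Ric^H=\thh\cdot\Rhh$, while the trace part contributes $\tfrac12\trth\cdot\trR^H=\tfrac12\trth\,(R_{abab}+\trR)$ using $\trR^H=R_{abab}+\trR$. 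Substituting yields
\[
N(R_{abab})=-\trth\,R_{abab}+\trth\,\trR-2\divh Y+4\pp\cdot Y-2\thh\cdot\Rhh.
\]
The second step is routine: multiplying \eqref{eq:unconditional-transport-thh} by $2\thh$ and contracting gives
\[
\nabh_N|\thh|^2=2\thh\cdot(\nabh\hot\pp-\pp\hot\pp)-2\trth\,|\thh|^2-2\thh\cdot\Rhh,
\]
and subtracting this from the previous display produces exactly the right-hand side of the lemma.

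The main obstacle will be the bookkeeping in the second computation above: the passage from the tensorial $\sum_a\nab_a\Ric_{aN}$ to the horizontal divergence $\divh Y$ requires careful treatment of the connection coefficients $\nabh_a e_a$ (they combine with $e_a(\Ric_{aN})=-e_a(Y_a)$ to produce exactly $-\divh Y$), and the extra terms $\Ric(\th_{aa}N,N)=\trth\,\trR$ and $\Ric(e_a,\th_{ab}e_b)=\th\cdot\Ric^H$ must be tracked with the correct signs. A secondary bookkeeping task is verifying the identity $\Ric_{aN}=-Y_a$ from the paper's sign convention \eqref{eq:R-Nabc} via the symmetries $R_{ijkl}=R_{klij}=-R_{ijlk}$; I expect this to be a short lemma-like calculation that should be relegated to a preliminary step.
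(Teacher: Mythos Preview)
Your proof is correct and arrives at the same intermediate identity
\[
\nabh_N R_{abab}=-\trth\,R_{abab}+\trth\,\trR-2\divh Y+4\pp\cdot Y-2\thh\cdot\Rhh,
\]
after which the subtraction of $\nabh_N|\thh|^2$ via \eqref{eq:unconditional-transport-thh} is identical to the paper's. The route to this intermediate identity differs: the paper starts from the \emph{uncontracted} second Bianchi identity $\nab_N R_{abcd}+\nab_b R_{Nacd}+\nab_a R_{bNcd}=0$, expands each term using the frame rules \eqref{eq:covariant-rules}, and then contracts with $\de^{ac}\de^{bd}$; you instead invoke the \emph{contracted} Bianchi identity $\nab^i\Ric_{iN}=\tfrac12\nab_N R_g$, pass through the Ricci tensor via $\Ric_{aN}=-Y_a$ and $\Ric^H_{ab}=\Rh_{ab}+\tfrac12 R_{cdcd}\,\de_{ab}$, and solve for $N(R_{abab})$ using $R_g=R_{abab}+2\trR$. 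Your approach trades the somewhat heavier index bookkeeping of expanding $\nab_b R_{Nacd}$ in frame components for the need to decompose $\th\cdot\Ric^H$ into trace and traceless parts; in particular, your invocation of the Gauss equation \eqref{eq:unconditional-Gauss} is superfluous here, since the decomposition $\Ric^H_{ab}=\Rh_{ab}+\tfrac12 R_{cdcd}\,\de_{ab}$ is a purely algebraic $2$-dimensional identity and your subsequent line only uses $\trh\Ric^H=R_{abab}+\trR$. Both approaches are of comparable length, and both rely on the same underlying differential identity.
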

\begin{proof}
We have the Bianchi identity
\begin{equation*}
    \nab_{N} \Riem_{abcd}+\nab_b \Riem_{Nacd}+\nab_a \Riem_{bN cd}=0
\end{equation*}
i.e., using the rules \eqref{eq:covariant-rules},
\begin{equation*}
\begin{split}
    \nabh_N \Riem_{abcd}&+\pp_a \Riem_{Nbcd}+\pp_b \Riem_{aNcd}+\pp_c \Riem_{abNd}+\pp_d \Riem_{abcN}\\
    &+\nabh_b \Riem_{Nacd}-\th_{be}\Riem_{eacd}
    +\th_{bc} \Riem_{NaNd}+\th_{bd}\Riem_{NacN}\\
    &+\nabh_a \Riem_{bNcd}-\th_{ae} \Riem_{becd}+\th_{ac} \Riem_{bNNd}+\th_{ad} \Riem_{bNcN}=0.
    \end{split}
\end{equation*}
Then, contracting with both $\de^{ac}$ and $\de^{bd}$, using that $\th_{be}\Riem_{eaab}=\frac 12\trth \Riem_{baab}=-\frac 12\trth \Riem_{abab}$,\footnote{Indeed, we have $R_{1aa1}=R_{2aa2}$ and $R_{1aa2}=R_{2aa1}=0$, and hence $R_{eaab}=\frac 12 R_{caac}\de_{eb}$.} we deduce
\begin{equation*}
\begin{split}
    \nabh_N \Riem_{abab}&-4\pp_a Y_a+\nabh_b Y_b+\frac 12 \trth \Riem_{abab}\\
    &+\th_{ba} \left(\frac 12 \trR \de_{ab}+\Rhh_{ab}\right)-\trth \, \trR
    +\nabh_a Y_a+\frac 12 \trth  \Riem_{abab}-\trth\, \trR+\th_{ab} \left(\frac 12 \trR \de_{ba}+\Rhh_{ba}\right)=0.
    \end{split}
\end{equation*}
Therefore, we obtain
\beaa
\nabh_N \Riem_{abab}+\trth \Riem_{abab}=-2\divh Y+4\pp\cdot Y+\trth \, \trR-2\thh\cdot \Rhh.
\eeaa
Using \eqref{eq:R-transport-kah}, we also obtain
\beaa
\nabh_N (\Riem_{abab}-|\thh|^2)&=& -\trth\Riem_{abab}-2\divh Y+4\pp\cdot Y+\trth \, \trR-2\thh\cdot \Rhh \\
& &-2\thh\cdot (\nabh\hot \pp-\trth \thh-\pp\hot \pp-\Rhh)\\
&=& -\trth\Riem_{abab}-2\divh Y+4\pp\cdot Y+\trth \, \trR+2\trth |\thh|^2-2\thh\cdot (\nabh\hot \pp-\pp\hot \pp),
\eeaa
as required. This concludes the proof of Lemma \ref{lem:R-abab-thh-squared}.
\end{proof}
We then further derive the equation of $K$ using \eqref{eq:K-in-Riem-abab} and \eqref{eq:R-transport-trka}:
\beaa
2\nabh_N K &=& \nabh_N \left(\Riem_{abab}-|\thh|^2+\frac 12 (\trth)^2\right) \\
&=& -\trth\Riem_{abab}-2\divh Y+4\pp\cdot Y+\trth \, \trR+2\trth |\thh|^2-2\thh\cdot (\nabh\hot \pp-\pp\hot \pp)\\
& & +\trth \left(\divh \pp-|\thh|^2-\frac 12 (\trth)^2-|\pp|^2-\trR\right)\\
&=& -\trth \left(2K-\frac 12(\trth)^2+|\thh|^2\right)-2\divh Y+4\pp\cdot Y+\trth \, \trR+2\trth |\thh|^2-2\thh\cdot (\nabh\hot \pp-\pp\hot \pp)
\\
& &+\trth \left(\divh \pp-|\thh|^2-\frac 12 (\trth)^2-|\pp|^2-\trR\right)\\
&=& -2\divh Y-2\trth K+4\pp\cdot Y
-2\thh\cdot (\nabh\hot \pp-\pp\hot \pp)+\trth (\divh \pp-|\pp|^2).
\eeaa
Therefore,
\beaa
\nabh_N K &=& -\divh Y-\trth K+2\pp\cdot Y
-\thh\cdot (\nabh\hot \pp-\pp\hot \pp) +\frac 12\trth (\divh \pp-|\pp|^2).
\eeaa
This proves \eqref{eq:unconditional-Bianchi}. In particular, the identity does not contain $\trR$.

\subsection{Proof of Proposition \ref{prop:constraint-equation-in-frame}}\lab{subsect:proof-constraint-equation-in-frame}
We can express the constraint quantities defined in \eqref{eq:def-CC-Mom} and \eqref{eq:def-CC-Ham} as follows, using the rules in \eqref{eq:covariant-rules},
\beaa
(\CC_{Mom})_N &=& \nab_i k_{iN}-N(\tr k)= \nab_a k_{aN}+\nab_N k_{NN}-N(\trt+k_{NN}) \\
	&=& \nabh_a k_{aN}+\trth k_{NN}-\th_{ab} k_{ab}+N\Pi-\pp_a k_{Na}-\pp_a k_{aN}-N(\trt)-N\Pi \\
	&=& \divh \Xi+\trth \Pi-\th\cdot \k-2\pp\cdot \Xi-\nabh_N \trt,\\
	&=& \divh \Xi+\trth \Pi-\thh\cdot \kh-\frac 12 \trth\, \trt-2\pp\cdot \Xi-\nabh_N \trt,
\eeaa
and
\beaa
(\slashed\CC_{Mom})_a:=(\CC_{Mom})_a &=& \nab_i k_{ia}-\nab_a(\tr k) = \nab_N k_{Na} +\nab_{b} k_{ba}-\nabh_a (\tr k)\\
	&=& \nabh_N k_{Na} - \pp_b k_{ba}+ \pp_a k_{NN} +\nabh_b k_{ba}+\trth k_{Na}+\th_{ba} k_{bN}-\nabh_a \trt -\nabh_a(k_{NN})\\
	&=& \left(\nabh_N \Xi +\divh\k-\pp\cdot \k+\Pi \pp+\trth \Xi +\th\cdot \Xi-\nabh \trt -\nabh \Pi\right)_a\\
	&=& \Big(\nabh_N \Xi +\divh\kh-\pp\cdot \k+\Pi p+\frac 32\trth \Xi +\thh\cdot \Xi-\frac 12 \nabh \trt -\nabh \Pi\Big)_a\, .
\eeaa
Then the first two equations in Proposition \ref{prop:constraint-equation-in-frame} follow. To prove the third one, we first calculate
\beaa
\CC_{Ham} &=& R_g+(k_{NN}+\trt)^2-(k_{NN})^2-2|\Xi|^2-|\k|^2\\
&=& R_g+2k_{NN}\trt+(\trt)^2-2|\Xi|^2-|\kh|^2-\frac 12(\trt)^2\\
&=& R_g+2\Pi\, \trt+\frac 12(\trt)^2-2|\Xi|^2-|\kh|^2.
\eeaa
Combining this relation with the unconditional equations \eqref{eq:unconditional-N-trth}, \eqref{eq:unconditional-Gauss}, and the identity $\trh\Rh=-\frac 12 \Riem_{abab}+\frac 12 R_g$ from \eqref{eq:relation-Rg-trR-Rabab}, we get
\beaa
\nabh_N \trth&=&\divh \pp-|\thh|^2-\frac 12(\trth)^2-|\pp|^2-\trh\Rh\\
&= & \divh \pp-|\thh|^2-\frac 12(\trth)^2-|\pp|^2+\frac 12 \Riem_{abab}-\frac 12 R_g\\
&= & \divh \pp-|\thh|^2-\frac 12(\trth)^2-|\pp|^2+K-\frac 14(\trth)^2+\frac 12 |\thh|^2 -\frac 12 R_g\\
&=& \divh \pp-|\thh|^2-\frac 12(\trth)^2-|\pp|^2+K-\frac 14(\trth)^2+\frac 12 |\thh|^2 \\
& & +\frac 12 \Big(2\Pi\trt+\frac 12 (\trt)^2-2|\Xi|^2-|\kh|^2 -\CC_{Ham}\Big)\\
&=& \divh \pp-\frac 12 |\thh|^2-\frac 34(\trth)^2-|\pp|^2+K +\Pi \trt+\frac 14 (\trt)^2-|\Xi|^2-\frac 12 |\kh|^2 -\frac 12 \CC_{Ham}.
\eeaa
This proves the third identity in Proposition \ref{prop:constraint-equation-in-frame}.

\subsection{Proof of Proposition \ref{prop:linearized-eqns-time-symmetry}}\lab{appendix-derivation-linearized-equations}
For the standard Schwarzschild, we have, with $\Up:= 1-2mr^{-1}$,
\beaa
\trth\0=2\Up^\frac 12 r^{-1},\quad \ah\0=\Up^{-\frac 12},\quad N\0=\Up^\frac 12 \pa_r, \quad K\0=r^{-2}.
\eeaa
The structure equations hold true:
\bea
\lab{eq:pa-r-trth-Schw}
\pa_r \trth\0&=&\ah\0 r^{-2}-\frac 34 \ah\0 (\trth\0)^2,\\
\lab{eq:pa-r-K-Schw}
\pa_r K\0&=& -\ah\0\trth\0 r^{-2}.
\eea
In view of the expression of $\mu$ in \eqref{eq:def-mu} and the relation \eqref{eq:P-loga} between $p$ and $\ah$, the equation \eqref{eq:structure-constraint-H} can be written as
\beaa
\nabh_N \trth = \mu- \frac 12 (\trth)^2 -\frac 12 |\thh|^2 -|\pp|^2 +\Pi\, \trt+\frac 14 (\trt)^2-|\Xi|^2-\frac 12 |\kh|^2-\frac 12 \CC_{Ham}.
\eeaa
Therefore, using $N=\ah^{-1}\pa_r$, subtracting \eqref{eq:pa-r-trth-Schw}, 
we have, recalling the schematic notations introduced in Definition \ref{def:schematic-notation},
\def\mut{\tilde\mu}
\beaa
\pa_r \thc&=&\ah\mu - \frac 12 \ah (\trth)^2 -\ah\0 r^{-2} +\frac 34 \ah\0 (\trth\0)^2+\Ga_1\cdot \Ga_1-\frac 12 \ah\, \CC_{Ham}\\
&=& \ah\mu +\left(\frac 14 \ah (\trth)^2-\ah\0 r^{-2}\right)+\left(\frac 34 \ah\0 (\trth\0)^2 -\frac 34 (\ah\0+\ao)(\trth)^2\right)+\Ga_1\cdot \Ga_1 -\frac 12\ah\, \CC_{Ham}\\
&=& \ah\mu+\left(\frac 14 \ah\0 (\trth)^2+\frac 14\ao (\trth)^2-\ah\0 r^{-2}\right)+\frac 34 \ah\0\left((\trth\0)^2-(\trth)^2\right)-\frac 34 \ao (\trth)^2 \\
& & +\Ga_1\cdot \Ga_1 -\frac 12\ah\, \CC_{Ham} \\
&=& \ah\mu+\ah\0 \left(\frac 14(\trth)^2-r^{-2}\right)+\frac 14 \ao\left(\thc+2\Up^\frac 12 r^{-1}\right)^2-\frac 34 \ah\0 \thc \left(2\trth\0+\thc\right)-\frac 34 \ao (\trth)^2\\
& & +\Ga_1\cdot \Ga_1 -\frac 12\ah\, \CC_{Ham} \\
&=& \ah\mu+\ah\0 \left(\Up r^{-2} +\Up^\frac 12 r^{-1} \thc-r^{-2}\right)+\frac 14 \ao \left(\thc+2\Up^\frac 12 r^{-1}\right)^2-\frac 34 \ah\0 \thc \left(2\trth\0\right)\\
& &-\frac 34 \ao \left(\thc+2\Up^\frac 12 r^{-1}\right)^2 -\frac 12 \ah\, \CC_{Ham}+\Ga_1 \cdot \Ga_1 \\
&=& \ah\mu+\ah\0 (\Up-1) r^{-2} +r^{-1}\thc +\Up r^{-2}\ao- 3 r^{-1} \thc-3\Up r^{-2}\ao+\Ga_1\cdot \Ga_1 -\frac 12\ah\, \CC_{Ham}\\
&=& (\Up^{-\frac 12}+\ao) \left(\muc+2m r^{-3}\right) +\Up^{-\frac 12} (\Up-1)r^{-2} -2 r^{-1}\thc -2\Up r^{-2}\ao+\Ga_1\cdot \Ga_1 -\frac 12\ah\, \CC_{Ham}  \\
&=& (\Up^{-\frac 12}+\ao) \muc+2m r^{-3} \ao  -2 r^{-1}\thc -2\Up r^{-2}\ao+\Ga_1\cdot \Ga_1 -\frac 12\ah\, \CC_{Ham} \\
&=& (\Up^{-\frac 12}+\ao) \muc  -2 r^{-1}\thc -2(1-3mr^{-1}) r^{-2}\ao+\Ga_1\cdot \Ga_1 -\frac 12\ah\, \CC_{Ham}.
\eeaa
This proves \eqref{eq:R-transport-kac}. 

We now derive the equation for $\Kc$ by subtracting \eqref{eq:pa-r-K-Schw} from \eqref{eq:unconditional-Bianchi}.
Using $\divh \pp=\mu+\frac 14 (\trth)^2 -\Kc-r^{-2}$, we have
\beaa
\pa_r \Kc &=& -\ah \divh Y-\ah \trth\, K  +\frac 12 \ah \trth\, \divh \pp+\Ga_1\cdot \Ga_2 +\ah\0 \trth\0 r^{-2} \\
&=& -\ah \div Y -\ah \trth \Kc -(\ah\trth-\ah\0 \trth\0)r^{-2}+\frac 12 \ah \trth \divh \pp+\Ga_1\cdot \Ga_2 \\
&=& -\ah \div Y-\ah\0 \trth\0 \Kc-\ao\1\trth r^{-2}-\ah\0\thc r^{-2}+\frac 12\ah\0 \trth\0 \divh \pp+\Ga_1\cdot \Ga_2,\\
&=& -\ah \div Y-2r^{-1}\Kc -2\Up^\frac 12 r^{-3} \ao\1-\Up^{-\frac 12} r^{-2}\thc+r^{-1} (\mu+\frac 14 (\trth)^2 -\Kc-r^{-2})+\Ga_1\cdot \Ga_2\\
&=& -\ah \div Y-3r^{-1}\Kc -2\Up^\frac 12 r^{-3} \ao\1-\Up^{-\frac 12} r^{-2}\thc+r^{-1} \mu+\Up^\frac 12 r^{-2} \thc+(\Up -1) r^{-3} +\Ga_1\cdot \Ga_2 \\
&=& -\ah \div Y-3r^{-1}\Kc -2\Up^\frac 12 r^{-3} \ao\1+r^{-1} \mu-(2m r^{-1}) r^{-3}+\Ga_1\cdot \Ga_2 \\
&=& -\ah \div Y-3r^{-1}\Kc -2\Up^\frac 12 r^{-3} \ao\1+r^{-1} \muc+\Ga_1\cdot \Ga_2.
\eeaa
This proves \eqref{eq:R-transport-Kc}.
\begin{remark}\lab{rem:eq-Psi-2-nonlinear-terms}
We note that here $\Ga_1$ only involves $r^{-1}\ao$, $\thc$, $\thh$, $p$, and $\Ga_2$ only involves $Y$, $\Kc$, $\nabh p$.
\end{remark}

To prove \eqref{eq:R-transport-ao}, note that
\bea\lab{eq:laph-loga-laph-a}
\bsplit
\laph(\log(\ah))&= \laph(\log(\ah\0+\ao\1))=\laph \left(\log \ah\0+\log(1+\frac{\ao\1}{\ah\0})\right)=\laph\left(\frac{\ao\1}{\ah\0}+\Ga_0\cdot \Ga_0\right)\\
&= \Up^\frac 12 \laph \ao\1+\laph(\Ga_0\cdot \Ga_0).
\end{split}
\eea
Therefore, 
\beaa
\Up^\frac 12 \laph \ao &=& -\laph(\Ga_0\cdot \Ga_0)+K-\frac 14 (\trth)^2-\mu = -\laph(\Ga_0\cdot \Ga_0)+K-\Up^\frac 12 r^{-1} \thc-\Up r^{-2}+\Ga_1\cdot \Ga_1-\mu \\
&=& \Kc-\Up^\frac 12 r^{-1} \thc-\mu +(1-\Up)r^{-2} -\laph(\Ga_0\cdot \Ga_0)+\Ga_1\cdot\Ga_1.
\eeaa
\begin{remark}\lab{rem:eq-Psi-3-nonlinear-terms}
We note that here the $\Ga_1\cdot \Ga_1$ term in fact only consists of $(\thc)^2$.
\end{remark}

To prove \eqref{eq:N-transport-go}, recall that
\beaa
\slashed{\Lie}_N \ga_{AB}=2\th_{AB}=2\thh_{AB}+\trth \ga_{AB}.
\eeaa
From \eqref{eq:scalar-multiple-projected-Lie-derivatives} we know that $\slashed{\Lie}_{N} \ga_{AB}=\ah^{-1} \slashed{\Lie}_{\pa_r} \ga_{AB}$. 
We then have
\bea\lab{eq:identity-metric-kac-ah}
\bsplit
 \slashed{\Lie}_{\pa_r} (r^{-2}\ga) &=2r^{-2}\ah \thh+\ah (\thc+2\Up^\frac 12 r^{-1}) (r^{-2}\ga) -2r^{-1}(r^{-2}\ga)\\
 &= 2r^{-2}\ah \thh+\ah\thc (r^{-2}\ga)+2\ah \Up^\frac 12 r^{-1} (r^{-2}\ga)-2r^{-1}(r^{-2}\ga) \\
&= 2r^{-2}\ah\thh+\ah\thc (r^{-2}\ga)+\Up^\frac 12 (\ah- \Up^{-\frac 12}) 2r^{-1} (r^{-2}\ga) \\
&= 2r^{-2}\ah\thh+\ah\thc (r^{-2}\ga)+2\Up^\frac 12 \ao r^{-1} (r^{-2}\ga).
\end{split}
\eea
The equations \eqref{eq:unconditional-Codazzi-1} and \eqref{eq:unconditional-d1p-1} directly follow by taking $\d_1$ of \eqref{eq:unconditional-Codazzi} and \eqref{eq:P-loga} respectively.\footnote{For the latter, we also use \eqref{eq:laph-loga-laph-a}.} The equation \eqref{eq:N-trt-linearized} is the same as \eqref{eq:structure-constraint-H} with both sides multiplied by $\ah$, except also taking into account that $\trth=2\Up^\frac 12 r^{-1}+\thc$ and $\ah=\Up^{-\frac 12}+\ao$. 

To derive the equations \eqref{eq:N-div-Xi} and \eqref{eq:N-curl-Xi}, we first note that, using $p=-\nabh(\log\ah)$ by \eqref{eq:P-loga},
\beaa
-\divh\kh+p\cdot \kh =-\divh\kh-\ah^{-1} \nab(\ah)\cdot \kh=-\ah^{-1} \divh(\ah \kh),\quad \nabh\Pi-p\Pi = \nabh\Pi+\ah^{-1} \nab(\ah)\Pi = \ah^{-1} \nabh(\ah\Pi).
\eeaa
Similarly, we also have $\nabh_a (\slashed\CC_{Mom})_b -p_a\cdot (\slashed\CC_{Mom})_b =\ah^{-1} \nabh_a (\ah \slashed\CC_{Mom})_b$. Then \eqref{eq:structure-constraint-Phi-a} can be rewritten as
\bea\lab{eq:structure-constraint-Phi-a-appendix}
\nabh_N \Xi &=& -\ah^{-1} \divh (\ah\kh)+\frac 12 \trt\pp -\frac 32\trth\, \Xi-\thh\cdot \Xi+\frac 12 \nabh \trt +\ah^{-1}\nabh (\ah \Pi)+ \slashed{\CC}_{Mom}.
\eea
Then we recall the commutation formula \eqref{eq:commutation-nab-N-nab-a}, which, when applied to \eqref{eq:structure-constraint-Phi-a-appendix}, gives
\beaa
\nabh_N \nabh_a \Xi_b &=& -\nabh_a (\ah^{-1}\divh (\ah\kh))_b -\frac 32 \trth\, \nabh_a\Xi_b +\nabh_a(\Ga_1\cdot\Ga_1)+\frac 12 \nabh_a\nabh_b \trt +\nabh_a(\ah^{-1} \nabh_b(\ah \Pi))+ \nabh_a (\slashed{\CC}_{Mom})_b \\
& & -\frac 12\trth \nabh_a \Xi_b +\Ga_1\cdot \nabh\Ga_1 -\pp_a (\nabh_N \Xi_b) +\Ga_2 \cdot \Xi \\
&=& -\nabh_a (\ah^{-1}\divh (\ah\kh))_b -\frac 32 \trth\, \nabh_a\Xi_b +\nabh_a(\Ga_1\cdot\Ga_1)+\frac 12 \nabh_a\nabh_b \trt +\nabh_a(\ah^{-1} \nabh_b(\ah \Pi))+ \nabh_a (\slashed{\CC}_{Mom})_b \\
& & -\frac 12\trth \nabh_a \Xi_b -\Ga_1\cdot (\nabh \Ga_1,r^{-1}\Ga_1) +p_a (\ah^{-1} \divh(\ah \kh)-\ah^{-1}\nabh(\ah\Pi))_b+\Ga_2 \cdot \Xi -\pp_a (\slashed\CC_{Mom})_b \\
&=& -\ah^{-1} \nabh_a (\divh (\ah\kh))_b -\frac 32 \trth\, \nabh_a\Xi_b +\nabh_a(\Ga_1\cdot\Ga_1)+\frac 12 \nabh_a\nabh_b \trt +\ah^{-1} \nabh_a( \nabh_b(\ah \Pi))+ \ah^{-1} \nabh_a (\ah \slashed{\CC}_{Mom})_b \\
& & -\frac 12\trth \nabh_a \Xi_b -\Ga_1\cdot (\nabh \Ga_1,r^{-1}\Ga_1) +\Ga_2 \cdot \Xi  
\eeaa
where we used $\nabh_N \Xi= -\ah^{-1} \divh (\ah\kh) +\ah^{-1}\nabh (\ah \Pi)+(\nabh\Ga_1,r^{-1}\Ga_1)+ \slashed{\CC}_{Mom}$.
Hence,
\beaa
\nabh_N \div\Xi &=& -\ah^{-1}\divh\divh (\ah\kh) -2\trth\, \divh \Xi +\frac 12 \laph \trt+\frac 12 \ah^{-1}\laph(\ah\Pi) +\ah^{-1} \divh (\ah \slashed\CC_{Mom})+\Ga_1\cdot \Ga_2, \\
\nabh_N \curlh\Xi &=& -\ah^{-1} \curlh\divh (\ah\kh) -2\trth\, \curlh \Xi  +\ah^{-1}\curlh (\ah\slashed\CC_{Mom})+\Ga_1\cdot \Ga_2.
\eeaa
The equations \eqref{eq:N-div-Xi} and \eqref{eq:N-curl-Xi} then follow by multiplying both sides by $\ah$, and using that $\trth=2\Up^\frac 12 r^{-1}+\Ga_1$, $\ah=\Up^{-\frac 12}+\Ga_0$.

\section{Computation in spacetime notations}
\subsection{The null structure and Bianchi equations}\lab{subsect:null-str-Bianchi}
We recall the null structure and Bianchi equations for Einstein-vacuum spacetime, given in full generality in \cite{KS:Kerr}, \cite{GKS}.

\def\c{\cdot}
\newcommand{\hch}{\widehat{\chi}}
\def\varoc{\check{\varrho}}
\begin{proposition}[Null structure equations]\lab{prop:null-str-eqns}
The  connection coefficients  verify the following   equations:
\label{prop-nullstr}
\beaa
\nabh_3\trchb&=&-|\chibh|^2-\frac 1 2 \big( \trchb^2-\atrchb^2\big)+2\divh\xib  - 2\omb \trchb +  2 \xib\c(\eta+\etab-2\ze),\\
\nabh_3\atrchb&=&-\trchb\atrchb +2\curlh \xib -2\omb\atrchb+ 2 \xib\wedge(-\eta+\etab+2\ze),\\
\nabh_3\chibh&=&-\trchb\,  \chibh+  \nabh\hot \xib- 2 \omb \chibh+    \xib\hot(\eta+\etab-2\ze)-\aa,
\eeaa
\beaa
\nabh_3\trch
&=& -\chibh\c\chih -\frac 1 2 \trchb\trch+\frac 1 2 \atrchb\atrch    +   2   \divh \eta+ 2 \omb \trch + 2 \big(\xi\c \xib +|\eta|^2\big)+ 2\rho,\\
\nabh_3\atrch
&=&-\chibh\wedge\chih-\frac 1 2(\atrchb \trch+\trchb\atrch)+ 2 \curlh \eta + 2 \omb \atrch + 2 \xib\wedge\xi  -  2 \dual \rho,\\
\nabh_3\chih
&=&-\frac 1 2 \big( \trch \chibh+\trchb \chih\big)-\frac 1 2 \big(-\dual \chibh \, \atrch+\dual \chih\,\atrchb\big)
+ \nabh\hot \eta +2 \omb \chih+ \xib\hot\xi +\eta\hot\eta,
\eeaa
\beaa
\nabh_4\trchb
&=& -\chih\c\chibh -\frac 1 2 \trch\trchb+\frac 1 2 \atrch\atrchb    +  2   \divh \etab+ 2 \om \trchb + 2\big( \xi\c \xib +|\etab|^2\big)+2\rho,\\
\nabh_4\atrchb
&=&-\chih\wedge\chibh-\frac 1 2(\atrch \trchb+\trch\atrchb)+ 2 \curlh \etab + 2 \om \atrchb + 2 \xi\wedge\xib+2 \dual \rho,\\
\nabh_4\chibh
&=&-\frac 1 2 \big( \trchb \chih+\trch \chibh\big)-\frac 1 2 \big(-\dual \chih \, \atrchb+\dual \chibh\,\atrch\big)
+\nabh\hot \etab +2 \om \chibh+ \xi\hot\xib + \etab\hot\etab,
\eeaa
\beaa
\nabh_4\trch&=&-|\chih|^2-\frac 1 2 \big( \trch^2-\atrch^2\big)+ 2 \divh\xi  - 2 \om \trch + 2   \xi\c(\etab+\eta+2\ze),\\
\nabh_4\atrch&=&-\trch\atrch + 2 \curlh \xi - 2 \om\atrch+ 2 \xi\wedge(-\etab+\eta-2\ze),\\
\nabh_4\chih&=&-\trch\,  \chih+ \nabh\hot \xi- 2 \om \chih+    \xi\hot(\etab+\eta+2\ze)-\a.
\eeaa
Also,
\beaa
\nabh_3 \ze+2\nabh\omb&=& -\chibh\c(\ze+\eta)-\frac{1}{2}\trchb(\ze+\eta)-\frac{1}{2}\atrchb(\dual\ze+\dual\eta)+ 2 \omb(\ze-\eta)\\
&&+\hch\c\xib+\frac{1}{2}\trch\,\xib+\frac{1}{2}\atrch\dual\xib +2 \om \xib -\bb,
\\
\nabh_4 \ze -2\nabh\om&=& \chih\c(-\ze+\etab)+\frac{1}{2}\trch(-\ze+\etab)+\frac{1}{2}\atrch(-\dual\ze+\dual\etab)+2 \om(\ze+\etab)\\
&& -\chibh\c\xi -\frac{1}{2}\trchb\,\xi-\frac{1}{2}\atrchb\dual\xi -2 \omb \xi -\b,
\\
\nabh_3 \etab -\nabh_4\xib &=& -\chibh\c(\etab-\eta) -\frac{1}{2}\trchb(\etab-\eta)+\frac{1}{2}\atrchb(\dual\etab-\dual\eta) -4 \om \xib  +\bb, \\
\nabh_4 \eta    -    \nabh_3\xi &=& -\chih\c(\eta-\etab) -\frac{1}{2}\trch(\eta-\etab)+\frac{1}{2}\atrch(\dual\eta-\dual\etab)-4\omb \xi -\b,\\
\eeaa
and
\beaa
\nabh_3\om+\nabh_4\omb -4\om\omb -\xi\c \xib -(\eta-\etab)\c\ze +\eta\c\etab&=&   \rho.
\eeaa
Also,
\beaa
\divh\chih +\ze\c\chih &=& \frac{1}{2}\nabh\trch+\frac{1}{2}\trch\ze -\frac{1}{2}\dual\nabh\atrch-\frac{1}{2}\atrch\dual\ze -\atrch\dual\eta-\atrchb\dual\xi -\b,\\
\divh\chibh -\ze\c\chibh &=& \frac{1}{2}\nabh\trchb-\frac{1}{2}\trchb\ze -\frac{1}{2}\dual\nabh\atrchb+\frac{1}{2}\atrchb\dual\ze -\atrchb\dual\etab-\atrch\dual\xib +\bb,
\eeaa
and
\beaa
\curlh\ze&=&-\frac 1 2 \chih\wedge\chibh   +\frac 1 4 \big(  \trch\atrchb-\trchb\atrch   \big)+\om \atrchb -\omb\atrch+\dual \rho.
\eeaa
\end{proposition}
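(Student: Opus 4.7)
The plan is to derive each identity by direct computation from the definitions of the Ricci coefficients in Section \ref{subsect:null-frame-formalism}, using the vacuum hypothesis $\mathbf{Ricc}=0$ to identify the Riemann tensor with the Weyl tensor through \eqref{eq:Weyl-tensor} and to express curvature components in terms of $\alpha, \beta, \rho, \dual\rho, \bb, \aa$. Since the statement only recalls standard results in full generality from \cite{KS:Kerr} and \cite{GKS}, the task is essentially bookkeeping within a well-organized framework, rather than introducing any new idea.

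First, I would establish the frame commutation relations, writing $[e_3,e_a]$, $[e_4,e_a]$, $[e_3,e_4]$, and $[e_a,e_b]$ entirely in terms of the Ricci coefficients $\chi, \chib, \eta, \etab, \zeta, \om, \omb, \xi, \xib$ by taking the antisymmetric parts of the defining identities $\g(\D_X Y, Z)$. With these in hand, for each transport-type equation, such as $\nabh_3 \chi_{ab}$, I would start from $\chi_{ab}=\g(\D_a e_4, e_b)$ and apply $\D_3$, then use
\[
\D_3 \D_a e_4 \;=\; \D_a \D_3 e_4 + \D_{[e_3,e_a]} e_4 + \R(e_3,e_a)\hspace{0.1em}e_4,
\]
expanding each piece in Ricci coefficients and curvature. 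Decomposing the resulting horizontal $2$-tensor equation into its trace $\trch$, antisymmetric trace $\atrch$, and symmetric traceless $\chih$ parts yields the three listed identities for each direction. The equations for $\nabh_3 \zeta$, $\nabh_3 \etab - \nabh_4 \xib$, and $\nabh_3 \om + \nabh_4 \omb$ arise analogously by symmetrizing/antisymmetrizing the relevant defining formulas with respect to a commuted pair of indices.

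For the Codazzi-type identities $\divh \chih = \cdots - \b$ and $\divh \chibh = \cdots + \bb$, as well as the $\curlh \zeta$ identity, I would contract the horizontal Riemann identity
\[
\R(e_a,e_b,e_4,e_c)\;=\;\nabh_a \chi_{bc} - \nabh_b \chi_{ac} + \text{(Ricci coefficient corrections)}
\]
in appropriate index pairs, and use the vacuum identifications $\R_{a4bc} \sim \in_{bc}\hspace{-0.1em}\b_a$ and $\R_{ab34}\sim \dual\rho \in_{ab}$ to extract the curvature terms. Separating the symmetric traceless, trace, and antisymmetric parts, together with the relation between $\atrch,\atrchb$ and the non-integrability of the horizontal structure, produces the stated forms.

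The main obstacle is the sheer number of lower-order quadratic terms, since the horizontal structure here need not be integrable and the frame is not in any special gauge: one must retain contributions from $\xi, \xib, \om, \omb, \atrch, \atrchb$ and track their Hodge duals carefully. Organizing the computation by systematically decomposing every horizontal $2$-tensor as trace $+$ antisymmetric trace $+$ symmetric traceless, and by using the identities $\chih \cdot \chih = \tfrac{1}{2}|\chih|^2 \de_{ab}$ and $\chih \wedge \chih = 0$ on $2$-surfaces, keeps the expressions manageable; beyond that, matching signs against the definitions in Section \ref{subsect:null-frame-formalism} is the only nontrivial point. Since the verification is identical in structure to the derivations in \cite{KS:Kerr} and \cite{GKS}, I would simply cite these references for the explicit case analysis rather than reproduce it.
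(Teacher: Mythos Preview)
Your proposal is correct and matches the paper's treatment: the paper does not prove this proposition at all but simply recalls the equations from \cite{KS:Kerr} and \cite{GKS}, exactly as you conclude by citing these references for the explicit derivation. The methodological outline you give is the standard one used in those references, so there is nothing to add.
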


   \begin{proposition}[Null Bianchi identities]\label{prop:bianchi} 
       The  curvature components  verify the following   equations:
    \beaa
    \nabh_3\a-  \nabh\hot \b&=&-\frac 1 2 \big(\trchb\a+\atrchb\dual \a)+4\omb \a+
  (\ze+4\eta)\hot \b - 3 (\rho\chih +\rhod\dual\chih),\\
\nabh_4\beta - \divh\a &=&-2(\trch\beta-\atrch \dual \b) - 2  \om\b +\a\c  (2 \ze +\etab) + 3  (\xi\rho+\dual \xi\rhod),\\
     \nabh_3\b+\divh\varrho&=&-(\trchb \b+\atrchb \dual \b)+2 \omb\,\b+2\bb\c \chih+3 (\rho\eta+\rhod\dual \eta)+    \a\c\xib,\\
 \nabh_4 \rho-\divh \b&=&-\frac 3 2 (\trch \rho+\atrch \rhod)+(2\etab+\ze)\c\b-2\xi\c\bb-\frac 1 2 \chibh \c\a,\\
   \nabh_4 \rhod+\curlh\b&=&-\frac 3 2 (\trch \rhod-\atrch \rho)-(2\etab+\ze)\c\dual \b-2\xi\c\dual \bb+\frac 1 2 \chibh \c\dual \a, \\
     \nabh_3 \rho+\divh\bb&=&-\frac 3 2 (\trchb \rho -\atrchb \rhod) -(2\eta-\ze) \c\bb+2\xib\c\b-\frac{1}{2}\chih\c\aa,
 \\
   \nabh_3 \rhod+\curlh\bb&=&-\frac 3 2 (\trchb \rhod+\atrchb \rho)- (2\eta-\ze) \c\dual \bb-2\xib\c\dual\b-\frac 1 2 \chih\c\dual \aa,\\
     \nabh_4\bb-\divh\varoc&=&-(\trch \bb+ \atrch \dual \bb)+ 2\om\,\bb+2\b\c \chibh
    -3 (\rho\etab-\rhod\dual \etab)-    \aa\c\xi,\\
     \nabh_3\bb +\divh\aa &=&-2(\trchb\,\bb-\atrchb \dual \bb)- 2  \omb\bb-\aa\c(-2\ze+\eta) - 3  (\xib\rho-\dual \xib \rhod),\\
     \nabh_4\aa+ \nabh\hot \bb&=&-\frac 1 2 \big(\trch\aa-\atrch\dual \aa)+4\om \aa+
 (\ze-4\etab)\hot \bb - 3  (\rho\chibh -\rhod\dual\chibh).
\eeaa
Here,
\beaa
\divh\varo&=&- (\nabh\rho+\dual\nabh\rhod),\\
\divh\varoc&=&- (\nabh\rho-\dual\nabh\rhod).
\eeaa
    \end{proposition}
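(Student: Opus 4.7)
The plan is to derive these ten null Bianchi identities from the contracted second Bianchi identity for the Weyl tensor, which in vacuum reads
\[
\mathbf{D}^\mu \mathbf{W}_{\mu\nu\rho\sigma} = 0,
\]
together with the full identity $\mathbf{D}_{[\alpha}\mathbf{R}_{\beta\gamma]\mu\nu} = 0$; since we are in vacuum, $\mathbf{W} = \mathbf{R}$ and both identities hold. I would project each equation onto the null frame $\{e_3, e_4, e_a\}$, choosing one free triple at a time.

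First, to obtain e.g. the $\nabh_4 \beta - \divh \alpha$ equation, I would take free indices $(a, 4, b)$ and expand
\[
0 = \mathbf{D}^\mu \mathbf{W}_{\mu a 4 b} = -\tfrac{1}{2}\mathbf{D}_3 \mathbf{W}_{4 a 4 b} - \tfrac{1}{2}\mathbf{D}_4 \mathbf{W}_{3 a 4 b} + \delta^{cd}\mathbf{D}_c \mathbf{W}_{d a 4 b}.
\]
The first term recovers $\nabh_4 \alpha_{ab}$ modulo Ricci-coefficient corrections produced when converting $\mathbf{D}_4$ acting on an object whose indices are horizontal into $\nabh_4$ (this is where $\xi$, $\etab$, $\omega$ enter via $\mathbf{D}_4 e_a$, $\mathbf{D}_4 e_b$). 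The second term becomes $\divh \alpha$ after using $\mathbf{D}_c e_4, \mathbf{D}_c e_a$ expressed through $\chi, \chib$, and the non-integrability coefficients. The remaining Ricci $\times$ curvature products on the right-hand side are then assembled from these conversions, together with the cross-terms arising from $\mathbf{W}_{3 a 4 b}$ (proportional to $\rho, \rhod$) and $\mathbf{W}_{a 4 c 4} = \alpha$. Each of the other nine equations is obtained in the same way, with indices chosen to match the component being differentiated. The underbarred equations follow either by repeating the procedure with $e_3 \leftrightarrow e_4$, or by invoking the explicit $3 \leftrightarrow 4$ duality combined with the parity conventions $\alpha \leftrightarrow \aa$, $\beta \leftrightarrow -\bb$, $\rho \leftrightarrow \rho$, $\rhod \leftrightarrow -\rhod$, and similarly for the Ricci coefficients.

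Throughout, I would use the definitions $\mathbf{W}_{a4b4} = \alpha_{ab}$, $\mathbf{W}_{a434} = 2\beta_a$, $\mathbf{W}_{3434} = 4\rho$, $\dual\mathbf{W}_{3434} = 4\rhod$, $\mathbf{W}_{a334} = 2\bb_a$, $\mathbf{W}_{a3b3} = \aa_{ab}$ to identify terms, and the null structure definitions of $\chi, \chib, \xi, \xib, \eta, \etab, \zeta, \omega, \omb$ to recognize the correction terms produced by commuting $\mathbf{D}_X$ past projections.

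The main obstacle will be the careful bookkeeping of the corrections arising from non-integrability of the horizontal structure: the bracket $[e_a, e_b]$ has in general a nontrivial $e_3, e_4$ component, whose horizontal and anti-self-dual parts are encoded by $\atrch$ and $\atrchb$. These are the source of all the $\atrch \dual\beta$, $\atrchb \dual\alpha$, $\atrch \rhod$, $\atrchb \rho$ style terms in the stated identities, which would be absent in the integrable case of \cite{CK}. Once the frame calculus of Section~2 of \cite{GKS} is set up — in particular the rules for how $\nabh$ differs from $\mathbf{D}$ on horizontal tensors, and how to commute $\nabh_3, \nabh_4$ with $\nabh_a$ — the computation becomes mechanical: each equation reduces to contracting indices, substituting definitions, and re-grouping into the compact schematic form displayed. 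I would defer to the detailed derivations in \cite{KS:Kerr, GKS} for the verification of the precise numerical coefficients, since no new idea beyond systematic index tracking is involved.
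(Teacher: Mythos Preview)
The paper does not prove this proposition at all: it simply \emph{recalls} the null Bianchi identities from \cite{KS:Kerr} and \cite{GKS}, stating them without derivation as standard background in Appendix~B.1. Your outline is the correct strategy for how one would actually derive them, and your final deferral to \cite{KS:Kerr, GKS} for the detailed coefficient-chasing is exactly what the paper itself does.
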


\subsection{Proof of Proposition \ref{prop:relations-om-etab-xi-zeta}}\lab{subsect:Proof-relations-om-etab-xi-zeta}
    We have, using $\D_3 e_3=\D_3 e_4=0$,
\begin{equation*}
\begin{split}
    \om&=\frac 14 \g(\D_4 e_4,e_3)=\frac 14 \g(\D_{2 N+ e_3} e_4,e_3)=\frac 12  \g \left(\D_N ( T+N),T-N \right)+\frac{1}{4} \g(\D_3 e_4,e_3)\\
    &=\frac 12 \g\left(\D_N( T+N),T-N\right)=\frac 12 \g(\D_N T,-N)+\frac 12 \g(\D_N N, T)\\
    &=-k(N,N).
    \end{split}
\end{equation*}
This proves \eqref{eq:om-de}. We also have
\begin{equation}\lab{eq:relations-k-Na}
\begin{split}
    \Xi_a &=\g(\D_N T,e_a)=\g\left(\D_{\frac 12 e_4-\frac 12 e_3} \left(\frac 12  e_3+\frac 12 e_4\right), e_a\right)\\
    &=\frac 14 \g(\D_4 e_3,e_a)+\frac 14 \g(\D_4 e_4,e_a) \\
    &=\frac 12\xi_a+\frac 12 \etab_a,
    \end{split}
\end{equation}
and
\begin{equation}\lab{eq:relations-p-xi-etab}
\begin{split}
  p_a &=   g(\nab_N N,e_a)=\g(\D_N N,e_a)=\g\left(\D_{N}\Big(\frac 12 e_4-\frac 12 e_3\Big),e_a\right)=\frac 12 \g(\D_N e_4,e_a)-\frac 12 \g(\D_N e_3,e_a)\\
    &=\frac 12 \left(\frac 12 \g(\D_4 e_4,e_a)-\frac 12 \g(\D_3 e_4,e_a)\right)-\frac 12 \left(\frac 12 \g(\D_4 e_3,e_a)-\frac 12 \g(\D_3 e_3,e_a)\right)\\
    &=\frac 12 \xi_a-\frac 12\etab_a.
    \end{split}
\end{equation}
Therefore, combining \eqref{eq:relations-k-Na} and \eqref{eq:relations-p-xi-etab}, we obtain
\beaa
\xi=\Xi+p,\quad \etab=\Xi-p,
\eeaa
This proves \eqref{eq:xi-ep-loga}-\eqref{eq:etab-ep-loga}. 

\subsection{Proof of Proposition \ref{prop:b-bb-expression}}\lab{subsect:spacetime-Ricci-b-bb}
We first state the following general relations.
\begin{lemma}\lab{lem:spacetime-curvatures}
The following equations hold true:
\bea
\lab{eq:Gauss-spacetime}
\R_{ijkl}&=& R_{ijkl}+k_{ik} k_{jl}-k_{il} k_{jk},\\
\lab{eq:Codazzi-spacetime-1}
\R_{Tabc}&=&\nabh_c \k_{ab}-\nabh_b \k_{ac}+\th_{ca}\Xi_{b}-\th_{ba}\Xi_{c},\\
\lab{eq:Codazzi-spacetime-2}
\R_{TaNb}&= & \nabh_b \Xi_a-\nabh_N \k_{ab}+\Pi \th_{ba}-\th_{bc}\k_{ac}-p_a \Xi_{b}-p_b \Xi_{a},\\
\lab{eq:Codazzi-spacetime-curl}
\R_{TNab} &=& \nabh_b\Xi_a - \nabh_a\Xi_b-\k_{ac}\th_{bc}+\k_{bc}\th_{ac},\\
\lab{eq:Codazzi-spacetime-3}
\R_{TNNa}&= & \nabh_a \Pi-\nabh_N \Xi_a-2\th_{ab}\Xi_{b}+p_b \k_{ba}-p_a \Pi.
\eea
\end{lemma}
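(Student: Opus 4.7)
The five identities in Lemma \ref{lem:spacetime-curvatures} are the classical Gauss and Codazzi--Mainardi equations for the embedding $\Si \hookrightarrow (\MM,\g)$, expressed in the adapted triad $\{N,e_a\}$ rather than in an arbitrary tangential frame. I would derive them all from two basic decomposition rules: for vector fields $X,Y$ tangent to $\Si$,
\begin{equation*}
\D_X Y \;=\; \nab_X Y \,+\, k(X,Y)\,T, \qquad \D_X T \;=\; k(X,\cdot)^{\sharp},
\end{equation*}
where the first is the Gauss formula and the second the Weingarten identity (both follow from $\g(T,T)=-1$ and the definition $k_{ij}=\g(\D_i T,\pa_j)$). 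Combined with the rules \eqref{eq:covariant-rules} that decompose the spatial connection $\nab$ with respect to $\{N,e_a\}$ (i.e.\ $\nab_a e_b=\nabh_a e_b-\th_{ab}N$, $\nab_a N=\th_{ab}e_b$, $\nab_N e_a=\nabh_N e_a-p_a N$, $\nab_N N=p_a e_a$), every term in \eqref{eq:Gauss-spacetime}--\eqref{eq:Codazzi-spacetime-3} will be produced by a mechanical expansion.

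\textbf{Step 1 (Gauss, \eqref{eq:Gauss-spacetime}).} For $X,Y,Z$ tangent to $\Si$ I compute $\D_X\D_Y Z-\D_Y\D_X Z-\D_{[X,Y]}Z$ by substituting the Gauss formula twice. The tangential part yields $R(X,Y)Z$ plus the quadratic correction $k(Y,Z)\D_X T-k(X,Z)\D_Y T$, and taking $\g(\,\cdot\,,W)$ with $W$ tangential gives $k_{XW}k_{YZ}-k_{YW}k_{XZ}$; lowering to indices produces exactly \eqref{eq:Gauss-spacetime} (with the sign convention fixed in Section \ref{subsect:notations}). No further tracking is required since all normal-component contributions cancel against $T$-components on testing with tangential $W$.

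\textbf{Step 2 (Codazzi variants, \eqref{eq:Codazzi-spacetime-1}--\eqref{eq:Codazzi-spacetime-3}).} These all come from the single identity
\begin{equation*}
\R(X,Y,Z,T) \;=\; (\nab_X k)(Y,Z) \,-\, (\nab_Y k)(X,Z), \qquad X,Y,Z \in T\Si,
\end{equation*}
which is derived by picking out the $T$-component of $\D_X\D_Y Z-\D_Y\D_X Z-\D_{[X,Y]}Z$ after Gauss substitution, and using $\D_X T=k(X,\cdot)^\sharp$. To reach each listed formula I specialize $(X,Y,Z)$: $(e_c,e_b,e_a)$ for \eqref{eq:Codazzi-spacetime-1}; $(e_b,N,e_a)$ for \eqref{eq:Codazzi-spacetime-2}; $(e_a,e_b,N)$ for \eqref{eq:Codazzi-spacetime-curl}; and $(e_a,N,N)$ for \eqref{eq:Codazzi-spacetime-3}. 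In each case I expand $(\nab_X k)(Y,Z)=X(k(Y,Z))-k(\nab_X Y,Z)-k(Y,\nab_X Z)$, then use \eqref{eq:covariant-rules} to convert $\nab_a$-, $\nab_N$-derivatives of the triad into $\th$, $\Pi$, $\Xi$, $p$ contractions, while replacing $\nab_a,\nab_N$ acting on $\k_{ab},\Xi_a,\Pi$ by the horizontal derivatives $\nabh_a,\nabh_N$ plus correction terms. Then applying symmetries of $\R$ (e.g.\ $\R_{TNab}=-\R_{abNT}$ for \eqref{eq:Codazzi-spacetime-curl}) to match the LHS sign conventions completes each identity.

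\textbf{Main obstacle.} There is no conceptual obstruction: everything is an orthogonal-decomposition calculation, and all building blocks are already recorded in Section \ref{sect:HorizDecompSi} and in Appendix \ref{subsubsect:proof-unconditional-equations-1}. The only genuine difficulty is bookkeeping, particularly in \eqref{eq:Codazzi-spacetime-2} and \eqref{eq:Codazzi-spacetime-3}, where one of the tangential slots is $N$ and the chain-rule expansions of $(\nab_N k)(\cdot,\cdot)$ or $(\nab_a k)(\cdot, N)$ generate $\Pi$-, $\Xi$-, and $p$-contractions that must be combined carefully (several compensating $\pm\th_{ab}\Pi$ and $\pm p\cdot\Xi$ terms appear and must be checked against the stated signs). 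I would therefore carry out Step 2 identity-by-identity in a short appendix, displaying intermediate formulas so that no cancellation is hidden, and verify each by consistency with a contracted form (e.g.\ tracing \eqref{eq:Codazzi-spacetime-2} in $ab$ must reproduce the momentum-constraint expression $(\CC_{Mom})_N$ of Proposition \ref{prop:constraint-equation-in-frame} when $\R=0$).
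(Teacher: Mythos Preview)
Your proposal is correct and follows essentially the same route as the paper: the paper records the Gauss equation directly, writes the Codazzi identity in the form $\R_{Tijl}=\nab_l k_{ij}-\nab_j k_{il}$, and then specializes the indices $(i,j,l)$ to $(a,b,c)$, $(a,N,b)$, $(N,a,b)$, $(N,N,a)$, expanding each $\nab$ via the rules \eqref{eq:covariant-rules} exactly as you describe. The only cosmetic difference is that the paper works with the index form $\nab_l k_{ij}$ and applies \eqref{eq:covariant-rules} to the tensor components, while you phrase it as $(\nab_X k)(Y,Z)=X(k(Y,Z))-k(\nab_X Y,Z)-k(Y,\nab_X Z)$ before substituting the frame; the computations are identical line by line.
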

\begin{proof}
The first equation is the Gauss equation (note the sign flip due to the Lorentzian signature). We have the Codazzi equation\footnote{Note again our convention $k_{ij}=\g(\D_i T,\pa_j)$.}
\bea\lab{eq:Codazzi-ijl}
\R_{Tijl}=\nab_l k_{ij}-\nab_j k_{il}.
\eea
This yields the following relations, again using the rules in \eqref{eq:covariant-rules},
\bea\lab{eq:Codazzi-expanded}
\bsplit
\R_{Tabc}&=\nab_c k_{ab}-\nab_b k_{ac}=\nabh_c k_{ab}+\th_{ca}k_{Nb}+\th_{cb} k_{aN}-\nabh_b k_{ac}-\th_{ba}k_{Nc}-\th_{bc}k_{aN},\\
&=\nabh_c k_{ab}-\nabh_b k_{ac}+\th_{ca}k_{Nb}-\th_{ba}k_{Nc},\\
\R_{TaNb}&= \nab_b k_{aN}-\nabh_N k_{ab}=\nabh_b \Xi_a+\th_{ba} k_{NN}-\th_{bc} k_{ac}-(\nabh_N k_{ab}+p_a k_{Nb}+p_b k_{Na}), \\
\R_{TNab} &=\nab_b k_{Na}-\nab_a k_{Nb}=\nabh_b k_{Na}-\th_{bc} k_{ca} + \th_{ba} k_{NN}-(\nabh_a k_{Nb}-\th_{ac} k_{cb} + \th_{ab} k_{NN}) \\
&= \nabh_b k_{Na} - \nabh_a k_{Nb}-k_{ac}\th_{bc}+k_{bc}\th_{ac}, \\
\R_{TNNa}&= \nab_a k_{NN}-\nabh_N k_{Na}=\nabh_a k_{NN}-2\th_{ab} k_{bN}-(\nabh_N k_{Na}-p_b k_{ba}+p_a k_{NN})\\
&= \nabh_a k_{NN}-\nabh_N k_{Na}-2\th_{ab}k_{bN}+p_b k_{ba}-p_a k_{NN}.
\end{split}
\eea
Therefore, the equations \eqref{eq:Gauss-spacetime}-\eqref{eq:Codazzi-spacetime-3} follow. 
This concludes the proof of Lemma \ref{lem:spacetime-curvatures}.
\end{proof}

\begin{remark}\lab{rem:relation-momentum-constraint-Ric-T-i}
In view of the relations \eqref{eq:Gauss-spacetime}, \eqref{eq:Codazzi-ijl}, and the definitions \eqref{eq:def-CC-Mom}-\eqref{eq:def-CC-Ham}, we have
\beaa
\Ricc_{Ti} &=& (\div k)_i -\nab_i (\tr k) = (\CC_{Mom})_i, \\
\R_g &=& -\Ricc_{TT}+\Ricc_{ii}=-2\R_{TiTi}+R_{ijij}+(\tr k)^2-|k|^2 \\
&=& -2\Ricc_{TT}+\CC_{Ham}.
\eeaa
\end{remark}

We now start to prove \eqref{eq:b+bb}-\eqref{eq:b-bb}.
From the spacetime notations, we have
\begin{equation*}
\begin{split}
    \bb(\R):=\frac 12 \R_{a334}&=\frac 12 \R(e_a,T-N,T-N,T+N)=\R_{aTTN}-\R_{aNTN},      
      \end{split}
\end{equation*}
\begin{equation*}
\begin{split}
    \b(\R):=\frac 12 \R_{a434}&=\frac 12 \R(e_a,T+N,T-N,T+N)=\R_{aTTN}+\R_{aNTN},
        \end{split}
\end{equation*}
and hence
\beaa
\b(\R)+\bb(\R) &=& -2\R_{aTNT}=2\Ricc_{aN}-2\R_{abNb},\\
\b(\R)-\bb(\R) &=& 2\R_{TNaN}.
\eeaa
Using the definition of the Weyl tensor \eqref{eq:Weyl-tensor}, we have
\beaa
\bb&=& \frac 12 \W_{a334}=\bb(\R)+\frac 14 (-2) \Ricc_{3a}=\bb(\R)-\frac 12 \Ricc_{3a},\\
\b&=& \frac 12 \W_{a434}=\b(\R)-\frac 14 (-2) \Ricc_{4a} =\b(\R)+\frac 12\Ricc_{4a},
\eeaa
hence
\beaa
\b+\bb&=&\b(\R)+\bb(\R)+\Ricc_{Na}=3\Ricc_{Na}+2\R_{Nbba},\\
\b-\bb&=& \b(\R)-\bb(\R)+\Ricc_{Ta}=2\R_{TNaN}+\Ricc_{Ta}.
\eeaa
Using \eqref{eq:Gauss-spacetime} and \eqref{eq:Codazzi-spacetime-3}, along with Remark \ref{rem:relation-momentum-constraint-Ric-T-i}, we have
\beaa
(\b+\bb)_a -3\Ricc_{Na} &=& 2\R_{Nbba}=2(R_{Nbba}+k_{Nb} k_{ba}-k_{Na} k_{bb})=2\left(Y_a+\Xi_b \cdot \k_{ba}-\trt \Xi_a\right),\\
(\b-\bb)_a - (\slashed\CC_{Mom})_a &=& -2\R_{TNNa}=-2\left(\nabh_a \Pi-\nabh_N \Xi_a-2\th_{ab}\Xi_b+p_b \k_{ba}-p_a \Pi\right),
\eeaa
as required.

For $\rho$, first note that by \eqref{eq:Weyl-tensor}
\beaa
\rho=\frac 14 \W_{3434}=\frac 14 (\R_{3434}+2\Ricc_{34}-\frac 23\R_{\g})=\rho(\R)+\frac 12 \Ricc_{34}-\frac 16 \R_{\g}.
\eeaa 
Using the Gauss equation for codimension 2, we have, again noting the sign flip for the $T$-direction,
\beaa
\R_{abab}=2K_\ga+\frac 12 (\trt)^2-\frac 12 (\trth)^2-|\kh|^2+|\thh|^2.
\eeaa
Therefore, using $\Ricc_{34}=-\frac 12 \R_{3443}+\R_{3a4a}=2\rho(\R)+\R_{3a4a}$ and $\Ricc_{aa}=-\R_{3a4a}+\R_{abab}$, we have
\beaa
\rho&=&\rho(\R)+\frac 12 \Ricc_{34}-\frac 16 \R_g=\frac 12\left(\Ricc_{34}+\Ricc_{aa}-\R_{abab}\right)+\frac 12 \Ricc_{34}-\frac 16 \R_g \\
&=& -\frac 12 \R_{abab}+\Ricc_{TT}-\Ricc_{NN}+\frac 12 \Ricc_{aa}-\frac 16 \R_{\g} \\
&=& -\frac 12 \R_{abab}+\frac 12 (\CC_{Ham}- \R_{\g})-(\Ricc-\frac 12 (\R_{\g}) \g)_{NN}-\frac 12 \R_{\g}+\frac 12 (\Ricc-\frac 12 (\R_{\g}) \g)_{aa}+\frac 12 \R_{\g}-\frac 16 \R_{\g} \\
&=& -K_\ga -\frac 14 (\trt)^2+\frac 14 (\trth)^2 +\frac 12 |\kh|^2-\frac 12 |\thh|^2 \\
& & +\frac 12 \CC_{Ham} -(\Ricc-\frac 12 (\R_{\g}) \g)_{NN}+\frac 12 (\Ricc-\frac 12 (\R_{\g}) \g)_{aa}-\frac 23 \R_{\g}.
\eeaa
This proves \eqref{eq:Gauss-rho}.

For $\dual\rho$ we have 
\beaa
\dual\rho=\frac 14 \dual \W_{3434}=\frac 14 (\frac 12 \in_{34ab} \W_{ab 34})=\frac 12 \W_{1234}=\frac 12 \R_{1234}=\R_{12TN}=-\curlh \Xi- \kh\wedge\thh.
\eeaa
where we used the relation \eqref{eq:Codazzi-spacetime-curl} for the last equality. This proves \eqref{eq:curl-kn-dual-rho} and concludes the proof of Proposition \ref{prop:b-bb-expression}.

\section{Physical quantities}\lab{appendix:physical-quantities}
In this appendix, we prove the following alternative expression of the ADM charges:
\begin{proposition}\lab{prop:relation-ADM-charges}
Under the class of initial data $(g,k)$ we construct in Theorem \ref{thm:main-precise}, we have
\beaa
\E=m,\quad \C_i=-\frac{1}{8\pi m}\lim_{r\to \infty} r^3 (\thc)_{\ell=1,i},\quad \P_i=-\frac 1{8\pi} \lim_{r\to\infty} r^2 (\trt)_{\ell=1,i},\quad \J_i=\frac 1{8\pi} \lim_{r\to\infty} r^4 (\curlh\Xi)_{\ell=1,i}.
\eeaa
\end{proposition}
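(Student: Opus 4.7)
The strategy is to evaluate each ADM integral in \eqref{eq:def-ADM-charges} by expressing $g_{ij}$ and $k_{ij}$ in the ambient Cartesian-type chart $x^i = r\omega_i$ near spatial infinity, splitting each asymptotic expansion into the Schwarzschildean tail and the perturbative remainder, and then projecting onto the appropriate spherical modes. The decay estimates collected in Section \ref{sec:conclusions} guarantee that the remainder $g - g_{\mathrm{Schw}} = O(\eps r^{-1-\de})$ and $k = O(\eps r^{-2-\de})$; in particular, all errors in the passage between the Cartesian components $g_{ij}$, $k_{ij}$, $N_0^j$, $dA$ and their polar-frame analogues $\ah$, $\ga - \gz$, $\Psi_1 = \thc$, $\Xi$, $\Pi$, $\Th_{ab}$, $\trt$ are integrable against $r^2 d\vol_{\gs}$ and vanish in the limit $r\to\infty$.

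\textit{Energy.} Substituting the frame expansion into $\sum_{i,j}(\partial_i g_{ij} - \partial_j g_{ii})N_0^j$, the contribution from $g_{\mathrm{Schw}} = \Up^{-1} dr^2 + r^2\gs$ reproduces the classical isotropic-coordinate calculation and integrates to $16\pi m$, while the perturbative remainder gives an integrand of size $O(\eps r^{-2-\de})$ which, against $dA \sim r^2 d\vol_{\gs}$, vanishes like $r^{-\de}$. Hence $\E = m$.

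\textit{Center of mass.} Because the Schwarzschildean tail is spherically symmetric, only the perturbative remainder contributes to $\C_i$; modes with $\ell\geq 2$ decay faster than $r^{-1-\de}$ in each component and, when multiplied by $x^i \sim r\omega_i$ in the integrand, still integrate to zero in the limit. The $\ell=0$ part of the remainder is absorbed into the subtracted isotropic term of the integrand by construction. The only nontrivial contribution therefore comes from the $\ell=1$ part of $g - g_{\mathrm{Schw}}$. We then use the $\ell=1$ structure identified in \eqref{eq:Psi-1-2-3-linearized-ell=1} and the transport equation \eqref{eq:iteration-ga-metric} to express the $\ell=1$ mode of $g - g_{\mathrm{Schw}}$, at leading order, in terms of $(\ao)_{\ell=1,i} \sim \tfrac12 \cc_i r^{-2}$ (equivalently $(\thc)_{\ell=1,i} \sim \cc_i r^{-3}$). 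Substituting this into the $\C_i$ integrand and evaluating the angular integral against $\om_i$ recovers the asserted identification $\C_i = -(8\pi m)^{-1}\lim r^3 (\thc)_{\ell=1,i}$.

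\textit{Linear and angular momentum.} Writing $k_{ij}$ in the frame yields a decomposition into $\Pi$, $\Xi_a$, and $\Th_{ab}$, and contracting with $N_0^j\sim\om^j$ expresses the integrands of $\P_i$ and $\J_i$ as angular integrals of polar-frame quantities. For $\P_i$, the integrand reduces, modulo terms of size $O(\eps^2 r^{-2-2\de})$, to a multiple of $(\trt - \Pi)\om^i$; because $\overline\Pi = 0$ by the gauge condition \eqref{eq:condition-spherical-mean-Pi}, only the $\ell=1$ part of $\trt$ survives the limit, giving $\P_i = -\frac{1}{8\pi}\lim r^2(\trt)_{\ell=1,i}$. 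For $\J_i$, the factor $\in_{ilm} x^l$ turns the angular integrand into a curl-type expression; after an integration by parts on $S_r$, the leading contribution is the $\ell=1$ mode of $\curlh\Xi$, weighted by $r^4$, yielding $\J_i = \frac{1}{8\pi}\lim r^4(\curlh\Xi)_{\ell=1,i}$.

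\textit{Main obstacle.} The conceptually simple but bookkeeping-heavy step is the dictionary between the Cartesian-frame quantities appearing in \eqref{eq:def-ADM-charges} and the polar-frame quantities $\thc, \Xi, \Pi, \trt, \curlh\Xi$, together with the tracking of $\Up^{1/2}$ factors and the discrepancy between the area radius $r$ and the Euclidean radius $\sqrt{\sum(x^i)^2}$. Once this correspondence is in place, all subleading terms are controlled by the decay rates of Section \ref{sec:conclusions}, and the identification of $\ell=1$ projections is straightforward.
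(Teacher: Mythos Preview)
Your treatment of $\E$, $\C_i$, and $\J_i$ is essentially the paper's approach: pass to the polar frame, isolate the $\ell=1$ modes, and use the decay rates of Section~\ref{sec:conclusions} to discard remainders. The $\J_i$ argument via integration by parts on $S_r$ is exactly how the paper proceeds (Lemma~\ref{lemma:nab-ell=1} identifies $\overline\pa_i$ with $r(\nabz\om_i)^\#$).

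The $\P_i$ step, however, has a genuine error. Writing $\pa_i=\overline\pa_i+\om_i\pa_r$, the integrand $k(\pa_i,\pa_r)-(\mathrm{tr}_\de k)\om_i$ splits as
\[
k(\overline\pa_i,\pa_r)+\om_i\,k(\pa_r,\pa_r)-\bigl(k(\pa_r,\pa_r)+\trh^{(0)}k\bigr)\om_i
= k(\overline\pa_i,\pa_r)-(\trh^{(0)}k)\,\om_i,
\]
so the $\Pi$ contributions \emph{cancel} rather than combine into $(\trt-\Pi)\om^i$. Even had your expression been correct, the condition $\overline\Pi=0$ from \eqref{eq:condition-spherical-mean-Pi} is an $\ell=0$ statement, whereas integrating against $\om^i$ extracts the $\ell=1$ mode; it would not kill the $\Pi$ term.

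What you are missing is the tangential piece $k(\overline\pa_i,\pa_r)\approx \Xi(\overline\pa_i)$, which at the relevant order is not automatically small. The paper disposes of it via an integration by parts on $S_r$: since $(\nabz\om_i)^\#=r^{-1}\overline\pa_i$, one has
\[
\int_{S_r} k(\pa_r,\overline\pa_i)\,dA = -r^3(\divh^{(0)}\Xi^{(0)})_{\ell=1,i},
\]
and this vanishes in the limit precisely because of the gauge condition $\nu=\divh\Xi=0$ (together with the nonlinear comparison $\divh\Xi-\divh^{(0)}\Xi^{(0)}=O(\eps^2 r^{-4-2\de})$). Thus it is the condition $\nu=0$, not $\overline\Pi=0$, that makes the $\P_i$ identity work.
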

We prove the relations for $\E$, $\C_i$ in the first subsection and for $\P_i$, $\J_i$ in the next.

In this appendix, we frequently use the vector field $\overline\pa_i:=\pa_i -\om_i \pa_r$, where $\om_i:=x^i/r$. We record the well-known relations
\beaa
\gz(\overline\pa_i,\overline\pa_j)=\de_{ij}-\om_i\om_j,\quad \pa_i \om_j=\frac{1}{r}(\de_{ij}-\om_i\om_j).
\eeaa

\subsection{Energy and center of mass}
We relate the ADM energy $\E$ and center of mass $\C_i$ defined in \eqref{eq:def-ADM-charges} with the conditions we impose at infinity. The definitions in \eqref{eq:def-ADM-charges} are written in Cartesian coordinates, and hence we first need the following lemma.
\begin{lemma}
Under the assumption that $g=\ah^2 dr^2+\ga_{AB} d\th^A d\th^B$, we have
\beaa
\E &=& \frac{1}{16\pi}\lim_{r\to\infty} \int_{S_r} \frac 2r g_{rr} -\frac 1r \ga_{kk} -\pa_r (\ga_{kk})\, dA,\\
\C_i &=& \frac{1}{16\pi m}\lim_{r\to\infty}\int_{S_r}
  \om^i \big(2 (g_{rr} -1) - r\pa_r (\ga_{kk})\big)\, dA,
\eeaa
where $\ga_{kk}:=\sum_{k=1}^3\ga(\overline \pa_k,\overline\pa_k)$.
\end{lemma}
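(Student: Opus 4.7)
The plan is to write all Cartesian quantities in terms of the adapted $r$-foliation data, using the decomposition $\pa_i = \om_i \pa_r + \overline\pa_i$ with $\om_i = x^i/r$. Under the assumption $g = \ah^2 dr^2 + \ga_{AB} d\th^A d\th^B$, the vector fields $\pa_r$ and $\overline\pa_i$ are $g$-orthogonal, so
\[
g_{ij} = g\bigl(\om_i \pa_r + \overline\pa_i,\ \om_j \pa_r + \overline\pa_j\bigr) = \om_i \om_j \ah^2 + \ga(\overline\pa_i, \overline\pa_j),
\]
giving in particular $g_{rr} = \ah^2$ and $g_{kk} = g_{rr} + \ga_{kk}$. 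The key algebraic identity behind every cancellation will be $\sum_j \om_j \overline\pa_j = 0$ (as vector fields), which yields $\sum_j \om_j\, \ga(\overline\pa_i, \overline\pa_j) = 0$ and $\sum_j \om_j(\de_{ij} - \om_i\om_j) = 0$. Combined with $\sum_i \pa_i \om_i = 2/r$ and $\pa_i \om_j = r^{-1}(\de_{ij} - \om_i\om_j)$, these identities reduce the Cartesian expressions to radial ones.

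For $\E$, I will set $A := g_{rr}-1$ and $B_{ij} := \ga(\overline\pa_i, \overline\pa_j) - (\de_{ij} - \om_i\om_j)$ so that $g_{ij} - \de_{ij} = \om_i\om_j A + B_{ij}$. A direct computation using the identities above gives
\[
\sum_{i,j} \pa_i(\om_i\om_j A)\om_j = \tfrac{2}{r}A + \pa_r A,\qquad \sum_j \pa_j g_{ii}\, \om_j = \pa_r(A + \ga_{kk}).
\]
For the remaining piece $\sum_{i,j} \om_j \pa_i B_{ij}$, I will integrate by parts in the index $i$: since $\sum_j \om_j B_{ij} = 0$, one has $\sum_{i,j} \pa_i(\om_j B_{ij}) = 0$, hence
\[
\sum_{i,j} \om_j \pa_i B_{ij} = -\sum_{i,j}(\pa_i \om_j) B_{ij} = -\tfrac{1}{r}\bigl(B_{ii} - \textstyle\sum_{i,j}\om_i\om_j B_{ij}\bigr) = -\tfrac{1}{r}(\ga_{kk} - 2),
\]
using that $\sum_{i,j}\om_i\om_j B_{ij} = 0$ by the same identity applied first in $j$. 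Collecting the pieces, $(\pa_i g_{ij} - \pa_j g_{ii})\om_j = \frac{2}{r}g_{rr} - \frac{1}{r}\ga_{kk} - \pa_r \ga_{kk}$, which is the desired formula for $\E$.

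For $\C_i$, the integrand splits as $x^i \cdot [\text{energy integrand}] + \text{correction}$. The first piece equals $r\om^i \bigl(\tfrac{2}{r}g_{rr} - \tfrac{1}{r}\ga_{kk} - \pa_r \ga_{kk}\bigr)$. The correction is $-\sum_j[(g_{ij}-\de_{ij}) - \de_{ij}(g_{kk}-\de_{kk})]\om_j$, and using $\sum_j \om_j B_{ij} = 0$ one finds $\sum_j(g_{ij}-\de_{ij})\om_j = \om_i(g_{rr}-1)$, while $g_{kk} - 3 = (g_{rr}-1) + (\ga_{kk}-2)$. The correction therefore simplifies to $\om^i(\ga_{kk}-2)$. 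Summing the two contributions, the $-\om^i \ga_{kk}$ and $+\om^i\ga_{kk}$ terms cancel, and one is left with $\om^i\bigl(2(g_{rr}-1) - r \pa_r \ga_{kk}\bigr)$, matching the stated expression.

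The content is purely algebraic, so the main obstacle is not analytic but bookkeeping: one must carefully track the two independent projections $\om_i\om_j$ and $\de_{ij} - \om_i\om_j$, and verify that each tangential divergence $\sum_i \pa_i B_{ij}$ produces, after contraction with $\om_j$, only the radial trace $-(\ga_{kk}-2)/r$ and no hidden tangential residue. This is ensured by the single identity $\sum_j \om_j \overline\pa_j = 0$, which I will isolate at the start and invoke repeatedly; once that is in place, the rest is a routine contraction.
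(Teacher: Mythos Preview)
Your argument is correct and follows essentially the same route as the paper: both decompose $\pa_i = \om_i\pa_r + \overline\pa_i$, use the orthogonality $g(\pa_r,\overline\pa_i)=0$ to write $g_{ij} = \om_i\om_j g_{rr} + \ga(\overline\pa_i,\overline\pa_j)$, and reduce the Cartesian divergence to radial data via $\pa_i\om_j = r^{-1}(\de_{ij}-\om_i\om_j)$. The only organizational difference is that the paper works directly with $g_{ij}$ and uses the product rule $(\pa_k g_{kj})\om_j = \pa_k(g_{kr}) - g_{kj}\pa_k\om_j$ together with $g_{kr} = \om_k g_{rr}$ and $g_{kj}(\de_{kj}-\om_k\om_j) = \ga_{kk}$, whereas you first subtract the flat metric and handle the tangential piece $B_{ij}$ via the identity $\sum_j\om_j B_{ij}=0$; the two computations are equivalent and yield the same result.
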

\begin{proof}
In Cartesian coordinates, we have, using that $g(\pa_r,\overline \pa_i)=0$,
\beaa
g_{ij} =  g(\overline \pa_i+\om_i \pa_r,\overline \pa_j+\om_j \pa_r)=\om_i \om_j g_{rr} +\ga(\overline\pa_i, \overline\pa_j). 
\eeaa
We have, recalling that $\om^k \pa_k=\pa_r$,
\beaa
g_{kr}&=& \om_k g_{rr},\\
(\pa_k g_{kj})(\pa_r)^j &=& \pa_k (g_{kr})-g_{kj} \pa_k (\pa_r)^j =\pa_k (\om_k g_{rr})-g_{kj} \frac 1r (\de_{kj}-\om_k \om_j) \\
&=& \frac 2r g_{rr} +\pa_r (g_{rr})-\frac 1r \ga(\overline\pa_k,\overline\pa_k), \\
\pa_r g_{kk} &=& \pa_r (g_{rr}+\ga(\overline \pa_k,\overline \pa_k)).
\eeaa
Therefore,
\beaa
(\partial_k g_{kj}-\partial_j g_{kk}) (\pa_r)^j &=& \frac 2r g_{rr} -\frac 1r \ga(\overline\pa_k,\overline\pa_k) -\pa_r (\ga(\overline \pa_k,\overline \pa_k)),\\
x^i\big(\partial_k g_{kj}-\partial_j g_{kk}\big) (\pa_r)^j &=& r \om^i \left(\frac 2r g_{rr} -\frac 1r \ga(\overline\pa_k,\overline\pa_k) -\pa_r (\ga(\overline \pa_k,\overline \pa_k))\right),
\eeaa
and
\beaa
(g_{ir}-\de_{ir})-\delta_{ir}(g_{kk}-\de_{kk}) &=& \om_i (g_{rr}-1) -\om_i \left(g_{rr}-1+\ga(\overline \pa_k,\overline \pa_k)-\gz(\overline \pa_k,\overline \pa_k)\right)\\
&=& -\om_i \left(\ga(\overline \pa_k,\overline \pa_k)-\gz(\overline \pa_k,\overline \pa_k)\right).
\eeaa
Therefore,
\beaa
\E=\frac{1}{16\pi}\lim_{r\to\infty} \int_{S_r} \sum_{i,j} (\pa_i g_{ij}-\pa_j g_{ii}) (\pa_r)^j dA=\frac{1}{16\pi}\lim_{r\to\infty} \int_{S_r} \frac 2r g_{rr} -\frac 1r \ga_{kk} -\pa_r (\ga_{kk})\, dA,
\eeaa
and, using $\gz_{kk}=2$,
\beaa
\C_i &=& \frac{1}{16\pi m}\lim_{r\to\infty}\int_{S_r}
\Big( x^i\big(\partial_k g_{kj}-\partial_j g_{kk}\big)
-\big((g_{ij}-\de_{ij})-\delta_{ij}(g_{kk}-\de_{kk})\big)\Big) (\pa_r)^j\, dA \\
&=& \frac{1}{16\pi m}\lim_{r\to\infty}\int_{S_r}
\Big( r \om^i \big(\frac 2r g_{rr} -\frac 1r \ga_{kk} -\pa_r (\ga_{kk})\big)+\om_i \big(\ga_{kk}-\gz_{kk}\big)
\Big)\, dA \\
&=& \frac{1}{16\pi m}\lim_{r\to\infty}\int_{S_r}
 \om^i \big(2 g_{rr} -2 - r \pa_r (\ga_{kk})\big)\, dA,
\eeaa
as required.
\end{proof}
We now derive the expansion of $\ga_{kk}$. 
\begin{proposition}\lab{prop:pa-r-ga-kk}
We have
\bea\lab{eq:pa-r-ga-kk}
\pa_r (\ga_{kk})= 2\Up^{-\frac 12} (\thc)_{\ell=1} +4\Up^\frac 12 r^{-1} (\ao)_{\ell=1} +O_{\ell\neq 1} (r^{-2-\de})+ O(r^{-3-2\de}).
\eea
Here, the $O_{\ell\neq 1}$ refers to a term supported on $\ell\neq 1$ and bounded by the quantities in the parentheses.
\end{proposition}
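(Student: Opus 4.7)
The plan is to differentiate the scalar $\ga_{kk}=\sum_k \ga(\overline\pa_k,\overline\pa_k)$ directly, substitute the transport formula \eqref{eq:N-transport-go}, and then isolate the $\ell=1$ content by an algebraic identity that re-expresses the sum $\sum_k \overline\pa_k\otimes\overline\pa_k$ purely in terms of the round metric. First, since $\om_k$ depends only on the angular variables, $\pa_r\om_k=0$, and a direct computation in Cartesian coordinates gives $[\pa_r,\overline\pa_k]=-r^{-1}\overline\pa_k$. Applying Leibniz and summing over $k$,
\[
\pa_r(\ga_{kk})=(\slashed\Lie_{\pa_r}\ga)(\overline\pa_k,\overline\pa_k)-\tfrac{2}{r}\ga_{kk}.
\]
Rewriting \eqref{eq:N-transport-go} as $\slashed\Lie_{\pa_r}\ga=2r^{-1}\ga+2\ah\thh+\ah\thc\,\ga+2\Up^{1/2}\ao r^{-1}\ga$, the $2r^{-1}\ga_{kk}$ terms cancel and we obtain
\[
\pa_r(\ga_{kk})=2\ah\sum_k\thh(\overline\pa_k,\overline\pa_k)+\ah\thc\,\ga_{kk}+2\Up^{1/2}\ao r^{-1}\ga_{kk}.
\]

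Next, I would establish the key identity $\sum_k\overline\pa_k\otimes\overline\pa_k=\gz^{-1}$ as a $(2,0)$-tensor on $S_r$. Since $\overline\pa_k(\om_i)=(\de_{ik}-\om_i\om_k)/r$, a direct calculation yields
\[
\sum_k\overline\pa_k(\om_i)\,\overline\pa_k(\om_j)=\tfrac{1}{r^2}(\de_{ij}-\om_i\om_j)=(\gz^{-1})(d\om_i,d\om_j),
\]
and the three covectors $\{d\om_i\}$ span $T^*S_r$ (although redundantly, via $\sum_i\om_i\,d\om_i=0$), so the identity follows. Consequently $\sum_k\thh(\overline\pa_k,\overline\pa_k)=\thh_{ab}(\gz^{-1})^{ab}$. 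Because $\thh$ is traceless with respect to $\ga$, this contraction equals $\thh_{ab}\bigl((\gz^{-1})-(\ga^{-1})\bigr)^{ab}$, hence $O(r^{-3-2\de})$ by the pointwise bounds on $\thh$ and on $\ga-\gz$. In the same way, $\ga_{kk}=\gz_{kk}+\mathrm{tr}_{\gz}(\ga-\gz)=2+O(r^{-1-\de})$.

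Substituting these into the displayed formula, expanding $\ah=\Up^{-1/2}+\ao$, and noting that every cross term of the form $\ao\thc$, $\thc(\ga_{kk}-2)$, $r^{-1}\ao(\ga_{kk}-2)$ is controlled by $O(r^{-3-2\de})$, I obtain
\[
\pa_r(\ga_{kk})=2\Up^{-1/2}\thc+4\Up^{1/2}r^{-1}\ao+O(r^{-3-2\de}).
\]
Splitting each of $\thc$ and $\ao$ into their $\ell=1$ and $\ell\neq 1$ parts, the $\ell=1$ pieces reproduce the two leading terms in the statement, while the $\ell\neq 1$ pieces satisfy $(\thc)_{\ell\neq 1}=O(r^{-2-\de})$ and $r^{-1}(\ao)_{\ell\neq 1}=O(r^{-2-\de})$, and are manifestly supported on $\ell\neq 1$, which produces the $O_{\ell\neq 1}(r^{-2-\de})$ remainder.

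The main obstacle is the clean identification $\sum_k\overline\pa_k\otimes\overline\pa_k=\gz^{-1}$ and then the observation that the contraction $\thh:\gz^{-1}$ picks up only a trace discrepancy of size $O((\ga-\gz)\cdot\thh)$ because $\thh$ is $\ga$-traceless rather than $\gz$-traceless; this is what upgrades what could naively look like an $O(r^{-2-\de})$ contribution from the $\thh$-term into an $O(r^{-3-2\de})$ error, and is what allows the $\ell=1$ structure of the remaining terms to be read off cleanly. All other steps are routine expansions.
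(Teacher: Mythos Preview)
Your proposal is correct and follows essentially the same approach as the paper: compute $[\pa_r,\overline\pa_k]=-r^{-1}\overline\pa_k$, combine with the transport equation \eqref{eq:N-transport-go} so that the $2r^{-1}\ga_{kk}$ terms cancel, and then observe that $\sum_k\thh(\overline\pa_k,\overline\pa_k)=\trz\thh=O((\ga-\gz)\cdot\thh)$ is a quadratically small error because $\thh$ is $\ga$-traceless. The only cosmetic difference is that you phrase the key step as the explicit identity $\sum_k\overline\pa_k\otimes\overline\pa_k=\gz^{-1}$, while the paper invokes the equivalent fact $\thh(\overline\pa_k,\overline\pa_k)=\trz\thh$ directly; the remaining expansions and the $\ell=1$/$\ell\neq 1$ splitting are identical.
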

\begin{proof}
Recall from \eqref{eq:N-transport-go} that
\beaa
\slashed{\Lie}_{\pa_r} (r^{-2}\ga) &=& 2r^{-2}\ah \thh+\ah \thc (r^{-2}\ga)+2\Up^\frac 12 \ao r^{-1} (r^{-2}\ga) \\
&=& 2r^{-2} \Up^{-\frac 12} \thh+\Up^{-\frac 12} \thc (r^{-2}\gz)+r^{-2} O\left(\left((\ga-\gz),\ao\right) \cdot (\thh,\thc)\right)\\
&& +2\Up^\frac 12 r^{-1} \ao (r^{-2}\gz)+r^{-3} O\left((\ga-\gz)\cdot \ao\right). 
\eeaa  
We have
\beaa
[\pa_r, \overline\pa_i] &=& [\pa_r, \pa_i-\om_i \pa_r]=[\pa_r,\pa_i]-\pa_r(\om^i) \pa_r=-\pa_i (\om_j) \pa_j -\om_j \pa_j (\om^i) \pa_r \\
&=& -\frac 1r (\pa_i-\om_i \pa_r)+0 = -\frac 1r \overline \pa_i.
\eeaa
Therefore, $\slashed{\Lie}_{\pa_r} \overline\pa_i=-\frac 1r \overline \pa_i$, and hence
\beaa
\pa_r (\ga_{kk})&=& \slashed{\Lie}_{\pa_r} \ga (\overline \pa_k,\overline\pa_k)-2\ga(\overline\pa_k, \frac 1r \overline\pa_k)=r^2 \slashed{\Lie}_{\pa_r} (r^{-2} \ga)(\overline\pa_k,\overline\pa_k) \\
&=& 2\Up^{-\frac 12} \thc +4\Up^\frac 12 r^{-1} \ao +2 \Up^\frac 12 \thh(\overline \pa_k,\overline\pa_k)+ O(r^{-1-\de}\cdot r^{-2-\de}) \\
&=& 2\Up^{-\frac 12} (\thc)_{\ell=1} +4\Up^\frac 12 r^{-1} (\ao)_{\ell=1} +O_{\ell\neq 1} (r^{-2-\de})+ O(r^{-3-2\de})
\eeaa
where we use that the scalar $\thh(\overline \pa_k,\overline\pa_k)=\trz \thh=O(\ga-\gz)\cdot \thh$, i.e., can be controlled by the size of metric perturbation $O(r^{-1-\de})$ multiplied by the size of $\thh=O(r^{-2-\de})$.
\end{proof}

Recall from \eqref{eq:solve-nonlocal-vanishing-condition} and \eqref{eq:expansion-Psi-3-ell=1} that
\beaa
(\thc)_{\ell=1,i}=\cc_i r^{-3}+O(r^{-3-\de}),\quad (\ao)_{\ell=1,i}=\frac 12 \cc_i r^{-2}+O(r^{-2-\de}).
\eeaa
Therefore, we have, using $\Up=1+O(mr^{-1})$,
\beaa
(\pa_r (\ga_{kk}))_{\ell=1,i} &=& 2\Up^{-\frac 12} (\thc)_{\ell=1,i} +4\Up^\frac 12 r^{-1} (\ao)_{\ell=1,i} + O(r^{-3-2\de}) \\
&=& 4\cc_i r^{-3} +O(r^{-3-\de}).
\eeaa
We also know that $g_{rr}=\ah^2=\ah_0^2+2\ah_0 \ao+O(\ao^2)=\Up^{-1}+2\Up^{-\frac 12} \ao_{\ell=1}+O(r^{-2-2\de})$. Hence, 
again using $\Up=1+O(mr^{-1})$,
\beaa
\C_i &=& \frac{1}{16\pi m}\lim_{r\to\infty}\int_{S_r}
  \om^i \big(2 (g_{rr} -1) - r\pa_r (\ga_{kk})\big)\, dA  \\
&=& \frac{1}{16\pi m}\lim_{r\to\infty}  r^2 \left(4 \ao_{\ell=1,i}-r (\pa_r(\ga_{kk}))_{\ell=1,i}\right) \\
&=& \frac{1}{16\pi m} (2\cc_i-4\cc_i)=-\frac{1}{8\pi m}\cc_i.
\eeaa
This proves the relation for $\C_i$ with $\cc_i=\lim_{r\to\infty} r^3 (\thc)_{\ell=1,i}$.

The equation \eqref{eq:pa-r-ga-kk} also implies the rougher form
\beaa
(\pa_r (\ga_{kk}))_{\ell=0} =O(r^{-2-\de}).
\eeaa
Therefore, using $\gz_{kk}=2$, we deduce $(\ga_{kk})_{\ell=0}=2+O(r^{-1-\de})$, and
\beaa
\E=\frac{1}{16\pi}\lim_{r\to\infty} \int_{S_r} \frac 2r g_{rr} -\frac 1r \ga_{kk} -\pa_r (\ga_{kk}) dA=\frac{1}{16\pi}\lim_{r\to\infty} \int_{S_r} \frac 2r (\frac{2m}r)+O(r^{-2-\de}) dA=m.
\eeaa
This proves the relation for $\E$.


\subsection{Linear momentum and angular momentum}
\begin{lemma}\lab{lemma:nab-ell=1}
We have
\beaa
(\nabz \om_i)^\#=\frac 1r \overline \pa_i.
\eeaa
Here, the index raising $\#$ is defined with respect to $\ga\0$.
\end{lemma}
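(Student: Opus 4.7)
The identity is a direct computation once I unpack the definitions. First I would verify that $\overline{\pa}_i = \pa_i - \om_i \pa_r$ is horizontal, i.e.\ tangent to $S_r$: since $\pa_i r = \om_i$ and $\pa_r r = 1$, we get $\overline{\pa}_i(r) = 0$. Hence both sides of the claimed identity live in $TS_r$, and it suffices to test them against a basis of horizontal vectors, e.g.\ $\{\overline{\pa}_j\}_{j=1,2,3}$ (this collection spans $TS_r$ since the three $\overline{\pa}_j$ satisfy the single relation $\sum_j \om_j \overline{\pa}_j = 0$).

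Next I would compute the two sides. On the one hand, since $\pa_r \om_i = 0$, using the formula $\pa_j \om_i = r^{-1}(\de_{ij} - \om_i \om_j)$ recalled in the appendix,
\begin{equation*}
(\nabz \om_i)(\overline{\pa}_j) = \overline{\pa}_j(\om_i) = \pa_j \om_i - \om_j \pa_r \om_i = \tfrac{1}{r}(\de_{ij} - \om_i \om_j).
\end{equation*}
On the other hand, using the identity $\gz(\overline{\pa}_i, \overline{\pa}_j) = \de_{ij} - \om_i \om_j$ (also recorded at the beginning of the appendix, and which follows from $\gz = dr^2$-complement of the Euclidean metric on $\mathbb{R}^3$ so that $\de_{\text{Euc}}(\pa_i,\pa_j) = \de_{ij}$, $\de_{\text{Euc}}(\pa_r,\pa_r)=1$, $\de_{\text{Euc}}(\pa_i,\pa_r)=\om_i$),
\begin{equation*}
\gz\!\left(\tfrac{1}{r}\overline{\pa}_i,\,\overline{\pa}_j\right) = \tfrac{1}{r}(\de_{ij} - \om_i \om_j).
\end{equation*}
Since the two pairings agree on all horizontal test vectors $\overline{\pa}_j$, and $\nabz \om_i$ is horizontal because $\om_i$ is constant along $\pa_r$, the non-degeneracy of $\gz$ on $TS_r$ gives $(\nabz \om_i)^{\#} = \tfrac{1}{r}\overline{\pa}_i$.

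There is essentially no obstacle here; the only thing to be careful about is to remember that $\om_i$ is extended as an $r$-independent function in $\mathbb{R}^3\setminus\{0\}$ (so that $\pa_r \om_i = 0$ and $\overline{\pa}_j \om_i = \pa_j \om_i$), and to invoke the already-stated formulas for $\pa_j \om_i$ and $\gz(\overline{\pa}_i,\overline{\pa}_j)$ so that no further spherical-coordinate computations are needed.
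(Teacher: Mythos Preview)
Your proof is correct and takes a genuinely different route from the paper. The paper's proof invokes rotational symmetry to reduce to the case $i=z$, then works in explicit spherical coordinates $(\vth,\vphi)$: it identifies $\om_z=\cos\vth$, computes $\overline{\pa}_z=-r^{-1}\sin\vth\,\pa_\vth$ by writing out $\pa_r,\pa_\vth$ in Cartesian components, and then checks directly that $(\nabz\cos\vth)^\vth=-r^{-2}\sin\vth$, $(\nabz\cos\vth)^\vphi=0$. Your argument instead stays in the Cartesian-adapted frame $\{\overline{\pa}_j\}$ and reduces the claim to the two identities $\pa_j\om_i=r^{-1}(\de_{ij}-\om_i\om_j)$ and $\gz(\overline{\pa}_i,\overline{\pa}_j)=\de_{ij}-\om_i\om_j$, both already recorded at the start of the appendix. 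This is cleaner: it avoids singling out an axis, avoids spherical coordinates altogether, and makes transparent why the lemma holds for all $i$ simultaneously. The paper's approach has the minor advantage of being entirely self-contained in a single coordinate chart, but yours is the more efficient argument given what is already available.
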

\begin{proof}
By rotational symmetry, we assume $i$ corresponds to the $z$ direction. In this case,  $\om_i=\cos\vth$ in the standard spherical coordinates $(\vth,\vphi)$, and $\overline\pa_i=\overline{\pa}_z=\pa_z-g(\pa_z,\pa_r)\pa_r=\pa_z-\cos\vth \pa_r$. In Cartesian coordinates,
\beaa
\pa_r=(\sin\vth\cos\vphi,\sin\vth\sin\vphi,\cos\vth),\quad \pa_\vth=r(\cos\vth \cos\vphi,\cos\vth \sin\vphi,-\sin\vphi).
\eeaa
It is then straightforward to verify that $\overline{\pa}_z=-r^{-1} \sin\vth\pa_\vth$.
We compute, 
\beaa
(\nabh^{(0)} \cos\vth)^\vth=r^{-2} \pa_\vth \cos\vth=-r^{-2} \sin\vth,\quad (\nabz \cos\vth)^\vphi=0.
\eeaa
Therefore, we obtain $(\nabz \cos\vth)^\#=r^{-1}\overline\pa_z$.
\end{proof}

For notational simplicity, we define $\Xi\0$ by $\displaystyle \Xi\0_a:=k(\pa_r, e_a\0)$, where $\{e_a\0\}$ is a horizontal orthonormal frame with respect to $\gz$. We have, using the bounds we obtain in Section \ref{sec:conclusions}, 
\bea\lab{eq:estimate-div-divz-Xi-Xiz}
|r\divh\Xi-r\divh\0 \Xi\0|\lesssim |(r\nabz)^{\leq 1}(\ga-\gz)||\Xi|+|(r\nabz)^{\leq 1} k||\ao, (\ga-\gz)| \lesssim \eps^2 r^{-3-2\de}.
\eea
We have, using \eqref{lemma:nab-ell=1},
\beaa
(\divh^{(0)} \Xi\0)_{\ell=1,i}=r^{-2} \int_{S_r} (\divh^{(0)} \Xi\0) \om_i\, dA=-r^{-2} \int_{S_r} \Xi\0 \cdot \nabh^{(0)}\om_i\, dA=-r^{-3} \int_{S_r} k(\pa_r,\overline \pa_i) dA.
\eeaa
Note that we have imposed in Theorem \ref{thm:main-precise} that $(\divh\Xi)_{\ell=1}=0$.  Therefore, we have $\int_{S_r} k(\pa_r,\overline \pa_i) dA=O(r^2 (\divh\Xi-\divh\0\Xi\0)_{\ell=1})=O(r^{-1-2\de})$.
We now calculate
\beaa
\P_i &=& \frac{1}{8\pi}\lim_{r\to\infty} \int_{S_r} (k_{ij}-\mathrm{tr}_\de\hspace{0.1em}  k \de_{ij}) (\pa_r)^j dA=\frac{1}{8\pi}\lim_{r\to\infty} \int_{S_r} (k(\pa_i,\pa_r)-\mathrm{tr}_\de\hspace{0.1em} k\, \om_i) dA \\
&=& \frac{1}{8\pi}\lim_{r\to\infty} \int_{S_r} k(\overline \pa_i,\pa_r) -(\trh\0 k) \om_i dA
\eeaa
Here $\trh\0 k$ is the $\gz$-spherical trace of $k$, and hence $\trh\0 k=\trt+O(r^{-3-2\de})$. 
As a result, the spherical integral becomes
\beaa
\P_i= \frac{1}{8\pi}\lim_{r\to\infty}\int_{S_r} -(\trt) \om^i\, dA = - \frac{1}{8\pi} r^2 (\trt)_{\ell=1,i}.
\eeaa
For the angular momentum, we have
\beaa
\J_i := \frac 1{8\pi} \lim_{r\to\infty}\int_{S_r} \in_{ilm} x^l (k^{mj} -\de^{mj}\mathrm{tr}_\de\hspace{0.1em} k) (\pa_r)_j\, dA= \frac 1{8\pi} \lim_{r\to\infty}\int_{S_r}  k(\pa_r, \in_{ilm} x^l \pa_m)\, dA,
\eeaa
where we use $\in_{ilm}x^\ell(\pa_r)^m=0$. Now, note that, using Lemma \ref{lemma:nab-ell=1},
\beaa
\in_{ilm} x^l \pa_m=r \in_{ab}\0  (\overline\pa_i)^b e_a\0 =r^2 \in_{ab}\0 (\nabz \om_i)_b\, e_a\0.
\eeaa
We also have the relation
\beaa
(\curlh\0 \Xi\0)_{\ell=1,i}=-r^{-2} \int_{S_r}  \in_{ab}\0 (\nabz_b \Xi\0_a) \om_i\, dA= r^{-2} \int_{S_r}  \in_{ab}\0 \Xi\0_a (\nabz_b \om_i) dA, 
\eeaa
and, similar to \eqref{eq:estimate-div-divz-Xi-Xiz}, we have $\curlh \Xi=\curlh\0 \Xi\0+O(r^{-4-2\de})$.
Therefore, we deduce
\beaa
\J_i =\frac 1{8\pi} \lim_{r\to\infty}\int_{S_r}  k(\pa_r, \in_{ab}\0 e_a\0 (\nabz \om_i)_b) dA=\frac 1{8\pi} \lim_{r\to\infty} r^4 (\curlh \Xi)_{\ell=1,i}.
\eeaa

\section{Contraction estimates}
We will frequently use the relation
\beaa
\psi\nn \cdot \phi\nn-\psi\n\cdot \phi\n = \de\psi\nn\cdot  \phi\nn +\psi\n \cdot \de\phi\nn.
\eeaa
Here, the dots may be with respect to $\ga\nn$ or $\ga\n$. The difference, however, generates lower order terms of $\de\ga\nn$ and hence we omit the estimates.

We also have the following calculation for any sequence of linear operators $L\n$:
\bea\lab{eq:general-formula-L-difference}
\bsplit
L\nn[\phi\nnn]-L\n[\phi\nn] &= L\nn[\phi\nnn]-L\nn[\phi\nn]+L\nn[\phi\nn]-L\n[\phi\nn] \\
&= L\nn{\de\phi\nnn} + (L\nn-L\n)[\phi\nn].
\end{split}
\eea
This applies for $L\n=\laph\n$, $\d_1\n$, $\d_2\n$, as well as the spherical mean operator $\phi\mapsto \overline{\phi}\n$. Moreover, the following schematic relations hold, in view of Remark \ref{rem:inverse-metric}, 
\beaa
(\d^2)\nn \psi-(\d^2)\n \psi &=& \de\ga\nn \cdot \nabh^2 \psi +\nabh(\de\ga\nn)\cdot\nabh \psi, \quad \d^2=\laph, \d_1\d_2, \\
\d\nn \psi-\d\n\psi &=& \de \ga\n \psi, \quad \d=\d_1,\d_2,\divh,\curlh, \\
\overline\phi\nn-\overline \phi\n &=& \de\ga\n \cdot \phi, \\
\trh\nn \psi-\trh\n \psi &=& \de\ga\n \cdot \psi,\quad \psi\in \ss_2.
\eeaa

\subsection{Proof of Proposition \ref{prop:contraction-main}}\lab{proof:contraction-main}
Taking the differences of the equations \eqref{eq:iteration-kac}-\eqref{eq:iteration-average-a} between $n\mapsto n+1$ and $n$, we obtain
\beaa
(\pa_r+2r^{-1})(\de\Psi_1\nnn) &=& \Up^{-\frac 12} (\de\tilde\mu\nnn)_{\ell=0}-2(1-3mr^{-1}) r^{-2} (\de\Psi_3\nnn)+\Ga_1\nn\cdot \Ga_1\nn-\Ga_1\n\cdot\Ga_1\n,\\ 
(\pa_r+3r^{-1}) (\de\Psi_2\nnn) &=& r^{-1}(\de\tilde\mu\nnn)_{\ell=0} -2\Up^\frac 12 r^{-3}(\de\Psi_3\nnn) -\Up^{-\frac 12}(\widetilde\B_{\ell\leq 1}\nnn-\widetilde\B_{\ell\leq 1}\nn)\\
&  & + \Ga_1\nn\cdot \Ga_2\nn - \Ga_1\n\cdot \Ga_2\n - \de\Psi_3\nn \Bb-\Psi_3\nn \widetilde\B_{\ell\leq 1,aux}\nn+\Psi_3\n \widetilde\B_{\ell\leq 1,aux}\n,\\ 
\Up^\frac 12 \laph\nn (\de\Psi_3\nnn) &=& (\de\Psi_2\nnn)-(\overline{\de\Psi_2\nnn}^{(n+1)}) -\Up^\frac 12 r^{-1} (\de\Psi_1\nnn-\overline{\de\Psi_1\nnn}\nn)\\
& & -\Up^\frac 12 (\laph\nn-\laph\n)(\Psi_3\nn) -( \overline{\Psi_2\nn}\nn -\overline{\Psi_2\nn}\n)\\
& & +\Up^\frac 12 r^{-1} (\overline{\Psi_1\nn}\nn-\overline{\Psi_1\nn}\n) \\
& & -\laph\nn(\Ga_0\nn\cdot\Ga_0\nn)+\laph\n(\Ga_0\n\cdot\Ga_0\n) \\
& & +\Ga_1\nn\cdot \Ga_1\nn-\Ga_1\n\cdot \Ga_1\n+\overline{\Ga_1\nn\cdot \Ga_1\nn}\nn-\overline{\Ga_1\n\cdot \Ga_1\n}\n,\\
\overline{\de\Psi_3\nnn}\nn &=&  -\frac 12 \Up^{-1} r \overline{\de\Psi_1\nnn}\nn\\
& & -(\overline{\Psi_3\nn}\nn-\overline{\Psi_3\nn}\n)+\frac 12 \Up^{-1} r (\overline{\Psi_1\nn}\nn-\overline{\Psi_1\nn}\n).
\eeaa
Note that, by \eqref{eq:tilde-BB} and \eqref{eq:tilde-mu-ell=0}, we have
\beaa
\widetilde\B_{\ell\leq 1}\nnn-\widetilde\B_{\ell=1}\nn &=& \frac 12 (\laph\nn \thc\nnn)_{\ell= 1}-\frac 12 (\laph\n \thc\nn)_{\ell\leq 1} \\
& & +\frac 12 (\laph\nn \thc\nn)_{\ell=0}-\frac 12 (\laph\n \thc\n)_{\ell=0} \\
& & -\left(\mathcal P_1(\d_1\nn\d_2\nn \thh\nn)\right)_{\ell\leq 1} +\left(\mathcal P_1(\d_1\n\d_2\n \thh\n)\right)_{\ell\leq 1},
\eeaa
and 
\beaa
\de \tilde\mu\nnn_{\ell=0}=\tilde\mu\nnn_{\ell=0}-\tilde\mu\nn_{\ell=0} &=& (\de\Psi_2\nnn-\Up^\frac 12 r^{-1}\de\Psi_1\nnn)_{\ell=0} \\
& & -(\laph\nn \log(\Up^{-\frac 12}+\Psi_3\nn))_{\ell=0}+(\laph\n \log(\Up^{-\frac 12}+\Psi_3\n))_{\ell=0} \\
& & -\frac 14 ((\Psi_1\nn)^2 )_{\ell=0}+\frac 14 ((\Psi_1\n)^2 )_{\ell=0}.
\eeaa
Note that the last line is of the form $\Ga_1\nn\cdot \Ga_1\nn-\Ga_1\n\cdot\Ga_1\n$.
Therefore, the system \eqref{eq:de-thc-nn}-\eqref{eq:de-Kc-nn} holds, with
\beaa
\NN[\de\Psi_1] &=&  -(\laph\nn \log(\Up^{-\frac 12}+\Psi_3\nn))_{\ell=0}+(\laph\n \log(\Up^{-\frac 12}+\Psi_3\n))_{\ell=0} \\
& & +\Ga_1\nn\cdot \Ga_1\nn-\Ga_1\n\cdot\Ga_1\n,\\
\NN[\de\Psi_2]&=& -\frac 12 \Up^{-\frac 12}\Big((\laph\nn-\laph\n)\thc\nn\Big)_{\ell=1}-\frac 12\Up^{-\frac 12} \Big(\laph\nn\thc\nn-\laph\n\thc\n\Big)_{\ell=0}\\
& & -\Up^{-\frac 12}\left(\mathcal P_1(\d_1\nn\d_2\nn \thh\nn-\d_1\n\d_2\n \thh\n)\right)_{\ell\leq 1} \\
& & +r^{-1} (-(\laph\nn \log(\Up^{-\frac 12}+\Psi_3\nn))_{\ell=0}+(\laph\n \log(\Up^{-\frac 12}+\Psi_3\n))_{\ell=0})\\
& &  + \Ga_1\nn\cdot \Ga_2\nn - \Ga_1\n\cdot \Ga_2\n - \de\Psi_3\nn \Bb-\Psi_3\nn \widetilde\B_{\ell\leq 1,aux}\nn+\Psi_3\n \widetilde\B_{\ell\leq 1,aux}\n, \\
\NN[\de\Psi_3] &=& -\Up^\frac 12 (\laph\nn-\laph\n)(\Psi_3\nn) -( \overline{\Psi_2\nn}\nn -\overline{\Psi_2\nn}\n)+\Up^\frac 12 r^{-1} (\overline{\Psi_1\nn}\nn-\overline{\Psi_1\nn}\n) \\
& & -\laph\nn(\Ga_0\nn\cdot\Ga_0\nn)+\laph\n(\Ga_0\n\cdot\Ga_0\n) \\
& & +\Ga_1\nn\cdot \Ga_1\nn-\Ga_1\n\cdot \Ga_1\n+\overline{\Ga_1\nn\cdot \Ga_1\nn}\nn-\overline{\Ga_1\n\cdot \Ga_1\n}\n,\\
\NN_{av}[\de\Psi_3] &=& -(\overline{\Psi_3\nn}\nn-\overline{\Psi_3\nn}\n)+\frac 12 \Up^{-1} r (\overline{\Psi_1\nn}\nn-\overline{\Psi_1\nn}\n).
\eeaa
We first note that, since $\thh\nn$ and $\thh\n$ are traceless with respect to $\ga\nn$ and $\ga\n$, we have
\beaa
\trh\0 (\thh\nn-\thh\n)&=& (\trh\0- \trh\nn) \thh\nn- (\trh\0-\trh\n) \thh\n\\
&=& (\trh\0-\trh\nn)\de\thh\nn-(\trh\nn-\trh\n)\thh\n \\
&=& (\ga\nn-\gz)\cdot \de\thh\nn-\de\ga\nn \thh\n.
\eeaa 
Therefore, using \eqref{eq:estimate-d1-d2-ell-leq-1},
\bea\lab{eq:appendix-d1-d2-difference}
\bsplit
&\quad  r^{-1} ||(\d_1\nn\d_2\nn \thh\nn-\d_1\n\d_2\n \thh\n)_{\ell\leq 1} ||_{\H^s} \\
&\lesssim  r^{-1} ||(\d_1\nn\d_2\nn (\de\thh\nn))_{\ell\leq 1} ||_{\H^s} + r^{-1} || (\de\ga\nn\cdot \nabh^2 \thh\n+\nabh(\de\ga\nn)\cdot \nabh\thh\n)_{\ell\leq 1} ||_{\H^s} \\
&\lesssim ||\lapz\trh\0 \de\thh\nn ||_{L^\infty} + ||(r\nabz)^{\leq 2} \de\thh\nn||_{L^\infty} ||(r\nabz)^{\leq 2} (\ga\nn-\gz) ||_{L^\infty} \\
& \quad + r^{-4-\de} (r^{-1} ||\de\ga\nn||_{\H^s}) \\
&\lesssim  \eps r^{-5-2\de} ||\de(\Psi\nn,\ga\nn) ||_s.
\end{split}
\eea
We then have, using the standard $L^2$-$L^\infty$ type estimate,
\beaa
r^{-1} ||\NN[\de\Psi_1]||_{\H^{s+1}} &\lesssim & ||\laph\nn \log(\Up^{-\frac 12}+\Psi_3\nn))_{\ell=0}+(\laph\n \log(\Up^{-\frac 12}+\Psi_3\n))_{\ell=0}||_{L^\infty} \\
& & +r^{-1} ||(\Ga_1\nn,\Ga_1\n)\cdot\de\Ga_1\nn||_{\H^{s+1}} \\
&\lesssim & \eps r^{-3-\de} ||\de\ga\nn||_{L^\infty}+\eps r^{-2-\de} (r^{-1} ||\de\Ga_1\nn||_{\H^{s+1}})\\
&\lesssim & \eps r^{-4-2\de} ||\de(\Psi\nn,\ga\nn) ||_s,
\eeaa
\beaa
r^{-1} ||\NN[\de\Psi_2]||_{\H^s} &\lesssim & ||(\laph\nn-\laph\n)\thc\nn||_{L^\infty} +\eps r^{-5-2\de} ||\de(\Psi\nn,\ga\nn) ||_s \\ 
& & +r^{-1}||\Ga_1\nn\cdot\Ga_2\nn-\Ga_1\n\cdot \Ga_2\n||_{\H^{s+1}}+\eps r^{-4-\de} (r^{-1} ||\de\Psi_3\nn||_{\H^s}) \\
& & + r^{-1} || \widetilde \B_{\ell\leq 1,aux}\nn \de\Psi_3\nn||_{\H^s} + r^{-1} ||\Psi_3\n \de(\widetilde \B_{\ell\leq 1,aux}\nn)||_{\H^s} \\
&\lesssim & \eps r^{-5-2\de} ||\de(\Psi\nn,\ga\nn) ||_s,
\eeaa
Similarly, we obtain
\beaa
r^{-1} ||\NN[\de\Psi_3]||_{\H^s} &\lesssim & \eps r^{-4-2\de} ||\de(\Psi\nn,\ga\nn) ||_s,\\
|\NN_{av}[\de\Psi_3]| &\lesssim & \eps r^{-2-2\de} ||\de(\Psi\nn,\ga\nn) ||_s,
\eeaa
and we omit the details since the reasoning is totally the same.

\subsection{Proof of Proposition \ref{prop:contraction-remain}}\lab{proof:contraction-remain}
Taking the differences of the equations \eqref{eq:iteration-h-dualh}-\eqref{eq:iteration-Y} between $n\mapsto n+1$ and $n$, we have
\beaa
\d_1\nn \d_2\nn \Psi_4\nnn-\d_1\n \d_2\n \Psi_4\nn &=& \frac 12 (\laph\nn \Psi_1\nnn,0)-\frac 12 (\laph\n \Psi_1\nn,0) -\de\Psi_{11}\nnn, \\
\d_1\nn \Psi_5\nnn-\d_1\n \Psi_5\nn &=& -(\Up^\frac 12 \laph\nn \Psi_3\nnn,0 )+(\Up^\frac 12 \laph\n \Psi_3\nn,0 )\\
& &+ \left(\laph\nn(\Ga_0\nn\cdot\Ga_0\nn),0\right)-\left(\laph\n(\Ga_0\n\cdot\Ga_0\n),0\right), \\
\d_1\nn \Psi_6\nnn- \d_1\n \Psi_6\nn &=&  \de\Psi_{11}\nnn-(\overline{\Psi_{11}\nnn}\nn- \overline{\Psi_{11}\nn}\n) \\
& & -(\overline{(\Bb, \Bbd)}\nn- \overline{(\Bb, \Bbd)}\n).
\eeaa
The equations can be rewritten as
\beaa
\d_1\nn\d_2\nn \de\Psi_4\nnn &=& \frac 12 (\laph\nn \de\Psi_1\nnn,0)-\de\Psi_{11}\nnn \\
& & -(\d_1\nn \d_2\nn-\d_1\n \d_2\n) \Psi_4\nn +\frac 12((\laph\nn-\laph\n)\Psi_1\nn,0) \\
\d_1\nn \de \Psi_5\nnn &=& -(\Up^\frac 12 \laph\nn \de\Psi_3\nnn,0 )\\
& & -(\d_1\nn-\d_1\n)\Psi_5\nn -\Up^\frac 12 ((\laph\nn-\laph\n) \Psi_3\nn,0) \\
& &+ \left(\laph\nn(\Ga_0\nn\cdot\Ga_0\nn),0\right)-\left(\laph\n(\Ga_0\n\cdot\Ga_0\n),0\right) \\
\d_1\nn \de \Psi_6\nnn &=&  \de \Psi_{11}\nnn-\overline{\de \Psi_{11}\nnn}\nn  -(\overline{(\Bb, \Bbd)}\nn- \overline{(\Bb, \Bbd)}\n)\\
& & -(\d_1\nn-\d_1\n) \Psi_6\nn-(\overline{\Psi_{11}\nn}\nn-\overline{\Psi_{11}\nn}\n).
\eeaa
This is of the form \eqref{eq:de-Psi-4-nn}-\eqref{eq:de-Psi-6-nn} with
\beaa
\NN[\de\Psi_4] &=& -(\d_1\nn \d_2\nn-\d_1\n \d_2\n) \Psi_4\nn +\frac 12((\laph\nn-\laph\n)\Psi_1\nn,0), \\
\NN[\de\Psi_5] &=& -(\d_1\nn-\d_1\n)\Psi_5\nn -\Up^\frac 12 ((\laph\nn-\laph\n) \Psi_3\nn,0) \\
& &+ \left(\laph\nn(\Ga_0\nn\cdot\Ga_0\nn),0\right)-\left(\laph\n(\Ga_0\n\cdot\Ga_0\n),0\right), \\
\NN[\de\Psi_6] &=& -(\d_1\nn-\d_1\n) \Psi_6\nn-(\overline{\Psi_{11}\nn}\nn-\overline{\Psi_{11}\nn}\n) -(\overline{(\Bb, \Bbd)}\nn- \overline{(\Bb, \Bbd)}\n).
\eeaa
We then proceed as in Section \ref{proof:contraction-main} to estimate
\beaa
r^{-1} ||\NN[\de\Psi_4]||_{\H^s} &\lesssim & \eps r^{-5-2\de} ||\de(\Psi\nn,\ga\nn)||_s,\\
r^{-1} ||\NN[\de\Psi_5]||_{\H^s} &\lesssim & \eps r^{-4-2\de} ||\de(\Psi\nn,\ga\nn)||_s, \\
r^{-1} ||\NN[\de\Psi_6]||_{\H^s} &\lesssim & \eps r^{-5-2\de} ||\de(\Psi\nn,\ga\nn) ||_s.
\eeaa
This concludes the proof of Proposition \ref{prop:contraction-remain}.

\subsection{Proof of Proposition \ref{prop:contraction-k-part}}\lab{proof:contraction-k-part}
We denote $\de(\ah\nn\psi\nnn):=\ah\nn\psi\nnn-\ah\n\psi\nn$. 

Taking the differences of the equations \eqref{eq:iteration-trt}-\eqref{eq:iteration-Pi-mean} between $n\mapsto n+1$ and $n$, we have
\beaa
(\pa_r+r^{-1}) \de\Psi_7\nnn &=& 2r^{-1} \de\Psi_{10}\nnn +\Ga_1\nn\cdot\Ga_1\nn -\Ga_1\n\cdot \Ga_1\n,
\eeaa
\beaa
\d_1\nn \d_2\nn (\ah\nn \Psi_8\nnn)- \d_1\n \d_2\n (\ah\n \Psi_8\nn) &=& \frac 12 (\ah\nn\laph\nn \Psi_7\nnn-\ah\n\laph\n\Psi_7\nn,0)\\
& &+\de\Psi_{12}\nn+\Ga_1\nn\cdot \Ga_2\nn-\Ga_1\n\cdot \Ga_2\n, 
\eeaa
\beaa
\d_1\nn\Psi_9\nnn-\d_1\n\Psi_9\nnn &=& -\bigg(0,r^{-4} \int_r^\infty r'^4 \de\Kd_{\ell\leq 1}\nnn\, dr'\bigg) \\
& & -\bigg(0,\overline{\displaystyle r^{-4} \int_r^\infty r'^4 (\Kkd)\, dr'}\nn-\overline{\displaystyle r^{-4} \int_r^\infty r'^4 (\Kkd)\, dr'}\n\bigg) \\
& & +\bigg(0,\overline{\displaystyle r^{-4} \int_r^\infty r'^4 (\Kd_{\ell\leq 1}\nnn)\, dr'}\nn-\overline{\displaystyle r^{-4} \int_r^\infty r'^4 (\Kd_{\ell\leq 1}\nn)\, dr'}\n\bigg),
\eeaa
\beaa
\laph\nn(\ah\nn\Psi_{10}\nnn)-\laph\n(\ah\n\Psi_{10}\nn) &=& \de\widetilde \K_{\ell\leq 1}\nnn -(\overline{\Kk}\nn-\overline{\Kk}\n) -(\overline{\widetilde \K_{\ell\leq 1}\nnn}\nn-\overline{\widetilde \K_{\ell\leq 1}\nn}\n), \\
\overline{\displaystyle\ah\nn\Psi_{10}\nnn}\nn -\overline{\displaystyle\ah\n\Psi_{10}\nn}\n &=& \overline{\Psi_3\nn\Psi_{10}\nn}\nn -\overline{\Psi_3\n\Psi_{10}\n}\n.
\eeaa
Using the formula \eqref{eq:general-formula-L-difference}, we have the following relations
\beaa
&& \d_1\nn \d_2\nn (\ah\nn \Psi_8\nnn)- \d_1\n \d_2\n (\ah\n \Psi_8\nn) \\
&=& \d_1\nn\d_2\nn \de (\ah\nn\Psi_8\nnn)+(\d_1\nn\d_2\nn-\d_1\n\d_2\n) (\ah\n \Psi_8\nn),
\eeaa
\beaa
&& \ah\nn\laph\nn \Psi_7\nnn-\ah\n\laph\n\Psi_7\nn\\
&=& \ah\nn\laph\nn \de\Psi_7\nnn+(\ah\nn\laph\nn-\ah\n\laph\n)\Psi_7\nn,
\eeaa
\beaa
\d_1\nn\Psi_9\nnn-\d_1\n\Psi_9\nnn &=& \d_1\nn \de\Psi_9\nnn+(\d_1\nn-\d_1\n)\Psi_9\nn,
\eeaa
\beaa
&& \laph\nn(\ah\nn\Psi_{10}\nnn)-\laph\n(\ah\n\Psi_{10}\nn) \\
&=& \laph\nn \de (\ah\nn\Psi_{10}\nnn)+(\laph\nn-\laph\n) (\ah\n \Psi_{10}\nn),
\eeaa
\beaa
&& \overline{\ah\nn\Psi_{10}\nnn}\nn -\overline{\ah\n\Psi_{10}\nn}\n \\
&=& \overline{\de(\ah\nn\Psi_{10}\nnn)}+\overline{\ah\n\Psi_{10}\nn}\nn -\overline{\ah\n\Psi_{10}\nn}\n.
\eeaa
Moreover, we have, in view of \eqref{eq:widetilde-K-ell-leq-1},
\beaa
\de\widetilde\K_{\ell\leq 1}\nnn &=& \mathcal P_1\left( \d_1\nn\d_2\nn(\ah\nn\Psi_8\nn)-\d_1\n\d_2\n(\ah\n\Psi_8\n)\right)_{\ell\leq 1}\\
& & -\frac 12 (\ah\nn \laph\nn \Psi_7\nnn-\ah\n \laph\n \Psi_7\nn)_{\ell= 1}\\
& & -\frac 12 (\ah\nn \laph\nn \Psi_7\nn-\ah\n \laph\n \Psi_7\n)_{\ell=0}+(\Ga_1\nn\cdot\Ga_2\nn-\Ga_1\n\cdot\Ga_2\n)_{\ell\leq 1} \\
&=& -\frac 12(\ah\nn\laph\nn \de\Psi_7\nnn)_{\ell=1} +\de \RR_{lower}\nn,
\eeaa
where
\beaa
\de\RR_{lower}\nn&:=& \mathcal P_1\left( \d_1\nn\d_2\nn(\ah\nn\Psi_8\nn)-\d_1\n\d_2\n(\ah\n\Psi_8\n)\right)_{\ell\leq 1}\\
& & -\frac 12 (\ah\nn \laph\nn \Psi_7\nn-\ah\n \laph\n \Psi_7\nn)_{\ell= 1}\\
& & -\frac 12 (\ah\nn \laph\nn \Psi_7\nn-\ah\n \laph\n \Psi_7\n)_{\ell=0}+(\Ga_1\nn\cdot\Ga_2\nn-\Ga_1\n\cdot\Ga_2\n)_{\ell\leq 1} .
\eeaa
Using this notation, we write
\beaa
&&\overline{\widetilde \K_{\ell\leq 1}\nnn}\nn-\overline{\widetilde \K_{\ell\leq 1}\nn}\n = \overline{\de\widetilde\K_{\ell\leq 1}\nnn}\nn +(\overline{\widetilde\K_{\ell\leq 1}\nn}\nn-\overline{\widetilde\K_{\ell\leq 1}\nn}\n) \\
&=& -\frac 12 \overline{(\ah\nn\laph\nn \de\Psi_7\nnn)_{\ell=1}}\nn +\overline{\de\RR_{lower}\nn}\nn+(\overline{\widetilde\K_{\ell\leq 1}\nn}\nn-\overline{\widetilde\K_{\ell\leq 1}\nn}\n).
\eeaa
Then, the system is already of the form \eqref{eq:de-Psi-7-nn}-\eqref{eq:de-Psi-10-av-nn}, with
\beaa
\NN[\de\Psi_7]&:=& \Ga_1\nn\cdot\Ga_1\nn -\Ga_1\n\cdot \Ga_1\n,\\
\NN[\de\Psi_8]&:=& -\left(\d_1\nn\d_2\nn-\d_1\n\d_2\n\right) (\ah\n \Psi_8\nn)+\frac 12 \left((\ah\nn\laph\nn-\ah\n\laph\n)\Psi_7\nn,0\right) \\
& & +\Ga_1\nn\cdot \Ga_2\nn-\Ga_1\n\cdot \Ga_2\n, \\
\NN[\de\Psi_9] &:=& -(\d_1\nn-\d_1\n)\Psi_9\nn-\left(0,\overline{\displaystyle r^{-4} \int_r^\infty r'^4 (\Kkd) dr'}\nn-\overline{\displaystyle r^{-4} \int_r^\infty r'^4 (\Kkd) dr'}\n \right) \\
& & +\left(0,\overline{\displaystyle r^{-4} \int_r^\infty r'^4 (\Kd_{\ell\leq 1}\nn) dr'}\nn-\overline{\displaystyle r^{-4} \int_r^\infty r'^4 (\Kd_{\ell\leq 1}\nn) dr'}\n\right),\\
\NN[\de\Psi_{10}] &:=& -(\laph\nn-\laph\n) (\ah\n \Psi_{10}\nn)+\de\RR_{lower}\nn\\
& & -(\overline{\Kk}\nn-\overline{\Kk}\n) -(\overline{\widetilde\K_{\ell\leq 1}\nn}\nn-\overline{\widetilde\K_{\ell\leq 1}\nn}\n)-\overline{\de\RR_{lower}\nn}\n,\\
\NN_{av}[\de\Psi_{10}] &:= & -\overline{\displaystyle\ah\n\Psi_{10}\nn}\nn +\overline{\displaystyle\ah\n\Psi_{10}\nn}\n.
\eeaa
We then proceed as in Section \ref{proof:contraction-main} to estimate
\beaa
r^{-1} ||\NN[\de\Psi_7] ||_{\H^{s+1}} &\lesssim & \eps r^{-4-2\de} ||\de(\Psi\nn,\ga\nn)||_s, \\
r^{-1} ||\NN[\de\Psi_8] ||_{\H^{s-1}} &\lesssim & \eps r^{-5-2\de} ||\de(\Psi\nn,\ga\nn)||_s,\\
r^{-1} ||\NN[\de\Psi_9] ||_{\H^s} &\lesssim & \eps r^{-4-2\de} ||\de(\Psi\nn,\ga\nn)||_s, \\
r^{-1} ||\NN[\de\Psi_{10}]||_{\H^{s-1}} &\lesssim & \eps r^{-5-2\de} ||\de(\Psi\nn,\ga\nn)||_s, \\
r^{-1} ||\NN_{av} [\de\Psi_{10}]||_{\H^{s+1}} &\lesssim & \eps r^{-3-2\de} ||\de(\Psi\nn,\ga\nn)||_s.
\eeaa
This concludes the proof of Proposition \ref{prop:contraction-k-part}.

\end{document}